\documentclass[a4paper,onecolumn,superscriptaddress,11pt]{aux/quantumarticle}
\pdfoutput=1
\usepackage{graphicx}%
\usepackage{dcolumn}%
\usepackage{bm}%
\usepackage{amsthm,amsmath,amsfonts,amssymb}
\usepackage{cancel}
\usepackage{subcaption} 
\usepackage{enumitem}
\usepackage[usenames,dvipsnames]{xcolor}
 \usepackage{cite}
\usepackage{mathtools}

\usepackage{xy}
\xyoption{matrix}
\xyoption{frame}
\xyoption{arrow}
\xyoption{curve}
\usepackage{amsmath}
\entrymodifiers={}

\newcommand{\bra}[1]{{\left\langle{#1}\right\vert}}
\newcommand{\ket}[1]{{\left\vert{#1}\right\rangle}}
\newcommand{\qw}[1][-1]{\ar @{-} [0,#1]}

\newcommand{\gate}[1]{*{\xy *+<.6em>{#1};p\save+LU;+RU **\dir{-}\restore\save+RU;+RD **\dir{-}\restore\save+RD;+LD **\dir{-}\restore\POS+LD;+LU **\dir{-}\endxy} \qw}
\newcommand{\meter}{\gate{\xy *!<0em,1.1em>h\cir<1.1em>{ur_dr},!U-<0em,.4em>;p+<.5em,.9em> **h\dir{-} \POS <-.6em,.4em> *{},<.6em,-.4em> *{} \endxy}}

\newcommand{\multigate}[2]{*+<1em,.9em>{\hphantom{#2}} \qw \POS[0,0].[#1,0];p !C *{#2},p \save+LU;+RU **\dir{-}\restore\save+RU;+RD **\dir{-}\restore\save+RD;+LD **\dir{-}\restore\save+LD;+LU **\dir{-}\restore}
\newcommand{\ghost}[1]{*+<1em,.9em>{\hphantom{#1}} \qw}

\newcommand{\lstick}[1]{*!R!<.5em,0em>=<0em>{#1}}

\newcommand{\Qcircuit}{\xymatrix @*=<0em>}

\renewcommand*{\eqref}[1]{Eq.~\ref{#1}}
\newcommand*{\eqrefp}[1]{(Eq.~\ref{#1})}
\newcommand{\eqsref}[2]{Eqs.~\ref{#1}~and~\ref{#2}}
\newcommand{\eqssref}[3]{Eqs.~\ref{#1}, \ref{#2}, and~\ref{#3}}
\newcommand{\eqsssref}[4]{Eqs.~\ref{#1}, \ref{#2},~\ref{#3}, and~\ref{#4}}
\newcommand*{\thmref}[1]{Thm.~\ref{#1}}
\newcommand{\thmsref}[2]{Thms.~\ref{#1}~and~\ref{#2}}
\newcommand{\thmssref}[3]{Thms.~\ref{#1},~\ref{#2}, and~\ref{#3}}
\newcommand*{\lemref}[1]{Lem.~\ref{#1}}
\newcommand{\lemsref}[2]{Lem.~\ref{#1}~and~\ref{#2}}
\newcommand{\lemssref}[3]{Lems.~\ref{#1},~\ref{#2}, and~\ref{#3}}
\newcommand*{\corref}[1]{Cor.~\ref{#1}}
\newcommand*{\propref}[1]{Prop.~\ref{#1}}
\newcommand*{\remref}[1]{Rem.~\ref{#1}}

\newcommand*{\figref}[1]{Fig.~\ref{#1}}
\newcommand*{\tabref}[1]{Tab.~\ref{#1}}
\newcommand*{\secref}[1]{Sec.~\ref{#1}}
\newcommand{\secsref}[2]{Secs.~\ref{#1}~and~\ref{#2}}
\newcommand{\secssref}[3]{Secs.~\ref{#1},~\ref{#2},~and~\ref{#3}}
\newcommand*{\appref}[1]{App.~\ref{#1}}

\newcommand{\braket}[2]{\left<#1|#2\right>}

\newcommand{\naturals} {{\mathbb N}}
 
\newcommand{\complex}{{\mathbb C}}
\newcommand{\reals}{{\mathbb R}}
\newcommand{\integers}{{\mathbb Z}}

\hfuzz=4pt

\newtheorem{theorem}{Theorem}
\numberwithin{theorem}{subsection}
\newtheorem{rem}{Remark}
\numberwithin{rem}{subsection}
\newtheorem{lem}{Lemma}
\numberwithin{lem}{subsection}
\newtheorem{prop}{Proposition}
\numberwithin{prop}{subsection}
\newtheorem{cor}{Corollary}
\numberwithin{cor}{subsection}

\numberwithin{example}{subsection}

\newcommand\e{\varepsilon}

\frenchspacing

\usepackage{comment}
\newcommand{\sh}[1]{{\color{black} #1}}

\begin{document}
\title{Analytical Framework for Quantum Alternating Operator Ans\"atze}

\def\QUAIL{
\affiliation{Quantum Artificial Intelligence Lab. (QuAIL), NASA Ames Research Center, Moffett Field, CA 94035, USA}} 
\def\USRA{
\affiliation{USRA Research Institute for Advanced Computer Science (RIACS), Mountain View, CA 94043, USA}}

\author{Stuart Hadfield}
\QUAIL \USRA
\author{Tad Hogg} \QUAIL%
\author{Eleanor G. Rieffel} \QUAIL

\date{December 2022}%

\begin{abstract} 
We develop a framework for analyzing layered quantum algorithms such as quantum alternating operator ans\"atze.   
In the context of combinatorial optimization, our framework relates quantum cost gradient operators, derived from the cost and mixing Hamiltonians, to classical cost difference functions that reflect cost function neighborhood structure. By considering QAOA circuits from the Heisenberg picture, we derive exact general expressions for %
expectation values as series expansions in the algorithm parameters, cost gradient operators, and cost difference functions. This enables novel interpretability %
and 
insight into QAOA behavior in various parameter regimes. 
For single-level QAOA$_1$ we %
show the leading-order changes in the 
output probabilities and cost expectation value explicitly in terms of classical cost differences, %
for arbitrary cost functions. 
This demonstrates that, for sufficiently small positive parameters, probability flows from lower to higher cost states on average.
By selecting signs of the parameters, we can control the direction of flow.
We use these results to derive a classical random algorithm emulating QAOA$_1$ in the small-parameter regime, i.e., that produces bitstring samples with the same probabilities as QAOA$_1$ up to small error. %
For deeper QAOA$_p$ circuits %
we apply our framework to derive analogous and additional results in several settings. In particular we show QAOA always beats random guessing.  
We describe how our framework incorporates cost Hamiltonian locality for specific problem classes, including 
causal cone approaches, and %
applies to QAOA performance analysis with arbitrary parameters.  
We %
illuminate our results with a number of examples %
including applications 
to QUBO problems, MaxCut, and variants of MaxSAT. %
We illustrate the generalization of our framework to QAOA circuits using mixing %
unitaries beyond the transverse-field mixer  through two examples of constrained optimization problems, Max Independent Set and Graph Coloring. 
We conclude by outlining some of the further applications we envision for the framework.
\end{abstract}

\maketitle
\tableofcontents

\section{Introduction} \label{sec:intro}
Parameterized quantum circuits have gained much attention %
both as potential near-term algorithms and as new paradigms for quantum algorithm design more generally. 
Approaches based on %
Quantum Alternating Operator Ans\"atze (QAOA)~%
\cite{hogg2000quantum,hogg2000quantumb, Farhi2014,Hadfield17_QApprox,hadfield2019quantum} %
have been extensively studied in recent years
\cite{farhi2014quantum,farhi2016quantum,Wecker2016training,Shabani16,lin2016performance,jiang2017near,farhi2017quantum,verdon2017quantum,wang2018quantum,crooks2018performance,mcclean2018barren,hadfield2018thesis,pichler2018quantum,zhou2018quantum,lloyd2018quantum,brandao2018fixed,niu2019optimizing,bapat2018bang,guerreschi2019qaoa,marsh2019quantum,verdon2019cvqaoa,akshay2019reachability,morales2019universality,farhi2019quantum,wang2019xy,hastings2019classical,szegedy2019qaoa,bravyi2019obstacles,marshall2020characterizing,farhi2020quantum,farhi2020quantumb,shaydulin2020classical,ozaeta2020expectation,wurtz2020bounds,stollenwerk2020toward,streif2020quantum,streif2020training,marwaha2021local,barak2021classical,marwaha2021bounds,chou2021limitations,harrigan2021quantum,kremenetski2021quantum,brady2021optimal,brady2021behavior,wurtz2022counterdiabaticity}. 
These approaches  alternate $p$ times between applications of a 
cost-function-based phase-separation operator and a probability-amplitude mixing operator, illustrated in \figref{fig:QAOAcircuit}.
Nevertheless, relatively few rigorous performance guarantees are known in nontrivial settings, especially beyond a small constant number of circuit layers and the few specific problems studied thus far. 
Thus, the power and underlying mechanisms of such algorithms remains unclear, 
as do the problems and regimes where quantum advantage may be possible. 
Hence it is important to develop novel tools and approaches for analyzing and understanding such algorithms.

\begin{figure}[h]
\centerline{
\Qcircuit @C=0.25em @R=.1em {
\lstick{} &  \multigate{2}{U_P(\gamma_1)} & \qw & \multigate{2}{U_M(\beta_1)} & \qw & \multigate{2}{U_P(\gamma_2)} & \qw & & & & & \hdots &  & & & & \qw &  \multigate{2}{U_P(\gamma_p)} & \qw & \multigate{2}{U_M(\beta_p)} & \qw & \meter \\
\lstick{\ket{s}}%
& \ghost{U_P(\gamma_1)} & \qw 
& \ghost{U_M(\beta_1)} & \qw & \ghost{U_P(\gamma_2)} & \qw & & & & & %
 \vdotswithin{\hdots} & & & & & %
 & \ghost{U_P(\gamma_p)} & \qw & \ghost{U_M(\beta_p)}  & \qw & :\\
\lstick{} & \ghost{{U_P(\gamma_1)}} & \qw & \ghost{U_M(\beta_1)}& \qw & \ghost{U_P(\gamma_2)}& \qw & & & & & \hdots & & & & & \qw & \ghost{U_P(\gamma_p)} & \qw &\ghost{U_M(\beta_p)} & \qw & \meter
}}
\caption{A Quantum Alternating Operator Ansatz circuit with~$p$ levels that alternate between application of the phase and mixing operators $U_P$ and $U_M$. The $\gamma_j,\beta_j$ are %
parameters for these operators and $\ket{s}$ is a suitable initial state. Our framework particularly applies to such layered quantum circuit ans\"atze.
} 
\label{fig:QAOAcircuit}
\end{figure}
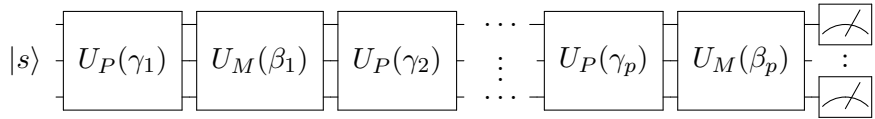

This analysis includes evaluating QAOA behavior and performance for both individual problem instances and for classes of problems or instances, e.g., worst- or average-case performance for instances in a class. Evaluating average behavior can be particularly useful for large $p$ since that involves many QAOA parameters. In this case, finding optimal parameters for each instance can be difficult. Instead, focusing on characteristic or average behavior can simplify parameter selection by identifying a single set of parameters that work well for typical instances of the class, as well as lead to new performance guarantees. 

To help address these issues, in this paper we develop theoretical tools in the form of an analytic framework, with notation, concepts, and results, to provide better understanding of the behavior of these algorithms, their strengths and weaknesses, and ultimately to design better quantum algorithms. 
The main underlying ideas of our framework are as follows, and %
apply to computing algorithm operators and expectation values of interest:
\begin{itemize}
    \item Quantum circuit observable expectation values can be equivalently computed by acting on the observable by the circuit (by conjugation $H\rightarrow U^\dagger H U=H+\dots $, i.e., the Heisenberg picture), followed by taking the initial state expectation value.
    \item For layered circuits we can recursively conjugate to obtain exact operator series in the algorithm parameters; for alternating ansatz terms of the same or similar form will reappear in the series.
    \item For QAOA, in particular with the originally proposed transverse-field mixer, the resulting terms, their action, and %
    their initial state expectation values can be related to classical functions derived from the cost function that capture its structure.
\end{itemize}
A number of useful properties for analysis are shown, including that initial state expectation values of many of the resulting terms are identically zero. For specific problem classes we can directly incorporate problem and operator locality. Further, 
in particular parameter regimes many such terms may be close to zero, allowing for 
compact approximate expressions by neglecting such terms, particularly for higher-order behavior. The expressions we derive are useful for more efficient numeric explorations of QAOA, as well as direct analytical insights. Further, the framework can be used to simplify the generation of more complicated expressions with computer algebra systems, such as for exact expressions or higher-order approximations, enabling further %
exploration of QAOA behavior.

Our framework can be applied to illuminate various aspects of QAOA. We illustrate the use of the framework in several general results and applications including:
\begin{itemize}
\item exact formulas for QAOA$_p$ probabilities and expectation values as power series in the algorithm angles (parameters). 
\item %
leading-order %
approximations for %
QAOA$_p$ expectation values and probabilities.  %
\begin{itemize}
    \item for arbitrary (nonconstant) cost functions there exist angles such that QAOA$_p$ beats random guessing.
    \item better or more %
    applicable approximations may be systematically obtained by including higher-order terms.
    \item %
    simple classical algorithms %
emulate sampling from QAOA circuits in some of these regimes, with small error.
\item identification of some %
general parameter regimes for which QAOA$_p$ performance analysis is %
classically tractable, particularly a regime of polynomially-small parameters, which %
hence precludes quantum advantage, as illustrated in \figref{fig:phaseDiagram}. %
\end{itemize}
\item a generalized formalization unifying %
several previous approaches for deriving QAOA$_p$ %
performance bounds, %
encapsulating %
previous results for specific problems such as exact results for MaxCut; we demonstrate this approach by obtaining results 
for QAOA$_1$ 
on a variant of Max-$2$-SAT as a specific example. 
\item a number of analytical and numerical examples illustrating the application of our framework to QAOA with the %
transverse-field mixer. 
\item two examples showing how our framework and results %
extend to quantum alternating ans\"atze beyond the %
transverse-field mixer, %
involving problems with different domains and encodings, 
or hard feasibility constraints.
\end{itemize}

Though we focus on the original QAOA, the framework naturally generalizes to other cases with structured ans\"atze and so would be useful for other applications, such as realizations of the variational quantum eigensolver for quantum chemistry, though in this paper we only hint at these further applications 
\sh{through the examples of \secref{sec:generalizedCalculus}
and}
concluding 
discussion of \secref{sec:discussion}. 

\begin{figure}[ht]
\centering
  \includegraphics[width=8cm]{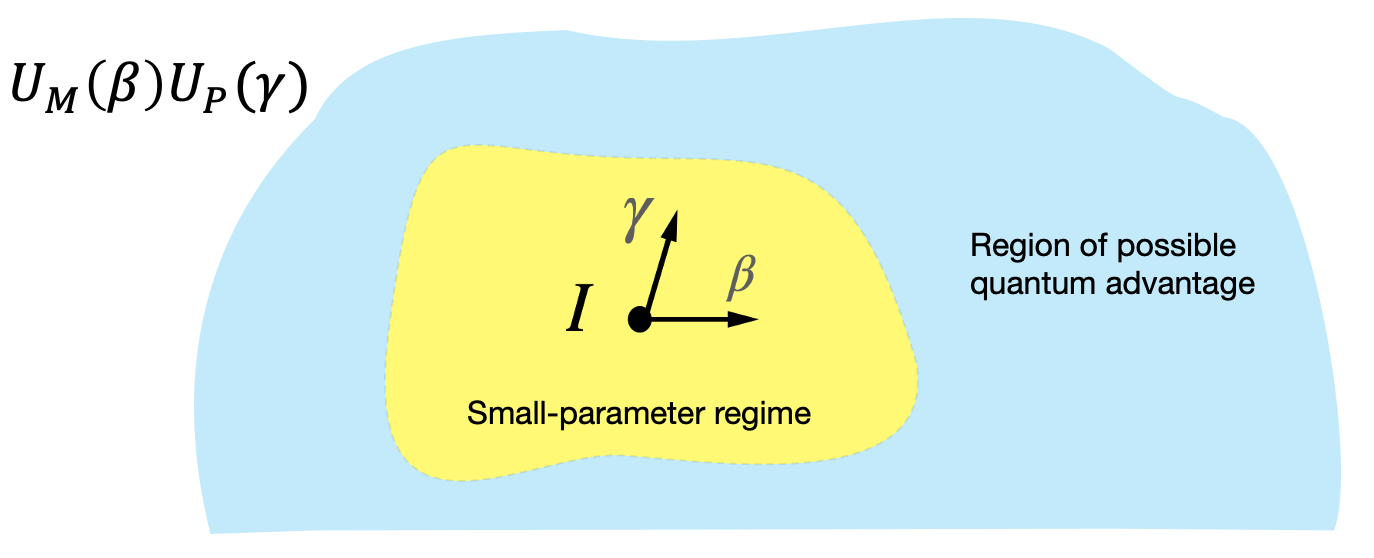}
  \caption{Schematic regimes of the QAOA$_1$ operator $U_M(\beta)U_P(\gamma)$ with parameters $\beta$ and $\gamma$; similar ideas apply for QAOA$_p$.  The inner region indicates operators close to 
  the Identity $I=U_M(0)U_P(0)$, corresponding to small parameters, where the QAOA probabilities and cost expectation values are characterized by the leading order terms of the exact %
  series expressions we derive.
 In particular, sufficiently (i.e., inverse-polynomially) small parameters allow for efficient classical emulation generally (i.e., classical sampling up to small error). In contrast, for arbitrary parameters, efficient classical sampling %
  from  QAOA$_1$ circuits is impossible under standard beliefs in computational complexity theory~\cite{farhi2016quantum}. }
  \label{fig:phaseDiagram}
\end{figure}

We remark here on some particularly related prior work.
Our results build off, generalize, and unify various previous analytical approaches and results for the performance of QAOA for specific problems or in particular settings. For example, \cite{wang2018quantum,hadfield2018thesis,lin2016performance,brandao2018fixed,bravyi2019obstacles,wurtz2020bounds,ozaeta2020expectation,marwaha2021local,marwaha2021bounds,chou2021limitations} obtain bounds to the cost expectation value for relatively few layers. 
Whereas most prior work concerns specific %
problems or problem classes, many of our results apply to arbitrary cost functions. 
For MaxCut, ~\cite[App. 22-24]{szegedy2019qaoa} proposes a %
related approach for algorithmically generating polynomials capturing the QAOA cost expectation value. Another recent paper, similar in spirit to our framework though employing different technical approaches, explores a number of related but distinct viewpoints on low-depth quantum optimization~\cite{mcclean2020low}, in particular relating the QAOA mixing operator to a graph Laplacian (of which the transverse-field mixer is a special case). Our work provides complementary insights on the connection between 
mixers and cost function structure. Several of recent works consider classical algorithms matching the performance of low-depth QAOA circuits~\cite{hastings2019classical,marwaha2021local,barak2021classical}. Classical algorithms for generally computing expectation values of quantum observables are considered in~\cite{bravyi2019classical}; in particular they show quantum circuits exponentially close to the identity can be efficiently sampled from. 
Our work extends this result, in the context of QAOA circuits, from exponentially small angles to polynomially small. 

\sh{We emphasize that many of our formulas and results apply to QAOA circuits of arbitrary depth. 
As an example we consider QAOA for Grover's unstructured search problem, and demonstrate in \secref{sec:smallAngleQAOAp} how small-angle analysis easily reproduces the result obtained rigorously in~\cite{jiang2017near} that QAOA with the transverse-field mixer can reproduce the famous quadratic speedup of Grover's algorithm.}

The remainder of the Introduction gives an informal description of our framework and the results we obtained by applying it. Subsequent sections give full definitions, generalizations and precise statements of results.

\subsection{Overview of analytical framework} \label{sec:frameworkOverview}
Quantum algorithms based on  quantum alternating operator ans\"atze 
utilize the noncommutativity of the cost and mixing operators in a fundamental way. 
From the Heisenberg perspective, %
a quantum circuit may be equivalently viewed as acting on quantum observables\footnote{Many important quantities for parameterized quantum circuits may be expressed as observable expectation values such as the expected cost and approximation ratio, or measurement probabilities.} 
by conjugation, rather than on a quantum state (by matrix multiplication); observable expectation values may be equivalently computed as the initial state expectation of the conjugated observable. For a circuit with $\ell$ layers, this corresponds to $\ell$ iterated conjugation operations. 
In some cases this formulation gives advantages in computing or approximating expectation values. For example, this %
idea was applied to derive an exact formula for the level-$1$ QAOA cost expectation value for MaxCut in \cite{wang2018quantum}. 
We extend these approaches to more general cost functions and settings using the %
correspondence between %
commutators %
and unitary conjugation (i.e., between Lie algebras and Lie groups) 
to derive exact expressions for relevant operators and expectation values as power series in the algorithm parameters with terms that reflect cost function structure. 

For QAOA we show how the resulting operators and expectation values 
reflect cost function changes over neighborhoods induced by the mixing operator. 
For simplicity, the primary example we consider in the paper is the transverse-field mixer Hamiltonian $B=\sum_j X_j$ %
and corresponding initial state $\ket{s}:=\ket{+}^{\otimes n}$
as originally proposed for the quantum approximate optimization algorithm~\cite{farhi2014quantum}, though our framework may be applied more generally.  
This mixer induces a neighborhood structure on classical bitstrings related to Hamming (i.e., bit-flip) distance on the Boolean cube. 
Different mixers induce different neighborhood structures, and
our framework %
extends to these cases as well, 
as we illustrate with two examples in \secref{sec:generalizedCalculus}. 

At the heart of our approach is a fundamental correspondence 
 $$\textrm{quantum operators}\,\,\,\, \longleftrightarrow \,\,\,\,  \textrm{classical functions}$$
between derived operators on the space of quantum states, and classical functions that describe their behavior in terms of the structure of the cost function. The functions and operators appearing in our framework are related to discrete versions of functions and operators from vector calculus, particularly discrete difference operators. Given a mixer $B$ and cost Hamiltonian $C$, %
iteratively taking commutators generates a sequence of Hamiltonians (up to factors of $i$). 
The most fundamental such operator is their commutator
$$ \nabla C := [B,C]=BC-CB,$$ 
which we call the \textit{cost gradient operator}, as motivated by its action on particular %
quantum states. %
(For convenience, we will sometimes refer to higher-order commutators of $B$ and $C$ as \textit{cost gradients} generically.) 
In particular, we show $\nabla C \ket{s}=\tfrac1{\sqrt{2^n}}\sum_{x\in\{0,1\}^n}dc(x)\ket{x}$,
for the classical \textit{cost divergence function}
$$ dc(x):=\sum_{j=1}^n \partial_j c(x),$$
where $c(x)$ is the cost function to be optimized, and $\partial_j c(x)$ gives the change in cost for each string $x\in\{0,1\}^n$ with respect to flipping its $j$th bit (i.e., $dc$ captures average cost function structure over single bit-flip neighborhoods).  
For example, $dc(x^*)\leq 0$ is a necessary condition for $x^*$ to maximize the cost function. We may then use properties of the functions to understand properties of the operators, and vice versa; for example we have in general $\tfrac1{2^n}\sum_x dc(x)=\bra{s}\nabla C\ket{s}=0$.

The cost gradient $\nabla C$ corresponds to infinitesimal conjugation of $C$ by the QAOA mixing operator, i.e., %
$e^{i\beta B}Ce^{-i \beta B}=C+i\beta\nabla C+O(\beta^2)$ as $|\beta|\rightarrow 0$.
Including additional terms leads to higher-order cost gradients (nested commutators, e.g., $\nabla^2 C:=[B,[B,C]]$, and so on) that reflect cost changes over neighborhoods of greater Hamming distance, at higher powers of the mixing angle.  
To include the QAOA phase operator we also require commutators with respect to $C$, $\nabla_C:=[C,\cdot]$, which act in relation to the underlying cost function $c(x)$. Hence, treating the layers of QAOA as iterated conjugations of an observable leads to series expressions which may be used to compute or approximate expectation values and other important  quantities.

As we discuss briefly in Sec.~\ref{sec:discussion}, our framework can be applied 
to recently proposed variants of QAOA \cite{bravyi2019obstacles,zhu2020adaptive}, and more 
generally to layered quantum algorithms
including %
applications beyond combinatorial  optimization; in such cases the resulting operators will reflect structure resulting from both the problem and choice of ansatz.

\subsection{Overview of application to QAOA}
The series expansions resulting from our framework are especially informative when the leading terms in the series dominate the behavior. This is the case for the \textit{small-angle regime}, i.e., when all angles in the QAOA parameter schedule are relatively small in magnitude. 
Hence, as a main application of our framework, 
we investigate the small-angle setting for QAOA circuits generally. We analyze single-layer QAOA$_1$ in detail in \secref{sec:QAOA1}, 
and generalize to $p\geq 1$ layers (denoted QAOA$_p$ throughout) in \secref{sec:QAOAp}.  
\begin{table}[h]
\centering
\begin{tabular}{ |c|c|c| } 
 \hline
 Quantity & Initial value & Leading-order contribution%
 \\
 \hline
 $P_1(x)$ & $\tfrac1{2^n}$ & $-\tfrac2{2^n}\gamma \beta dc(x)$ \\
  $\langle C \rangle_1$ & $ \tfrac1{2^n}\sum_x c(x)$ & $-\tfrac2{2^n}\gamma \beta \sum_x c(x) dc(x)$   \\
  \hline
  $P_p(x)$ & $\tfrac1{2^n}$ &  $-\tfrac2{2^n} \left(\sum_{1\leq i\leq j}^p \gamma_i \beta_j\right) dc(x)$  \\
  $\langle C \rangle_p$ & $ \tfrac1{2^n}\sum_x c(x)$ &  $-\tfrac2{2^n} \left(\sum_{1\leq i\leq j}^p \gamma_i \beta_j\right) \sum_x c(x)dc(x)$ \\ 
 \hline
\end{tabular}
\caption{%
Leading-order cost expectation and probabilities for QAOA$_p$ for %
cost function~$c(x)$, which dominate in particular when the QAOA angles are small, $|\gamma_j|,|\beta_j|\ll 1$.  
The initial probability of measuring $x\in\{0,1\}^n$ is $P_0(x)=\tfrac1{2^n}$ which corresponds to QAOA with all angles zero. The first two rows follow from \thmref{thm1:smallAngles}, and the following two rows follow from \thmsref{thm:allanglessmall}{thm:smallprecursed}. 
In each case the next contributing terms are order $4$ and higher in the QAOA angles; all other terms up to order $3$ are shown to be identically~$0$. We show the three additional  terms that contribute up to fifth order explicitly in \thmref{thm:smallprecursed}. We emphasize the expression 
$\sum_{1\leq i\leq j}^p \gamma_i \beta_j$ depends on both the algorithm parameter values and their ordering.}
\label{tab:tab1smallangles}
\end{table}

For QAOA$_1$, we show that, to third order in the angles $\gamma,\beta$, the probability to measure each bitstring $x$ changes from its initial value by a single contribution 
proportional to the cost divergence  $dc(x)$ and the product $\gamma\beta$. Hence, with respect to the $2^n$-dimensional space of bitstring probabilities, to leading order QAOA is reminiscent of a step of classical gradient descent, with learning rate (step size) proportional to $\gamma\beta$. The leading-order contribution to the cost expectation then follows as the expectation of 
$c(x)dc(x)$ taken uniformly over bitstrings $x$.   
Both leading-order expressions are shown in \tabref{tab:tab1smallangles}.
These expressions are derived using the correspondence between cost gradients and classical functions. In particular, we give several general lemmas showing that the terms corresponding to other low-order combinations of $\gamma,\beta$ (i.e., in this case $\gamma,\beta,\gamma^2, \beta^2,\gamma^2\beta,\gamma\beta^2)$ are identically zero, independent of the particular cost function, and that the same terms contribute for the case of QAOA$_p$ with $p\geq 1$ (with coefficients depending on all $2p$ angles). 
Higher-order contributing terms are shown to similarly relate to classical functions and are relatively straightforward to derive using our framework, and such that increasingly accurate approximate formulas may be systematically generated.

As a consequence, we show that the measurement outcomes of QAOA$_1$ %
are effectively classically emulatable in the 
regime of polynomially small angles. More precisely, we provide a simple classical randomized algorithm that produces bitstrings with probabilities matching the leading order behavior of QAOA$_1$, and bound the resulting error from the neglected terms for the case where $\gamma,\beta$ are bounded in magnitude by an inverse-polynomial in the problem size. (A similar argument applies to QAOA$_p$ (cf. \tabref{tab:tab1smallangles}), though we do not  analyze the error in detail for the general $p$ case.) We use these results to show that QAOA always beats random guessing for any nonconstant cost function. 

Our general results point to the tradeoff between parameter size and number of QAOA levels for potential quantum advantage.
Indeed, %
it is known that sampling from QAOA$_p$ circuits with unrestricted parameters, even for $p=1$, cannot be efficiently performed classically under widely believed complexity theoretic assumptions~\cite{farhi2016quantum}. Our results take steps towards a clearer demarcation of when quantum advantage may be possible (cf. \figref{fig:phaseDiagram}).  
Extending the analysis to QAOA with arbitrary number of levels~$p$ shows that QAOA$_p$ gives the same leading-order behavior as QAOA$_1,$ with a suitable choice of effective angles for QAOA$_1$, as indicated by the quantities shown in  \tabref{tab:tab1smallangles}. Thus, there is no possibility of quantum advantage for QAOA$_p$ for fixed $p$ when all angles are %
small. 

We similarly show how to obtain leading-order terms in particular cases where only some of the angles are small, which yield further generally applicable expressions. For QAOA$_1$ applied to quadratic unconstrained binary optimization (QUBO) problems, we show the case of small-mixing angle $\beta$ and unrestricted phase angle $\gamma$ can again be efficiently emulated classically, whereas this does not appear generally possible in the converse case of small phase angle but arbitrary mixing angle. The latter case can be understood in light of our framework from the fact that cost differences over neighborhoods of arbitrary Hamming distance contribute as the mixing angle grows in magnitude, which eventually become superpolynomial in size.  

Application of our framework is not restricted to small angles. Including higher order terms in our series expansions results in a sequence of increasingly accurate approximations valid over larger angle regimes. Alternatively, for specific problems and relatively small $p$ we can often derive compact exact expressions using our formalism by explicitly incorporating problem structure. We give high-level algorithms computing approximate or exact QAOA cost expectation values in this way in \secref{sec:classicalAlgGeneral2}. %
\sh{Our approach %
yields more tractable expressions than alternative approaches to obtaining general formulas; we contrast with one such %
\lq\lq sum-of-paths\rq\rq\ %
approach to QAOA in %
\appref{app:sumOfPaths}}.  %

\subsection{Roadmap of paper}
The Table of Contents gives the detailed structure of the paper. Here, we make some remarks on dependencies between the various sections.  

\secref{sec:calculus} gives a self-contained presentation of our framework, 
including general cost difference functions and cost gradient operators and their properties; see \tabref{tab:summary} below for a summary of important notation. 
As an immediate demonstration 
of our formalism we state our general leading-order results for QAOA in \secref{sec:smallAngleQAOA1initial}, 
and illustrate how the framework facilitates comparison between different quantum circuit ans\"atze by comparing
leading-order QAOA$_1$ to a %
quantum quench in \secref{sec:quench}. 
In \secref{sec:probLocality}, we show how to incorporate into our framework, when applicable, additional problem or instance-wise locality considerations (for example, MaxCut restricted to bounded-degree graphs), 
providing a basis for more refined QAOA results incorporating locality 
in \secsref{sec:lightcones}{sec:lightconesp}.

\secref{sec:QAOA1} provides a detailed application of our framework to QAOA$_1$. After deriving general results, we use the framework to obtain results for QAOA$_1$ for small mixing angle (but arbitrary phase) and vice versa (small phase, but arbitrary mixing angle). The section presents several examples; subsequent sections are independent of these examples. %

\secref{sec:QAOAp} applies the framework to the general case of QAOA$_p$. While a number of the
results in \secref{sec:QAOAp} generalize those in \secref{sec:QAOA1}, this section can be read immediately after \secref{sec:calculus} by those readers who wish to focus on the $p > 1$ case.

\secref{sec:generalizedCalculus} discusses how our framework applies to QAOA using different encodings and mixers other than the transverse-field mixer. In this case, the operators in our framework are defined in terms of the neighborhood structures induced by those alternative mixers, which we illustrate with 
two examples of constrained problems: Maximum Independent Set and Graph Coloring. This section builds %
on the %
results 
of \secref{sec:calculus}, but does not %
directly depend on those of %
\secref{sec:QAOA1} or \secref{sec:QAOAp}.

Finally, we comment on several additional applications of our framework in \secref{sec:discussion}. %
Some additional technical results and proofs are deferred to the Apendices and may also be read independently.

\section{Analytical framework for quantum optimization}
\label{sec:calculus}
Consider a classical cost function $c(x)$ we seek to optimize over bitstrings $x\in\{0,1\}^n$, i.e., the $n$-dimension Boolean hypercube. For example, $c(x)$ gives the number of cut edges in MaxCut, or the number of satisfied clauses in MaxSAT. Here for clarity we assume all bitstrings encode feasible candidate solutions; we consider examples which extend our methods to optimization over non-trivial feasible subspaces in \secref{sec:generalizedCalculus}. 

We may encode each bitstring $x\in\{0,1\}^n$ with the corresponding $n$-qubit \textit{computational basis} state $\ket{x}$. A computational basis (Pauli~$Z$) measurement of this quantum register at the end of an algorithm then returns each candidate solution~$x$ with some probability~$P(x)$. 
(Throughout the paper $n$ denotes the number of (qu)bits and $X_j,Y_j,Z_j$ denote the Pauli matrices 
$\sigma_X,\sigma_Y,\sigma_Z$ acting on the $j$th qubit, respectively.)

The cost function is naturally represented 
as the \textit{cost Hamiltonian} 
\begin{equation} \label{eq:costHam}
    C = \sum_x c(x) \ket{x}\bra{x}
\end{equation}
which acts diagonally in the computational basis as
$C\ket{x}=c(x)\ket{x}$ for each $x\in\{0,1\}^n$. In quantum algorithms such as quantum annealing or QAOA, the usual initial state is 
\begin{equation} \label{eq:initState}
    \ket{s}:=\ket{+}^{\otimes n}=\left(\tfrac{\ket{0}+\ket{1}}{\sqrt{2}} \right)^{\otimes n} =  \frac1{\sqrt{2^n}} \sum_{x\in\{0,1\}^n} \ket{x}
\end{equation}
which gives a uniform probability distribution over bitstrings with respect to computational basis measurements 
(i.e., %
$P_0(x):=|\bra{x}\ket{s}|^2=\tfrac1{2^n}$, and $\langle c\rangle_0 = \tfrac1{2^n}\sum_x c(x)$),
and the  
(transverse-field) 
\textit{mixing Hamiltonian}  
\begin{equation} \label{eq:mixingHam}
    B :=\sum_{j=1}^n X_j 
\end{equation}
is utilized to mediate probability flow between states.\footnote{Though different initial states or mixing Hamiltonians have been proposed to some extent 
in the literature, for simplicity we %
use $\ket{s}$ to denote the equal superposition state \eqrefp{eq:initState} and $B$ to denote the transverse-field Hamiltonian \eqrefp{eq:mixingHam} throughout the paper. This paradigm is called X-QAOA in \cite{streif2020quantum}.} 

Here we briefly review the original QAOA~\cite{Farhi2014} of Farhi, Goldstone, and Gutmann, %
which is the main application we consider in the paper. 
A QAOA$_p$ circuit creates the parameterized quantum state (cf.~\figref{fig:QAOAcircuit}) 
\begin{equation}
    \ket{\boldsymbol{\gamma \beta}}_p 
    = U_M(\beta_p)U_P(\gamma_p)U_M(\beta_{p-1})\dots U_P(\gamma_{1})\ket{s}
    = e^{-i\beta_p B}e^{-i\gamma_p C}e^{-i\beta_{p-1} B}\dots e^{-i\gamma_1 C} \ket{s}
\end{equation}
by applying the mixing and phase separation unitaries $U_M(\beta), U_P(\gamma)$ in alternation $p$ times each to the initial state $\ket{s}=\tfrac1{\sqrt{2^n}} \sum_x \ket{x}$.
The original %
QAOA mixing unitary $U_M(\beta)=\exp(-i\beta B)$ is specified as time evolution under the transverse-field Hamiltonian of \eqref{eq:mixingHam},
and the phase operator $U_P(\gamma)=\exp(-i\gamma C)$ by time evolution under the cost Hamiltonian $C$. %
In optimization applications typically each QAOA state preparation is followed by a measurement in the computational basis which returns some $x\in\{0,1\}^n$ probabilistically. 
For QAOA$_p$ with fixed parameters we let $P_p(x)$ denote the probability of a %
such a measurement returning the bitstring $x$, and  $\langle C \rangle_p = \langle c \rangle_p :=\bra{\boldsymbol{\gamma \beta}}C\ket{\boldsymbol{\gamma \beta}}$ %
the expected value of the cost Hamiltonian (function). 
More generally, %
we use
$ \langle A \rangle_p := \bra{\boldsymbol{\gamma \beta}}A\ket{\boldsymbol{\gamma \beta}}$ to denote the QAOA$_p$ expectation value of an operator~$A$. %
We refer generically to $\gamma_1,\beta_1,\dots,\beta_p$ and $p$ as QAOA \textit{parameters}, and will often use the term QAOA \textit{angles} when the QAOA \textit{level} $p$ is understood. The quantum gate depth for implementing a QAOA$_p$ circuit is related to but not the same as~$p$.

Repeated preparation and measurement of a QAOA$_p$ state on a quantum computer produces a collection of candidate solution samples, with the overall best solution found returned. The samples may be used to estimate quantities such as $\langle C\rangle_p$ which in turn may be used to search for better phase and mixing angles, e.g., variationally. Alternatively, angles may be selected a priori or restricted to a specific domain or schedule. %

The performance of such algorithms arise from the action of the mixing operator, %
which for the choice of $B$ of  \eqref{eq:mixingHam} relates to Hamming %
distances between bitstrings. Its action on quantum states can be readily interpreted in terms of classical bitstrings (with respect to the computational basis). 
Indeed, let $x^{(j)}$ denote the bitstring $x\in\{0,1\}^n$ with its $j$th bit flipped. 
Then from its action on basis states $X_j\ket{x}=\ket{x^{(j)}}$
we identify each Pauli operator $X_j$ as the $j$th \textit{bit-flip} operator. %
Hence the Hamiltonian~$B$ maps a given $\ket{x}$ to a %
superposition of strings differing from~$x$ by a single bit-flip
\begin{equation} \label{eq:mixHam2}
    B\ket{x}=\sum_{j=1}^n\ket{x^{(j)}} =: \sum_{y\in nbd(x)}\ket{y}, 
\end{equation}
where $nbd(x)\subset\{0,1\}^n$ denotes the %
Hamming distance-$1$ neighbors of $x$. Thus we see that the \textit{quantum} operator $B$ induces a \textit{classical neighborhood} structure on the Boolean cube. %
Similarly, %
iteratively applying \eqref{eq:mixHam2} one easily shows the action of $B^k$ relates to Hamming distance-$k$ neighborhoods. Indeed, for the QAOA mixing operator $U_M(\beta)=e^{-i\beta B}$, 
 the identities $X^2=I$ and $[X_i,X_j]=0$ %
give 
$U_M(\beta)%
=\prod_{j=1}^n (I \cos(\beta) - i\sin(\beta)X_j)$, %
and so its matrix elements %
with respect to the computational basis, $x,y\in\{0,1\}^n$, are
\begin{eqnarray} \label{eq:mixingmatrixelements}
\bra{x}U_M(\beta)\ket{y} 
=\cos^n(\beta)(-i\tan(\beta))^{d(x,y)}=:u_{d(x,y)},
\end{eqnarray}
which depend only on $\beta$ and the Hamming distance $d=d(x,y)$ between $x$ and $y$.\footnote{ 
The quantities $|u_d(\beta)|$, $d=1,\dots,n$ each have a single maximum at $\beta_d^* = \arctan \sqrt{d/(n-d)}$.  This shows that state transitions of increasing Hamming weight are relatively favored as $|\beta|$ increases.
This observation %
motivates a complementary sum-of-paths approach discussed in %
\sh{\appref{app:sumOfPaths}.
}}

We use this mixing neighborhood structure %
to build a general calculus relating classical functions encoding how the cost function changes over each neighborhood  
to quantum operators capturing the action of corresponding QAOA circuits. 
Our results %
formalize a number of %
results in prior literature concerning more specific settings. 
Though we focus on the transverse-field mixer~\eqrefp{eq:mixingHam}, %
our results provide a guide for applying our framework to other mixers, or initial states. 
We consider examples of more general mixing Hamiltonians and unitaries in \secref{sec:generalizedCalculus}, which naturally induce different classical neighborhoods in an analogous way. 
In each case the %
mixing operator neighborhoods are independent of the particular cost function and any neighborhood structure that it may carry.

\subsection{Cost difference and cost divergence functions}  \label{sec:costDiffs}
The aggregate %
action of the QAOA 
phase and mixing operators fundamentally relates to the underlying (classical) operations of \textit{cost function evaluations} and \textit{bit flips}.  
To use this observation to characterize the %
behavior of QAOA circuits 
we define a sequence of
classical \textit{cost difference} functions derived from the cost function.\footnote{Some of our definitions to follow are labeled in analogy with familiar notions from vector calculus as an aid to identifying their behavior. However, we note important differences in our discrete domain setting.}

For an arbitrary cost function $c(x)$ we define for  $j=1,\dots,n$ the 
\textit{jth (partial) cost difference} function~$\partial_j c$
 as 
\begin{eqnarray}  \label{eq:costdiffs}
\partial_j c\,(x):= c(x^{(j)})-c(x).
\end{eqnarray}
Each $\partial_j c$ encodes how the cost function changes with respect to flipping the $j$th bit of its input.\footnote{A related but distinct definition %
sometimes called the \lq\lq$j$th derivative\rq\rq\ is $c(x|_{x_j=1})-c(x|_{x_j=0})$; see %
\cite{boros2002pseudo}.} 
Considering the local neighborhood of single bit flips about each bitstring $x$ we define the 
\textit{cost divergence}
function as the total cost difference over each neighborhood
\begin{equation}   \label{eq:costdiv}
     dc(x) := \sum_{j=1}^n \partial_j c(x).
\end{equation}
The cost differences and 
divergence satisfy $\sum_x \partial_jc(x) = \sum_x dc(x) = 0$. 

These functions capture information about the %
cost function structure.  
We show an example in \figref{fig:fig1costdiv}. For a solution~$x^*$ that maximizes~$c(x)$, we have $\partial_j c(x^*)\leq 0$ for each $j\in [n]$, which implies $dc(x^*)\leq 0$. 
Thus $\forall j\in [n]\; \partial_j c(x^*)\leq 0$ gives a necessary condition for a string $x^*$ to give a global maximum. 
(Likewise, $\partial_j c(x^*)\geq 0$ and $dc(x^*)\geq 0$ for %
minima.) 
Hence we expect the cost divergence to be relatively large in magnitude and negative (positive) near local  maxima (minima). 
On the other hand, for \lq\lq typical\rq\rq\ bitstrings $y$ with cost close to the mean~$\langle c \rangle_0=\tfrac1{2^n}\sum_xc(x)$, %
we expect %
nearly as many possible bit flips will increase the cost function as decrease, which suggests $dc(y)\simeq 0$ for such typical %
strings~$y$. %
\begin{figure}[ht]
\centering 
\includegraphics[width=5cm]%
{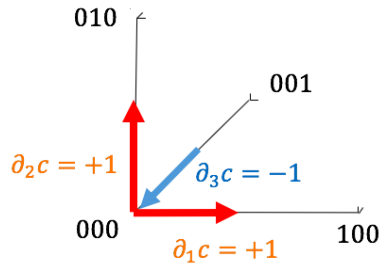}
\caption{Example: Local cost differences $\partial_ic(x)$ for the $3$-variable Max-$2$-SAT instance $c(x)=(x_1\vee x_2)+(x_2\vee \overline{x}_3)$, depicted for the bottom-left corner of the Boolean cube $x=000$; %
i.e., flipping the first or second bit increases the cost function ($\partial_1c(000)=\partial_2c(000)=1$), where as flipping the third decreases it ($\partial_3c(000)=-1$). For problems on $n$ variables we have $n$ cost difference functions. 
}
\label{fig:fig1costdiv}
\end{figure}

\subsubsection{Higher-order and mixed cost differences}
We likewise consider higher-order changes in the cost function due to multiple bit flips. This follows in the usual way from considering each partial difference $\partial_j$ as an operator %
mapping classical functions to classical functions $\partial_j:c\rightarrow \partial_j c$ which may then be iterated.  
A useful property %
is the antisymmetry relation
\begin{eqnarray} \label{eq:partialjCsym}
\partial_j c(x^{(j)}) \,=\, -\partial_j c(x).
\end{eqnarray}
Applying this twice gives 
\begin{equation}
    \partial_j^2 c (x) = \partial_j (\partial_j c(x))=-2 \partial_j c(x)
\end{equation}
which we write as $\partial_j^2 c = -2\partial_j c$, 
and hence it follows
$\partial_j^k c  =(-2)^{k-1} \partial_j c$ for $k \in \naturals$.

For $j\neq k$, %
let $x^{(j,k)}$ denote $x$ with its $j$th and $k$th bits flipped, and so $\ket{x^{(j,k)}}=X_jX_k\ket{x}$.
The second-order (mixed) cost difference function $ \partial_j \partial_k c$ %
is given by
$$  \partial_j ( \partial_k c) (x) 
= c(x)  - c(x^{(j)}) - c(x^{(k)})+ c(x^{(j,k)})$$
which is symmetric with respect to interchange of $j$ and $k$. Hence,  $\partial_j \partial_k c = \partial_j \partial_k c\,$
so mixed difference operations mutually commute. 
These functions generalize to third order cost differences $\partial_i \partial_j \partial_k c$ and higher orders in the natural way.%

We may likewise iterate the cost divergence operator $d:c\rightarrow dc$ to give
\begin{equation} \label{eq:d2c}
d^2c(x) := d(dc)(x) = -2 dc(x) + 2  \sum_{i<j}\partial_i \partial_j c(x) ,
\end{equation}
which relates to the change in cost function over the %
Hamming distance $2$ neighborhood 
of each~$x$.
More generally, the higher-order cost divergence $d^kc$ is  defined recursively as $d^kc(x):=d(d(\dots d(dc))..)(x)$, which relates to the change in cost function over neighborhoods of $k$ bit flips or less.

\subsection{Application: Leading-order behavior of QAOA}
\label{sec:smallAngleQAOA1initial}
Before developing the operator side of our framework, we  state two results relating the 
leading-order behavior of 
QAOA, generally, to the cost difference functions. %
The details are given in \secsref{sec:QAOA1}{sec:QAOAp}. We first consider QAOA with a single level. 
By %
expressing the action of a %
QAOA$_1$ circuit as a series in the parameters $\gamma,\beta$, the leading-order contributions to the measurement probabilities and cost function expectation 
(beyond the initial contributions $P_0(x)=1/2^n$ and $\langle C\rangle_0=\langle c\rangle_0$ obtained from uniform random guessing)
are %
expressible in terms of the cost divergence function. Hence, the cost divergence determines the behavior of QAOA$_1$ in the \lq\lq small-angle\rq\rq\ regime $|\gamma|,|\beta|\ll 1$; 
see \remref{rem:normInvariance} below, 
which also considers additional important quantities such as $\|C\|$.  Moreover, we show that for sufficiently small parameters %
QAOA$_p$ behaves as an effective QAOA$_1$.

\begin{theorem}[Leading-order QAOA$_1$]\label{thm1:smallAngles}
Consider QAOA$_1$ with an arbitrary cost function~$c(x)$. %
Then to lowest order in $\gamma,\beta$ the probability $P_1(x)$ of measuring a %
bitstring~$x\in\{0,1\}^n$ is given by
\begin{equation} \label{eq:px1}
    P_1(x) \,\simeq \,\,
P_0(x) \,-\, \frac{2\gamma \beta}{2^n}dc(x), %
\end{equation}
with 
expected value of the cost Hamiltonian (cost function) 
\begin{eqnarray} \label{eq:expecC1}
\langle C \rangle_1  \,\simeq \, \langle C \rangle_0 \, -\,  \frac{2 \gamma \beta}{2^n} \sum_x c(x) \,dc(x), %
\end{eqnarray}
where the neglected terms in both equations are 
degree~4 or higher in $\gamma,\beta$. %

Moreover, %
for any non-constant $c(x)$ we have %
\begin{eqnarray}\label{eq:expecC1b}
\sum_x c(x) dc(x) < 0.
\end{eqnarray}
\end{theorem}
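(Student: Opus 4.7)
The plan is to work in the Heisenberg picture and write, for $A = C$ (yielding $\langle C\rangle_1$) or $A = \Pi_x := \ket{x}\bra{x}$ (yielding $P_1(x)$),
\[
\langle A\rangle_1 \,=\, \bra{s}e^{i\gamma C}e^{i\beta B}\,A\,e^{-i\beta B}e^{-i\gamma C}\ket{s}
\,=\, \sum_{m,k\geq 0}\frac{(i\gamma)^m(i\beta)^k}{m!\,k!}\,\bra{s}\mathrm{ad}_C^{\,m}\,\mathrm{ad}_B^{\,k}(A)\ket{s},
\]
which is the double Taylor expansion obtained from $e^{X}Ye^{-X} = \sum_k \tfrac{1}{k!}\mathrm{ad}_X^k(Y)$. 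The $(0,0)$ term reproduces $\langle A\rangle_0$, and I will show that every $(m,k)$ term with $1\leq m+k\leq 3$ other than $(1,1)$ vanishes.

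The key observation is a parity argument. Since $B$ and $C$ both have real entries in the computational basis, so does every iterated commutator $\mathrm{ad}_C^{\,m}\mathrm{ad}_B^{\,k}(A)$; moreover, since $A$ is Hermitian, $\mathrm{ad}_C^{\,m}\mathrm{ad}_B^{\,k}(A)$ is Hermitian when $m+k$ is even and anti-Hermitian when $m+k$ is odd. A real-entry anti-Hermitian operator $X$ satisfies $X_{ij}=-X_{ji}$ and $X_{ii}=0$, hence $\bra{s}X\ket{s}=2^{-n}\sum_{ij}X_{ij}=0$. This immediately eliminates every odd-degree contribution, in particular the pure-$\gamma$, pure-$\beta$, $(2,1)$, and $(1,2)$ terms, removing all degree-$1$ and degree-$3$ terms at once. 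Of the remaining degree-$2$ terms, $(2,0)$ vanishes because $C$ commutes with both $C$ and $\Pi_x$, and $(0,2)$ vanishes because $\ket{s}$ is an eigenstate of $B$ (so $\bra{s}[B,Y]\ket{s}=0$ for every~$Y$). Thus only $(1,1)$ survives through degree $3$, confirming the neglected terms are degree $4$ or higher.

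The $(1,1)$ contribution is $-\gamma\beta\,\bra{s}[C,[B,A]]\ket{s}$. For $A=\Pi_x$, applying $[B,\,\cdot\,]$ and $[C,\,\cdot\,]$ in the computational basis and using $[C,\ket{y}\bra{z}]=(c(y)-c(z))\ket{y}\bra{z}$ gives
\[
[C,[B,\Pi_x]] \,=\, \sum_{j=1}^n \partial_j c(x)\bigl(\ket{x^{(j)}}\bra{x} + \ket{x}\bra{x^{(j)}}\bigr),
\]
whose $\bra{s}\!\cdot\!\ket{s}$ expectation is $(2/2^n)\,dc(x)$, yielding \eqref{eq:px1}. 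For $A=C$, I would instead use the stated identity $\nabla C\ket{s}=2^{-n/2}\sum_x dc(x)\ket{x}$ together with $(\nabla C)^\dagger=-\nabla C$ to compute $\bra{s}[C,\nabla C]\ket{s} = 2\bra{s}C\,\nabla C\ket{s} = (2/2^n)\sum_x c(x)\,dc(x)$, yielding \eqref{eq:expecC1}.

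Finally, to prove the strict inequality for non-constant~$c$, I would expand
\[
\sum_x c(x)\,dc(x) \,=\, \sum_{j=1}^n\sum_x c(x)\bigl(c(x^{(j)})-c(x)\bigr),
\]
and use the elementary identity $\sum_x \bigl(c(x)-c(x^{(j)})\bigr)^2 = 2\sum_x c(x)^2 - 2\sum_x c(x)c(x^{(j)})$, which is valid because $x\mapsto x^{(j)}$ is a bijection on $\{0,1\}^n$. Substituting for the cross term produces cancellation of the $n\sum_x c(x)^2$ contributions and leaves the clean expression
\[
\sum_x c(x)\,dc(x) \,=\, -\tfrac{1}{2}\sum_{j=1}^n\sum_x \bigl(\partial_j c(x)\bigr)^2 \,\leq\, 0,
\]
with equality iff every $\partial_j c$ vanishes identically, i.e., iff $c$ is constant. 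The main obstacle I anticipate is nothing conceptual but rather the bookkeeping in identifying the correct parity and Hermiticity of each iterated commutator; once that structural fact is in place, both low-order vanishings and the explicit $(1,1)$ computation reduce to elementary manipulations using the correspondence $\nabla C\leftrightarrow dc$.
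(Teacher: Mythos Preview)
Your proposal is correct and follows essentially the same approach as the paper: the Heisenberg-picture double expansion, the parity/Hermiticity argument killing odd-order terms (the paper's \lemref{lem:genGrads}), the eigenstate-of-$B$ argument killing pure-$\beta$ terms (the paper's \lemref{lem:expecGrad}), and the explicit $(1,1)$ computation together with the sum-of-squares identity $\sum_x c(x)\,dc(x)=-\tfrac12\sum_{j,x}(\partial_j c(x))^2$ (the paper's \lemref{lem:expecCDC}). The only cosmetic difference is that the paper packages these steps as separate lemmas and also records the equivalent Pauli-coefficient form $-4(\sum_j a_j^2+2\sum_{i<j}a_{ij}^2+\dots)$ for the strict inequality, whereas you work directly with the partial-difference sum of squares.
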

The theorem gives novel insight into QAOA.
We see that %
to leading order, QAOA$_1$ causes %
probability to flow in (or out) %
of each bitstring $x$ %
in proportion to its cost divergence $dc(x)$. 
In particular, for a maximization problem with $c(y)\geq 0$ for every $y$, when $0< \gamma,\beta \ll 1$ %
from \eqref{eq:expecC1} and  \eqref{eq:expecC1b} we see that to leading order \textit{probability flows to increase the cost expectation value}. 
Similarly, selecting the product $\gamma \beta <0$ guarantees that probability flows to reduce the cost expectation value, to lowest order. 

As an example, for MaxCut, \tabref{tab:summary} below has
$dc(x)=2m-4c(x)$, 
so \thmref{thm1:smallAngles} gives  
$$ \langle C \rangle_1 - \langle C \rangle_0 \simeq -4m\gamma\beta\langle C \rangle_0 + 8\gamma\beta\langle C^2 \rangle_0=2\gamma\beta m$$
to leading order for $|\gamma|,|\beta|\ll 1$, 
where notably no terms quadratic in $m$ contribute. 
In particular, when $\gamma\beta$ is positive, the change in cost function $\langle C \rangle_1 - \langle C \rangle_0 $ is strictly positive as, e.g.,  $\gamma,\beta\rightarrow 0^+$. Such information may %
help in %
initial selection and search for suitable algorithm parameters. 

We %
bound the error arising from the neglected higher-order terms in \eqsref{eq:px1}{eq:expecC1} in \secref{sec:QAOA1}, from which it follows that angles can always be selected such that QAOA beats random guessing in expectation; see \secref{sec:randomGuessing}. %
In terms of the solution probabilities, observe that, while not the same, the %
lowest order approximation \eqref{eq:px1} is reminiscent of a step of classical gradient descent for optimizing functions over continuous domains, in this case the $2^n$-dimensional probability vector, with \lq\lq learning rate\rq\rq\ proportional to the product $\gamma\beta$.  
Observe that, letting $\widetilde{P_1}(x)$ denote the righthand side of \eqref{eq:px1}, the property $\sum_x dc(x)=0$ implies $\widetilde{P_1}(x)$ gives a normalized probability distribution (assuming large enough $n$ or sufficiently small  $|\gamma|,|\beta|$ such that $\widetilde{P_1}(x)\geq 0$), which implies \eqref{eq:px1} may be interpreted classically. %
Indeed, using the theorem we show a simple classical randomized algorithm based on coin flipping in \secref{sec:smallQAOA1class} that reproduces the lowest-order QAOA probabilities, and hence samples from QAOA circuits to small error in a suitably defined \lq\lq small-angle\rq\rq regime.

The proof of \thmref{thm1:smallAngles} is relatively straightforward using our framework and is given in \secref{sec:smallAngleQAOA1}. We emphasize that the proof %
shows the 
other possible terms up to third order (i.e., those proportional to $\gamma,\gamma^2,\gamma^3,\beta,\beta^2,\beta^3,\gamma^2\beta$ or $\gamma\beta^2$) to be identically zero in both~\eqsref{eq:px1}{eq:expecC1}, in general. We show the nonzero contributing terms up to fifth order explicitly in \secref{sec:QAOAp}. 
Furthermore, we show that for any cost Hamiltonian given as a linear combination of Pauli $Z$ operators $C=a_0I+\sum_ja_jZ_j +\sum_{i<j}a_{ij}Z_iZ_j+\dots$, %
the quantity $\tfrac1{2^n}\sum_x c(x) \,dc(x)$ appearing in \eqref{eq:expecC1} may be %
efficiently computed from the Hamiltoian coefficients~$a_\alpha$; see \lemref{lem:expecCDC} below.

For $p>1$ layers we similarly derive %
leading order expansions %
for QAOA$_p$, with respect to all or some angles being treated as %
expansion parameters, in \secref{sec:smallAngleQAOAp}.  
As remarked, the leading order contribution to probabilities and cost expectation for QAOA$_p$ is shown to be of the same form as that of QAOA$_1$, but with a generalized coefficient that depends on all $2p$ phase and mixing angles. 

\begin{theorem}[3rd order QAOA$_p$] \label{thm:allanglessmall}
For QAOA$_p$ we have cost expectation 
\begin{eqnarray} \label{eq:expecCp0a}
\langle C \rangle_p \, %
&=& %
\langle C \rangle_{0}
\,-\, \frac2{2^n} \left(\sum_{1\leq i\leq j}^p \gamma_i \beta_j\right) \sum_x c(x)dc(x)\,
  + \,\dots
\end{eqnarray}
where the terms not shown to the right are order four or higher in $\gamma_1,\beta_1,\dots, \gamma_p,\beta_p$. %
\end{theorem}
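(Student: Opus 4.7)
The plan is to work in the Heisenberg picture, writing $\langle C\rangle_p=\langle s|\tilde C|s\rangle$ with $\tilde C:=U^\dagger C U$, expand $\tilde C$ as a power series in the $2p$ angles through order $3$, and identify which terms survive the expectation on $|s\rangle$. Iterating the adjoint-action identity $e^{i\theta X}Y e^{-i\theta X}=\sum_{k\ge 0}\tfrac{(i\theta)^k}{k!}[X,\cdot]^k Y$ through the layers of $U$ gives $\tilde C = e^{i\gamma_1[C,\cdot]}e^{i\beta_1[B,\cdot]}\cdots e^{i\gamma_p[C,\cdot]}e^{i\beta_p[B,\cdot]}\,C$, so each monomial in $(\gamma_1,\beta_1,\dots,\gamma_p,\beta_p)$ of total degree $r$ appears with coefficient $i^r$ times a rational multiple of an $r$-fold nested commutator of $B$'s and $C$'s applied to $C$.

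All odd-order contributions vanish by a parity-and-reality argument. In the computational basis $B$, $C$, and $|s\rangle$ are all real, while each application of $[B,\cdot]$ or $[C,\cdot]$ exchanges real symmetric with real skew-symmetric operators. Hence an $r$-fold nested commutator acting on the symmetric $C$ is real skew-symmetric whenever $r$ is odd, and $v^{T}Av=0$ for any real skew-symmetric $A$ and real vector $v$. This kills the $r=1$ and $r=3$ pieces wholesale, irrespective of which positions in the circuit supply the picked commutators.

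At order $2$ I split into three subcases. Pure-$\gamma$ monomials all contain an innermost $[C,C]=0$, hence vanish as operators. Pure-$\beta$ monomials give multiples of $\nabla^2 C=[B,\nabla C]$, whose $|s\rangle$-expectation vanishes because $B|s\rangle=n|s\rangle$ forces $\langle s|[B,Y]|s\rangle=0$ for every $Y$. For a mixed $\gamma_i\beta_j$ monomial, the two first-order picks compose in Heisenberg order as $[C,[B,C]]=[C,\nabla C]$ when $\beta_j$ is innermost (applied first) relative to $\gamma_i$, and as $[B,[C,C]]=0$ otherwise. Reading the super-operator product from right (innermost) to left, $\beta_j$ precedes $\gamma_i$ precisely when $i\le j$, and the two picks contribute a prefactor $(i\gamma_i)(i\beta_j)=-\gamma_i\beta_j$. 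Summing gives $-\bigl(\sum_{1\le i\le j}^p\gamma_i\beta_j\bigr)[C,\nabla C]$ as the only order-$2$ piece of $\tilde C$ with nonzero $|s\rangle$-expectation.

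Finally, using $\nabla C|s\rangle=2^{-n/2}\sum_x dc(x)|x\rangle$ from \secref{sec:calculus} together with the reality of all matrix elements gives $\langle s|[C,\nabla C]|s\rangle=\tfrac{2}{2^n}\sum_x c(x)\,dc(x)$, which when substituted yields the claimed expansion \eqref{eq:expecCp0a}. I expect the main obstacle to be the bookkeeping in the mixed order-$2$ case: verifying that the index restriction $i\le j$ is exactly what the alternating ansatz structure enforces, and that the coefficient assembly across all choices of first-order picks from the $2p$ exponentials sums cleanly to $\sum_{1\le i\le j}^p\gamma_i\beta_j$ with no spurious combinatorial factors.
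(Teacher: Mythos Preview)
Your argument is correct and follows essentially the same route as the paper: the paper deduces \thmref{thm:allanglessmall} from the more detailed \thmref{thm:smallprecursed}, whose proof expands $Q^\dagger C Q$ via \lemref{lem:QAOAconjGen}, kills odd-order terms by the skew-adjointness argument of \lemref{lem:genGrads}, kills leftmost-$\nabla$ (pure-$\beta$) terms via $B\ket{s}=n\ket{s}$ as in \lemref{lem:expecGrad}, and then reads off the coefficient $\sum_{i\le j}\gamma_i\beta_j$ of $\langle\nabla_C\nabla C\rangle_0$ from the ordered superoperator product, finally invoking \lemref{lem:expecCDC} for the value $\tfrac{2}{2^n}\sum_x c(x)dc(x)$. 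Your parity-and-reality argument for odd orders and your eigenvector argument for pure-$\beta$ terms are exactly these lemmas stated inline, and your bookkeeping for the $i\le j$ restriction is correct (the $\gamma_i$-exponential sits at position $2i-1$ in the superoperator string, the $\beta_j$-exponential at position $2j$, so $\nabla$ acts before $\nabla_C$ precisely when $2j>2i-1$).
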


\begin{table}[ht]
\centering
\begin{tabular}{ |c|c|c||c|} 
 \hline
 Label & Symbol & Definition & Example: Value for MaxCut\\ 
  \hline
    Cost function & $c(x)$ & $c:\{0,1\}^n \rightarrow \reals$ & $\sum_{(ij)\in E } x_i \oplus x_j$\\ 
      $j$th partial difference & $\partial_j c(x)$ & $c(x^{(j)})-c(x)$  & %
      $|N_j|-2\sum_{i \in N(j)} x_i \oplus x_j$ \\
 Cost divergence &  $dc(x)$ & $\sum_{i=1}^n \partial_i c(x)$ &  $2|E| - 4c(x)$ \\
   $\ell$th cost divergence &  $d^\ell c(x)$ & $\sum_{i=1}^n \partial_i d^{\ell-1} c(x)$ &  $(-4)^\ell (c(x)-\tfrac{|E|}2)$  \\
 \hline
Mixing Hamiltonian & $B$ & %
transverse-field & $\sum_{i=1}^n X_i$\\
  Cost Hamiltonian& $C$  & $C\ket{x}=c(x)\ket{x}$  &  $\tfrac{|E|}{2}I-\tfrac12\sum_{(ij)\in E} Z_i Z_j$\\
Cost divergence Ham.  & $DC$ & $DC \ket{x}= dc(x)\ket{x}$  & $2\sum_{(ij)\in E} Z_i Z_j$ \\
$\ell$th cost div. Ham.  & $D^\ell C$ & $D^\ell C \ket{x}= d^\ell c(x)\ket{x}$ & $-\tfrac12 (-4)^\ell \sum_{(ij)\in E} Z_i Z_j$ \\
Cost gradient op. & $\nabla C$ & $[B ,C]$ & $ i \sum_{(ij)\in E}(Y_iZ_j + Z_iY_j)$ \\
Order 2 cost gradient & $\nabla^2 C$ & $[B,[B ,C]]$ & $4 \sum_{(ij)\in E}(Y_iY_j - Z_iZ_j)$ \\
$\ell$th-order gradient & $\nabla^\ell C$ & $[B,[B,\dots [B ,C]]..]$ &
$\begin{cases}
      \,4^{\ell-1}\,\nabla C & \text{for  $\ell$ odd}\\
      \,4^{\ell-2}\,\nabla^2 C & \text{for  $\ell$ even}\\
    \end{cases}$\\
Gradient w.r.t. $A$ & $\nabla_A $ & $[A ,\cdot]$ &  (various; see \secsref{sec:costGrads}{sec:QAOA1})\\ %
Mixed gradient  & $\nabla_C \nabla C $ & $[C,[B ,C]]$ & $- \sum_{i=1}^n ( |N_i| X_i + \sum_{j,\ell \in N(i)} X_i Z_j Z_\ell)$  \\
 \hline
  Initial state & $\ket{s}$ & $\ket{+}^{\otimes n}=\frac1{\sqrt{2^n}}\sum_x\ket{x}$& uniform distrib. of possible cuts \\
  Initial expectation & $\langle \,\cdot\, \rangle_0$ & $\bra{s} \cdot \ket{s}$ & $\langle C \rangle_0 = \tfrac1{2^n}\sum_x c(x)=|E|/2$\\
  \hline
\end{tabular}
\caption{
Summary of important %
notation, with the MaxCut problem, which seeks to partition the vertices of a graph so as to maximize the number of cut edges, as an example. 
The top of the table shows functions on bits and the middle shows related operators on qubits. 
The $n$-qubit computational basis states $\ket{x}$ represent bitstrings $x\in\{0,1\}^n$, and the bottom of the table shows %
the %
standard QAOA initial state we primarily consider in this paper, though other choices are possible. 
The operators $X_j,Y_j,Z_j$ are the single qubit Pauli matrices acting on the $j$th qubit. 
The symbols $\partial_j,d$ denote right-acting operators on functions, %
and likewise $D,\nabla$ denote superoperators on the space of $n$-qubit matrices; in both cases $k$th powers indicate operator iteration. The rightmost column shows their realization for MaxCut on a graph $G=(V,E)$ with $|V|=n$ nodes, %
where here the graph neighborhood of each vertex variable is denoted $N(j)=\{i:(ij)\in E\}\subset [n]$ with  
$|N(j)|=\deg(j)$ the degree of vertex~$j$. %
The formula $dc(x)=2|E|-4c(x)$ follows from %
\eqref{eq:DCb} and 
is particular to MaxCut, as is the formula given for $\nabla^\ell C$ which is a special case of \lemref{lem:quboGrads}. Similar but more complicated formulas can be derived for higher-order mixed gradients $\nabla_{A_\ell}^{a_\ell}\dots\nabla_{A_2}^{a_2}\nabla_{A_1}^{a_1}C$ %
and for applications to problems other than MaxCut.
}
\label{tab:summary}
\end{table}

\begin{rem}
\thmsref{thm1:smallAngles}{thm:allanglessmall} apply more generally than stated. 
Viewed as expansions about the Identity operator $I$, i.e., the QAOA circuit with all parameters set to $0$, we immediately obtain corresponding results about any choice of parameters $\gamma'_j,\beta'_j$ such that the resulting QAOA circuit becomes equal or proportional to $I$. In such cases the resulting expansions are given in terms of the parameter differences $\gamma_j-\gamma'_j$ and $\beta_j-\beta'_j$. 

Indeed, QAOA circuits are often periodic: 
\eqref{eq:mixingmatrixelements} shows that the QAOA mixing operator $U_M(\beta)$ is proportional to the Identity when $\beta$ is an integer multiple of $\pi$, and hence $\langle C\rangle_p$ is (at least) $\pi$-periodic in each mixing angle $\beta_j$. The periodicity of the phase operator $U_P(\gamma)$ relates to the coefficients of the cost Hamiltonian, which are uniform for many problems (e.g. MaxCut) and hence lead to similar insights (cf. \eqref{eq:expecC1maxcut}). 
For %
simplicity we do not deal explicitly with periodicity of QAOA circuits  %
in the remainder of the paper, though similar considerations apply to many of our results to follow. 
\end{rem}%

\subsubsection{Example: leading-order QAOA versus a simple quench}\label{sec:quench}
Our framework facilitates comparison between %
different quantum circuit ans\"atze,
in particular by allowing a term-by-term comparison of the resulting series expressions.
To illustrate this we show that  %
to leading order the QAOA cost expectation value is the same as that of a simple quantum quench algorithm, %
where for %
a fixed Hamiltonian $H=aC+bB$ %
the state $\ket{\tau}:=e^{-i\tau H}\ket{s}$ 
is prepared and measured in the computational basis. %
Such a quantum approach to classical optimization is considered, e.g.,  in~\cite{hastings2019duality,callison2020energetic}. 
\begin{prop} \label{prop:quench}
Let $\tau > 0$, $\gamma,\beta\in\reals$, and Hamiltonian $H$ be such that $\tau H=\sqrt{2}\beta B + \sqrt{2}\gamma C$. Then a quench of $H$ for time $\tau$ produces the same leading-order change \eqrefp{eq:expecC1} in $\langle C\rangle$ as QAOA$_1(\gamma,\beta)$, i.e., 
\begin{equation} \label{eq:quenchLeadingOrder}
      \langle C \rangle_\tau   \,=\, \bra{\tau}C\ket{\tau} \,=\, \langle C \rangle_0   %
      -\,  \frac{2 \gamma \beta}{2^n} \sum_x c(x) \,dc(x)\dots
\end{equation}
where the terms to the right are order four or higher in $\gamma,\beta$.

For QAOA$_p$, similarly, selecting instead $\gamma,\beta\in\reals$ such that $\gamma\beta=\sum_{1\leq i\leq j}^p \gamma_i \beta_j$ gives $\langle C \rangle_\tau$ as in \eqref{eq:expecCp0a}. 
\end{prop}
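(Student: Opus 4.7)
The plan is to expand the quench expectation in the Heisenberg picture and match coefficients order by order against \thmref{thm1:smallAngles}. Setting $A:=\tau H=\sqrt{2}\beta B+\sqrt{2}\gamma C$, we have
\begin{equation*}
    \langle C\rangle_\tau
    \,=\,\bra{s}e^{iA}\,C\,e^{-iA}\ket{s}
    \,=\,\sum_{k=0}^{\infty}\frac{i^{k}}{k!}\,\bra{s}[A,C]_{k}\ket{s},
\end{equation*}
where $[A,C]_{k}$ denotes the $k$-fold nested commutator $[A,[A,\dots,[A,C]\dots]]$. Because each bracketing with $A$ introduces a factor of either $\sqrt{2}\beta$ or $\sqrt{2}\gamma$, the $k$-th term contributes at total degree exactly $k$ in the parameters; in particular every $k\geq 4$ contribution may be absorbed into the $\dots$ of \eqref{eq:quenchLeadingOrder}, so it suffices to analyze $k=0,1,2,3$.

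I would then dispatch the low-order terms using the framework of \secref{sec:calculus}. The $k=0$ term is $\langle C\rangle_{0}$. Using $[C,C]=0$, the $k=1$ term is $i\sqrt{2}\beta\,\bra{s}\nabla C\ket{s}$, which vanishes since $\sum_{x}dc(x)=0$. Expanding the $k=2$ bracket analogously gives
\begin{equation*}
    [A,[A,C]] \,=\, 2\beta^{2}\,\nabla^{2}C \,+\, 2\gamma\beta\,\nabla_{C}\nabla C,
\end{equation*}
whose $\beta^{2}$ piece vanishes in expectation by the general identity $\bra{s}\nabla^{2}C\ket{s}=0$ (a direct consequence of $B\ket{s}=n\ket{s}$). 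The surviving contribution is $-\gamma\beta\,\bra{s}[C,[B,C]]\ket{s}$, which I would evaluate by direct action on $\ket{s}$: using $\nabla C\ket{s}=\tfrac{1}{\sqrt{2^{n}}}\sum_{x}dc(x)\ket{x}$ together with $[B,C]\ket{x}=-\sum_{j}\partial_{j}c(x)\ket{x^{(j)}}$ yields $\bra{s}[C,[B,C]]\ket{s}=\tfrac{2}{2^{n}}\sum_{x}c(x)\,dc(x)$, which is precisely the leading correction appearing in \eqref{eq:quenchLeadingOrder}.

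For $k=3$, the expansion of $[A,[A,[A,C]]]$ produces nested commutators such as $\nabla^{3}C$, $\nabla_{C}\nabla^{2}C$, and $\nabla_{C}^{2}\nabla C$, each weighted by a degree-three monomial in $\gamma,\beta$. All of these initial-state expectations vanish by the same zero-expectation lemmas of \secref{sec:calculus} that force the absence of the $\gamma^{3},\beta^{3},\gamma^{2}\beta,\gamma\beta^{2}$ contributions in \thmref{thm1:smallAngles}. This completes the QAOA$_{1}$ claim, and the QAOA$_{p}$ statement then follows with no additional work by substituting $\gamma\beta=\sum_{1\leq i\leq j}^{p}\gamma_{i}\beta_{j}$ and comparing with \thmref{thm:allanglessmall}.

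The main obstacle in the plan is the systematic verification that the degree-two $\beta^{2}$ and all four degree-three contributions vanish; however, this bookkeeping is already done inside the framework of \secref{sec:calculus}, since these are the same identities responsible for the leading-order form of \thmref{thm1:smallAngles}. Once those are invoked, the proof collapses to the single short calculation of the $\gamma\beta$ coefficient above.
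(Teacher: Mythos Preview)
Your proposal is correct and follows essentially the same approach as the paper: expand $e^{i\tau H}Ce^{-i\tau H}$ via the conjugation series, compute the nested commutators with $\tau H=\sqrt{2}\beta B+\sqrt{2}\gamma C$ order by order, and kill the $\beta,\beta^2$ and all third-order contributions using the framework lemmas (\lemsref{lem:expecGrad}{lem:expecCDC}). The paper's proof is terser---it writes out the expansion only through second order and then invokes \lemref{lem:expecCDC} directly for the surviving $\gamma\beta$ coefficient---whereas you re-derive $\langle\nabla_C\nabla C\rangle_0$ by hand and spell out the $k=3$ case, but the substance is the same.
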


Hence, QAOA with sufficiently small parameters resembles a short-time quench. As we show, our framework allows for simple derivation of such formulas.  %
Higher order contributing terms may be systematically derived with our framework.

\begin{rem}
A simple quench conserves $H$ %
in expectation %
$\langle H\rangle_{\tau} = \langle H \rangle_0$, so for $H=aC+bB$ linearity gives 
   $ a(\langle C \rangle_\tau -   \langle C \rangle_0 )
     = -b\,(\langle B \rangle_\tau -   \langle B \rangle_0)= bn -b\langle B \rangle_\tau$, and 
thus the change in $\langle C\rangle$ %
is determined from the change in $\langle B\rangle$. 
As $-n \leq \langle B \rangle \leq n$ for any quantum state $\ket{\tau}$, we have $|\langle C \rangle_\tau -   \langle C \rangle_0| \leq 2|b/a|n$ for any possible quench time $\tau$. 
Therefore, when $b/a$ is constant, the above quench can shift the expectation value of $C$ by at most $ O(n)$, whereas $|C(x^*)-\langle C\rangle_0|$ may in general be much larger (e.g., $O(n^2)$ for MaxCut).  
This shows the limitations of a simple quench, i.e., the quench time or ratio $|b/a|$ must often grow with the problem size if large improvements in the expected value of the cost function are sought. Alternatively, to circumvent this requirement, the Hamiltonian $B$ can be replaced or augmented with additional $k$-local terms as proposed in~\cite{hastings2019duality}.
QAOA does not in general obey %
a similarly simple conservation law. %
\end{rem}

In the remainder of \secref{sec:calculus} 
we derive the operators used in our framework and show the connection to the cost difference functions, as well as a number of useful properties. 
For the convenience of the reader %
we %
both summarize our main notation in \tabref{tab:summary}
and as an guiding example show its explicit realization for MaxCut, an NP-hard constraint satisfaction problem considered extensively for QAOA~\cite{Farhi2014,wang2018quantum,hadfield2018thesis,crooks2018performance,zhou2018quantum,guerreschi2019qaoa}. 

\subsection{Hamiltonians representing cost differences} \label{sec:costHam}
For a cost function $c:\{0,1\}^n\rightarrow \reals$ %
with corresponding $n$-qubit diagonal\footnote{Throughout the paper \textit{diagonal} operators are %
with respect to the computational basis.} 
Hamiltonian~$C$ given in \eqref{eq:costHam}, i.e., acting as $C\ket{x}=c(x)\ket{x}$ for each $x\in\{0,1\}^n$, we may uniquely express~$C$ as a multinomial in the Pauli $Z$ operator basis as
\begin{equation} \label{eq:costHamZs}
C=a_0 I + \sum_{j=1}^n a_j Z_j + \sum_{i<j} a_{ij}Z_iZ_j+\dots    
\end{equation}
We say $C$ is $k$-local for $k$ the largest degree of a nonzero term in \eqref{eq:costHamZs}; we further discuss problem and Hamiltonian locality considerations in~\secref{sec:probLocality}. 
The coefficients $a_\alpha$ of $C$ are given by the Fourier expansion of the cost function $c(x)$; see \cite{hadfield2018representation} for details. In particular the Identity coefficient~$a_0=
\tfrac1{2^n}\sum_x c(x)$ is %
the cost expectation under the uniform probability distribution. 

For example, for constraint satisfaction problems with cost function given by a sum of %
clauses $c=\sum_j c_j$, with each clause acting on at most $k$ variables, $k=O(1)$, we may efficiently construct a $k$-local cost Hamiltonian of the form \eqref{eq:costHamZs} with a number of terms polynomial in $n$. For such cost Hamiltonians the corresponding QAOA phase operator $U_P(\gamma)=e^{-i\gamma C}$ is efficiently implementable in terms of single-qubit and CNOT gates~\cite{hadfield2018representation}.  
Motivating examples are the NP-hard optimization problems MaxCut and Max-$2$-SAT, studied for QAOA in \cite{Farhi2014,Wecker2016training,wang2018quantum,wilson2019optimizing}, which each have quadratic cost Hamiltonians.

Using the results of \cite{hadfield2018representation}
we may similarly lift the classical 
partial difference and cost divergence functions 
of \eqsref{eq:costdiffs}{eq:costdiv}
to diagonal Hamiltonians~$\partial_j C$ and $DC=\sum_j \partial_j C$, respectively, which act on computational basis states $\ket{x}$ as
$\partial_j C\ket{x} = \partial_j c (x) \ket{x}$ and $DC\ket{x} = dc(x)\ket{x}$. 
In particular, given a cost Hamiltonian $C$ explicitly as in \eqref{eq:costHamZs}, the operators $\partial_j C$ and $DC$ are easily computed from %
the useful relations    
\begin{equation}  \label{eq:djC}
\partial_j C = X_jCX_j - C , 
\end{equation}
$j=1,\dots, n$, 
from which 
it trivially follows that 
\begin{equation} \label{eq:djCb}
 \bra{s}\partial_j C \ket{s} =0 \,\;\;\;\;\;  \text{and} \;\;\;\;\;\, \bra{s} DC \ket{s} =0.  
\end{equation}
Results such as \eqsref{eq:djC}{eq:djCb} apply generally to any cost Hamiltonian. 

Next observe that 
from the %
Pauli anti-commutation relation $\{X,Z\}:=XZ+ZX=0$, 
the action of $X_j\cdot X_j$ on $C$ either flips the sign of or leaves alone each term of $C$ in \eqref{eq:costHamZs}.
Hence if for each $j$ we define a partition of the terms of %
$C$  as $C=C^{\{j\}}+C^{\{\setminus j\}}$, where $C^{\{j\}}$ is defined as the partial sum of the terms of \eqref{eq:costHamZs} that contain a $Z_j$ factor, and $C^{\{\setminus j\}}$ the remainder, %
then 
\begin{equation} \label{eq:partialjC}
    \partial_j C = -2 C^{\{j\}}.
\end{equation}
For example, for MaxCut, $C^{\{j\}}$ contains a $ZZ$ term for each neighbor (adjacent vertex) of the $j$th vertex. 
Summing over $j$, the cost divergence Hamiltonian satisfies 
\begin{equation} \label{eq:DCa}
    DC = -2 \sum_{j=1}^n C^{\{j\}},
\end{equation}
Each strictly $k$-local term in $C$ appears $k$ times in the sum above.
We may alternatively partition a 
 $k$-local Hamiltonian as  $C=C_{(0)}+C_{(1)}+\dots C_{(k)}$, where each $C_{(j)}$ contains strictly $j$-local terms. Then we also have 
\begin{equation} \label{eq:DCb}
    DC=-2(C_{(1)}+2C_{(2)}+\dots+kC_{(k)}),
\end{equation}
Thus we see that for general cost Hamiltonians $C$, we may represent %
$DC$ either in terms of the problem locality (i.e., the problem graph structure reflected in the $C^{\{j\}}$) as in \eqref{eq:DCa}, or operator locality of~$C$ as in \eqref{eq:DCb}. Both of these representations of $DC$ are useful in our analysis to follow.

Furthermore, it is often useful to view $D:C\rightarrow DC$ as a superoperator on the space of diagonal Hamiltonians. Iterating this operation $D^\ell C := D(D(\dots (DC))..)$ gives
\begin{equation} \label{eq:DlC}
    D^\ell C = (-2)^\ell \sum_{j=0}^k j^\ell C_{(j)}
\end{equation}
for $\ell = 0,1,2,...$, so we see $D^\ell C$ closely relates to $C$ (and moreover $d^\ell c$ relates to $c$ similarly). 
Hence if $C$ is strictly $k$-local, then $D^\ell C=(-2k)^\ell C$ is proportional to $C$, and so they represent the same classical function up to a constant~\cite{hadfield2018representation}. 
For example, the cost Hamiltonian for MaxCut is $2$-local, up to its Identity term; see \tabref{tab:summary}.

\begin{rem} \label{rem:probProj}
An important family of diagonal Hamiltonians are the projectors 
\begin{equation} \label{eq:Hx}
    H_y := \ket{y}\bra{y}  \;\;\;\; \text{ for } \;\; y\in\{0,1\}^n, 
\end{equation}
which as observables give the probability of measuring the string $x$ for a generic state $\ket{\psi}$ in expectation as $P_\psi(y)=|\bra{y}\ket{\psi}|^2=\bra{\psi}H_y\ket{\psi}$. Importantly, $H_y$ represents the classical function $\delta_y(x)$ in the sense of \eqref{eq:costHam} which is $1$ 
when its input $x$ equals $y$ and $0$ otherwise.
\end{rem}

\subsection{Cost gradient operators} \label{sec:costGrads}
Non-diagonal operators generated by (iterated) commutators %
of~$B$ and~$C$ play a fundamental role in our approach to %
analyzing QAOA. 
Specifically, given a cost Hamiltonian~$C$, we identify the commutator of $B$ and~$C$ as the \textit{cost gradient operator} $\nabla C := [B,C]$, with $\nabla:=[B,\cdot]$.\footnote{Commutator operators such as $\nabla=[B,\cdot]$ are often called \textit{adjoint operators} %
and written $\rm{ad}_B:=\nabla$, sometimes with an additional $i=\sqrt{-1}$ factor $\rm{ad}_B:=i\nabla$ in the context of Lie algebras \cite{woit2017quantum}.  
These operators are  \textit{inner derivations} on matrices, i.e., they satisfy the \textit{Leibniz property} $\nabla(C_1C_2) = (\nabla C_1)C_2 + C_1(\nabla C_2)$; see e.g. \cite[Sec. 5]{hatano2005finding}.} 
We refer to $\nabla C$ as a \lq\lq gradient\rq\rq\ from its action on computational basis states $\ket{x}$, and on the equal superposition state $\ket{s}=\frac{1}{\sqrt{2^n}}\sum_x \ket{x}$, which %
is easily expressed in terms of the cost difference functions is reminiscent of vector calculus over %
the reals.\footnote{Our notion of gradient is 
distinct from others %
in the literature, e.g., %
the %
(vector calculus) gradient of the classical function $f(\vec{\theta}\,):=\bra{\psi} U^\dagger(\vec{\theta}) A U(\vec{\theta}) \ket{\psi}$ for a parameterized quantum circuit $U(\vec{\theta}) \ket{\psi}$ \cite{crooks2019gradients}.} 

Observe that $\nabla C$ is skew-adjoint and traceless; in general we can make $\nabla C$ into a Hamiltonian if we multiply by $\pm i$.

First consider the commutators 
$\nabla_jC:=[X_j,C]=X_jC-CX_j$, $j=1,\dots,n$, so that $\nabla C =\sum_{j=1}^n \nabla_j C$. For each computational basis state $\ket{x}$ we have
$$ \nabla_jC\ket{x}\,
= (c(x)-c(x^{(j)}))\ket{x^{(j)}}
= - \partial_j c(x)\ket{x^{(j)}}.$$
Comparing the action on basis states, we may relate the operators $\partial_jC$ and $\nabla_j C$ as 
$\nabla_j C = \partial_j C \,X_j = - X_j\, \partial_j C.$
Thus we have 
\begin{equation} \label{eq:actionGradC}
  \nabla C \ket{x} %
= \,-\sum_{j=1}^n \partial_j c(x)\ket{x^{(j)}},  
\end{equation}
and it follows 
\begin{eqnarray} \label{eq:DCsuperpos0} 
 \nabla C \ket{s} \, =\,  %
\frac{1}{\sqrt{2^n}} \sum_x dc(x)\ket{x}.
\end{eqnarray}
Hence $\bra{x}\nabla C\ket{x}=0$ %
for each $x\in\{0,1\}^n$, and $\bra{s}\nabla C\ket{s}=0$ from $\sum_x dc(x)=0$.

The superoperator $\nabla:=[B,\cdot]$ mapping $C$ to $\nabla C$ cannot increase the locality of each Pauli term in $C$ of \eqref{eq:costHamZs}, as $B$ is a sum of $1$-local terms, and hence the resulting number of terms in the Pauli expansion of $\nabla C$ can increase at most by a multiplicative factor of $k$ for $k$-local $C$.
For problems with cost Hamiltonians of bounded locality, 
e.g. QUBO problems, we can easily calculate $\partial_j C$ and $\nabla C$ explicitly, %
as demonstrated in \tabref{tab:summary}.

\subsubsection{Higher-order gradients}
To connect our framework to QAOA circuits, consider the QAOA mixing operator $U_M(\beta)=e^{-i\beta B}$. 
The superoperator $i\nabla$ is the infintesimal generator of conjugation by $U_M$,
i.e., %
\begin{equation} \label{eq:conj1}
   e^{\beta B} C e^{-i\beta B} = C + \sum_{k=1}^\infty  \frac{(i\beta)^k}{k!} \nabla^k C =:e^{i\beta \nabla}C.
\end{equation}
We will elaborate on and use this relationship extensively %
in \secref{sec:QAOA1}. %
From this perspective %
our formalism naturally generalizes to %
include \textit{higher-order gradients}
\begin{equation} \label{eq:higherOrderDerivs}
    \nabla^\ell C := [B, [B, [B, \dots, [B,[B,C]]]..]].
\end{equation}
$\nabla^\ell C$ is skew-adjoint for $\ell$ odd and self-adjoint (i.e., a Hamiltonian) for $\ell$ even.

Our notation evokes but differs from %
that of vector calculus. The symbol $\nabla^2$ denotes \lq\lq $\nabla \cdot \nabla$\rq\rq\ in the latter, i.e., denotes the Laplacian operator, whereas in our notation $\nabla^2 := \nabla \nabla$ and $\nabla^2 C = [B,[B,C]]$.
Indeed, we may instead identify $\nabla^2 C$ with the  \textit{Hessian} of $c(x)$ from its action on basis states
\begin{equation}   \label{eq:grad2Cx}
\nabla^2 C \ket{x} = 
- 2 \sum_{j=1}^n \partial_jc(x )\ket{x}
+ 2\sum_{j<k} \partial_j  \partial_k c (x) \ket{x^{(j,k)}}.\end{equation}
Hence for each computational basis state $x$ we have
$ \bra{x}\nabla^2 C \ket{x} = -2 dc(x).$
On the uniform superposition state,  
the symmetry $\partial_j \partial_k c(x^{(j,k)}) = \partial_j \partial_k c(x)$
gives
\begin{equation}   \label{eq:grad2Cs}
 \nabla^2 C \ket{s}=
 \frac{1}{\sqrt{2^n}} \sum_x d^2c(x) \ket{x},
\end{equation}
where $d^2c(x)$ is defined in \eqref{eq:d2c}. This implies $\bra{s}\nabla^2 C\ket{s}=\tfrac1{2^n}\sum_x d^2c(x)$.  

More generally, for an operator $A$ which acts on $\ket{s}$ as $A\ket{s}=\frac{1}{\sqrt{2^n}}\sum_x a(x)\ket{x}$ for a real function $a(x)$, 
it follows 
\begin{equation} \label{eq:gradAs}
    \nabla A\,\ket{s} =\frac1{\sqrt{2^n}}\sum_x da(x)\ket{x}.
\end{equation}
This equation is useful for computing the action and expectation values of higher-order cost gradients, as we shall see in \secref{sec:expecVals}. 
In particular this relates the action of $\nabla^\ell C$ on $\ket{s}$ to 
higher order cost divergence functions $d^\ell c(x)$ as 
\begin{equation} \label{eq:gradCs}
 \nabla^\ell C \ket{s}=
 \frac{1}{\sqrt{2^n}} \sum_x d^\ell c(x) \ket{x}.
\end{equation}
In \lemref{lem:expecGrad} below we show
$\langle\nabla^\ell C\rangle_0:=\bra{s}\nabla^\ell C\ket{s}=0$ for $\ell \in \integers_+$ and any cost Hamiltonian, which implies $\sum_x d^\ell c(x) =0$ for any cost function. Note $\langle\nabla^\ell C\rangle_\psi \neq 0$ in general.

\begin{rem}
From the discussion for $\nabla C$ above, %
if $C$ is $k$-local then so is $\nabla^\ell C$, and the number of Pauli terms in $\nabla^\ell C$ is at most $k^\ell * \binom{n}k =O(k^\ell n^k)$. 
Hence, when $k=O(1)$, if $\ell=O(\log n)$ the number of terms remains $poly(n)$ and so we can represent and compute $\nabla^\ell C$ efficiently as a linear combination of Pauli terms. %
For arbitrary cost Hamiltonians %
with $poly(n)$ Pauli terms, the same argument applies for $\ell=O(1)$. 
\end{rem}

\subsubsection{Directional and mixed gradients}
To extend \eqref{eq:conj1} to include the QAOA phase operator, we %
require still more general %
notions of gradient operators. 
\textit{Directional cost gradients} are defined as commutators of the cost Hamiltonian  taken with respect to a general $n$-qubit operator~$A$ as 
\begin{equation}
    \nabla_A C := [A,C], 
\end{equation}
corresponding to the superoperator $\nabla_A := [A,\cdot]$. Our above definition satisfies $\nabla := \nabla_B$. 
Trivially, two compatible linear operators $A,G$ commute if and only if $\nabla_A G = \nabla_G A = 0$. 
Further observe  $\nabla_A$ is linear in~$A$ as %
$\nabla_{cA}=c\nabla_A$ for $c\in\complex$ and $\nabla_{A+G}=\nabla_A+\nabla_{G}$.

Gradients with respect to the cost Hamiltonian $\nabla_C = [C,\cdot]$ %
naturally arise in analysis of QAOA. 
Generally, higher-order (mixed) gradients of $C$ will likewise reflect the structure of the cost function over neighborhoods of increasing Hamming distance. Clearly, $\nabla_C C =0$ and $\nabla_C B = - \nabla C$. 
The operator $\nabla_C \nabla C := [C,[B,C]]$ is easily shown to act on basis states as 
\begin{equation}  \label{eq:DCDCx}
    \nabla_C \nabla C \ket{x} = -\sum_{j=1}
^n \left(\partial_jc(x)\right)^2\ket{x^{(j)}},
\end{equation}
and on the initial state $\ket{s}$ as 
\begin{eqnarray} \label{eq:DCDCs}
    \nabla_C \nabla C \ket{s} &=&\tfrac{1}{\sqrt{2^n}}\sum_x\left(-\sum_{j=1}^n \left(\partial_jc(x)\right)^2\right)\ket{x}
=\tfrac{1}{\sqrt{2^n}}\sum_x \left(c(x) dc(x) - 
\sum_{j=1}
^n c(x^{(j)})\partial_j c(x)
\right)\ket{x}, \nonumber
\end{eqnarray}
which implies 
 $\langle \nabla_C \nabla C \rangle_0 =\frac2{2^n}\sum_x c(x) dc(x) \leq 0$ for all cost Hamiltonians $C$; see \lemref{lem:expecCDC} below. 
Similarly, for $\ell=1,2,3,\dots$ it follows from induction that  
\begin{equation} \label{eq:DlCDCx}
    \nabla_C^\ell \nabla C \ket{x} = - \sum_{j=1}^n (\partial_j c(x))^{\ell+1} \ket{x^{(j)}},
\end{equation}
which using $\partial_j c (x^{(j)})=-\partial_j c(x)$ gives  
\begin{equation} \label{eq:DlCDCs}
    \nabla_C^\ell \nabla C \ket{s} =  \frac{-1}{\sqrt{2^n}} \sum_x \sum_{j=1}^n (-\partial_j c(x))^{\ell+1}  \ket{x}.
\end{equation}

Finally, from the %
Jacobi identity~\cite{woit2017quantum} for Lie algebras 
it follows that any triple of $n$-qubit operators %
$F,G,A$ satisfy 
\begin{equation} \label{eq:Jacobi}
    \nabla_F \nabla_G A  + \nabla_A \nabla_F G  + \nabla_G \nabla_A F= 0.
\end{equation}
Applying this property to %
$B$, $C$, and $\nabla C$ gives the useful identities
\begin{equation} \label{eq:costJacobi}
    \nabla \nabla_C \nabla C = \nabla_C \nabla^2 C,
\end{equation}
which in particular implies $\nabla_C \nabla \nabla_C \nabla C = \nabla_C^2 \nabla^2 C$. %
Similarly, in general we have 
\begin{equation} \label{eq:crossOperator}
    \nabla_{\nabla C} A =-\nabla_A\nabla C =(\nabla \nabla_C - \nabla_C \nabla)A
\end{equation}
for any matrix $A$ acting on $n$-qubits.\footnote{%
\eqref{eq:crossOperator} implies that \lq\lq gradients with respect to gradients\rq\rq\ can be written in terms of gradients with respect to $B$ and $C$, for example we have 
$\nabla_{\nabla C}\nabla^2 C = \nabla \nabla_C \nabla^2 C-\nabla_C \nabla^3 C$, 
$\nabla_{\nabla_C\nabla C}C=-\nabla^2_C \nabla C$ and $\nabla^2_{\nabla C}C=-\nabla_{\nabla_C}\nabla_C \nabla C
=\nabla^2_C \nabla^2 C-\nabla \nabla^2_C \nabla C$.} 
Further such identities %
may be derived, e.g., by applying higher-order Jacobi identities \cite{alekseev2016higher}, or when considering particular problems.

\subsubsection{General gradient operators} \label{sec:genGrads}
General higher-order mixed cost gradient operators %
$\nabla_{A_\ell}^{b_\ell}\nabla_{A_{\ell-1}}^{b_{\ell-1}}\dots \nabla_{A_1}^{b_1} C$  follow in the natural way. %
We refer to $\sum_j b_j$ 
as the \textit{order} of a gradient operator, where for convenience we refer to all such %
operators as \textit{cost gradient operators}.  
We remark that nested commutators similarly appear in a number of quantum computing applications such as, e.g., analysis of Hamiltonian simulation with Trotter-Suzuki formulas~\cite{childs2019theory}. 

We give the following general  characterization of general gradients with respect to $B$ and $C$ that will appear in our application to QAOA. 
\begin{lem}  \label{lem:genGrad}
For $C,C'$ diagonal Hamiltonians (representing %
classical cost functions as in \eqref{eq:costHam}), 
we define the general cost gradient operator
\begin{equation} \label{eq:costGradOpGen}
          A=\nabla^{b_\ell}\nabla_C^{b_{\ell-1}}\dots\nabla^{b_3}\nabla_C^{b_2}\nabla^{b_1} C'.%
\end{equation}
for $(b_1,b_2,\dots,b_\ell)$ a tuple of positive integers. Then $A$ has the following properties:
\begin{enumerate}
    \item $A$ is a real matrix with respect to the computational basis:  $a_{xy}:= \bra{x}A\ket{y}\,\in \reals$. 
    
    Hence, there exists a real function $a(x)$ such that $A\ket{s} = \frac{1}{\sqrt{2^n}}\sum_x a(x) \ket{x}$. 
     \item $A$ is traceless, which implies $\sum_x a_{xx}=0$, and $A$ is either self- or skew-adjoint: %
    \begin{enumerate}
     \item If $\sum_j b_j$ is even then %
     $A^\dagger = A$, which implies $a_{xy}=a_{yx}$ for each $x,y\in\{0,1\}^n$. %
    \item Else if $\sum_j b_j$ is odd then %
    $A^\dagger = - A$, which implies $a_{xx}=0$ (i.e., $A$ has no diagonal part), and $a_{yx}=-a_{xy}$ for each $x,y\in\{0,1\}^n$. %
    \end{enumerate} 
    \item Each %
    tuple %
    $(b_\ell,\dots,b_2,b_1)\subset \{1,2,3,\dots\}^\ell$ 
    identifies a cost gradient operator%
    ~\eqrefp{eq:costGradOpGen}. However this labeling is not canonical as it is many-to-one. 
\end{enumerate}
\end{lem}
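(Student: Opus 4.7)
\textbf{Proof proposal for \lemref{lem:genGrad}.}
The overall plan is to prove the three items by induction on the total commutator depth $\sum_j b_j$, exploiting that both $B$ and $C$ (and $C'$) are real self-adjoint matrices in the computational basis.

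For part 1, I would show by a straightforward induction that if $M$ has real matrix entries with respect to the computational basis, then both $\nabla M = [B, M]$ and $\nabla_C M = [C, M]$ are also real in this basis. This reduces to the fact that $B = \sum_j X_j$ is a real matrix ($X_j$ has entries in $\{0, 1\}$) and $C$ is real diagonal, so the compositions and differences defining the commutators preserve reality. The base case is $C'$, which is real diagonal. Applying this inductively to the $\ell$ nested commutators building $A$ from $C'$ gives $a_{xy} \in \reals$. The claimed expansion of $A\ket{s}$ then follows by setting $a(x) = \sum_y a_{xy} \in \reals$.

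For part 2, tracelessness is immediate: $A$ is (at the outermost layer) a commutator, and $\mathrm{tr}[X,Y] = 0$ for any matrices $X, Y$, so $\mathrm{tr}\, A = \sum_x a_{xx} = 0$. For the adjointness dichotomy, I would use the basic identity $[H_1,H_2]^\dagger = -[H_1,H_2]$ when $H_1, H_2$ are self-adjoint. Thus each application of $\nabla$ or $\nabla_C$ to a self-adjoint operator produces a skew-adjoint one, and applied to a skew-adjoint operator it produces a self-adjoint one. Since $C'$ is self-adjoint, an induction on $\sum_j b_j$ shows $A$ is self-adjoint for even total order and skew-adjoint for odd total order. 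Combining with part 1, self-adjointness ($a_{xy} = a_{yx}^\ast$) together with reality ($a_{xy} \in \reals$) yields symmetry $a_{xy} = a_{yx}$ in the even case, while skew-adjointness ($a_{xy} = -a_{yx}^\ast$) combined with reality gives antisymmetry $a_{xy} = -a_{yx}$ and, in particular, vanishing diagonal $a_{xx} = 0$, in the odd case.

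For part 3, the claim is only that the map from tuples to operators is not one-to-one, so it suffices to exhibit a collision. The Jacobi identity \eqref{eq:costJacobi} already shows $\nabla \nabla_C \nabla C = \nabla_C \nabla^2 C$, so the two distinct tuples yielding $\nabla \nabla_C \nabla C$ and $\nabla_C \nabla^2 C$ (in whatever canonical indexing one chooses for the two patterns in \eqref{eq:costGradOpGen}) produce the same operator, establishing the many-to-one property. More generally, any higher Jacobi identity or a $B$--$C$ commutation relation inherited from a specific problem provides additional collisions. The only step requiring genuine care is the bookkeeping for part 2, in particular checking that the induction step treats $\nabla$ and $\nabla_C$ uniformly and that interleaving them in the order prescribed by \eqref{eq:costGradOpGen} produces no sign surprises; this is the main obstacle in making the argument fully rigorous, but it is handled cleanly once one records the self-adjoint/skew-adjoint parity after each application.
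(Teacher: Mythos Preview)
Your proposal is correct and follows essentially the same approach as the paper: both arguments use that $B$, $C$, $C'$ are real in the computational basis to establish part~1, the trace-of-a-commutator and parity-flip of self/skew-adjointness under commutation with a Hermitian operator for part~2, and the Jacobi identity \eqref{eq:costJacobi} as the explicit collision for part~3. The only cosmetic difference is that the paper phrases part~1 by expanding the nested commutators into a linear combination of ordered products of $B$, $C$, $C'$, whereas you run an explicit induction on the commutator depth; these are equivalent.
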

\noindent %
Here we have considered distinct $C,C'$ for generality, as we often but not always have $C'=C$ in application of the lemma.
\begin{proof}
The first point follows from expanding each of the commutators in $A$ in turn to give a linear combination of ordered products of $B$, $C$, and $A$. As each matrix multiplication consists of strictly real quantities, the resulting computational basis matrix elements $a_{xy}$ of $A$ are real and $A$ acts on $\ket{s}$ as %
$A\ket{s}=\tfrac1{\sqrt{2^n}}\sum_x (\sum_y a_{xy}) \ket{x}$. %
The second point follows from the definition of the commutator, and %
the cyclic property and linearity of the trace operation. Points 2.a and 2.b follow observing the commutator of a self-adjoint and a self- (skew-)adjoint operator is skew- (self-)adjoint, and applying the first point. 
For the third point, %
\eqref{eq:costJacobi} shows the mapping from positive integers %
is not injective %
in general. %
\end{proof}

Each $A=\nabla^{b_\ell}\nabla_C^{b_{\ell-1}}\dots\nabla_C^{b_2}\nabla^{b_1} C'$ may be uniquely decomposed into its diagonal and off-diagonal parts $A=A_{diag}+A_{non-diag}$ with respect to the computational basis, such that %
$\langle A\rangle_0=\langle A_{non-diag}\rangle_0$ and $\nabla_C A = \nabla_C A_{non-diag}$. In the skew-adjoint case $A_{diag}=0$. In the self-adjoint case, we have corresponding real functions $a(x)=a_{diag}(x)+a_{non-diag}(x)$, with $\sum_x a_{diag}(x)=0$. %
We will use these properties in the proofs of \appref{app:expecVals}.

\subsection{Initial state expectation values}
\label{sec:expecVals} 
We next show several results concerning the computation of initial state expectation values of cost gradient operators, which show their dependence on the cost difference functions; %
roughly speaking, \emph{powers of $\nabla_C$ %
correspond to powers of the cost function $c(x)$ and its coefficients, whereas powers of $\nabla$ %
correspond to cost differences over greater Hamming distances.} Looking forward to our application to QAOA, we will derive relevant expectation values as series expressions in the algorithm parameters and initial state expectation values of different cost gradients; these lemmas allow us identify a significant fraction of the resulting terms to be identically zero which greatly simplifies our results and proofs. 
Some %
additional results are given in \appref{app:expecVals}.

Observe that the expectation value of any $n$-qubit matrix $A$ with respect to  $\ket{s}=\frac{1}{\sqrt{2^n}}\sum_x\ket{x}$ relates to the trace of $A$ as 
\begin{equation} \label{eq:generalInitStateExpec}
   \langle A\rangle_0= \frac1{2^n} tr(A) + \frac1{2^n} \sum_{y\neq x} \bra{y}A\ket{x},
\end{equation}
where we recall our notation $\langle A\rangle_0:=\bra{s}A\ket{s}$. 
When $A$ is diagonal %
$A=a_0I + \sum_j a_j Z_j + \sum_{i<j}a_{ij}Z_iZ_j+\dots$  
this becomes $\langle A\rangle_0 = \frac1{2^n} tr(A)=a_0$. 
Hence, %
\lemref{lem:genGrad} implies a necessary condition for the initial state expectation value of a gradient operator \eqrefp{eq:costGradOpGen} to be nonzero is that its order (sum of %
the exponents $\sum_j b_j$) is even.

\begin{lem} \label{lem:genGrads}
Let $C'$ be diagonal. 
If $\sum_j b_j$ is odd then $\langle \nabla^{b_\ell}\nabla_C^{b_{\ell-1}}\dots\nabla_C^{b_2}\nabla^{b_1} C'\rangle_0=0$. 
\end{lem}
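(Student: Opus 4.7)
The plan is to reduce the statement directly to Lemma~\ref{lem:genGrad}, specifically properties~1 and~2.b, together with a simple double-sum cancellation argument. No expansion of the nested commutators is needed.

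First I would write the expectation value as a sum over all computational-basis matrix elements:
\begin{equation*}
  \langle A\rangle_0 \;=\; \bra{s}A\ket{s} \;=\; \frac{1}{2^n}\sum_{x,y\in\{0,1\}^n} \bra{y}A\ket{x} \;=\; \frac{1}{2^n}\sum_{x,y} a_{yx},
\end{equation*}
where $A:=\nabla^{b_\ell}\nabla_C^{b_{\ell-1}}\cdots\nabla_C^{b_2}\nabla^{b_1} C'$ and $a_{yx}:=\bra{y}A\ket{x}$. This reduces the problem to showing that the total matrix-element sum vanishes.

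Next I would invoke Lemma~\ref{lem:genGrad}. Part~1 guarantees that the matrix elements $a_{yx}$ are real. Since $\sum_j b_j$ is odd by hypothesis, part~2.b tells us that $A$ is skew-adjoint, $A^\dagger=-A$, and moreover that the diagonal entries satisfy $a_{xx}=0$ while the off-diagonal entries satisfy $a_{yx}=-a_{xy}$ (this is just the restatement of skew-adjointness for a real matrix, i.e.\ antisymmetry).

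Finally, I would split the double sum into diagonal and off-diagonal contributions and cancel:
\begin{equation*}
  \sum_{x,y} a_{yx} \;=\; \sum_{x} a_{xx} \;+\; \sum_{x\neq y} a_{yx} \;=\; 0 \;+\; \tfrac{1}{2}\sum_{x\neq y}(a_{yx}+a_{xy}) \;=\; 0,
\end{equation*}
where the diagonal piece vanishes by~2.b and the off-diagonal piece vanishes by pairing $(x,y)$ with $(y,x)$ and applying antisymmetry. Therefore $\langle A\rangle_0=0$. The main (in fact only) obstacle is already resolved by Lemma~\ref{lem:genGrad}; the remainder is a one-line symmetry observation, so this proof should be short and essentially mechanical once part~2.b is quoted.
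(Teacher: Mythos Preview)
Your proposal is correct and is essentially the same as the paper's proof: the paper also invokes case~2.b of Lemma~\ref{lem:genGrad} and then applies \eqref{eq:generalInitStateExpec}, which is exactly your decomposition of $\langle A\rangle_0$ into a trace part plus an off-diagonal sum, both of which vanish by skew-adjointness. The only difference is that you spell out the pairing argument explicitly whereas the paper cites \eqref{eq:generalInitStateExpec} directly.
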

\begin{proof}
Follows from case 2.b of \lemref{lem:genGrad} using that the parity of $\sum_j b_j$ determines the adjointness of $\nabla^{b_\ell}\nabla_C^{b_{\ell-1}}\dots\nabla_C^{b_2}\nabla^{b_1} C'$, and applying \eqref{eq:generalInitStateExpec}. 
\end{proof}
We next show two results concerning more specific cases. 
\begin{lem}  \label{lem:expecGrad}
For any %
$n$-qubit linear operator $A$ and $\ell\in\integers_+$ we have
    $\langle \nabla^\ell A\rangle_0 =0$.
\end{lem}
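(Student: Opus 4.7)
The plan is to exploit the fact that the equal-superposition state $\ket{s}=\ket{+}^{\otimes n}$ is an eigenstate of the transverse-field mixer $B=\sum_j X_j$. Since $X_j\ket{+}=\ket{+}$ for every $j$, we immediately obtain $B\ket{s}=n\ket{s}$, and by taking the adjoint $\bra{s}B=n\bra{s}$.

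With this in hand, the single-layer case $\ell=1$ is a one-line computation: for any operator $M$,
\begin{equation*}
\langle \nabla M\rangle_0 = \bra{s}[B,M]\ket{s} = \bra{s}BM\ket{s}-\bra{s}MB\ket{s} = n\bra{s}M\ket{s}-n\bra{s}M\ket{s}=0,
\end{equation*}
since we can let $B$ act to the left in one term and to the right in the other.

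For general $\ell$, I would simply apply this observation with $M=\nabla^{\ell-1}A$, which is itself a well-defined $n$-qubit operator by the recursive definition of $\nabla^{\ell}A$ in \eqref{eq:higherOrderDerivs}. That is, $\nabla^{\ell}A=\nabla(\nabla^{\ell-1}A)=[B,\nabla^{\ell-1}A]$, so the $\ell=1$ argument gives $\langle \nabla^{\ell}A\rangle_0=0$ for every $\ell\in\mathbb{Z}_+$ without any induction on the structure of $A$ required. Notably, $A$ itself plays no role in the argument beyond being a well-defined operator on the $n$-qubit Hilbert space; the result depends only on $\ket{s}$ being an eigenvector of $B$.

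There is no real obstacle here; the only conceptual point worth emphasizing in the write-up is that the property relied on is specific to our choice of mixer $B$ and initial state $\ket{s}$: whenever the initial state is an eigenstate of the operator appearing on the left of the commutator, the corresponding expectation vanishes. This also hints at how analogous vanishing results will propagate to the more general mixed-gradient calculations of \appref{app:expecVals}, and it is consistent with the diagonal-trace computation in \eqref{eq:generalInitStateExpec} combined with \lemref{lem:genGrads}, which already forces vanishing when the gradient order is odd.
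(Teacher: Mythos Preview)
Your proof is correct and essentially identical to the paper's: both use $B\ket{s}=n\ket{s}$ to show $\langle\nabla M\rangle_0=0$ for any $M$, then take $M=\nabla^{\ell-1}A$. The paper phrases the general step as ``inductively,'' but the content is exactly your observation that $\nabla^{\ell}A=[B,\nabla^{\ell-1}A]$ reduces the general case to the $\ell=1$ computation.
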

\begin{proof}
For %
$\ell=1$ %
as $B\ket{s}=n\ket{s}$ we have
$\bra{s}\nabla A \ket{s}=\bra{s}(BA-AB)\ket{s}= (n-n)\bra{s}A\ket{s}=0$. 
The %
general result follows inductively %
considering %
$\nabla^{\ell}A=B(\nabla^{\ell-1}A)-(\nabla^{\ell-1}A)B$.%
\end{proof}
In particular, the lemma implies $\langle \nabla^\ell C\rangle_0 = 0$ for any cost Hamiltonian $C$ and $\ell\in\integers_+$.

\sh{
\begin{rem} \label{rem:GenLems}
The results of \lemssref{lem:genGrad}{lem:genGrads}{lem:expecGrad} %
extend to more general QAOA initial states and mixing Hamiltonians in a straightforward way. In particular, the result and proof of \lemref{lem:expecGrad} only requires that the initial state is some eigenvector of the mixing Hamiltonian. Similar considerations apply to a number of our results to follow. 
\end{rem}
}

\begin{lem} \label{lem:expecCDC}
For a cost Hamiltonian $C=a_0I + \sum_j a_jZ_j + \sum_{i<j}a_{ij}Z_iZ_j +\dots$ we have
\begin{eqnarray}  \label{eq:expecDCis0}
\langle\nabla_C \nabla C \rangle_0
&=&  \frac{2}{2^n} \sum_x c(x)dc(x)%
  = \frac{-1}{2^n}\sum_x \sum_j (\partial_j c(x))^2\\
 &=&%
  -4(\sum_j a_j^2 + \sum_{i<j} 2 a_{ij}^2+\sum_{i<j<\ell} 3 a_{ij\ell}^2 +\dots)\nonumber%
 \end{eqnarray}
which implies $\langle\nabla_C \nabla C \rangle_0 \leq 0$.  
Moreover,   %
 $\langle  \nabla_C \nabla^2 C \rangle_0 = \langle   \nabla^2_C \nabla C \rangle_0
 = \langle \nabla \nabla_C\nabla C \rangle_0
    =0$.
\end{lem}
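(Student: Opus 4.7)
The plan is to establish the two displayed equalities first, then the sign statement, then dispose of the three vanishing expectations as a single application of the odd-order parity rule.

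For the first chain of equalities I would start from \eqref{eq:DCDCs}, which already tells us explicitly that $\nabla_C \nabla C \ket{s} = \tfrac{1}{\sqrt{2^n}}\sum_x \bigl(c(x)dc(x) - \sum_j c(x^{(j)})\partial_j c(x)\bigr)\ket{x}$. Taking the inner product with $\bra{s}$ eliminates the $\sqrt{2^n}$ factor and gives $\langle\nabla_C\nabla C\rangle_0 = \tfrac{1}{2^n}\sum_x\bigl(c(x)dc(x) - \sum_j c(x^{(j)})\partial_j c(x)\bigr)$. Substituting $c(x^{(j)}) = c(x) + \partial_j c(x)$ inside the inner sum collapses the cross-terms and leaves precisely $-\tfrac{1}{2^n}\sum_{x,j}(\partial_j c(x))^2$. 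To see that this agrees with $\tfrac{2}{2^n}\sum_x c(x)dc(x)$, I would expand $(\partial_j c(x))^2 = (c(x^{(j)}) - c(x))^2$ and use that the bit-flip $x \mapsto x^{(j)}$ is a bijection of $\{0,1\}^n$, so $\sum_x c(x^{(j)})^2 = \sum_x c(x)^2$; what remains is $-2\sum_x c(x)\partial_j c(x)$, which after summing on $j$ yields the claimed identity.

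For the Pauli expression I would pass to the Fourier/Walsh basis. Since $Z_\alpha\ket{x} = \chi_\alpha(x)\ket{x}$ with $\chi_\alpha(x) = \prod_{j\in\alpha}(-1)^{x_j}$, the expansion $C = \sum_\alpha a_\alpha Z_\alpha$ corresponds to $c(x) = \sum_\alpha a_\alpha \chi_\alpha(x)$. A direct computation using $\chi_\alpha(x^{(j)}) = (-1)^{[j\in\alpha]}\chi_\alpha(x)$ gives $\partial_j c(x) = -2\sum_{\alpha\ni j} a_\alpha \chi_\alpha(x)$. Squaring, using $\chi_\alpha\chi_\beta = \chi_{\alpha\Delta\beta}$, and invoking orthogonality $\sum_x \chi_\gamma(x) = 2^n\,[\gamma=\emptyset]$, I obtain $\sum_x(\partial_j c(x))^2 = 4\cdot 2^n\sum_{\alpha\ni j} a_\alpha^2$. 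Summing over $j$ turns $\sum_{\alpha\ni j}$ into $\sum_\alpha |\alpha|$, yielding $\langle\nabla_C\nabla C\rangle_0 = -4\sum_\alpha |\alpha|a_\alpha^2$, which unpacks into the stated degree-graded form. Non-positivity is then immediate because every term $|\alpha|a_\alpha^2$ is $\geq 0$.

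For the three vanishing quantities $\langle\nabla_C\nabla^2 C\rangle_0$, $\langle\nabla_C^2\nabla C\rangle_0$, and $\langle\nabla\nabla_C\nabla C\rangle_0$, I would simply apply \lemref{lem:genGrads}: in each case the exponent tuple is $(2,1)$, $(1,2)$, or $(1,1,1)$ respectively, each summing to $3$, which is odd, so all three expectations are zero. (Alternatively one may use \lemref{lem:expecGrad} for the last one and \eqref{eq:costJacobi} to reduce the first to the third, but invoking \lemref{lem:genGrads} is cleanest.)

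The main obstacle is purely bookkeeping: tracking the factor of $2$ from $\partial_j c = c(x^{(j)}) - c(x)$ versus the factor of $-2$ in $\partial_j C = -2C^{\{j\}}$, and separately verifying the two expressions for $\langle\nabla_C\nabla C\rangle_0$ match in sign. Once the identity $\sum_x(\partial_j c(x))^2 = -2\sum_x c(x)\partial_j c(x)$ is secured by the bit-flip symmetry argument, the remainder is a routine Fourier calculation.
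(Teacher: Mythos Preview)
Your proof is correct and follows essentially the same route as the paper: both start from \eqref{eq:DCDCx}/\eqref{eq:DCDCs} to obtain the $-\tfrac{1}{2^n}\sum_{x,j}(\partial_j c(x))^2$ form, and both invoke \lemref{lem:genGrads} for the three odd-order vanishings. The only cosmetic difference is that the paper extracts the coefficient formula by expanding the operator $\nabla_C\nabla C=-\sum_j(\partial_j C)^2 X_j$ in the Pauli basis and keeping the pure-$X$ terms, whereas you do the dual computation on the classical side via Walsh characters and orthogonality---these are the same calculation under the function/operator correspondence.
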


\begin{proof}
For $\langle\nabla_C \nabla C \rangle_0$, the first two equalities follow from \eqsref{eq:DCDCx}{eq:DCDCs} %
which imply $\nabla_C \nabla C = -\sum_j (\partial_j C)^2 X_j$, and hence $\langle\nabla_C \nabla C \rangle_0 \leq 0$ always. The 
third equality follows expanding as a
Pauli sum $\nabla_C \nabla C  
=  -4\sum_\sigma a_\sigma^2  \sum_{i\in \sigma}X_i + \dots$ where $\sigma$ denotes tuples of strictly increasing qubit indices as in the definition of $C$, and any Pauli terms not shown to the the right each contain at least one $Z$ factor and so from~\eqref{eq:generalInitStateExpec} have $0$ expectation with respect to $\ket{s}$. 
The final statement follows directly from \lemref{lem:genGrads}. %
\end{proof}

Results for higher-order gradients may be similarly derived. %
In \appref{app:expecVals} we show
the next (order $4$) nonzero cost gradient expectations to be 
\begin{equation} \label{eq:expecDcD3C}
    \langle \nabla_C \nabla^3 C \rangle_0 = \frac2{2^n}\sum_x c(x)d^3c(x),
\end{equation}
\begin{equation} \label{eq:expecDc3DC}
    \langle \nabla^3_C \nabla C \rangle_0 = \frac1{2^n}\sum_x \sum_{j=1}^n (\partial_j c(x))^4,
\end{equation}
\begin{equation} \label{eq:expecDc2D2C}
    \langle \nabla^2_C \nabla^2 C \rangle_0
    =\langle \nabla_C \nabla \nabla_C \nabla C \rangle_0=\frac2{2^n}\sum_x \sum_{j<k} (\partial_{j,k}c(x))^2 \partial_j\partial_k c(x).
\end{equation}
The first equality of \eqref{eq:expecDc2D2C} follows from 
 \eqref{eq:Jacobi}. 
Generalizing \eqsref{eq:expecDcD3C}{eq:expecDc3DC} to $\ell=1,3,5\dots$ gives $ \langle \nabla_C \nabla^\ell C \rangle_0 = \frac2{2^n}\sum_x c(x)d^\ell c(x)$ and $\langle \nabla^\ell_C \nabla C \rangle_0 = \frac1{2^n}\sum_x \sum_{j} (\partial_j c(x))^{\ell+1}$, %
respectively, or $0$ for $\ell$ even.  
Such quantities may %
be efficiently computed from the Pauli basis coefficients of the cost Hamiltonian~$C$; %
cf. \lemref{lem:expecCDC}.  
We show such expressions for QUBO problems in \lemref{lem:higherOrderExpectations} of \appref{app:expecVals}. 
In general these expressions reveal dependence on the cost function structure. For example, an important term for analyzing QAOA for MaxCut is $\langle \nabla^2_C \nabla^2 C \rangle_0$, which depends on the number of triangles in the %
problem graph and is zero in triangle-free cases; cf. \secref{sec:MaxCutQUBO}.

\subsection{Problem and Hamiltonian locality considerations} \label{sec:probLocality}
For the sake of generality and notational clarity, 
so far we have considered arbitrary cost functions $c(x)$. For specific problems the cost function typically has additional structure we may directly incorporate into our framework when computing various quantities of interest such as expectation values. Here we elaborate on the generalization of our framework to incorporate locality, %
which we further consider in the context of QAOA$_1$ in \secref{sec:lightcones} and QAOA$_p$ in \secref{sec:lightconesp}; see in particular \tabref{tab:locality}. 

Recalling the discussion of \secref{sec:costHam}, 
consider %
an objective function  
$c(x)=\sum_{j=1}^m c_j(x)$, with corresponding cost Hamiltonian $C=\sum_{j=1}^m C_j$.\footnote{Clearly such decompositions are not unique. The two most useful for our purposes are when each $C_j$ corresponds to a problem clause $c_j$, or when each $C_j$ corresponds to a single Pauli $Z$ term in $C$. For the latter case it's sometimes convenient to include factors or the identity in $C_j$ which does not affect gradients (commutators) of $C_j$ nor QAOA dynamics.} 
Assume each %
$c_j$ acts on a subset $N_j \subset [n]$ of the variables $x_1,\dots,x_n$ (i.e., flipping variable values in $[n]\setminus N_j$ does not change $c_j(x)$). Then each $C_j$ %
acts nontrivially only on the qubits indexed by $N_j$ %
and so is %
$|N_j|$-local. %
For example, $|N_j|=2$ for MaxCut, and $|N_j| = k$ for Max-$k$-SAT where each $c_j$ is a $k$-variable %
Boolean clause. %
We identify $N_j$ similarly %
for non-diagonal Pauli terms and Hamiltonians.

To explicitly take advantage of %
Hamiltonian locality 
we will sometimes use $\cdot|_S$ to denote an operator %
restricted (in the obvious way) to a subset $S\subset [n]$ of the qubits or variables.\footnote{More precisely, for an operator expressed as a sum of Pauli terms $A=\sum_\alpha c_\alpha \sigma_\alpha$ and $S\subset [n]$, we define $A|_S$ to be the operator obtained by discarding any terms that act on qubits outside of $S$.} 
For example, $d|_{\{1,2\}}c:=\partial_1c+\partial_2c$ and $\nabla|_{\{1,2\}}C:=[X_1,C]+[X_2,C]$. 
Hence, for each $c_j$ and corresponding $C_j$, 
the cost divergence of $c_j$ reduces to a sum over $n_j$ bit flips 
$$ dc_j(x)=d|_{N_j}c_j(x)=\sum_{i\in N_j}\partial_ic_j(x), $$
and similarly %
$$DC_j =D|_{N_j}C_j = \sum_{i\in N_j} \partial_i C_j$$
When each $C_j$ consists of a single Pauli $Z$ term \eqsref{eq:DCa}{eq:DCb} imply $DC_j =-2|N_j| C_j$.

Hence, considering the product structure of $\ket{s}=\ket{+}^{\otimes n}$, %
quantities of interest
may often be evaluated by considering 
only a relevant subset of qubits or variables. 
In particular, the operator $\nabla C_j = \nabla|_{N_j}C_j$ acts on $\ket{s}$ as (cf. \eqref{eq:DCsuperpos0})
\begin{eqnarray} \label{eq:nablaCjs}
\nabla C_j \ket{s} = \frac1{\sqrt{2^n}} \sum_x dc_j(x)\ket{x} = \left(\frac1{\sqrt{2^{|N_j|}}} \sum_{x\in\{0,1\}^{N_j}} dc_j(x)\ket{x}\right)\otimes \ket{+}^{\otimes n-|N_j|}
\end{eqnarray}
where $C_j\ket{x}=c_j(x)$, and $\{0,1\}^{N_j}$ denotes bitstrings over the variables in $N_j$. 
The term in parenthesis %
is guaranteed to be efficiently computable classically when $|N_j|$ is a constant or scales as $O(\log n)$. 
Hence it is relatively straightforward to incorporate locality where applicable into our framework. %
In particular, for any %
operator $A$ and $S\subset [n]$ we have $\langle\, A|_S\rangle_0\,= \bra{+}^{\otimes S}\,A|_S\,\ket{+}^{\otimes S}\bra{+}^{\otimes [n]\setminus S}\ket{+}^{\otimes [n]\setminus S}= \frac{1}{2^{|S|}}\bra{+}^{\otimes S}\,A|_S\,\ket{+}^{\otimes S}$. 
For example,  $\langle \nabla^\ell C_j\rangle_0=0$ for $\ell \in \naturals$ from \lemref{lem:expecGrad},
and adapting the proof of \lemref{lem:expecCDC} gives 
\begin{equation}
\langle \nabla_C\nabla C_j\rangle_0 = \frac2{2^n}\sum_x c(x)dc_j(x) = \frac2{2^{n_j}}\sum_{x\in \{0,1\}^{N_j}}c(x)dc_j(x),  
\end{equation}
which becomes $\langle \nabla_C\nabla C_j\rangle_0=-4a_j^2$ when $C_j$ is a single Pauli term $C_j=a_jZ_{j_1}\dots Z_{j_{n_j}}$. %
By linearity this gives $\langle \nabla_C\nabla C\rangle_0=\sum_j \langle \nabla_C\nabla C_j\rangle_0$. %
Similar considerations will apply to higher order cost gradients as we further discuss in \secsref{sec:lightcones}{sec:lightconesp}. 

\section{Applications to QAOA$_1$} \label{sec:QAOA1}
As employed, %
for instance, in the QAOA performance 
analysis of \cite{wang2018quantum,hadfield2018thesis} %
for specific problems, it is often advantageous to consider QAOA circuits from the Heisenberg perspective %
i.e., as %
acting by conjugation on an observable~$C$ rather than acting on a state directly. 
The Heisenberg picture naturally has a number of applications in quantum computing, e.g. \cite{gottesman1998heisenberg}. 
Indeed, for a QAOA$_p$ circuit $Q=U_M(\beta_p)\dots U_P(\gamma_1)$ we have
$$ \langle C \rangle_p =\bra{s}\,(Q^\dagger C Q) \,\ket{s} ,$$%
so computing the QAOA$_p$ expectation value of the cost function, for any $p$, %
is equivalent to 
computing the initial state expectation value of the time evolved (conjugated) observable %
$U^\dagger C U $. Here, we consider application of our framework from this perspective to QAOA$_1$. In~\secref{sec:QAOAp} we %
generalize our result to QAOA$_p$. %

First consider a general $n$-qubit Hamiltonian $H$ and corresponding unitary $U=e^{-iHt}$ for $t\in\reals$. 
From the %
perspective of Lie groups and %
algebras~\cite{woit2017quantum,hall2013quantum}, 
the superoperator $it\nabla_H=it[H,\cdot]$ is the infinitesimal generator of conjugation by $U$, i.e., $e^{it\nabla_H}=U^\dagger \cdot U$. 
Explicitly, acting 
on an %
operator $A$ we have 
\begin{equation} \label{eq:infinitesimalConj}
    U^\dagger A U %
    = A + it\nabla_H A + \frac{(it)^2}{2} \nabla^2_H A
    + \frac{(it)^3}{3!} \nabla^3_H A \, + \dots
\end{equation}
Thus we can express conjugation by an operator $U=e^{-iHt}$ in terms of gradients $\nabla_H$.
Such superoperator expansions are also familiar from quantum information theory, %
e.g. the Liouvillian representation of unitary quantum channels~\cite{preskill1997lecture}. 
Iterating this formula, the QAOA$_1$ operator $Q=Q(\gamma,\beta)=U_M(\beta)U_P(\gamma)$
acts by conjugation as to map the cost Hamiltonian $C\rightarrow Q^\dagger C Q$ 
to a linear combination of cost gradient operators. We show several exact series expansions for QAOA$_1$ with arbitrary cost function (Hamiltonian).

\begin{lem} \label{lem:QAOA1conj}
Let $A$ be an $n$-qubit matrix. The QAOA operator 
 $ Q= U_M(\beta) U_P(\gamma)$
acts by conjugation on $A$ as %
\begin{equation}  \label{eq:costHamConj}
Q^\dagger A Q 
= \sum_{\ell=0}^\infty \sum_{k=0}^\infty \frac{(i\beta)^k(i\gamma)^\ell}{\ell!\,k!} \, \nabla_C^\ell %
\nabla^k A. 
\end{equation}
\end{lem}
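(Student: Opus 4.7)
The plan is to apply the conjugation series formula (Eq.~\ref{eq:infinitesimalConj}) twice, once for each factor of the QAOA$_1$ unitary. Since $Q=U_M(\beta)U_P(\gamma)=e^{-i\beta B}e^{-i\gamma C}$, we have
\begin{equation*}
Q^\dagger A Q \;=\; e^{i\gamma C}\,\bigl(e^{i\beta B} A\, e^{-i\beta B}\bigr)\, e^{-i\gamma C},
\end{equation*}
so the strategy is to first expand the inner conjugation by $U_M(\beta)$ and then the outer conjugation by $U_P(\gamma)$.

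For the inner step I would invoke \eqref{eq:infinitesimalConj} with $H=B$ and $t=\beta$. Since $\nabla=\nabla_B=[B,\cdot]$ by our notational convention, this yields
\begin{equation*}
e^{i\beta B}A\,e^{-i\beta B} \;=\; \sum_{k=0}^{\infty}\frac{(i\beta)^k}{k!}\,\nabla^k A.
\end{equation*}
For the outer step I would apply \eqref{eq:infinitesimalConj} again, now with $H=C$ and $t=\gamma$, noting that $\nabla_C=[C,\cdot]$ is a linear superoperator so we may pass it term-by-term through the series. This gives
\begin{equation*}
Q^\dagger A Q \;=\; \sum_{k=0}^{\infty}\frac{(i\beta)^k}{k!}\,e^{i\gamma C}\bigl(\nabla^k A\bigr)e^{-i\gamma C} \;=\;\sum_{k=0}^{\infty}\sum_{\ell=0}^{\infty}\frac{(i\beta)^k(i\gamma)^\ell}{k!\,\ell!}\,\nabla_C^\ell\nabla^k A,
\end{equation*}
which after relabeling the order of summation is exactly the claimed expression.

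The only technical point is justifying the interchange of the two infinite sums and the conjugation operation in the outer step. This is routine in our finite-dimensional setting: the exponential series for $e^{i\beta B}Ae^{-i\beta B}$ converges absolutely in any matrix norm (since $B$ is bounded on the $2^n$-dimensional Hilbert space), and conjugation by the bounded unitary $e^{-i\gamma C}$ is continuous, so it may be exchanged with the absolutely convergent sum. I anticipate no substantive obstacle; the content of the lemma is essentially bookkeeping that packages the two nested adjoint-action expansions into a single double series indexed by the number of $\nabla$'s and $\nabla_C$'s, which will be the convenient form for tracking QAOA contributions order-by-order in $\beta$ and $\gamma$ in the subsequent sections.
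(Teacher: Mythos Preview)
Your proposal is correct and follows exactly the same approach as the paper: the paper's proof simply identifies $Q^\dagger A Q$ as conjugation by $U_M(\beta)$ followed by conjugation by $U_P(\gamma)$ and invokes \eqref{eq:infinitesimalConj} twice. Your added remark on absolute convergence in finite dimensions is a harmless elaboration beyond what the paper states.
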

\begin{proof}%
Identifying $Q^\dagger A Q$ as conjugation of $A$ by $U_M(\beta)$, followed by conjugation by~$U_P(\gamma)$,
the result then follows from two applications of \eqref{eq:infinitesimalConj}.
\end{proof}
Observe we may write \eqref{eq:costHamConj} 
succinctly as%
\begin{equation}  \label{eq:costHamConj2}
Q^\dagger A Q 
=  e^{i\gamma \nabla_C} e^{i\beta \nabla} A,
\end{equation}
where %
 the order-of-operations of the superoperators 
$e^{i\gamma \nabla_C}$ and $e^{i\beta \nabla}$ is implicitly understood. 
Hence all terms in $Q^\dagger C Q$ are of the form $\nabla_C^\ell \nabla^k C$, i.e., containing at most one alternation of $\nabla$ and $\nabla_C$.\footnote{The %
Baker–Campbell–Hausdorff formula~\cite{woit2017quantum} allows one to similarly express the effective Hamiltonian $H=H(\gamma,\beta)$ of each QAOA layer $Q=U_M(\beta)U_P(\gamma)=exp(-iH)$ as a series in commutators of $B$ and $C$, 
$$H= \beta B + \gamma C -i\tfrac{\gamma\beta}2\nabla C
-\tfrac{\gamma\beta^2}{12}\nabla^2 C +\tfrac{\gamma^2\beta}{12}\nabla_C\nabla C
+i\tfrac{\gamma^2\beta^2}{24}\nabla_C\nabla^2C+\dots,$$
but with %
more complicated %
expressions for higher-order terms 
as compared to \eqref{eq:costHamConj}.} 
In~\secref{sec:QAOAp} we apply this formula recursively to obtain analogous formulas for QAOA$_p$ %
consisting of terms containing at most $2p-1$ such alternations.

\begin{rem} \label{rem:normInvariance}
\eqsref{eq:costHamConj}{eq:costHamConj2} are invariant under the rescalings %
$(\gamma,C)\rightarrow (\gamma/a,aC)$ or 
$(\beta,B)\rightarrow (\beta/b,bB)$ for any $a,b>0$. Hence to apply these formulas perturbatively one must consider the quantities $\|\gamma C\|=|\gamma|\|C\|$ and $\|\beta B\|=|\beta|n$ as small parameters, not just $|\gamma|,|\beta|$. We consider such a small-angle regime in~\secref{sec:smallAngleQAOA1}. Note that rescaling $C$ without proportionately rescaling $c(x)$ preserves the solution structure but would violate the condition  \eqref{eq:costHam}; %
we assume %
$c(x)$ uniquely determines~$C$ as in~\eqref{eq:costHam} throughout.
\end{rem}

We apply \lemref{lem:QAOA1conj} to derive exact %
expressions for the cost expectation and probabilities of QAOA$_1$ using $\langle C\rangle_1 = \langle Q^\dagger C Q\rangle_0$ and $P_1(x)= \langle Q^\dagger H_x Q\rangle_0$ for $H_x:=\ket{x}\bra{x}$ from~\remref{rem:probProj}.

\begin{theorem}  \label{thm:QAOA1}
 Let $C$ be a %
 cost Hamiltonian. For QAOA$_1$ the cost expectation is 
\begin{eqnarray*}
\langle C \rangle_1 
&=& \langle C \rangle_0 
\,-\,\gamma \beta \langle \nabla_C \nabla C \rangle_0
\,\,+\, \sum_{ \ell + k \geq 4,\ell + k \text{ even}  }^\infty
 \frac{(i\gamma)^\ell(i\beta)^k}{\ell!\,k!} 
\langle\nabla_C^\ell %
\nabla^k C %
\rangle_0, \nonumber
\end{eqnarray*}
where $\langle \nabla_C \nabla C \rangle_0=\tfrac{2}{2^n}\sum_x c(x)dc(x)\leq 0$, 
and the probability of measuring $x\in\{0,1\}^n$ is 
\begin{eqnarray*} \label{eq:P1x}
P_1(x) 
&=& P_0(x)%
+\sum_{%
\ell + k \text{ even}  }^\infty
 \frac{(i\gamma)^\ell(i\beta)^k}{\ell!\,k!} 
\langle\nabla_C^\ell %
\nabla^k H_x %
\rangle_0
\end{eqnarray*}
where $H_x:=\ket{x}\bra{x}$, and both sums are over subsets of integers $\ell,k>0$. 
\end{theorem}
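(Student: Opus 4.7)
The plan is to apply Lemma~\ref{lem:QAOA1conj} with $A=C$ and $A=H_x$ respectively, take initial state expectations of the resulting double series, and then use the vanishing lemmas from Section~\ref{sec:expecVals} to eliminate the terms that must be identically zero regardless of the cost function.

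For the cost expectation, I would start from $\langle C\rangle_1 = \bra{s} Q^\dagger C Q \ket{s}$ and substitute the expansion
\[
Q^\dagger C Q = \sum_{\ell,k\geq 0}\frac{(i\gamma)^\ell(i\beta)^k}{\ell!\,k!}\,\nabla_C^\ell\nabla^k C,
\]
interchanging the sum and the expectation by linearity. I would then classify the resulting terms according to the pair $(\ell,k)$. The $(0,0)$ term produces $\langle C\rangle_0$. The pure-mixer terms $(0,k)$ with $k>0$ give $\langle \nabla^k C\rangle_0=0$ by Lemma~\ref{lem:expecGrad}. The pure-phase terms $(\ell,0)$ with $\ell>0$ vanish identically as operators, since $\nabla_C C=[C,C]=0$ implies $\nabla_C^\ell C=0$. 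All terms with $\ell+k$ odd and both indices positive vanish by Lemma~\ref{lem:genGrads}, leaving only terms with $\ell,k\geq 1$ and $\ell+k$ even. The smallest such case is $\ell=k=1$, which contributes $(i\gamma)(i\beta)\langle\nabla_C\nabla C\rangle_0=-\gamma\beta\langle\nabla_C\nabla C\rangle_0$; the remaining even-parity terms have $\ell+k\geq 4$. The inequality $\langle\nabla_C\nabla C\rangle_0\leq 0$ is then immediate from Lemma~\ref{lem:expecCDC}.

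For the probability formula, I would replace $C$ with the projector $H_x=\ket{x}\bra{x}$ from Remark~\ref{rem:probProj} and observe that $P_1(x)=\bra{s}Q^\dagger H_x Q\ket{s}$. The same expansion applies. The $(0,0)$ term yields $\bra{s}H_x\ket{s}=P_0(x)=1/2^n$. Since $H_x$ is diagonal in the computational basis, $\nabla_C H_x=[C,H_x]=(c(x)-c(x))H_x=0$, so every $(\ell,0)$ term with $\ell>0$ vanishes. The $(0,k)$ terms with $k>0$ again vanish by Lemma~\ref{lem:expecGrad}, and the odd-parity terms with $\ell,k\geq 1$ vanish by Lemma~\ref{lem:genGrads} (which applies to any matrix $A$, including $H_x$, since its proof only uses the adjointness argument of Lemma~\ref{lem:genGrad}). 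What remains is exactly the sum over $\ell,k\geq 1$ with $\ell+k$ even.

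The calculation is essentially a careful bookkeeping exercise, so no step constitutes a genuine obstacle; the only mild subtlety is justifying that Lemmas~\ref{lem:genGrad} and~\ref{lem:genGrads} extend to mixed gradients ending in a non-cost observable such as $H_x$, which follows because both lemmas are proved by tracking the self- versus skew-adjointness of iterated commutators and this argument is agnostic to whether the innermost operator is $C$ or any other Hermitian $A$. Convergence of the double series is not a concern since both $\nabla$ and $\nabla_C$ are bounded superoperators on the finite-dimensional matrix algebra.
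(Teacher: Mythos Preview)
Your proposal is correct and follows essentially the same approach as the paper: expand $Q^\dagger C Q$ (respectively $Q^\dagger H_x Q$) via \lemref{lem:QAOA1conj}, take $\langle\,\cdot\,\rangle_0$, and eliminate terms using $[C,C]=[C,H_x]=0$ together with \lemsref{lem:expecGrad}{lem:genGrads} and \lemref{lem:expecCDC}. One small remark: your caveat about extending \lemref{lem:genGrads} to $H_x$ is unnecessary, since that lemma is already stated for an arbitrary diagonal $C'$, and $H_x=\ket{x}\bra{x}$ is diagonal.
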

\begin{proof}
Plugging $A=C$ into \eqref{eq:costHamConj}, using $[C,C]=0$, and taking expectation values with respect to $\ket{s}$ gives the QAOA$_1$ cost expectation value %
\begin{equation}  \label{eq:expeccostHamConj0}
\langle C\rangle_1 \,=\, \bra{s}Q^\dagger C Q\ket{s}\,=\,\langle C\rangle_0 \, + \,
\sum_{\ell=0}^\infty
\sum_{k=1}^\infty \frac{(i\gamma)^\ell(i\beta)^k}{\ell!\,k!} 
\langle \nabla_C^\ell %
\nabla^k C \rangle_0.
\end{equation} 
From \lemsref{lem:expecGrad}{lem:expecCDC},
the lowest order nonzero contribution $\langle C \rangle$ is the second-order term $-\gamma \beta \langle \nabla_C \nabla C\rangle_0=
\tfrac{-2}{2^n}\gamma\beta\sum_x c(x)dc(x)$, %
and the remaining terms %
with $\ell+k$ odd are identically zero from \lemref{lem:genGrads}, which gives the first result. 

The probability result follows similarly. 
Applying \eqref{eq:costHamConj} to $H_y:=\ket{y}\bra{y}$ gives 
\begin{equation}
    P_1(y)=%
\langle Q^\dagger H_y Q\rangle_0=\sum_{\ell=0}^\infty
\sum_{k=0}^\infty \frac{(i\gamma)^\ell(i\beta)^k}{\ell!\,k!} 
\langle \nabla_C^\ell %
\nabla^k H_y \rangle_0,
\end{equation} 
to which using $[C,H_y]=0$ and again applying \lemref{lem:genGrads} gives the stated result. 
\end{proof}

We generalize this result to QAOA$_p$ in \thmref{thm:smallprecursed} 
in \secref{sec:QAOAp}. 
We now use the leading order terms  of \thmref{thm:QAOA1} to %
characterize 
QAOA$_1$ in small-parameter regimes.

\subsection{Leading-order QAOA$_1$}
\label{sec:smallAngleQAOA1}

This section provides the proofs of the results stated in  \secref{sec:smallAngleQAOA1initial}. 

\begin{proof}[Proof of \thmref{thm1:smallAngles}]
 The results follow applying \lemref{lem:expecCDC} to the leading order terms of \thmref{thm:QAOA1}, as $\langle\nabla_C\nabla C\rangle_0=\tfrac{2}{2^n}\sum_x c(x)dc(x)\leq 0$ and $\langle \nabla_C \nabla H_x \rangle_0 = \tfrac{2}{2^n} \sum_y c(y)d\delta_x(y)= \tfrac{2}{2^n} dc(x)$ for $H_x=\ket{x}\bra{x}$ of \remref{rem:probProj}. 
When  $c(x)$ is %
nonconstant then its Hamiltonian representation $C=a_0 + \sum_j a_j Z_j + \sum_{i<j}a_{ij} Z_i Z_j+\dots$  has at least one nonzero Pauli coefficient $a_\alpha$, $\alpha\neq0$, and so %
\lemref{lem:expecCDC} implies $\langle\nabla_C\nabla C\rangle_0 <  0$. 
\end{proof}
The proof of \thmref{thm:allanglessmall} is similar and given in \secref{sec:smallAngleQAOApClass}. 
Comparing to our results %
of \thmssref{thm1:smallAngles}{thm:allanglessmall}{thm:QAOA1}
we see that %
by selecting $\tau H=\sqrt{2}\beta B + \sqrt{2}\gamma C$ for a simple quantum quench, the lowest-order contribution to $\langle C\rangle$ 
is the same as for for QAOA. 
\begin{proof}[Proof of \propref{prop:quench}]
Observe that %
$\tau \nabla_H = \nabla_{\tau H}$ and hence 
$\tau \nabla_H C  =\sqrt{2}\beta\nabla C$
and $\tau^2 \nabla^2_H C = 2\gamma\beta \nabla_C\nabla C +2 \beta^2\nabla^2 C$.
Applying \eqref{eq:infinitesimalConj} to conjugation of $C$ by %
$e^{-iH\tau}$ gives
\begin{eqnarray}
    e^{i\tau H}Ce^{-i\tau H} &=& C + i\tau\nabla_H C - \frac{\tau^2}2 \nabla^2_H C \,\pm \dots\\
    &=& C +  i\sqrt{2}\beta\nabla C - \gamma\beta \nabla_C \nabla C - \beta^2 \nabla^2 C + \dots\nonumber 
\end{eqnarray}
Taking initial state expectation values of both sides and applying \lemref{lem:expecCDC} 
gives~\eqref{eq:quenchLeadingOrder}. The statement for QAOA$_p$ follows similarly.  
\end{proof}
 
\subsubsection{Small-angle error bound}
Here we derive rigorous error bounds %
to the leading-order approximations of \thmref{thm1:smallAngles}. %

Here, the spectral norm $\|C\|:=\|C\|_2$ is the maximal eigenvalue in absolute value. %
For a maximization problem with $c(x)\geq 0$, e.g., a constraint satisfaction problem, 
the norm %
gives the optimal cost $\|C\|%
=\max_x c(x)$. Here we write $C=a_0I+C_Z$ as in \eqref{eq:costHamZs}, with $C_Z$ traceless.

\begin{cor}%
\label{cor:thm1errorboundeps}
Let $\widetilde{P_1}(x)$, $\widetilde{\langle C\rangle_1}$ denote the %
probability and cost expectation 
estimates (\eqsref{eq:px1}{eq:expecC1}) of \thmref{thm1:smallAngles} for
QAOA$_1$ with a $k$-local cost Hamiltonian $C=a_0I + C_Z$, 
and 
let $0<\e<1$.  
\begin{itemize}
    \item If $|\gamma|\leq \frac{\e^{1/4}}{2\min(\|C_Z\|,\|C\|)}$ and $|\beta|\leq\frac{\sqrt \e}{2 k}$ 
then
\begin{equation} \label{eq:smallAnglesErrorBound}
    \left|\langle C \rangle_1 - \widetilde{\langle C\rangle}_1 \right| < %
   \min(\|C_Z\|,\|C\|)\;\e.
\end{equation}
\item If additionally or instead %
$|\beta|<\frac25\frac{\sqrt \e}{n}$ then for each $x\in\{0,1\}^n$ we have 
\begin{equation} \label{eq:smallAnglesErrorBoundProb}
    \left| P_1(x) - \widetilde{P_1}(x) \right| < \frac{\e}{2^{n}}.
\end{equation}
\end{itemize}
\end{cor}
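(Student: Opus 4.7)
The plan is to bound the tail of the exact series given by \thmref{thm:QAOA1},
\begin{equation*}
\langle C\rangle_1 - \widetilde{\langle C\rangle}_1 \,=\, \sum_{\substack{\ell,k\geq 1\\ \ell+k\geq 4,\text{ even}}}\frac{(i\gamma)^\ell(i\beta)^k}{\ell!\,k!}\,\langle\nabla_C^\ell\nabla^k C\rangle_0,
\end{equation*}
where the restriction to $\ell,k\geq 1$ uses two observations: the $k=0$ terms vanish because $\nabla_C^\ell C$ contains the innermost bracket $[C,C]=0$, and the $\ell=0$ terms vanish by \lemref{lem:expecGrad}. Since $\nabla_C=\nabla_{C_Z}$ and $\nabla C=\nabla C_Z$, every cost gradient is insensitive to the identity part $a_0I$, and the standard commutator inequality $\|[A,B]\|\leq 2\|A\|\|B\|$ may be applied with either $\|C\|$ or $\|C_Z\|$, whichever is smaller; that is, $\|\nabla_C A\|\leq 2M\|A\|$ with $M:=\min(\|C\|,\|C_Z\|)$, matching the factor appearing in the hypothesis on~$|\gamma|$.

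The next step is to bound each $\|\nabla_C^\ell\nabla^k C\|$ in a way that reflects the effective small parameters $2|\gamma|M\leq\e^{1/4}$ and $2|\beta|k\leq\sqrt\e$. Iterating the commutator bound gives $\|\nabla_C^\ell\nabla^k C\|\leq(2M)^\ell\|\nabla^k C\|$, while the $k$-locality of $C$ gives a bound of the form $\|\nabla^k C\|\leq(2k)^k\,M$, obtained by decomposing $C=\sum_j C_j$ as in \secref{sec:probLocality} with each $C_j$ supported on at most $k$ qubits, using $[B,C_j]=[B|_{\mathrm{supp}(C_j)},C_j]$ and iterating term-by-term, then recombining. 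Plugging these in, each summand is bounded by $\tfrac{\e^{\ell/4}}{\ell!}\cdot\tfrac{\e^{k/2}}{k!}\cdot M$. Summing over the admissible $(\ell,k)$, the dominant contributions $(\ell,k)\in\{(1,3),(2,2),(3,1)\}$ are of sizes $\e^{7/4}M/6$, $\e^{3/2}M/4$, and $\e^{5/4}M/6$ respectively, with the remaining tail controlled geometrically by the remainders of $e^{\e^{1/4}}$ and $e^{\sqrt\e}$. Since $\e<1$ gives $\e^{5/4}<\e$ with margin to spare, this yields $|\langle C\rangle_1 - \widetilde{\langle C\rangle}_1|<M\e$, which is \eqref{eq:smallAnglesErrorBound}.

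For the probability bound \eqref{eq:smallAnglesErrorBoundProb}, the same template applies with $H_x:=\ket{x}\bra{x}$ replacing $C$, but with two modifications. Since $H_x$ is not $k$-local, the locality bound is replaced by $\|\nabla^k H_x\|\leq(2n)^k\|H_x\|=(2n)^k$ coming from $\|[B,\cdot]\|\leq 2\|B\|=2n$, which explains the tighter hypothesis $|\beta|<2\sqrt\e/(5n)$ so that $2|\beta|n<\tfrac{4\sqrt\e}{5}<\sqrt\e$. Moreover, the target bound $\e/2^n$ carries a factor $2^{-n}$ not captured by $\|H_x\|=1$; to recover it I would use the rank-one structure of $H_x$, namely that $\nabla_C^\ell\nabla^k H_x$ is a linear combination of outer products $\ket{y}\bra{z}$ confined to $d(x,y),d(x,z)\leq k$, combined with $\bra{s}\ket{y}\bra{z}\ket{s}=1/2^n$ for any $y,z$, to extract the $2^{-n}$ prefactor from $\langle\nabla_C^\ell\nabla^k H_x\rangle_0$. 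Summing the resulting series as before completes the estimate.

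The main obstacle in both parts is the operator-norm accounting in the second step: establishing the locality-refined bound $\|\nabla^k C\|\leq(2k)^k\,M$ rigorously rather than settling for the trivial $(2n)^k\|C\|$ (a naive triangle-inequality pass against the Pauli $\ell_1$-norm $\sum_\alpha|c_\alpha|$ is too lossy), and, for the probability bound, faithfully extracting the $2^{-n}$ prefactor at the expectation level rather than losing it to a bare operator-norm inequality. Both require careful bookkeeping of the Pauli or computational-basis structure of $C$ and $H_x$ under iterated conjugation, but are routine once that structure is laid out.
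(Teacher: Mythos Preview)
Your plan is sound and uses the same two key ingredients as the paper: the locality bound $\|\nabla^\ell A\|\leq(2k)^\ell\|A\|_*$ for $k$-local $A$ (which the paper states as \lemref{lem:normGrad}), and, for the probability part, the matrix-element counting $\sum_{y,z}|\bra{y}\nabla^\ell H_x\ket{z}|\leq(2n)^\ell$ together with $\bra{s}\ket y\bra z\ket{s}=1/2^n$ to recover the $2^{-n}$ prefactor. The organization, however, differs. The paper does \emph{not} sum the tail of the series in \thmref{thm:QAOA1}; instead it uses the integral (Duhamel-type) identity $e^{i\theta\nabla_H}A=A+i\theta\nabla_HA-\int_0^\theta\!\int_0^\alpha e^{i\tau\nabla_H}\nabla_H^2A\,d\tau\,d\alpha$ iterated once in $\beta$ and once in $\gamma$ to obtain a closed two-term remainder $4|\beta|\gamma^2k\|C\|_*^3+4\beta^2|\gamma|k^2\|C\|_*^2$ (\lemref{lem:thm1errorbound}), into which the hypotheses on $\gamma,\beta$ are substituted directly. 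For the probability it splits the error as $R_1(\beta)+i\beta R_2(\gamma)$, bounding $\bra{\gamma}R_1\ket{\gamma}$ via the matrix-element count and $\langle R_2\rangle_0$ by expanding the commutators $\nabla_C^\ell(\nabla H_x)$ into $2^\ell$ products $C^aGC^b$. Your direct-summation route is arguably more transparent (it makes manifest that the bound is really a geometric tail), but the paper's integral-remainder form gives a compact polynomial bound without having to check that a double series stays below $\e$ uniformly on $(0,1)$, and it generalizes cleanly to higher-order truncations.

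Two small caveats. First, your notation collides ($k$ as locality and as summation index); keep them distinct when you write it up. Second, the proof sketch you give for $\|\nabla^\ell C\|\leq(2k)^\ell\|C\|$ via decomposing $C=\sum_jC_j$ and recombining is exactly the ``too lossy'' triangle-inequality argument you warn against; it yields the Pauli $\ell_1$-norm, not $\|C\|$. A correct proof (and one the paper's own argument gestures at) passes to the $B$-eigenbasis, where a $k$-local operator shifts Hamming weight by at most $k$, and then invokes the operator-valued Bernstein inequality for the trigonometric polynomial $\theta\mapsto e^{i\theta B}Ce^{-i\theta B}$.
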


The proof is shown in \appref{app:normAndErrorBounds} through a sequence of general lemmas. 
For example, for MaxCut we have $\|C_Z\|=m/2\leq \max_x c(x)=\|C\|$. Typically $\|C_Z\|< \|C\|$, 
but this is not true for arbitrary $C$, which is the reason for the minimum function of  \eqref{eq:smallAnglesErrorBound}.
Futhermore, the difference in the ranges for~$\beta$ between the two cases of \corref{cor:thm1errorboundeps} %
arises because measurement probabilities $P_1(x)$ correspond to $n$-local observables. 

As for optimization applications we typically are interested in the expected approximation ratio %
$\langle C \rangle/ c_{opt} \geq \langle C \rangle/ \|C\|$ (for maximization) achieved by a QAOA circuit, 
here we have selected the ranges of $\gamma,\beta$ for \eqref{eq:smallAnglesErrorBound} such that the resulting error bound contains a factor $\|C\|$ (or $\|C_Z\|$). 
Hence $O(\|C\|\e)$ error in the estimate for $\langle C\rangle_1$ corresponds to an $O(\e)$ error estimate for 
the expected approximation ratio. 
A general bound for the error of the cost expectation estimate of \thmref{thm1:smallAngles} as a polynomial in $\gamma,\beta$ is given in \lemref{lem:thm1errorbound}, which may be used to alternatively select the ranges of $\gamma,\beta$ and resulting error bound. The %
proof %
of \lemref{lem:thm1errorbound} is shown using a recursive formula that may be extended to obtain similar bounds when including higher-order terms beyond the approximations of \thmref{thm1:smallAngles}, as indicated in \thmref{thm:QAOA1}.

\subsubsection{Small-angle QAOA$_1$ behaves classically}
\label{sec:smallQAOA1class}
The above results imply that 
QAOA$_1$ can be classically emulated in the small-angle regime, 
in the following sense: we show  
a simple classical randomized algorithm for emulating QAOA$_1$ %
with sufficiently small $|\gamma|,|\beta|$ 
that reproduces the leading-order
probabilities $\widetilde{P_1}(x)$ of sampling %
each bitstring~$x$ from \thmref{thm1:smallAngles}. 
For $|\gamma|,|\beta|$ polynomially small in the problem parameters such that the conditions of \corref{cor:thm1errorboundeps} are satisfied this implies the same error bounds \eqsref{eq:smallAnglesErrorBound}{eq:smallAnglesErrorBoundProb} apply for the classical algorithm. %

Specifically, assume the cost function 
$c(x)$ %
can be efficiently evaluated classically. %
Let $K$ be a bound such that $|\partial_jc (x)|\leq K$ for all $j,x$; for example, for MaxCut, we may set $K$ to be $2$ times the maximum vertex degree in the graph. For an arbitrary cost Hamiltonian we may always take $K=2\|C\|$ (or, $K=2\|C_Z\|$ for $C=a_0I+C_Z$). Consider the following simple %
classical randomized algorithm. 
\vskip 0.5pc
\textbf{%
Algorithm 1: sample according to leading-order QAOA$_1$}
\begin{enumerate}
    \item %
   \underline{Input}: Parameters $\gamma,\beta\in [-1,1]$ such that $|\gamma\beta|\leq \frac1{2nK}$.
    \item %
    Select a bitstring $x_0\in\{0,1\}^n$ and an index $j\in [n]$ each uniformly at random.
    \item Compute $\partial_j c(x_0)$ and let $\delta (x_0,j) :=  2n\gamma\beta \partial_j c(x_0)$. 
    \item  Using a weighted coin, 
    flip the $j$th bit of $x_0$ with probability given by $\frac12 + \frac12 \delta (x_0,j)$.
    \item Return the resulting bitstring $x_1$. %
\end{enumerate}
 By the choice of algorithm parameters in Step 1 we have $-1\leq \delta (x,j)\leq 1$ which %
 ensures the probability distribution resulting from Step 4 remains normalized. For sufficiently small $\gamma,\beta$ this algorithm emulates QAOA$_1$ up to small error. 

\begin{theorem}%
\label{thm:smallAnglesClassical}
Consider a cost Hamiltonian $C=a_0I+C_Z$. 
For fixed $0<\e<1$, let %
$|\beta|\leq \frac25\frac{\sqrt \e}{n}$ and $|\gamma|\leq \e^{1/4}/(2\min(\|C_Z\|,\|C\|)$. 
Then there exists an efficient classical randomized algorithm producing bitstring $x$ with probability $P_{class}(x)$ which satisfies  
\begin{equation} \label{eq:thm2a}
|P_{class}(x)-P_1(x)| \leq %
\frac{\epsilon}{2^{n}}.
\end{equation}
and expected value of the cost function 
   $| \langle c \rangle_{class}  - \langle C \rangle_1 | <  \|C_Z\|\e$.
\end{theorem}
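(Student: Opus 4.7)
The plan is to show that Algorithm 1 produces each bitstring $x$ with probability exactly equal to the leading-order approximation $\widetilde{P_1}(x)=1/2^n - (2\gamma\beta/2^n)dc(x)$ from \thmref{thm1:smallAngles}, and then invoke \corref{cor:thm1errorboundeps} to convert this exact match into the stated error bounds. Efficiency is immediate: each run of the algorithm requires sampling a random bitstring, a random index, one evaluation of $\partial_j c$ (which reduces to two cost-function evaluations), and a single weighted coin flip.

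First, I would compute $P_{class}(x)$ directly by summing over the two possible starting configurations $(x_0,j)$ that can lead to output $x$. Only two cases contribute: (a) the algorithm starts at $x_0=x$ and does not flip bit $j$, occurring with conditional probability $\tfrac12-\tfrac12\delta(x,j)$; or (b) the algorithm starts at $x_0=x^{(j)}$ and does flip bit $j$, occurring with conditional probability $\tfrac12+\tfrac12\delta(x^{(j)},j)$. Weighting each by the uniform choice of $(x_0,j)$ gives
\begin{equation*}
P_{class}(x) \;=\; \frac{1}{2^n n}\sum_{j=1}^n \Bigl[\tfrac12-\tfrac12\delta(x,j)+\tfrac12+\tfrac12\delta(x^{(j)},j)\Bigr].
\end{equation*}
The antisymmetry relation \eqref{eq:partialjCsym}, namely $\partial_j c(x^{(j)})=-\partial_j c(x)$, implies $\delta(x^{(j)},j)=-\delta(x,j)$. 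Substituting and simplifying collapses the bracket to $1-\delta(x,j)=1-2n\gamma\beta\,\partial_j c(x)$, and summing over $j$ using the definition $dc(x)=\sum_j \partial_j c(x)$ yields $P_{class}(x)=1/2^n - (2\gamma\beta/2^n)dc(x)=\widetilde{P_1}(x)$ exactly.

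Next, I would verify that the hypotheses on $\gamma,\beta$ imply the input conditions of Algorithm~1, i.e.\ $\gamma,\beta\in[-1,1]$ (trivial for $\epsilon<1$) and $|\gamma\beta|\le 1/(2nK)$ (using $K\le 2\|C_Z\|$ or $K\le 2\|C\|$, this reduces to checking $\epsilon^{3/4}\lesssim 1$, which holds). Given this, the probability bound \eqref{eq:thm2a} is immediate from the triangle inequality
\begin{equation*}
|P_{class}(x)-P_1(x)|\;=\;|\widetilde{P_1}(x)-P_1(x)|\;\le\;\epsilon/2^n,
\end{equation*}
by applying \eqref{eq:smallAnglesErrorBoundProb} of \corref{cor:thm1errorboundeps} under the stated range of $\beta$. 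The cost expectation bound then follows by writing $\langle c\rangle_{class}=\sum_x c(x)P_{class}(x)=\widetilde{\langle C\rangle_1}$ and applying \eqref{eq:smallAnglesErrorBound} to get $|\langle c\rangle_{class}-\langle C\rangle_1|=|\widetilde{\langle C\rangle_1}-\langle C\rangle_1|<\|C_Z\|\epsilon$ (the $\min$ in the corollary's bound is upper-bounded by $\|C_Z\|$).

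There is no real obstacle here; the whole argument rests on the antisymmetry of the partial cost differences causing the $\tfrac12$'s to cancel while the $\delta$ terms add coherently, producing $dc(x)$ on the nose. The only care needed is bookkeeping around the algorithm's input constraints and confirming that the ranges chosen in the theorem statement are the ones that feed directly into both parts of \corref{cor:thm1errorboundeps}, so that no new error analysis is required beyond what was already established.
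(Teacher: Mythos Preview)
Your proposal is correct and follows essentially the same approach as the paper: compute $P_{class}(x)$ by conditioning on the two starting configurations, use the antisymmetry $\delta(x^{(j)},j)=-\delta(x,j)$ to collapse the sum to $\widetilde{P_1}(x)$, verify Algorithm~1's input constraint, and invoke \corref{cor:thm1errorboundeps} for both error bounds. The paper's proof is organized identically, with the same key cancellation and the same appeal to the corollary.
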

\noindent In this sense  %
we say that QAOA$_1$ behaves classically in the small-angle regime.

\begin{proof}%
Consider %
The conditions of the theorem imply $|\gamma\beta|\leq 1/(9n\min(\|C_Z\|,\|C\|))$ and hence $\gamma,\beta$ satisfy the first condition of Algorithm~$1$.
The probability of this procedure returning a particular %
bitstring $x$ is %
\begin{eqnarray*}
Pr(x_1=x) &=&\, Pr(x_0 = x)\sum_j Pr(j)Pr(coin=0) \,+\,\sum_j  Pr(x_0 = x^{(j)})Pr(j) Pr(coin=1)\\
&=& \frac1{2^n}\frac1{n}\sum_j(\frac12 - \frac12\delta (x,j)) \,+\,  \sum_j \frac1{n}\frac1{2^n}(\frac12 + \frac12\delta (x^{(j)},j))\\
&=& \frac1{2^n}( \frac1{n}\frac{n}{2} + \frac1{n}\frac{n}{2}) - \frac{1}{2^n}\frac1{n}\sum_j\frac12\delta (x,j)) \,+\, \frac1{2^n}\frac1{n}\sum_j \frac12\delta (x^{(j)},j)\\
&=& \frac1{2^n} -\frac1{2^n} \frac1{n}\sum_j\delta (x,j)\\
&=&Pr(x_0 = x) \,-\, \frac2{2^n} \gamma\beta \,dc(x)
\end{eqnarray*}
where we have used %
$\delta  (x^{(j)},j)=-\delta (x,j)$.
Observe that from $\sum_x dc(x) = 0$ %
the resulting distribution %
remains %
normalized.
Hence %
the expected value of the cost function $c(x)$ is
$$\langle c \rangle = \sum_x Pr(x_1=x)c(x) = \langle c(x) \rangle_0 -2\gamma \beta \langle c(x) dc(x)\rangle_0.$$
and so we see Algorithm 1 reproduces %
the leading-order %
probability and cost expectation 
estimates of \thmref{thm1:smallAngles}. 
The error bounds %
then follow directly from \corref{cor:thm1errorboundeps}.
\end{proof}

\thmref{thm:smallAnglesClassical} demonstrates that %
QAOA parameters must necessarily be not-too-small for potential quantum advantage. In practice, the parameter search space may be pruned in advance to avoid such regions. %

\subsection{Causal cones and locality considerations for QAOA$_1$}
\label{sec:lightcones}
Here we consider problem and Hamiltonian locality for QAOA$_1$ circuits, building off of %
\secref{sec:probLocality}.
 We extend these results to QAOA$_p$ in \secref{sec:lightconesp}. 
Our framework formalizes here similar observations made in the original QAOA paper \cite{farhi2014quantum} in the context of computing the cost expectation $\langle C\rangle_p$ and resulting bound to the approximation ratio, as well as subsequent work computing such quantities analytically or numerically for specific problems~\cite{wang2018quantum,hadfield2018thesis}, and approaches to computing or approximating quantum circuit observable expectation values more generally~\cite{evenbly2009algorithms,bravyi2019classical}. 

In particular we show how using locality
in our framework allows for a more direct method of dealing with the action of the QAOA phase operator. This is especially advantageous in cases where each variable appears in relatively few clauses (e.g., MaxCut on bounded degree graphs, as considered for QAOA in \cite{Farhi2014,wang2018quantum}). %
Such cases allow for straightforward %
evaluation in our framework by providing %
succinct efficiently computable expressions of 
relevant quantities such as $\bra{\gamma}\nabla C\ket{\gamma}$ and $\bra{\gamma}\nabla^2 C\ket{\gamma}$ %
which appear
for instance in both the leading-order and exact analysis of QAOA$_1$ of  \secsref{sec:smallMixingAngle}{sec:MaxCutQUBO}.

Consider a %
cost Hamiltonian $C=a_0I+\sum_{j=1}^m C_j$ such that each $C_j$ is a single Pauli term. As observed in \cite{Farhi2014}, 
in cases where the \textit{problem locality} is %
such that for the resulting cost Hamiltonian: i) each $C_j$ acts on at most $k$ qubits %
and ii) each qubit appears in at most $D$ terms, 
this information can be taken advantage of in computing each $\langle C_j\rangle$, especially when $k,D\ll n$, by a priori discarding operators that can be shown to not contribute.%
\footnote{On the other hand, while causal cone considerations are useful for computing quantities such as $\langle C\rangle=\sum_j\langle C_j\rangle$ in the case, e.g., of bounded-degree CSPs, for evaluating QAOA probabilities they may be less useful as $H_x=\ket{x}\bra{x}$ is $n$-local. This observation motivates the general result of \thmref{thm:QAOA1}.} 
This reduced number %
of necessary qubits and operators for computing expectation values is often called the (reverse) \textit{causal cone} or \textit{lightcone} in the literature  \cite{evenbly2009algorithms,bravyi2018quantum,bravyi2019classical,shehab2019noise,streif2020training}, and relates to Lieb-Robinson bounds concerning the speed of propagation of information~\cite{hastings2008observations}. See in particular \cite[Sec. IV.B]{childs2019theory} for applications to simulating local observables. We use the nomenclature \textit{QAOA}$_p$-\textit{cones}, which we define below and in~\tabref{tab:locality}.

\begin{table}[ht]
\centering
\begin{tabular}{ |c|c|c||c|} 
 \hline
 Label & Symb. & Definition & Example: Value for MaxCut\\ 
  \hline
    Hamiltonian term &$C_j$ & %
    $a_jZ_{{j_1}}\dots Z_{j_{|N_j|}}$ & $C_{uv}=-\frac12Z_uZ_v$, \,$(uv)\in E$\\ 
     Classical term & $c_j(x)$ & $a_j(-1)^{x_{j_1}\oplus x_{j_2} \oplus \dots \oplus x_{j_{|N_j|}}}$& $c_{uv}=-\tfrac12(-1)^{x_u\oplus x_v}$\\ 
   Qubits in $C_j$ ($c_j$) & $N_j$  & $\{i:Z_i\in C_j\}\subset [n]$  & $N_{uv}=\{u,v\}$ \\
   QAOA$_1$-cone of $C_j$ & $L_j$ & $ \cup_{i: N_i\cap N_j \neq \emptyset} N_i$ & %
   $\{u,v\}\cup \{w: (uw)\in E \vee (vw)\in E\}$\\
   QAOA$_p$-cone of $C_j$ & $L_{j,p}$ & $L_{j,p-1}\cup_{i: N_i\cap L_{j,p-1} \neq \emptyset} N_i$ & $L_{uv,p}=\{\ell:\, $dist$(\ell,N_{uv})\leq p\}$\\
 \hline
\end{tabular}
\caption{Notation concerning locality for cost Hamiltonian $C=a_0I + \sum_{j}C_j$, %
with corresponding decomposition of the classical cost function $c(x)=a_0+\sum_jc_j(x)$ such that $C_j\ket{x}=c_j(x)\ket{x}$.  
The QAOA$_1$-cone of $C_j$ %
corresponds
to the cost function neighborhood (with respect to the variables) of each $c_j$. %
For MaxCut, dist$(\ell,N_{uv})$ indicates the smaller of the edge distance in the graph of vertex $\ell$ from $u$, or $v$,  %
and so the QAOA$_1$-cone of $C_{uv}$ is vertices (qubits) in the graph %
adjacent to $u$ or $v$.} 
\label{tab:locality}
\end{table}

Recall from \secref{sec:costHam} 
we may 
decompose $C$ with respect to $C^{\{i\}}$ the (sum of) terms in $C$ containing a $Z_i$ factor, $i=1,\dots,n$, which from \eqref{eq:partialjC} satisfy $\partial_i C =-2 C^{\{i\}}$ (i.e., each $C^{\{i\}}$ is diagonal and represents the function $-\tfrac12 \partial_ic(x)$).
Consider the operator $\widetilde{B}:=U_P^\dagger B U_P$, which will appear in our analysis to follow.  
Using $B=\sum_{i=1}^n X_i$ and 
$XZ=-ZX$
we  have 
\begin{equation} \label{eq:Bconj}
   \widetilde{B} = U_P^\dagger B U_P= \sum_{i=1}^n e^{2i\gamma C^{\{i\}}}X_i
   =\sum_{i=1}^n  e^{-i\gamma \partial_i C} X_i
   =\sum_{i=1}^n  X_i e^{i\gamma \partial_i C},
\end{equation}
which implies $\widetilde{B}\ket{s}=\sum_{i=1}^n  e^{-i\gamma \partial_i C}\ket{s}=\tfrac1{\sqrt{2^n}}\sum_x(\sum_{i=1}^n e^{-i\gamma \partial_i c(x)})\ket{x}$. 
As each Hamiltonian $\partial_iC$ only acts on qubits adjacent to $i$ with respect to $C=\sum_{j=1}^mC_j$, \eqref{eq:Bconj} reflects the QAOA$_1$-cone structure of the particular problem.
Hence for each $C_j$, $j=1,\dots,m$, acting on qubits $N_j\subset[n]$, %
define the superset 
\begin{equation}
    L_{j}=N_j \cup \{\ell: \exists j'\neq j \; \text{ s.t. } \ell \in N_j \cap N_{j'}\},
\end{equation}
which we refer to as the \textit{QAOA}$_1$-\textit{cone} of $C_j$. In \secref{sec:lightconesp} we generalize this to $L_{j,p}$ for QAOA$_p$ with $L_{j,1}=L_j$ and $L_{j,0}
:=N_j$, 
as summarized in \tabref{tab:locality}. 

Turning to cost gradient operators, 
\eqref{eq:Bconj} implies the phase operator acts on $\nabla C$ as 
\begin{eqnarray} \label{eq:costGradConj}
e^{i\gamma C}\,\nabla C\,e^{-i\gamma C}= [\widetilde{B},C] \,= \nabla_{\widetilde{B}} C = \sum_j  \nabla_{\widetilde{B}} |_{L_j}C_j.
\end{eqnarray}
In particular terms in $C$ that do not act entirely within $L_j$ always commute through and cancel in each $e^{i\gamma C}\,\nabla C_j\,e^{-i\gamma C}$. 
Hence using $X\ket{s}=\ket{s}$ with \eqsref{eq:nablaCjs}{eq:Bconj} %
gives
\begin{eqnarray} \label{eq:expecgammanablaC}
\bra{\gamma}i\nabla C \ket{\gamma} &=& %
i\bra{s}\sum_{i=1}^n  (e^{i\gamma \partial_i C} - e^{-i\gamma \partial_i C})C \ket{s}%
=-\frac{2}{2^n} \sum_x c(x) \sum_{i=1}^n \sin(\gamma \partial_i c(x))%
\end{eqnarray}
where for each 
$C_j = a_j Z_{j_1}\dots Z_{j_{|N_j|}}$ %
from \eqref{eq:costGradConj} we have  
\begin{eqnarray} \label{eq:expecgammanablaCj}
\bra{\gamma}i\nabla C_j \ket{\gamma} &=&  
-\frac{2}{2^{|N_j|}} \sum_{x\in\{0,1\}^{L_j}} c_j(x) \sum_{i\in N_j} \sin(\gamma \partial_i c_j(x)),
\end{eqnarray}
and so reduces to a sum over at most $|L_j|$ variables, with $ \bra{\gamma}i\nabla C\ket{\gamma} =\sum_j\bra{\gamma}i\nabla C_j \ket{\gamma}$. %

Observe that $|L_j|$ gives an upper bound to the necessary number of variables to sum over, i.e., we can compute \eqref{eq:expecgammanablaCj} by summing over $2^{|L_j|}$ bitstrings. Alternatively as $|L_j|$ becomes large we may approximate quantities such as \eqsref{eq:expecgammanablaC}{eq:expecgammanablaCj} with, for example, Monte Carlo estimates. %

For $\nabla^2C$, a similar calculation, shown in \appref{app:smallMix}, yields 
\begin{eqnarray} \label{eq:expecgammanabla2C}
\bra{\gamma}\nabla^2 C \ket{\gamma} =
\frac{-8}{2^n} \sum_x c(x) \sum_{i<i'} \sin(\tfrac{\gamma}2 (\partial_i c(x)+\tfrac12\partial_i\partial_{i'}c(x)))
\sin(\tfrac{\gamma}2 (\partial_{i'} c(x)+\tfrac12\partial_i\partial_{i'}c(x))),
\end{eqnarray}
which for each $C_j = a Z_{j_1}\dots Z_{j_{N_j}}$, $C=\sum_j C_j$, reduces to 
\begin{eqnarray*}
\bra{\gamma}\nabla^2 C_j \ket{\gamma} =
\frac{-8}{2^{|L_j|}} \sum_{x\in\{0,1\}^{|L_j|}} c_j(x) \sum_{i<i'} \sin(\tfrac{\gamma}2 (\partial_i c(x)+\tfrac12\partial_i\partial_{i'}c(x)))
\sin(\tfrac{\gamma}2 (\partial_{i'} c(x)+\tfrac12\partial_i\partial_{i'}c(x))).
\end{eqnarray*}

We apply these results in \secref{sec:smallMixingAngle}.

\subsection{QAOA$_1$ with small mixing angle and arbitrary phase angle}
\label{sec:smallMixingAngle}

In this section we generalize the leading-order QAOA$_1$ results of  \secref{sec:smallAngleQAOA1} to the case of arbitrary phase angle $\gamma$, but $|\beta|$ taken as a small parameter. 
We show for %
this regime %
that QAOA$_1$ probabilities and expectation values remain easily expressible in terms of local cost differences, for arbitrary cost functions. %
Higher order terms %
may be similarly derived with our framework.
(We show %
complementary results for the setting of small $|\gamma|$ but arbitrary $\beta$ in \secref{sec:smallphase} for the case of QUBO cost functions.)

\begin{theorem}
\label{thm:smallBetap1}
The probability of measuring %
each string $x$ for QAOA$_1$
to first order in $\beta$ is
\begin{equation}
    P_1(x) \simeq P_0(x) \,-\, \frac{2\beta}{2^n} \sum_{j=1}^n  \sin({\gamma\, \partial_j c(x)}), %
\end{equation}
and the first-order expectation value
is then 
$\,\, %
    \langle C \rangle_1  \,%
   \,\simeq\, \langle C \rangle_0 
    \,-\, \frac{2\beta}{2^n} \sum_x c(x)
    \left(\sum_{j=1}^n  \sin({\gamma\, \partial_j c(x)})\right).%
$ %
\end{theorem}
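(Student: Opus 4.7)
The plan is to follow the Heisenberg-picture strategy used throughout the paper, but to expand only the mixing operator perturbatively while treating $U_P(\gamma)$ exactly. Writing $\ket{\gamma\beta} = U_M(\beta)\,\ket{\gamma}$ with $\ket{\gamma} := U_P(\gamma)\ket{s}$, the probability of measuring $x$ equals the expectation of $H_x := \ket{x}\bra{x}$ in $\ket{\gamma\beta}$, so
\begin{equation*}
P_1(x) \;=\; \bra{\gamma}\,U_M^\dagger(\beta)\,H_x\,U_M(\beta)\,\ket{\gamma}.
\end{equation*}
Applying \eqref{eq:infinitesimalConj} with $H=B$ (equivalently, the first step of the derivation of \lemref{lem:QAOA1conj}) gives $U_M^\dagger(\beta)\,H_x\,U_M(\beta) = H_x + i\beta\,\nabla H_x + O(\beta^2)$. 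Taking $\bra{\gamma}\cdot\ket{\gamma}$ then reduces the problem to evaluating the two matrix elements $\bra{\gamma}H_x\ket{\gamma}$ and $\bra{\gamma}\nabla H_x\ket{\gamma}$.

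The zeroth-order term is immediate: because $C$ is diagonal, $U_P(\gamma)\ket{s} = \tfrac{1}{\sqrt{2^n}}\sum_y e^{-i\gamma c(y)}\ket{y}$, and hence $\bra{x}\ket{\gamma} = \tfrac{1}{\sqrt{2^n}}e^{-i\gamma c(x)}$, giving $\bra{\gamma}H_x\ket{\gamma}=1/2^n = P_0(x)$. For the first-order term, expand
\begin{equation*}
\nabla H_x \;=\; [B,H_x] \;=\; \sum_{j=1}^n \bigl(\ketbra{x^{(j)}}{x} - \ketbra{x}{x^{(j)}}\bigr),
\end{equation*}
and sandwich between $\bra{\gamma}$ and $\ket{\gamma}$, using $\bra{x^{(j)}}\ket{\gamma} = \tfrac{1}{\sqrt{2^n}}e^{-i\gamma c(x^{(j)})}$ together with the identity $c(x^{(j)})-c(x) = \partial_j c(x)$. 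The cross-terms collapse to $\tfrac{1}{2^n}\bigl(e^{i\gamma\partial_j c(x)} - e^{-i\gamma\partial_j c(x)}\bigr) = \tfrac{2i}{2^n}\sin(\gamma\,\partial_j c(x))$, so multiplying by $i\beta$ yields the stated first-order correction
\begin{equation*}
P_1(x) \;\simeq\; P_0(x) \;-\; \frac{2\beta}{2^n}\sum_{j=1}^n \sin(\gamma\,\partial_j c(x)).
\end{equation*}

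The cost expectation formula then follows by linearity: $\langle C\rangle_1 = \sum_x c(x) P_1(x)$, where $\sum_x c(x) P_0(x) = \langle C\rangle_0$ and the correction is the $c(x)$-weighted sum of the sine terms. There is essentially no obstacle here; the small amount of care required is tracking the two signs that come from the commutator $[B,H_x]$ versus those in the exponentials $e^{\pm i\gamma c(\cdot)}$, and verifying that the diagonal phase factors combine into the partial cost differences $\partial_j c(x)$. If desired, the proof can be phrased uniformly with the rest of the paper by instead invoking \lemref{lem:QAOA1conj} applied to $A=H_x$ and retaining only the $k\le 1$ terms in $\beta$, since the entire $\gamma$-dependence is already captured by the operator $e^{i\gamma\nabla_C}H_x$ acting diagonally on $\ket{s}$.
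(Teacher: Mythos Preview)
Your proof is correct and follows essentially the same approach as the paper: expand only $U_M(\beta)$ perturbatively via \eqref{eq:infinitesimalConj} and evaluate the resulting gradient expectation with respect to $\ket{\gamma}=U_P(\gamma)\ket{s}$. The only cosmetic difference is order and packaging: the paper first derives $\langle C\rangle_1$ by invoking the pre-computed identity \eqref{eq:expecgammanablaC} (obtained from $U_P^\dagger B U_P=\sum_i e^{-i\gamma\partial_i C}X_i$) and then says ``similarly for $H_x$'', whereas you compute $\bra{\gamma}\nabla H_x\ket{\gamma}$ directly from the explicit amplitudes $\bra{x}\ket{\gamma}=\tfrac{1}{\sqrt{2^n}}e^{-i\gamma c(x)}$ and then obtain $\langle C\rangle_1$ by summing $c(x)P_1(x)$.
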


Consider a string $x^*$ that maximizes $c(x)$. Then $\partial_j c(x^*) \leq 0$ for all $j$. Therefore, assuming $\gamma>0$ is small enough that each product $|\gamma \partial_j c(x^*)| < \pi $, then we see that the probability of such a state will increase, to lowest order in $\beta>0$.  Thus we again see in this regime that to lowest order probability will flow as to increase $\langle C \rangle$. 
Similar arguments apply to %
minimization. %

\begin{proof}
Expressing $\langle C \rangle_1$ as
the expectation value of
$U_M(\beta)^\dagger C U_M(\beta)$ as given in \eqref{eq:conj1}, taken with respect to $\ket{\gamma}:=U_P(\gamma)\ket{s}$,   gives 
to %
low order in $\beta$  %
\begin{eqnarray} \label{eq:expecCsmallbeta}
\langle C \rangle_1 
&=& \langle C \rangle_0 
+ \beta \bra{\gamma} i\nabla C \ket{\gamma}  
- \tfrac\beta{2} \bra{\gamma} \nabla^2 C \ket{\gamma}  
+ \dots%
\end{eqnarray}
The leading order contribution then follows from \eqref{eq:expecgammanablaC}. 

which plugging into \eqref{eq:expecCsmallbeta} gives the result for $\langle C \rangle_1$. 
 Similarly, repeating this derivation for the QAOA$_1$ expectation value of $H_x=\ket{x}\bra{x}$ shows the first-order %
correction to the %
probability of measuring each string $x$ %
is
$- \frac{2\beta}{2^n} \sum_{j=1}^n  \sin({\gamma \partial_j C(x)})$  %
which gives the %
result for $P_1(x)$. 
\end{proof}

Applying the small argument approximation $\sin(x)\simeq x$ to \thmref{thm:smallBetap1} reproduces the results of \thmref{thm1:smallAngles}. Furthermore, extending \thmref{thm:smallBetap1} to include the second-order contribution in $\beta$ to $\langle C\rangle_1$ follows readily from \eqref{eq:expecgammanabla2C} and \eqref{eq:expecCsmallbeta}.
From analysis similar to the case when both angles are small, and a simple modification to Algorithm 1 above 
it follows that QAOA$_1$ is classically emulatable %
for sufficiently small $|\beta|$ (up to small additive error), independent of the size of $|\gamma|$. 

\begin{cor}%
\label{cor:smallbeta}
There exists a constant $b$ such that 
for QAOA$_1$ with %
$|\beta|\leq b/n$
there is a simple classical randomized algorithm 
such that for each $x$ %
$$ |P_{class}(x)-P_1(x)|= O(1/2^n).$$
\end{cor}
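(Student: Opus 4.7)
The plan is to adapt Algorithm~1 of \thmref{thm:smallAnglesClassical} by replacing the linearized coin bias $\gamma\beta\partial_j c(x_0)$ with the full expression $\beta\sin(\gamma\partial_j c(x_0))$, thereby accommodating arbitrary $\gamma$. Choosing $b=1/2$ so that $|\beta|\leq 1/(2n)$, I would set $\delta(x_0,j):=2n\beta\sin(\gamma\partial_j c(x_0))$, which satisfies $|\delta|\leq 1$; the modified algorithm draws $x_0$ and $j$ uniformly and flips the $j$th bit of $x_0$ with probability $\tfrac12+\tfrac12\delta(x_0,j)$. The antisymmetry $\partial_j c(x^{(j)})=-\partial_j c(x)$ forces $\sin(\gamma\partial_j c(x^{(j)}))=-\sin(\gamma\partial_j c(x))$, so the same bookkeeping as in the proof of \thmref{thm:smallAnglesClassical} shows that the induced output distribution is normalized and satisfies $P_{class}(x)=P_0(x)-\frac{2\beta}{2^n}\sum_{j=1}^n \sin(\gamma\partial_j c(x))$, matching the leading-order expression of \thmref{thm:smallBetap1} term-by-term.

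I would then bound $|P_1(x)-P_{class}(x)|$ using the $O(\beta^k)$, $k\geq 2$, tail of $P_1(x)$. Since $H_x=\ket{x}\bra{x}$ commutes with $C$, the probability simplifies to $P_1(x)=\bra{\gamma}U_M(\beta)^\dagger H_x U_M(\beta)\ket{\gamma}$ with $\ket{\gamma}:=U_P(\gamma)\ket{s}$, and \eqref{eq:infinitesimalConj} applied with $H=B$ yields the exact series
\begin{equation*}
P_1(x)\,=\,\tfrac{1}{2^n}\,+\,i\beta\,\bra{\gamma}\nabla H_x\ket{\gamma}\,+\,\sum_{k\geq 2}\tfrac{(i\beta)^k}{k!}\,\bra{\gamma}\nabla^k H_x\ket{\gamma}.
\end{equation*}
A direct evaluation using $\bra{\gamma}\ket{y}=e^{i\gamma c(y)}/\sqrt{2^n}$ shows that the $k=1$ contribution $i\beta\bra{\gamma}\nabla H_x\ket{\gamma}$ equals $-\frac{2\beta}{2^n}\sum_j \sin(\gamma\partial_j c(x))$, which exactly cancels $P_{class}(x)-P_0(x)$.

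The heart of the argument is bounding $|\bra{\gamma}\nabla^k H_x\ket{\gamma}|$ for $k\geq 2$ while retaining a $1/2^n$ prefactor. Expanding the nested commutator $\nabla^k H_x$ produces at most $(2n)^k$ signed summands of the form $\pm X_S H_x X_{S'}$ with $|S|+|S'|=k$, and each matrix element $\bra{\gamma}X_S\ket{x}\bra{x}X_{S'}\ket{\gamma}$ has magnitude exactly $1/2^n$ because $H_x$ is rank-one and $|\bra{\gamma}\ket{y}|=1/\sqrt{2^n}$ for every computational basis state $\ket{y}$. Hence $|\bra{\gamma}\nabla^k H_x\ket{\gamma}|\leq (2n)^k/2^n$, and summing yields
\begin{equation*}
|P_1(x)-P_{class}(x)|\,\leq\, \tfrac{1}{2^n}\sum_{k\geq 2}\tfrac{(2n|\beta|)^k}{k!}\,\leq\, \tfrac{(2n|\beta|)^2\, e^{2n|\beta|}}{2\cdot 2^n}\,=\,O\!\left(1/2^n\right)
\end{equation*}
whenever $|\beta|\leq b/n$, establishing the claim.

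The main obstacle is obtaining this $1/2^n$ prefactor: the naive operator-norm bound $\|\nabla^k H_x\|\leq (2n)^k$ alone gives only $O(1)$ error, which is useless given that typically $P_1(x)\sim 1/2^n$. The resolution is that $H_x$ is a rank-one projector and $U_P$ acts as a pure phase on the computational basis, so every expanded matrix element inherits the $1/2^n$ normalization of the initial uniform distribution, independently of the magnitude of $\gamma$.
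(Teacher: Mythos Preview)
Your proposal is correct and takes essentially the same approach as the paper: modify Algorithm~1 to use the bias $\beta\sin(\gamma\partial_j c(x_0))$ so that $P_{class}$ matches the first-order-in-$\beta$ expression of \thmref{thm:smallBetap1}, expand $P_1(x)=\bra{\gamma}e^{i\beta\nabla}H_x\ket{\gamma}$ as a power series in $\beta$, and bound the tail via $|\bra{\gamma}\nabla^k H_x\ket{\gamma}|\leq (2n)^k/2^n$. Your justification of this key bound---expanding $\nabla^k H_x$ into at most $(2n)^k$ terms $\pm X_S H_x X_{S'}$ and using the rank-one structure together with $|\bra{\gamma}\ket{y}|=1/\sqrt{2^n}$---is more explicit than the paper's terse statement of the same inequality, but the argument is substantively identical.
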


\begin{proof}
The classical algorithm is constructed by adjusting the quantities of Algorithm $1$ to match those of \thmref{thm:smallBetap1}, which yields identical leading order terms for both algorithms. It remains to bound the error (cf. the results and proofs of \appref{app:errorBounds}). 

Let $\ket{\gamma } := U_P(\gamma)\ket{s}$. 
Applying \eqref{eq:infinitesimalConj} to %
$H_x:=\ket{x}\bra{x}$ for $P_1(x)=\bra{\gamma\beta}H_x \ket{\gamma\beta}$ gives
$$P_1(x)=
P_0(x) + \beta \bra{\gamma } i\nabla H_x \ket{\gamma } + \sum_{j=2}^\infty \frac{(i\beta)^j}{j!} \bra{\gamma } \nabla^j H_x \ket{\gamma }$$
Observe that using $|\bra{x}\ket{\gamma}|=\frac1{\sqrt{2^n}}$ we have  
$ |\bra{\gamma}\nabla^\ell H_x\ket{\gamma}|\leq (2n)^\ell |\bra{\gamma} H_x\ket{\gamma}| = (2n)^\ell \frac1{2^n}$,
so we bound the tail sum as 
\begin{eqnarray*}
\left| \sum_{j=2}^\infty \frac{(i\beta)^j}{j !} \bra{\gamma}\nabla^\ell H_x \ket{\gamma}\right|
&\leq& %
\sum_{j=2}^\infty
\frac{(2n\beta)^j}{j!} \frac1{2^n}%
\leq \frac1{2^n}(e^{2n\beta} -2n\beta -1)%
\leq \frac2{2^n}n^2\beta^2 e^{2n\beta}.
\end{eqnarray*}
Thus if $|\beta|=O(1/n)$ this quantity is $O(\frac1{2^n})$ as desired.
\end{proof}

Generally, for QAOA with larger parameter values, the leading-order %
approximations become less accurate as higher-order %
terms become significant. For large enough parameters, %
contributing terms will involve cost function differences over increasingly large neighborhoods, and 
the number of contributing terms 
may become %
super-polynomial. %
Hence the direct mapping to classical randomized algorithms fails to generalize to arbitrary angles as the probability updates will no longer be efficiently computable in general. 
Indeed, the results of~\cite{farhi2016quantum} imply that, %
under commonly believed computational complexity conjectures, there cannot exist an efficient classical algorithms emulating QAOA$_p$ with arbitrary angles in general, even for QAOA$_1$; %
see \remref{rem:samplingComplexity} below.  
  
In the remainder of \secref{sec:QAOA1} 
we illustrate the application of our framework by considering several examples including the Hamming ramp toy problem, a simple quench protocol related to QAOA, and QAOA$_1$ for MaxCut and QUBO problems.

\subsection{Example: QAOA$_1$ for the Hamming ramp} \label{sec:HammingRamp}
Here we consider %
the Hamming weight ramp problem (studied e.g. for QAOA in~\cite{bapat2018bang}). We apply our framework to determine the leading order terms of the QAOA$_1$ cost expectation, %
and show it matches the exact solution. 

Consider the Hamming-weight ramp cost function $c(x)=\alpha|x|$ with $\alpha\in\reals\setminus\{0\}$ which we may write 
\begin{equation}
    c(x)=\alpha|x|=\frac{\alpha n}2 +( \alpha |x|-\frac{ \alpha n}2) \, =:\frac{\alpha n}2 + c_z(x),
\end{equation}
with Hamiltonian
\begin{equation}
    C=\frac{\alpha n}2 I  - \frac{\alpha}2 \sum_j Z_j \, =:\frac{\alpha n}2I + C_Z,
\end{equation}
i.e., $C_Z := - \frac{\alpha}2 \sum_j Z_j$ represents the function $c_z(x)=\alpha |x|-\frac{ \alpha n}2$. %
\paragraph{Exact results:}
For QAOA$_1$ a simple calculation shows 
\begin{eqnarray} \label{eq:rampOptimal}
\langle C \rangle_1 %
= \frac{\alpha n}{2} + \frac{\alpha n}{2} \sin(\alpha \gamma) \sin(2\beta).
\end{eqnarray}
Thus QAOA$_1$ optimally solves this problem, with probability $1$, with angles $\gamma=\frac{\pi}{2\alpha},\beta=\pi/4$.
Expanding \eqref{eq:rampOptimal} %
using $\sin(x)=x-\frac{x^3}{3!}+\dots$ gives  
\begin{eqnarray} \label{eq:rampSmallAngle}
    \langle C\rangle_1 %
    &=&  \frac{\alpha n}2 + \alpha^2 n\gamma\beta  - \frac23 \alpha^2 n \gamma \beta^3
 - \frac16 \alpha^4 n \gamma^3 \beta
 +\dots
\end{eqnarray}
where the terms not show are order $6$ or higher in $\alpha$, and ${\gamma,\beta}$ combined. 

\begin{figure}[ht]
    \centering
    \begin{subfigure}[t]{0.5\textwidth}
        \centering
        \includegraphics[height=1.6in]{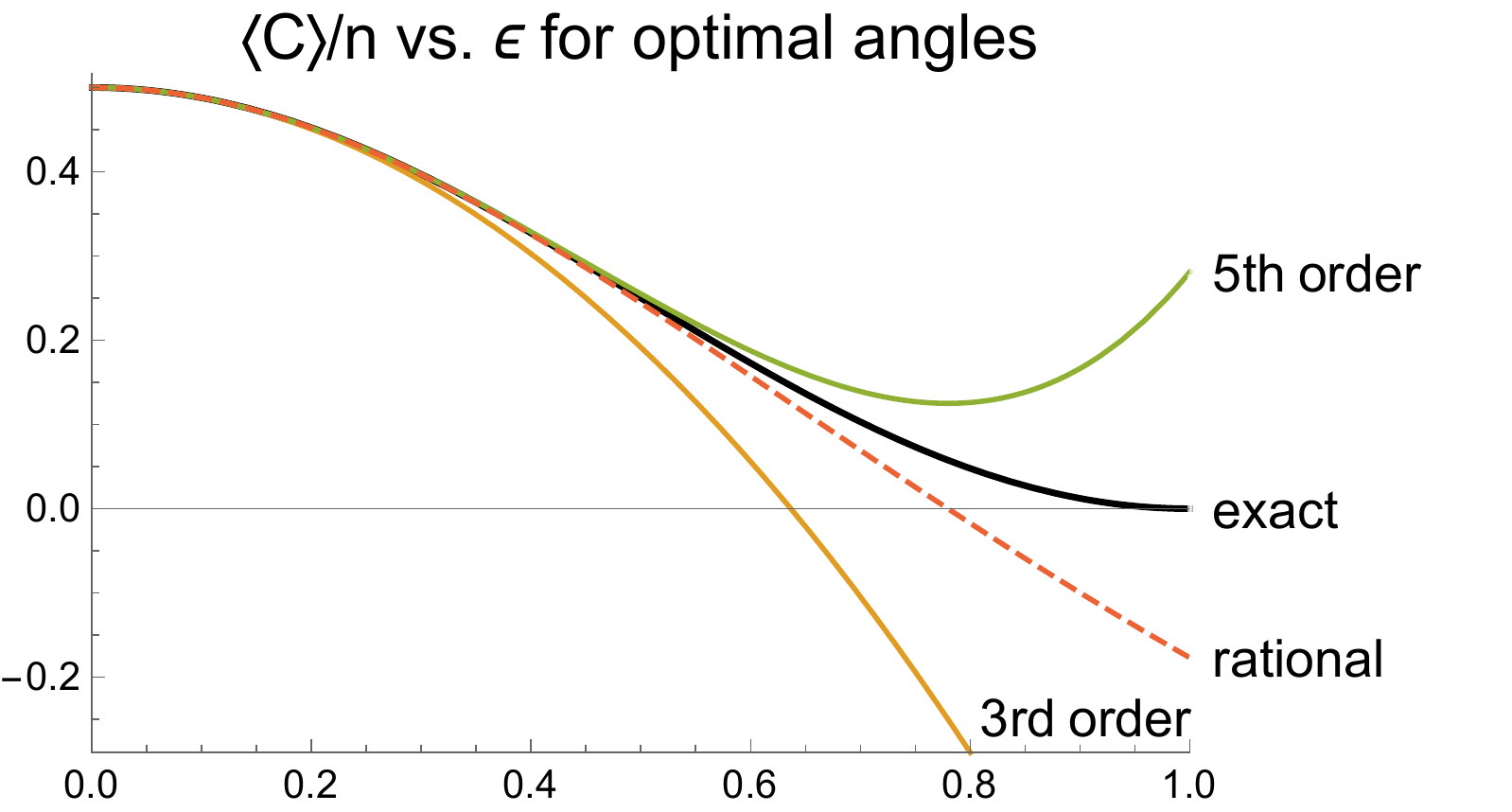}
        \caption{$\langle C\rangle_1$ vs $\epsilon$ with $\beta = -\epsilon (\pi/4)$ and $\gamma = \epsilon(\pi/2)$ linearly interpolating from zero to the angles giving the optimal solution with probability 1.}
    \end{subfigure}%
    \begin{subfigure}[t]{0.5\textwidth}
        \centering
        \includegraphics[height=1.6in]{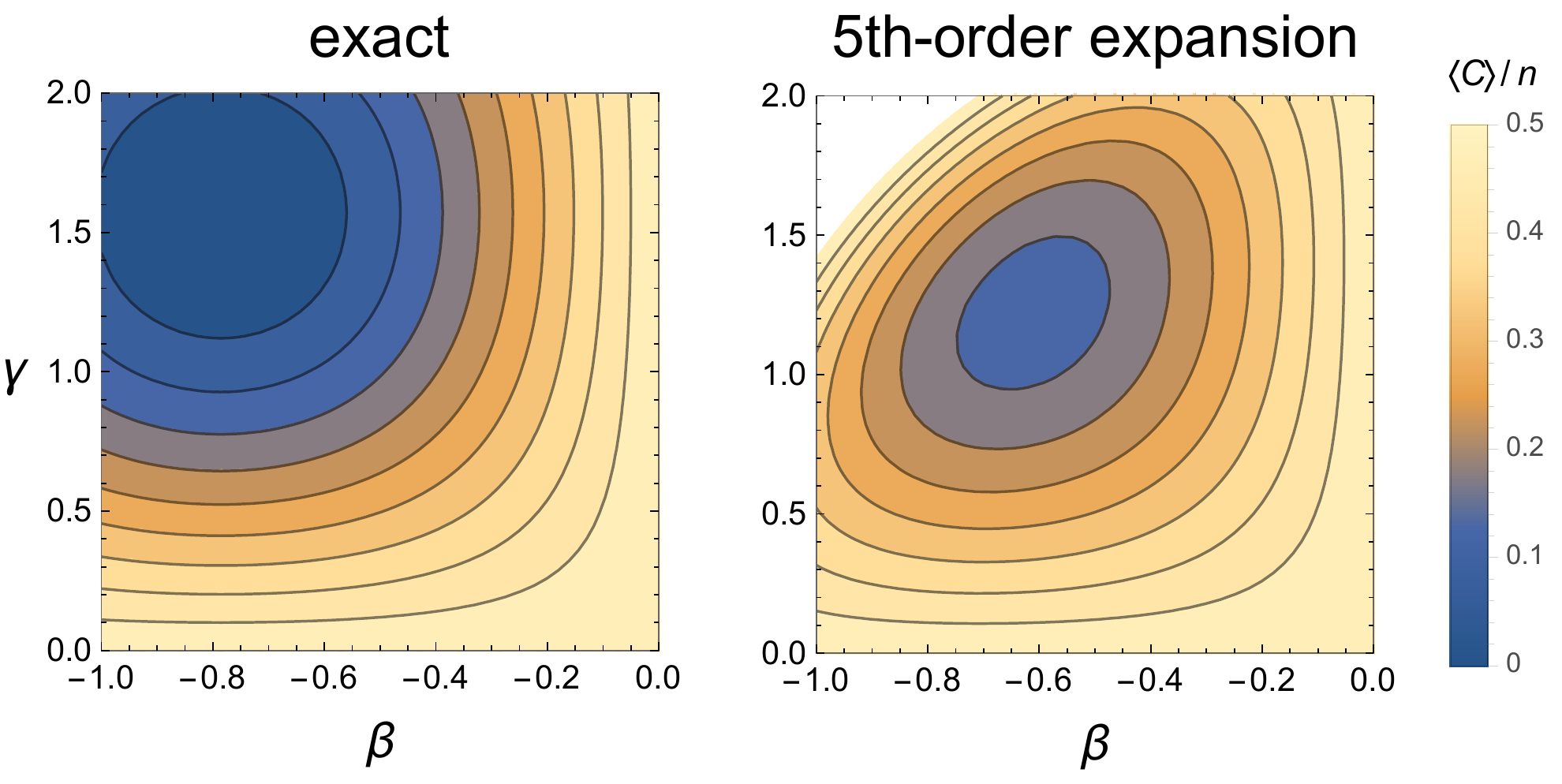}
        \caption{$\langle C\rangle_1$ vs $\beta$ and $\gamma$}
    \end{subfigure} 
    \caption{Comparison of approximate versus exact approximations of $\langle C\rangle_1$ for QAOA$_1$ for the Hamming weight ramp %
    problem $c(x)=|x|$. %
    Here we consider the minimization version optimized by the all $0$ string. Panel (a) shows the deviation between the exact formula \eqref{eq:rampOptimal} and the approximations obtained by restricting \eqref{eq:rampSmallAngle} to 3rd order and 5th order terms, respectively, plotted versus increasing parameter size. The figure also shows a 2,3-Pade approximation, which is a rational function whose coefficients match the expansion of that function match the 5th order truncated Taylor series~\cite{orszag1978advanced}. 
    The right panel compares contour plots of $\langle C\rangle_1$ for the indicated angle values. We observe significant overlap between the targeted blue regions of both plots.}
     \label{fig:ramp}
\end{figure}

\paragraph{Our framework:}
\thmsref{thm1:smallAngles}{thm:smallprecursed} produce the same leading-order expression \eqref{eq:rampSmallAngle} for the Hamming ramp. For this cost function it is easily seen %
    $d^\ell c(x) = (-2)^\ell c_z (x)$, 
and in particular $d c(x) = -2 c_z (x)=\alpha (n -2|x|)$ and $d^2 c(x) = 4 c_z (x)$. 
Thus we have
$$ \langle \nabla_C \nabla C \rangle_0 = \frac2{2^n}\sum_x c(x) dc(x) =-4\langle C\; C_Z \rangle_0= -4 \sum_j (\frac{\alpha^2}4)= -\alpha^2 n$$
which from \thmref{thm:smallprecursed} gives the correct leading order contribution $\alpha^2 n \gamma \beta$. Next we have 
\begin{eqnarray*}
\langle \nabla_C \nabla^3 C \rangle_0 = %
\frac2{2^n}\sum_x c(x) d^3c(x) 
= \frac8{2^n}\sum_x c(x) dc(x)
=-4\alpha^2 n
\end{eqnarray*}
which gives the $\gamma \beta^3$ term %
coefficient 
$-\frac23\alpha^2n$. 
Next, from \lemref{lem:quboGrads} we have $\nabla^2 C=4C_Z$, %
it follows %
$\langle \nabla^2_C \nabla^2 C \rangle_0 = 0$ 
as anticipated from \eqref{eq:rampSmallAngle}. 
Finally, for the term $\langle \nabla^3_C \nabla C \rangle_0$, %
we have $\nabla_C \nabla C = \alpha^2 B$ %
and so 
\begin{eqnarray*}
\langle \nabla^3_C \nabla C \rangle_0 &=& \langle \nabla^2_C \nabla_C \nabla C \rangle_0
=-\alpha^2 \langle  \nabla_C \nabla C \rangle = -\alpha^4 \langle B\rangle_0 = -\alpha^4 n
\end{eqnarray*}
Using these expressions in \thmref{thm:smallprecursed} (which generalizes \thmref{thm1:smallAngles}) for QAOA$_1$ then gives the correct terms up to 4th order. 
\figref{fig:ramp} compares these approximations with the exact behavior.

In~\cite{bapat2018bang}, variants of the ramp such as the Bush-of-implications and Ramp-with-spike are studied for QAOA, along with cost functions given by perturbations of the ramp. Our techniques similarly apply to these problems. More importantly, \thmsref{thm1:smallAngles}{thm:smallprecursed} may be applied to problems where the exact results are not generally obtainable.

\subsection{Example: QAOA$_1$ for random Max-3-SAT} \label{sec:3SAT}

When applying QAOA to instances drawn from a class of problems, it may be convenient to select a single set of parameters rather than optimizing parameters individually for each instance. For instance, this selection could be QAOA parameters that optimize the average cost for the class of problems. Our framework can address this situation by averaging the cost operators over the class of problems with a fixed choice of parameters.

As an example, %
here we apply this idea to random Max-$3$-SAT. %
A SAT problem  in conjunctive normal form consists of $m$ disjunctive clauses
in $n$ Boolean literals (i.e., variables or their negations). %
For $k$-SAT, each clause involves %
$k$~distinct literals.  
The cost function associated with a Max-$k$-SAT  instance
is the number of clauses an assignment $x$ satisfies. 
For Max-3-SAT, the corresponding cost Hamiltonian as in \eqref{eq:costHamZs} and used in  \lemref{lem:expecCDC} is easily obtained \cite{hadfield2018representation}:
each %
clause contributes $7/8$ to $a_0$, and $\pm 1/8$ to each corresponding $a_i$, $a_{i,j}$ and $a_{i,j,k}$, %
with the signs depending on which variables %
are negated within each clause.

Random $k$-SAT instances are constructed by randomly selecting $k$ distinct variables for each clause and negating each variable with probability one-half. %
Averaging the squares of the cost coefficients that appear in \lemref{lem:expecCDC} over random 3-SAT instances gives
\begin{equation}
   \overline{ \langle \nabla_C \nabla C \rangle_0 } = -\frac{3m}{4}
\end{equation}
where here the overline denotes the average over random %
instances.
Applying this in \thmref{thm:QAOA1} gives the averaged leading-order change in QAOA$_1$ expected cost %
for such random Max-3-SAT instances to be %
\begin{eqnarray}
\overline{\langle C\rangle_1} \,=\, \langle C\rangle_0 + \frac34 m \gamma \beta + \dots \,=\, \frac78 m+ \frac34 m \gamma \beta + \dots
\end{eqnarray}
where $\langle C\rangle_0 = \tfrac78 m$ corresponds to the expected solution value obtained from uniform random guessing. Similar considerations apply to the higher-order terms of \thmref{thm:QAOA1}.

An important consideration when using averages over a class of problems is how well they represent the characteristics of typical instances. The variance with respect to instances in the class is one measure of this. As an example, consider random Max-3-SAT with a  fixed clause to variable ratio $m = \alpha n$ as $n$ increases. This scaling provides a high concentration of hard instances when $\alpha$ is close to the transition between mostly satisfiable and mostly unsatisfiable instances~\cite{kirkpatrick94}. In this case, the variance scales as
\begin{equation}
   \mbox{var}( \langle \nabla_C \nabla C \rangle_0  ) \sim \frac{9 n \alpha^2}{128}.
\end{equation}
Thus, the relative deviation in $\langle \nabla_C \nabla C \rangle_0$ is proportional to $1/\sqrt{n}$.
Such concentration indicates that the average leading-order change %
 $\langle C\rangle_1$ is representative of the behavior of individual instances. Thus averaging over the class of problems gives a simple representative expression for how QAOA performs for this class of problems when the angles are small. Similar considerations apply to higher-order terms.

For individual 3-SAT instances, the QAOA$_1$ cost expectation value can be 
efficiently evaluated, for %
example, by extending the prescription of \secref{sec:MaxCutQUBO} (cf. \secref{sec:classicalAlgGeneral2}).
However, for other problem classes, or higher-order terms in the expansion, or QAOA$_p$ more generally, evaluating %
such quantities 
can be challenging. In such cases, averaging over a class of problems can be a simpler proxy for the behavior of typical instances than per-instance evaluation, and, for example, applied toward obtaining good algorithm parameters.

\subsection{%
Analysis of QAOA$_1$ for QUBO problems and MaxCut}\label{sec:MaxCutQUBO}

Our framework is useful for \textit{exact} %
analysis of QAOA with \emph{arbitrary} angles, not just to leading-order contributions or in small-angle regimes. Such analysis is %
challenging in general and typically requires some degree of problem specialization. %
Here we show how our framework may be used to succinctly reproduce analytic performance results previously obtained for MaxCut~\cite{wang2018quantum}, and extend such analysis to the wider class of QUBO problems. For these problems we have the following. 

\begin{lem} \label{lem:quboGrads}
For a QUBO cost Hamiltonian $C = a_0 I + \sum_{j}a_{j}Z_j + \sum_{i<j}a_{ij}Z_iZ_j =: a_0 I + C_{(1)} + C_{(2)}$ we have for $k\in\integers_+$
\begin{eqnarray}
    \nabla^{2k}C = 4^k C_{(1)} \, + \, 16^{k-1}\nabla^2 C_{(2)},\\
    \nabla^{2k+1}C = 4^k \nabla C_{(1)} \, + \,16^{k} \nabla C_{(2)}.
\end{eqnarray}
\end{lem}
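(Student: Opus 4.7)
Since $\nabla = [B,\cdot\,]$ is linear and annihilates the identity, the decomposition $C = a_0 I + C_{(1)} + C_{(2)}$ gives $\nabla^k C = \nabla^k C_{(1)} + \nabla^k C_{(2)}$ for every $k \geq 1$. Moreover, each commutator $[X_\ell, P]$ either vanishes (when $\ell$ is not in the support of the Pauli term $P$) or has the same support as $P$, so the 1-local and 2-local pieces do not mix under $\nabla$ and may be treated independently. The plan is thus to compute $\nabla^2$ on each piece and then iterate.

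For $C_{(1)} = \sum_j a_j Z_j$: using $[X_j, Z_j] = -2iY_j$ and $[X_j, Y_j] = 2iZ_j$, a one-line calculation gives $\nabla Z_j = -2iY_j$ and $\nabla^2 Z_j = 4 Z_j$. By linearity $\nabla^2 C_{(1)} = 4 C_{(1)}$, so a trivial induction on $k$ yields $\nabla^{2k} C_{(1)} = 4^k C_{(1)}$ and $\nabla^{2k+1} C_{(1)} = 4^k \nabla C_{(1)}$ for all $k \geq 0$.

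For $C_{(2)}$: the central observation is that for each pair $i<j$ the two-dimensional span of $\{Z_iZ_j, Y_iY_j\}$ is $\nabla^2$-invariant. A direct Pauli commutator computation gives
\begin{equation*}
\nabla^2(Z_iZ_j) = 8(Z_iZ_j - Y_iY_j), \qquad \nabla^2(Y_iY_j) = -8(Z_iZ_j - Y_iY_j),
\end{equation*}
so $Z_iZ_j - Y_iY_j$ is an eigenvector of $\nabla^2$ with eigenvalue $16$. Consequently $\nabla^4(Z_iZ_j) = 16\,\nabla^2(Z_iZ_j)$, and by linearity $\nabla^4 C_{(2)} = 16\, \nabla^2 C_{(2)}$. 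Induction on $k$ then gives $\nabla^{2k} C_{(2)} = 16^{k-1} \nabla^2 C_{(2)}$ for $k \geq 1$, and applying one more $\nabla$ yields $\nabla^{2k+1} C_{(2)} = 16^{k-1} \nabla^3 C_{(2)} = 16^k \nabla C_{(2)}$, where the last equality uses $\nabla^3(Z_iZ_j) = 16\,\nabla(Z_iZ_j)$, obtained by applying $\nabla$ to the identity for $\nabla^2(Z_iZ_j)$ above.

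Combining the two cases gives the stated formulas. The only nontrivial step is the eigenvector claim for $C_{(2)}$, which reduces to carefully tracking signs and factors of $i$ in a few Pauli commutators on two qubits; the rest is bookkeeping via linearity and induction. No new ideas beyond those in \secref{sec:costGrads} are required.
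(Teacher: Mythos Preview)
Your proof is correct and follows essentially the same approach as the paper: split $C$ into its $1$-local and $2$-local parts by linearity, compute $\nabla^2$ on each using the Pauli commutation relations (obtaining $\nabla^2 C_{(1)}=4C_{(1)}$ and $\nabla^2 C_{(2)}=8C_{(2)}-8C_{(2),Y}$), and then iterate by induction. Your observation that $Z_iZ_j-Y_iY_j$ is a $\nabla^2$-eigenvector with eigenvalue $16$ is a slightly more conceptual phrasing of the paper's identity $\nabla^3 C_{(2)}=16\,\nabla C_{(2)}$, but the content is the same.
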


We prove the lemma in \appref{app:klocal}. 
Similar %
results %
follow for higher degree cost Hamiltonians (towards analysis of QAOA for problems beyond QUBOs). For example, %
for a strictly $3$-local cost Hamiltonian $C=\sum_{i<j<k}a_{ijk}Z_iZ_jZ_k$ it can be shown that any $\nabla^\ell C$ for $\ell\geq 4$ can be expressed as a real linear combination of $C$, $\nabla C$, $\nabla^2 C$, and $\nabla^3 C$.  

We use \lemref{lem:quboGrads} %
to sum the series $U_M^\dagger CU_M=
C+i\beta\nabla C -\tfrac12\beta^2\nabla^2C+\dots$. 
Consider an instance of MaxCut, i.e., a graph $G=(V,E)$ with $|V|=n$ and $|E|=m$, 
with cost %
Hamiltonian
$ C=\frac{m}{2}I-\frac12\sum_{(ij)\in E}Z_iZ_j$, and the other related operators shown in \tabref{tab:summary}. 
From \lemref{lem:quboGrads},
for a strictly quadratic cost Hamiltonian $C_Z=\sum_{uv}c_{uv}Z_uZ_v$ we have %
$\nabla^{2k+1} C_Z = 16^k \nabla C_Z$ and 
 $\nabla^{2k} C_Z = 16^{k-1} \nabla^2 C_Z$, where $\nabla C_Z$ and $\nabla^2 C_Z$ are easily derived as Pauli operator expansions given in \tabref{tab:summary}. 
Hence, %
it follows $U_M^\dagger(\beta) C_ZU_M(\beta)=C_Z-\tfrac14\sin(4\beta)i\nabla C_Z-\tfrac18 \sin^2(2\beta)\nabla^2 C_Z$, and so 
applying %
\thmref{thm:QAOA1} 
for QAOA$_1$ %
yields %
\begin{eqnarray}   \label{eq:expecC1maxcut}
      \langle C\rangle_1 &=& 
\langle C\rangle_0 \,\,+\,%
\frac{\sin(4\beta)}{4}\, \bra{\gamma}i\nabla C \ket{\gamma}
     -\frac{\sin^2(2\beta)}{8} \, \bra{\gamma} \nabla^2 C\ket{\gamma},
\end{eqnarray}
for MaxCut, where $\ket{\gamma}=U_P(\gamma)\ket{s}$ and the angles $\gamma,\beta$ are arbitrary. (We %
give a more detailed derivation of \eqref{eq:expecC1maxcut} 
in \appref{sec:QUBOs}.) 
The right-hand side expectation values depend on the structure of the graph $G$, as reflected in \eqsref{eq:expecgammanablaC}{eq:expecgammanabla2C}, and may be further reduced. %
Indeed, 
we previously computed the quantities of \eqref{eq:expecC1maxcut} for the case of MaxCut  
in \cite{wang2018quantum,hadfield2018thesis} in terms of the proble graph parameters; comparing to %
\cite[Thm. 1]{wang2018quantum}, we have %
\begin{equation} \label{eq:maxcutnablaCexpec}
    \bra{\gamma}i\nabla C \ket{\gamma} 
    =  \sin(\gamma) \sum_{u\in V} d_u\, \cos^{d_u-1}(\gamma),
\end{equation}
where $d_u=\deg(u)$ is the graph degree of vertex $u$, 
and
\begin{equation}\label{eq:maxcutnabla2Cexpec}
    \bra{\gamma}\nabla^2 C \ket{\gamma} =
   2\sum_{(uv)\in E} \cos^{d_u+d_v-2f_{uv}-2}(\gamma)\, (1-\cos^{f_{uv}}(2\gamma)),
\end{equation}
where $f_{uv}$ gives the number of triangles ($3$-cycles) containing the edge $(uv)$ in $G$. 
In particular, if $G$ is triangle free then
$\bra{\gamma}\nabla^2 C \ket{\gamma}=0$. Hence we may always efficiently compute $\langle C \rangle_1$ for a given instance of MaxCut. Moreover, %
\eqssref{eq:expecC1maxcut}{eq:maxcutnablaCexpec}{eq:maxcutnabla2Cexpec} %
lead to QAOA$_1$ performance bounds (i.e., the parameter-optimized expected approximation ratio achieved) for MaxCut for particular classes of graphs; see~\cite{wang2018quantum,hadfield2018thesis}.

Likewise, for general QUBO cost Hamiltonians, applying \lemref{lem:quboGrads}  %
yields
\begin{eqnarray} \label{eq:QAOA1QUBO}
\langle C \rangle_1 = \langle C \rangle_0
+\frac{\sin (2\beta)}{2} \bra{\gamma}i\nabla C_{(1)} \ket{\gamma} %
+ \frac{\sin(4\beta)}{4}   \bra{\gamma}i\nabla C_{(2)} \ket{\gamma} - \frac{\sin^2(2\beta)}{8}   \bra{\gamma}\nabla^2 C_{(2)} \ket{\gamma} .%
\end{eqnarray}
The right-hand side  %
expectation values are independent of $\beta$ and 
may each be similarly computed as in the case of MaxCut as 
polynomials of~$\gamma$ %
that reflect the cost Hamiltonian coefficients~$a_\alpha$ and underlying adjacency graph, see e.g. \cite[App. E]{hadfield2018thesis}. %
Alternatively, they may be computed using \eqsref{eq:expecgammanablaC}{eq:expecgammanabla2C}. %
Practical computation of these quantities typically takes advantage of linearity of expectation $\langle C\rangle_p = \sum_j \langle C_j\rangle_p$ for cost Hamiltonians $C=\sum C_j$ where each $C_j$ is a single weighted Pauli term, as well as problem-locality considerations, as discussed in \secsref{sec:probLocality}{sec:lightcones}. 
Similar formulas
for more general problems %
may be obtained with our framework. As an example we  consider a variant of Max-2-SAT.

\subsubsection{Example: Analysis of QAOA$_1$ for Balanced-Max-$2$-SAT} \label{sec:balancedMax2Sat}
Here we apply our framework to QAOA$_1$ for the Balanced-Max-$2$-SAT problem, defined as instances of Max-$2$-SAT such that each variable appears negated or unnegated in an equal number of clauses. 
Assuming the Unique Games Conjecture in computational complexity theory, there is a sharp threshold $\theta\simeq 0.943$
such that it is NP-hard to hard to approximate this problem to $\theta$ or better whereas a $\theta-\epsilon$ approximation can be achieved any polynomial time \cite{khot2007optimal}[Thms. 3 and 4]; see also \cite{austrin2007balanced}.\footnote{A related variant of Max-$2$-SAT has been previously explored for quantum annealing~\cite{santra2014max}. In that case, random instances were constructed using equal probabilities of a variable appearing as negated or unnegated in each clause; cf. also \secref{sec:3SAT}.} 

Here for simplicity we consider problem instance with the additional assumption that each pair of variables $x_i,x_j$ appears in at most one clause (i.e., one of $x_i \vee x_j$,  $\overline{x}_i \vee x_j$, $x_i \vee \overline{x}_j$, or $\overline{x}_i \vee \overline{x}_j$).\footnote{We may relax this assumption at the expense of a more complicated proof and presentation of \eqref{eq:expecC1balmax2sat} due to bookkeeping required, %
in which case $G$ generalizes to a multigraph.} We use $(-1)^{i\oplus j}$ to denote the parity of a given clause, which is $-1$ when only one of $x_j$, $x_j$ is negated.
Then 
for a Balanced-Max-$2$-SAT instance with $m$ clauses and $n$ variables, the cost Hamiltonian takes a particularly simple form 
\begin{equation} \label{eq:costHamBalancedMaxSat}
    C = \frac34 I - \frac14 \sum_{(ij)\in E}(-1)^{i\oplus j} Z_iZ_j,
\end{equation}
where %
we have identified the graph $G=([n],E)$ with $|E|=m$ edges  corresponding to the problem clauses. %

For computing $\langle C\rangle_1=\tfrac34{m}+\sum_{(ij)\in E}\langle C_{ij}\rangle_1$, the QAOA$_1$-cone of each $C_{ij}$ consists of terms corresponding to edges adjacent to $(ij)$ in $E$. A neighbor $k$ of both $i$ and $j$ defines a triangle in $G$, with parity defined to be the product of its edge parities. 
Let $F^{\pm}$ denote the number of triangles in $G$ with parity $\pm1$.%
\footnote{A balanced instance need not have balanced triangles. 
E.g, the single-triangle instance 
$x_1\vee \overline{x}_2+x_2\vee \overline{x}_3+x_3\vee \overline{x}_1$ has $(f^+,f^-)=(0,1)$ whereas the two-triangle instance 
$x_1\vee x_2+x_2\vee \overline{x}_3+x_3\vee \overline{x}_1 +  \overline{x}_2 \vee  \overline{x}_4 +  \overline{x}_2  \vee \overline{x}_5 + x_4\vee x_5$
has $(f^+,f^-)=(2,0)$.} 
Similarly, 
for each %
edge %
define $f^+=f^+_{ij},f^-=f^-_{ij}$ with respect to the triangles containing $(ij)$. %

Applying our framework and results above \eqrefp{eq:QAOA1QUBO}, in \appref{sec:QUBOs} we show the exact QAOA$_1$ cost expectation is given as a function of the angles and problem instance by 
\begin{eqnarray} \label{eq:expecC1balmax2sat}
\langle C\rangle_1&=&\langle C\rangle_0 \,+\ \frac{\sin4\beta\sin(\gamma/2)}8\sum_{(ij)\in E}(\cos^{d_i}(\gamma/2)+\cos^{d_j}(\gamma/2))\nonumber\\
&-&\frac{\sin^22\beta}{4}\sum_{(ij)\in E} \cos^{d+e-2f_{ij}^+-2f_{ij}^-}(\gamma/2)\,g(f_{ij}^+,f_{ij}^-; \gamma,\beta), 
\end{eqnarray}
where $\langle C\rangle_0=\frac34m$, $d_i+1$ is the degree of vertex $i$ in $G$,  %
and we have defined 
$$g(f^+,f^-;\gamma,\beta)=\sum_{\ell=1,3,5,\dots}^{f^++f^-} \cos^{2(f^++f^--\ell)}(\gamma/2)\sin^{2\ell}(\gamma/2) \binom{f^+}{\ell} \prescript{}{2}{\mathbf{F}}_1(-f^-,-\ell;f^+-\ell+1;-1).$$
Here $\prescript{}{2}{\mathbf{F}}_1$ is the Gaussian (ordinary) hypergeometric function; see e.g.~\cite{slater1966generalized}. In particular, the function $\binom{f^+}{\ell} \prescript{}{2}{\mathbf{F}}_1(-f^-,-\ell;f^+-\ell+1;-1)$ occurring in $g(\cdot)$ evaluates to $f^+-f^-$ for $\ell=1$, and to $\binom{f^+}{3}-\binom{f^-}3-\tfrac12 f^+f^-(f^+-f^-)$ for $\ell=3$.  

Hence for a given Balance Max-2-SAT instance $\langle C \rangle_1$ may be efficiently computed. 
In some cases we can obtain further simplified expressions. For example, in the case that %
all triangles in the instance are of the same parity, %
$\langle C \rangle_1$ is easily seen to reduce to the same formula for MaxCut given in \secref{sec:MaxCutQUBO} and previously obtained in~\cite{wang2018quantum,hadfield2018thesis} (up to constant factors and the shift $\gamma\rightarrow\gamma/2$ due to the affine mapping between the respective cost Hamiltonians). Indeed, along the way to proving \eqref{eq:expecC1balmax2sat} in \appref{sec:QUBOs}, we rederive \eqssref{eq:expecC1maxcut}{eq:maxcutnablaCexpec}{eq:maxcutnabla2Cexpec} using our framework.   

Results such as \eqref{eq:expecC1balmax2sat} may be used to bound the expected QAOA approximation ratio %
$\langle R\rangle_1 = \langle C\rangle_1 / c^* \geq  \langle C\rangle_1 / m$, where $c^*$ is the optimal cost value.

\subsection{QAOA$_1$ with small phase angle and arbitrary mixing angle}
\label{sec:smallphase}
We conclude \secref{sec:QAOA1} by considering QAOA$_1$ with small phase angle $\gamma$ and arbitrary mixing angle~$\beta$ (i.e., the converse case of  
\secref{sec:smallMixingAngle}). %
For simplicity, here we consider QUBO problems as in \secref{sec:MaxCutQUBO};   
similar but more complicated formulas follow for more general %
cost Hamiltonians.  
We show how %
the leading order contribution to the cost expectation value reflects cost Hamiltonian structure (Fourier coefficients) of \eqref{eq:costHamZs}. %

\begin{theorem}%
\label{thm:smallgamma}
Consider %
a QUBO cost Hamiltonian  
$C=a_0I+\sum_j a_j Z_j + \sum_{j<k} a_{jk}Z_jZ_k$. %
The QAOA$_1$ expectation value of $C$ to second order in $\gamma$ is
\begin{eqnarray*} \label{eq:smallGammaApprox}
\langle C \rangle_1  %
&\simeq& \langle C\rangle_0 + 2\gamma \left( \sin(2\beta)\sum_j a_j^2  + \sin(4\beta)\sum_{j<k} a_{jk}^2 \right)%
\,-\, 4\gamma^2 \sin^2(2\beta)
\sum_{i<j}a_{ij}\left( a_ia_j +\sum_{k} a_{ik}a_{jk}\right)
\end{eqnarray*}
\end{theorem}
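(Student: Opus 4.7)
The plan is to begin with the exact QUBO-specific identity \eqref{eq:QAOA1QUBO}, which \lemref{lem:quboGrads} furnishes by summing the Heisenberg-picture series $U_M^\dagger(\beta)CU_M(\beta)$ in closed form, and then to reduce the theorem to expanding each of the three remaining factors $\bra{\gamma}A\ket{\gamma}$ to order $\gamma^2$. For this I would apply the conjugation series \eqref{eq:infinitesimalConj} to $U_P(\gamma)=e^{-i\gamma C}$, obtaining
\begin{equation*}
\bra{\gamma}A\ket{\gamma} \;=\; \langle A\rangle_0 + i\gamma\,\langle\nabla_C A\rangle_0 - \tfrac{\gamma^2}{2}\,\langle\nabla_C^2 A\rangle_0 + O(\gamma^3)
\end{equation*}
for each $A\in\{\,i\nabla C_{(1)},\ i\nabla C_{(2)},\ \nabla^2 C_{(2)}\,\}$, producing nine candidate terms.

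The next step is to discard all terms forced to zero by the structural lemmas of \secref{sec:expecVals}. By \lemref{lem:expecGrad} the three $\gamma^0$ contributions vanish, since each operator is a nested commutator with $B$. By \lemref{lem:genGrads} the three gradient monomials of odd total order also vanish: $\langle\nabla_C^2\nabla C_{(r)}\rangle_0$ for $r=1,2$ and $\langle\nabla_C\nabla^2 C_{(2)}\rangle_0$. This leaves only the two linear-in-$\gamma$ contributions $\langle\nabla_C\nabla C_{(r)}\rangle_0$ and the single quadratic-in-$\gamma$ contribution $\langle\nabla_C^2\nabla^2 C_{(2)}\rangle_0$. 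The first two are immediate: because the Pauli supports of $\nabla C_{(1)}$ and $\nabla C_{(2)}$ are disjoint, specializing the Pauli-coefficient computation in the proof of \lemref{lem:expecCDC} to the linear and quadratic parts of $C$ separately gives $-4\sum_j a_j^2$ and $-8\sum_{j<k}a_{jk}^2$ respectively, which when combined with the $\sin(2\beta)/2$ and $\sin(4\beta)/4$ prefactors reproduce the $O(\gamma)$ terms in the statement.

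The main work, and the step I expect to be the main obstacle, is the remaining expectation $\langle\nabla_C^2\nabla^2 C_{(2)}\rangle_0$. Because $[C,C_{(1)}]=0$ we may replace $\nabla^2 C_{(2)}$ by $\nabla^2 C$ here and apply \eqref{eq:expecDc2D2C}, reducing matters to averaging $(\partial_{j,k}c(x))^2\,\partial_j\partial_k c(x)$ over bitstrings $x$ and pairs $j<k$. For a QUBO cost function the mixed second difference simplifies sharply, since only the $a_{jk}Z_jZ_k$ term contributes, giving $\partial_j\partial_k c(x)=4a_{jk}(-1)^{x_j+x_k}$, while $\partial_{j,k}c(x)$ is a short signed polynomial in $(-1)^{x_l}$. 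The delicate point is the combinatorial accounting of $(\partial_{j,k}c)^2$, which produces many cross-monomials; however, orthogonality of Pauli $Z$-monomials under the uniform distribution (i.e.\ $\tfrac{1}{2^n}\sum_x \prod_l (-1)^{e_l x_l} = \prod_l[e_l\equiv 0 \bmod 2]$) kills all of them except two families, contributing precisely the $a_j a_k$ and $\sum_l a_{jl}a_{kl}$ pieces that appear in the statement. Assembling this with the $-\gamma^2/2$ from the Taylor expansion and the $-\sin^2(2\beta)/8$ prefactor from \eqref{eq:QAOA1QUBO} then yields the claimed $\gamma^2$ term.
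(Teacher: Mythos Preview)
Your proposal is correct and follows essentially the same approach as the paper: start from \eqref{eq:QAOA1QUBO}, Taylor-expand each $\bra{\gamma}A\ket{\gamma}$ to second order via \eqref{eq:infinitesimalConj}, kill the odd-order and leftmost-$\nabla$ terms with \lemref{lem:genGrads} and \lemref{lem:expecGrad}, and read off $\langle\nabla_C\nabla C_{(r)}\rangle_0$ from \lemref{lem:expecCDC}. The only minor deviation is in the last step: the paper obtains $\langle\nabla_C^2\nabla^2 C_{(2)}\rangle_0$ by invoking \lemref{lem:higherOrderExpectations} in \appref{app:expecVals} (computed there by Pauli-string bookkeeping), whereas you reduce to $\langle\nabla_C^2\nabla^2 C\rangle_0$ via $\nabla^2 C_{(1)}=4C_{(1)}$ (so $\nabla_C$ annihilates it) and then evaluate \eqref{eq:expecDc2D2C} directly using orthogonality of $Z$-monomials; both routes give the same $64\sum_{i<j}a_{ij}(a_ia_j+\sum_k a_{ik}a_{jk})$.
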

\begin{proof}
Recall $C_{(k)}$ denotes the Hamiltonian defined as the (sum of the) strictly $k$-local terms of~$C=a_0 I + C_{(1)} + C_{(2)}$.
Expanding each expectation value in \eqref{eq:QAOA1QUBO} to second order in~$\gamma$ and %
applying \lemsref{lem:genGrads}{lem:expecGrad}
gives 
\begin{eqnarray*}
\bra{\gamma} i\nabla C_{(j)} \ket{\gamma}&=& %
\cancel{\langle i\nabla C_{(j)}\rangle_0} - \gamma \langle \nabla_C \nabla C_{(j)}\rangle_0 - \frac{\gamma^2}2 \cancel{i\langle \nabla^2_C \nabla C_{(j)}\rangle_0} + \dots,
\end{eqnarray*}
for $j=1,2$, 
where $\langle \nabla_C \nabla C_{(1)}\rangle_0 %
=-4\sum_{j}a_j^2$ and    
$\langle \nabla_C \nabla C_{(2)}\rangle_0 %
=-8\sum_{j<k}a_{jk}^2$ from \lemref{lem:expecCDC}, which together %
give the result in the first parenthesis.  %
For the last term in \eqref{eq:QAOA1QUBO} we have 
\begin{eqnarray*}
\bra{\gamma} \nabla^2 C_{(2)} \ket{\gamma}&=& %
\cancel{\langle \nabla^2 C_{(2)}\rangle_0} - \cancel{\gamma i \langle \nabla_C \nabla^2 C_{(2)}\rangle_0} - \frac{\gamma^2}2\langle \nabla^2_C \nabla^2 C_{(2)}\rangle_0 + \dots,
\end{eqnarray*}
with %
$\langle \nabla_C^2 \nabla^2 C_{(2)}\rangle_0
=64 \sum_{i<j}a_ia_ja_{ij}+64 \sum_{i<j}\sum_{k\in nbd(i,j)} a_{ki}a_{ij}a_{jk}$
which we show in \lemref{lem:higherOrderExpectations} of \appref{app:expecVals}. 
\end{proof}

\begin{rem} \label{rem:samplingComplexity}
The quantities of \thmref{thm:smallgamma} may be alternatively expressed in terms of expectation values of classical functions (as in \thmref{thm1:smallAngles}), 
see \lemref{lem:higherOrderExpectations}. 
However, in contrast to the case of QAOA$_1$ with small mixing angle
considered in \thmref{thm:smallBetap1}, here we do not obtain a simple expression for the the leading order probability of measuring each given~$x$ (i.e., the result of %
 \thmref{thm:smallgamma} 
is not of the form $\langle C\rangle_1 \simeq \sum_x c(x) P'(x)$ for some explicit probability distribution $P'(x)$).
Hence, \thmref{thm:smallgamma} doesn't directly yield a simple classical algorithm approximately emulating QAOA$_1$ in this parameter regime in the same way as from  \thmref{thm1:smallAngles}. 

Indeed, a general efficient classical procedure for sampling from QAOA$_1$ circuits for arbitrary parameters would imply the collapse of the polynomial hierarchy in computational complexity~\cite{farhi2016quantum}, and hence such procedures are believed impossible. Our results are consistent with this viewpoint and indicate that sampling remains hard even as the phase angle becomes small (for constant but sufficiently large mixing angle), but as expected becomes relatively easy when both angles, or only the mixing angle, are sufficiently small.  %
Additional intuition 
may be gained from the sum-of-paths viewpoint of 
\sh{\appref{app:sumOfPaths}}. %
\end{rem}

\section{Application to QAOA$_p$} \label{sec:QAOAp}

The gradient operator framework %
extends to %
QAOA$_p$  
with arbitrary level~$p$ and %
parameters~$\gamma_i,\beta_j$. 
In the Heisenberg picture, 
the $j$th QAOA level  
acts %
by conjugating the observable resulting from the preceding ($(j+1)$th) level, in  iteration for $j=p,p-1,\dots,1$.  
This approach yields  
formulas similar to but more general than~\lemref{lem:QAOA1conj}.

\begin{lem} \label{lem:QAOAconjGen}
The QAOA$_p$ operator $Q=Q_pQ_{p-1}\dots Q_2Q_1$, where
 $ Q_j= U_M(\beta_j) U_P(\gamma_j)$, 
acts by conjugation on the cost Hamiltonian $C$ as
\begin{equation}  \label{eq:costHamConjGen}
Q^\dagger C Q 
= %
\sum_{\ell_{1}=0}^\infty \sum_{k_{1}=0}^\infty \cdots \sum_{\ell_{p-1}=0}^\infty \sum_{k_{p-1}=0}^\infty
\sum_{\ell_p=0}^\infty %
\sum_{k_p=0}^\infty
\frac{(i\gamma_1 \nabla_C)^{\ell_1} 
(i\beta_1\nabla)^{k_1} \dots
(i\gamma_p \nabla_C)^{\ell_p} 
(i\beta_p\nabla)^{k_p} }{\ell_1!k_1!\ell_2!k_2!\dots\ell_p!k_p!} C . 
\end{equation}
\end{lem}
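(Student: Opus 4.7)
The plan is to prove this by iterating \lemref{lem:QAOA1conj}, treating each QAOA layer in turn. Observe first that \lemref{lem:QAOA1conj} as stated holds for \emph{any} $n$-qubit matrix $A$, not merely cost Hamiltonians; this generality is essential since after a single conjugation the observable $C$ becomes a combination of cost gradient operators which are no longer diagonal.

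First I would rewrite
$$Q^\dagger C Q \,=\, Q_1^\dagger\bigl(Q_2^\dagger \cdots \bigl(Q_{p-1}^\dagger\bigl(Q_p^\dagger\, C\, Q_p\bigr) Q_{p-1}\bigr) \cdots Q_2\bigr) Q_1,$$
so that the $p$ conjugations are peeled off one at a time from the innermost one outward. Applying \lemref{lem:QAOA1conj} (in the form $Q_j^\dagger A Q_j = e^{i\gamma_j \nabla_C} e^{i\beta_j \nabla} A$ of \eqref{eq:costHamConj2}) to the innermost conjugation gives $Q_p^\dagger C Q_p = e^{i\gamma_p \nabla_C} e^{i\beta_p \nabla} C$, and then a second application with $A$ taken to be this operator yields $Q_{p-1}^\dagger Q_p^\dagger C Q_p Q_{p-1} = e^{i\gamma_{p-1} \nabla_C} e^{i\beta_{p-1} \nabla} e^{i\gamma_p \nabla_C} e^{i\beta_p \nabla} C$. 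Iterating this $p$ times (equivalently, by a straightforward induction on $p$) produces the compact superoperator identity
$$Q^\dagger C Q \,=\, e^{i\gamma_1 \nabla_C}\, e^{i\beta_1 \nabla} \, e^{i\gamma_2 \nabla_C}\, e^{i\beta_2 \nabla}\, \cdots \, e^{i\gamma_p \nabla_C}\, e^{i\beta_p \nabla}\, C,$$
which generalizes \eqref{eq:costHamConj2} to arbitrary~$p$.

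Finally, expanding each of the $2p$ exponential superoperators as its defining Taylor series $e^{i\gamma_j \nabla_C} = \sum_{\ell_j \geq 0} (i\gamma_j \nabla_C)^{\ell_j}/\ell_j!$ and $e^{i\beta_j \nabla} = \sum_{k_j \geq 0} (i\beta_j \nabla)^{k_j}/k_j!$, then distributing the product over the $2p$ independent summations, produces exactly the nested $2p$-fold sum \eqref{eq:costHamConjGen}.

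The main thing to be careful about is bookkeeping of operator order: the superoperators $\nabla_C$ and $\nabla$ do not commute (either with each other or across layers), so the factors in the product cannot be rearranged and the left-to-right ordering $(\nabla_C,\nabla,\nabla_C,\nabla,\dots)$ must match the peeling-off order of the conjugations. A minor secondary point is verifying that the series expansions converge and may be freely interchanged with the operator products; this follows from the finite-dimensional setting (each exponential is an absolutely convergent series of bounded superoperators on the finite-dimensional space of $n$-qubit matrices), so no delicate analysis is required.
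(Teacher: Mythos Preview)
Your proposal is correct and takes essentially the same approach as the paper, which simply states that the result ``follows recursively applying \eqref{eq:infinitesimalConj} for each $Q_j$ as in the proof of \lemref{lem:QAOA1conj}.'' You have spelled out this recursion in more detail (and added the useful remarks on operator ordering and convergence), but the underlying argument is identical.
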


The QAOA$_p$ cost expectation then follows as $\langle C\rangle_p=\langle Q^\dagger C Q \rangle_0$. 

\begin{proof}
Follows recursively applying \eqref{eq:infinitesimalConj} for each $Q_j$ as in the proof of \lemref{lem:QAOA1conj}.
\end{proof}

\subsection{Effect of %
$p$th level of QAOA$_p$} 
\label{sec:smallAngleQAOAp1}
We next consider the change in cost expectation between a level-($p-1$) QAOA circuit (with fixed parameters), %
and the level-$p$ circuit constructed from adding an additional QAOA level.
This case concerns the last layer applied of the QAOA circuit, or, alternatively the %
effect of an additional layer.
The following theorem generalizes \thmref{thm:QAOA1}. 

\begin{theorem} \label{thm:costExpecHeis}
For QAOA$_p$ the cost expectation satisfies
\begin{equation}
\langle C \rangle_p 
= \langle C \rangle_{p-1} 
+ \, \sum_{\ell=0}^\infty \sum_{k=1}^\infty
 \frac{(i\gamma_p)^\ell(i\beta_p)^k}{\ell!\,k!} 
 \,\langle\nabla_C^\ell %
\nabla^k C\rangle_{p-1}.
\end{equation}
\end{theorem}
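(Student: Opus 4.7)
The plan is to reduce the statement to \lemref{lem:QAOA1conj} (the QAOA$_1$ conjugation formula) applied to the final layer, while absorbing all earlier layers into the reference state $\ket{\boldsymbol{\gamma\beta}}_{p-1}$. The essential observation is that the full QAOA$_p$ unitary factors as $Q = Q_p\,Q_{p-1}\cdots Q_1$ with $Q_p = U_M(\beta_p)U_P(\gamma_p)$, so
\[
\langle C\rangle_p \;=\; \bra{\boldsymbol{\gamma\beta}}_{p-1} Q_p^\dagger C\, Q_p \ket{\boldsymbol{\gamma\beta}}_{p-1}.
\]
This lets us treat the problem as a single QAOA$_1$ conjugation but with the expectation taken in the (not-necessarily-uniform) state $\ket{\boldsymbol{\gamma\beta}}_{p-1}$ rather than in $\ket{s}$.

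Next I would invoke \lemref{lem:QAOA1conj} verbatim with $(\gamma,\beta)\to(\gamma_p,\beta_p)$ and $A=C$ to obtain the operator identity
\[
Q_p^\dagger C\, Q_p \;=\; \sum_{\ell=0}^\infty \sum_{k=0}^\infty \frac{(i\gamma_p)^\ell(i\beta_p)^k}{\ell!\,k!}\,\nabla_C^\ell \nabla^k C,
\]
and then take the expectation value in $\ket{\boldsymbol{\gamma\beta}}_{p-1}$, which, using the notation $\langle\,\cdot\,\rangle_{p-1}:=\bra{\boldsymbol{\gamma\beta}}_{p-1}\,\cdot\,\ket{\boldsymbol{\gamma\beta}}_{p-1}$, passes termwise by linearity.

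The last step is to isolate the $\langle C\rangle_{p-1}$ contribution and show all remaining $k=0$ terms vanish. The $\ell=k=0$ term is exactly $\langle C\rangle_{p-1}$. For any $\ell\ge 1$ with $k=0$ the operator is $\nabla_C^\ell C = [C,[C,\dots,[C,C]\dots]]$, which is identically zero since $[C,C]=0$; hence the entire $k=0$, $\ell\ge 1$ slice drops out and the remaining sum can be restricted to $k\ge 1$, $\ell\ge 0$, matching the statement.

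I do not anticipate any serious obstacle: the argument is a direct recursion on layers, and the only substantive input (the commutator expansion of $U_P^\dagger U_M^\dagger C\, U_M U_P$) has already been established as \lemref{lem:QAOA1conj}. The only minor point to be careful about is that the simplifications used in \thmref{thm:QAOA1} from \lemsref{lem:expecGrad}{lem:genGrads}---namely that odd-order cost-gradient expectations vanish---were specific to the initial state $\ket{s}$ and do \emph{not} hold for the generic state $\ket{\boldsymbol{\gamma\beta}}_{p-1}$, so here both parities of $\ell+k$ must be retained in the sum, which is precisely what the stated formula reflects.
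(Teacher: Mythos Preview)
Your proof is correct and essentially identical to the paper's: both write $\langle C\rangle_p=\langle Q_p^\dagger C Q_p\rangle_{p-1}$ and expand $Q_p^\dagger C Q_p$ via the conjugation lemma (the paper cites \lemref{lem:QAOAconjGen}, which for a single layer reduces to your \lemref{lem:QAOA1conj}), then separate out the $\ell=k=0$ term and drop the $k=0$, $\ell\ge 1$ terms using $[C,C]=0$. Your added remark that the odd-order vanishing from \lemsref{lem:expecGrad}{lem:genGrads} does not carry over to $\langle\cdot\rangle_{p-1}$ is a useful clarification the paper leaves implicit.
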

\noindent Here, with respect to the fixed QAOA$_{p}$ state $\ket{\boldsymbol{\gamma}\boldsymbol{\beta}}_{p}$ state, %
$\langle \cdot \rangle_{p-1}$ indicates the expectation value with respect to the corresponding QAOA$_{p-1}$ state resulting from  application of only the first $p-1$ QAOA stages (or, equivalently, $\langle \cdot \rangle_p$ with $\gamma_{p},\beta_{p}$ set to $0$).

\begin{proof}
The result follows applying \lemref{lem:QAOAconjGen}
to the righthand side %
of 
$\bra{\boldsymbol{\gamma}\boldsymbol{\beta}}_{p}C \ket{\boldsymbol{\gamma}\boldsymbol{\beta}}_{p}=
\bra{\boldsymbol{\gamma}\boldsymbol{\beta}}_{p-1}\left(Q_p^\dagger%
CQ_p%
\right) \ket{\boldsymbol{\gamma}\boldsymbol{\beta}}_{p-1}, %
$ 
where $Q_p:=U_M(\beta_p)U_P(\gamma_p)$.
\end{proof}

\subsubsection{Example:
QAOA$_p$ with small phase and mixing angles at level $p$}
When the parameters of the final $p$th level of the QAOA$_p$ are small, we obtain a generalization of \thmref{thm1:smallAngles}.

\begin{cor}%
[Small angles at level $p$]\label{cor:smallAnglesp}
For QAOA$_{p}$ with $p$th level angles $\gamma_p,\beta_p$, to second order %
we have %
\begin{equation} \label{eq:thmexpecCpsmall}
    \langle C \rangle_{p} - \langle C \rangle_{p-1} \simeq    \beta_p \langle i\nabla C \rangle_{p-1} - \frac{\beta_p^2}{2} \langle \nabla^2 C \rangle_{p-1} - \gamma_p \beta_p \langle \nabla_C \nabla C\rangle_{p-1}, %
\end{equation}
where the neglected terms %
cubic or higher in $\gamma_p,\beta_p$.
\end{cor}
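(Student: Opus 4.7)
The plan is to invoke \thmref{thm:costExpecHeis} directly and truncate the resulting double series at total order two in $(\gamma_p,\beta_p)$. Concretely, the theorem gives
\[
\langle C\rangle_p - \langle C\rangle_{p-1} \,=\, \sum_{\ell=0}^{\infty}\sum_{k=1}^{\infty} \frac{(i\gamma_p)^\ell (i\beta_p)^k}{\ell!\,k!}\,\langle \nabla_C^\ell \nabla^k C\rangle_{p-1},
\]
and I would simply enumerate the pairs $(\ell,k)$ with $\ell+k\leq 2$ and $k\geq 1$. The pair $(\ell,k)=(0,1)$ produces $\beta_p\langle i\nabla C\rangle_{p-1}$; the pair $(0,2)$ produces $-\tfrac{\beta_p^2}{2}\langle \nabla^2 C\rangle_{p-1}$; and the pair $(1,1)$ yields $-\gamma_p\beta_p\langle \nabla_C\nabla C\rangle_{p-1}$. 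Summing these three terms reproduces the right-hand side of \eqref{eq:thmexpecCpsmall}, and all other $(\ell,k)$ in the double sum contribute at least cubically in $\gamma_p,\beta_p$.

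Two small bookkeeping points deserve comment. First, the $k=0$ slice of the sum in \thmref{thm:costExpecHeis} is excluded because $\nabla_C^\ell C = 0$ for every $\ell\geq 0$ (since $[C,C]=0$), which is why only $k\geq 1$ terms appear; in particular no pure-$\gamma_p$ corrections show up at any order. Second, unlike in the proof of \thmref{thm1:smallAngles} for QAOA$_1$, one cannot invoke the parity lemma \lemref{lem:genGrads} to eliminate the odd-total-order term $\beta_p\langle i\nabla C\rangle_{p-1}$: that lemma relies on expectations against the symmetric initial state $\ket{s}$, whereas here the expectation is taken against the generic QAOA$_{p-1}$ state $\ket{\boldsymbol{\gamma}\boldsymbol{\beta}}_{p-1}$. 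Consequently the linear-in-$\beta_p$ correction is genuinely present, which is a qualitative difference from the single-layer result.

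The only step with any real substance would be a quantitative bound on the discarded $(\ell+k)\geq 3$ tail; this is not required by the statement (which is an asymptotic $\simeq$), but would follow from the same norm-bounding strategy used for \corref{cor:thm1errorboundeps}, replacing $\ket{s}$ by $\ket{\boldsymbol{\gamma}\boldsymbol{\beta}}_{p-1}$ (which is a unit vector) and using the elementary estimate $\|\nabla_C^\ell\nabla^k C\|\leq 2^{\ell+k}\|C\|^{\ell+1}\|B\|^k$ to sum a geometric-type tail. I do not anticipate this step presenting an obstacle, and indeed the corollary is essentially a direct reading-off of the leading terms already packaged in \thmref{thm:costExpecHeis}.
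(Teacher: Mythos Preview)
Your proposal is correct and follows the same approach as the paper: the paper's proof simply states that the result follows by collecting the quadratic leading-order terms of \thmref{thm:costExpecHeis}, which is exactly what you do. Your enumeration of the $(\ell,k)$ pairs and the accompanying bookkeeping remarks are accurate and make explicit what the paper leaves implicit.
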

\begin{proof}
The result follows collecting the quadratic leading-order terms %
of \thmref{thm:costExpecHeis}.
\end{proof}

\subsubsection{Example:
QAOA$_p$ with small mixing angle at level $p$} 
\label{sec:smallMixingAnglep}
We also consider the case of QAOA$_p$ with small %
mixing angle $\beta_p$, but arbitrary phase angle $\gamma_p$. 
This case applies, for instance, in parameter schedules inspired by quantum annealing, where $\beta$ approaches zero at the end of the anneal. The following result generalizes \thmref{thm:smallBetap1}; the proof is similar %
and is given in \appref{app:smallMix}. 

\begin{cor} %
\label{cor:smallBeta}
Consider a QAOA$_{p-1}$ state $\ket{\boldsymbol{\gamma \beta}}_{p-1} =\sum_{x}q_{x} \ket{x}$ to which another ($p$th) level of QAOA with angles $\beta:=\beta_p,\gamma:=\gamma_p$ is applied. 
The change in probability %
to first order in $\beta$ is 
\begin{eqnarray}\label{eq:probsmallbetap}
P_{p}(x)-P_{p-1}(x) =  -2\beta \sum_{j=1}^n  r_{xj}  \sin(\alpha_{xj}+\gamma \partial_j c(x)) +\dots %
\end{eqnarray}
for each $x\in\{0,1\}^n$, 
where we have defined the real polar variables $r,\alpha$ as 
$r_{xj}=|q^\dagger_{x^{(j)}}q_{x}|$ and 
$ q^\dagger_{x^{(j)}}q_{x} = r_{xj}e^{-i\alpha_{xj}}$ 
which reflect the degree of which the coefficients $q_{x}$ are non-real. 
The expectation value then changes as 
\begin{equation} \label{eq:expecsmallbetap}
    \langle C \rangle_{p} -
   \langle C \rangle_{p-1} = 
  -2\beta \sum_x c(x) \sum_{j=1}^n  r_{xj}  \sin(\alpha_{xj}+\gamma \partial_j c(x)) +\dots.%
\end{equation}
\end{cor}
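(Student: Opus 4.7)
The plan is to apply the infinitesimal conjugation formula \eqref{eq:infinitesimalConj} to the final ($p$th) layer, treating the intermediate state $\ket{\gamma\psi} := U_P(\gamma_p)\ket{\boldsymbol{\gamma\beta}}_{p-1}$ as the effective initial state seen by $U_M(\beta_p)$. The argument mirrors that of \thmref{thm:smallBetap1}, with $\ket{s}$ replaced by this more general intermediate state, and proceeds by computing first-order expectations against a non-product state.

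First I would write $\ket{\boldsymbol{\gamma\beta}}_{p-1} = \sum_x q_x\ket{x}$ and observe that the diagonal phase operator produces $\ket{\gamma\psi} = \sum_x \tilde q_x \ket{x}$ with $\tilde q_x = e^{-i\gamma c(x)} q_x$, so in particular $|\tilde q_x|^2 = |q_x|^2 = P_{p-1}(x)$ and no probabilities change under the phase operator alone. Next, for the projector observable $H_x := \ket{x}\bra{x}$ (see \remref{rem:probProj}), applying \eqref{eq:infinitesimalConj} with $H=B$ and $t=\beta$ gives
\begin{equation*}
P_p(x) \,=\, \bra{\gamma\psi} U_M^\dagger(\beta) H_x U_M(\beta) \ket{\gamma\psi} \,=\, P_{p-1}(x) \,+\, \beta \,\bra{\gamma\psi} i\nabla H_x \ket{\gamma\psi} \,+\, O(\beta^2).
\end{equation*}
The leading commutator evaluates to $i\nabla H_x = \sum_j i\bigl(\ket{x^{(j)}}\bra{x} - \ket{x}\bra{x^{(j)}}\bigr)$, so the first-order correction reduces to $-2\sum_j \operatorname{Im}\bigl(\tilde q^*_{x^{(j)}}\tilde q_x\bigr)$.

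Then I would substitute $\tilde q^*_{x^{(j)}}\tilde q_x = e^{i\gamma\partial_j c(x)}\,q^*_{x^{(j)}}q_x$ together with the polar form $q^\dagger_{x^{(j)}} q_x = r_{xj}e^{-i\alpha_{xj}}$ to obtain the stated expression \eqref{eq:probsmallbetap}, modulo the sign conventions in $\alpha_{xj}$. Formula \eqref{eq:expecsmallbetap} for the cost then follows immediately by linearity of expectation, $\langle C\rangle_p - \langle C\rangle_{p-1} = \sum_x c(x)\bigl(P_p(x) - P_{p-1}(x)\bigr)$. As a consistency check, $p=1$ gives $q_x = 2^{-n/2}$ real-positive, so $r_{xj} = 2^{-n}$ and $\alpha_{xj}=0$, and the formula collapses to \thmref{thm:smallBetap1}.

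There is no serious obstacle here, just careful bookkeeping of phases accumulated through the polar decomposition and the diagonal phase operator. One consistency point worth verifying is that the correction preserves normalization at first order, which follows immediately since $\sum_x i\nabla H_x = i[B, \sum_x H_x] = i[B, I] = 0$, so $\sum_x (P_p(x) - P_{p-1}(x)) = 0$ to order $\beta$. Higher-order corrections in $\beta$, if desired, follow by retaining additional terms of \eqref{eq:infinitesimalConj} and expanding the resulting expectations of $\nabla^k H_x$ analogously.
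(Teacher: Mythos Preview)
Your proposal is correct and uses essentially the same idea as the paper's proof: expand the final mixing conjugation to first order in $\beta$ via \eqref{eq:infinitesimalConj} and evaluate the resulting expectation against the intermediate state $U_P(\gamma_p)\ket{\boldsymbol{\gamma\beta}}_{p-1}$. The only difference is organizational: the paper works with the observable $C$ (computing $\bra{p\gamma}\,i\nabla C\,\ket{p\gamma}$ via $\nabla_{\widetilde B}C$ and \eqref{eq:Bconj}), then performs a change of summation variable $x\leftrightarrow x^{(j)}$ to split $\partial_j c(x)$ and identify the per-$x$ probability change; you instead work directly with the projector $H_x$, obtain $P_p(x)-P_{p-1}(x)$ in one step, and recover \eqref{eq:expecsmallbetap} by linearity. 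Your route is slightly more direct for the probability statement and avoids the index-swap manipulation, while the paper's route connects more visibly to the $\nabla_{\widetilde B}$ machinery of \secref{sec:lightcones}. Your caveat ``modulo the sign conventions in $\alpha_{xj}$'' is well placed: with the paper's convention $q^\dagger_{x^{(j)}}q_x=r_{xj}e^{-i\alpha_{xj}}$ your computation yields $-2\beta\sum_j r_{xj}\sin(\gamma\partial_j c(x)-\alpha_{xj})$, which agrees with \eqref{eq:probsmallbetap} on the $p=1$ check ($\alpha_{xj}=0$) and differs only in the sign attached to $\alpha_{xj}$ in the general case.
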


\subsection{Leading-order  QAOA$_p$}\label{sec:smallAngleQAOAp}
When all QAOA angles are small or viewed as expansion parameters, 
\lemref{lem:QAOAconjGen}
leads to the following generalization of \thmref{thm:QAOA1}.

\begin{theorem} \label{thm:smallprecursed}
For QAOA$_p$, to fifth order in the angles $\gamma_1,\beta_1,\dots, \gamma_p,\beta_p$ we have
\begin{eqnarray} \label{eq:expecCp0}
\langle C \rangle_p \, %
\,=\, %
\, \langle C \rangle_{0}
&-&\langle \nabla_C \nabla C \rangle_0
\sum_{1\leq i\leq j}^p \gamma_i \beta_j%
\nonumber\\
&+& \langle \nabla_C \nabla^3 C \rangle_0 %
\left( \sum_{i\leq j< k<l} \gamma_i \beta_j \beta_k \beta_l 
+ \frac12 \sum_{i\leq j<k}  \gamma_i (\beta^2_j \beta_k+\beta_j \beta^2_k) 
+ \frac1{3!}\sum_{i\leq j} \gamma_i \beta_j^3
  \right)%
  \nonumber\\
 &+&   \langle \nabla_C^3 \nabla C \rangle_0 %
\left( \sum_{i<j< k\leq l} \gamma_i \gamma_j \gamma_k \beta_l 
+ \frac12 \sum_{j<k \leq l}   (\gamma^2_j \gamma_k+\gamma_j \gamma^2_k) \beta_l 
+ \frac1{3!}\sum_{i\leq j} \gamma^3_i \beta_j
  \right)%
  \nonumber\\
&+&\langle \nabla^2_C \nabla^2 C \rangle_0 %
\Bigg( \sum_{i<j\leq k< l} \gamma_i \gamma_j \beta_k \beta_l
+ \frac12 \sum_{i\leq k<l} \gamma^2_i \beta_k \beta_l 
 + \frac12 \sum_{i<j\leq k} \gamma_i \gamma_j \beta_k^2 \nonumber\\
&& \;\;\;\;\;\;\;\;\;\;\;\;\;\;\;\;\;\;\;\;\;\;\;\;  \,\,+\,\, \frac14 \sum_{i\leq j} \gamma^2_i \beta_j^2 
\,+\,\sum_{i\leq j < k \leq l} \gamma_i \beta_j \gamma_k \beta_l \Bigg)%
\;\;\;+\;\;\;\dots 
\end{eqnarray}
where the terms not shown to the right are each  \textbf{order six} or higher in angles and the gradient operator expectations.
\end{theorem}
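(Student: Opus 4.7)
The plan is to apply \lemref{lem:QAOAconjGen} to expand $Q^\dagger C Q$ as a formal power series in $\gamma_1,\beta_1,\dots,\gamma_p,\beta_p$, take the initial-state expectation, and collect terms up to total degree five. Writing $X:=\nabla_C$ and $Y:=\nabla$ for brevity, each monomial of the expansion has the form $X^{\ell_1}Y^{k_1}\cdots X^{\ell_p}Y^{k_p}C$ weighted by $\prod_m(i\gamma_m)^{\ell_m}(i\beta_m)^{k_m}/(\ell_m!\,k_m!)$. By \lemref{lem:genGrads}, every such operator of odd total order $\sum_m(\ell_m+k_m)$ has vanishing $\ket{s}$-expectation, so the first-, third-, and fifth-order contributions vanish identically and only the even-order work remains.

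At each remaining order I would prune the surviving monomials using two structural observations: (i) if the rightmost non-trivial factor is $X$ then $\nabla_C C=0$ annihilates the operator; and (ii) if the leftmost non-trivial factor is $Y$ then \lemref{lem:expecGrad} forces zero expectation. Consequently only words whose leftmost symbol is $X$ and rightmost symbol is $Y$ can contribute. Applying this pruning at order two leaves the single surviving operator $\nabla_C\nabla C$, reproducing the $-\langle\nabla_C\nabla C\rangle_0\sum_{i\leq j}\gamma_i\beta_j$ term (and recovering \thmref{thm:QAOA1} for $p=1$).

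At order four, the splits by number of $\gamma$-factors vs.\ $\beta$-factors are $(4,0),(3,1),(2,2),(1,3),(0,4)$. The two extremal splits vanish (the first by (i) since $\nabla_C^k C=0$; the last by \lemref{lem:expecGrad}). Each of the middle three splits leaves a single inequivalent surviving gradient: $\nabla_C^3\nabla C$ for $(3,1)$ (only the word $XXXY$ is admissible), $\nabla_C\nabla^3 C$ for $(1,3)$ (only $XYYY$), and $\nabla_C^2\nabla^2 C$ for $(2,2)$, where the two admissible words $XXYY$ and $XYXY$ are identified via the Jacobi-derived identity $\nabla\nabla_C\nabla C=\nabla_C\nabla^2 C$ from \eqref{eq:costJacobi}. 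The factors of $i$ combine as $i^4=1$, giving the positive overall sign on all order-four contributions.

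The remaining step, and the main bookkeeping obstacle, is to sum the Taylor coefficients over admissible level assignments for each surviving gradient. For the $(1,3)$ and $(3,1)$ splits the admissible configurations repackage cleanly as $\sum_i\gamma_i\cdot\tfrac{1}{3!}(\beta_i+\dots+\beta_p)^3$ and $\sum_j\beta_j\cdot\tfrac{1}{3!}(\gamma_1+\dots+\gamma_j)^3$ respectively; expanding each cube into distinct-index vs.\ repeated-index monomials reproduces the three summands attached to $\langle\nabla_C\nabla^3 C\rangle_0$ and $\langle\nabla_C^3\nabla C\rangle_0$ in the statement. The $(2,2)$ split requires a finer case analysis: labeling the $\gamma$-levels $i\leq j$ and $\beta$-levels $k\leq l$, the admissibility condition reduces to $i\leq k$ and $j\leq l$, and partitioning by which of these four indices coincide yields exactly the five summands listed for $\langle\nabla_C^2\nabla^2 C\rangle_0$.
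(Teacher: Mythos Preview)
Your proposal is correct and follows essentially the same route as the paper's proof: expand via \lemref{lem:QAOAconjGen}, discard odd-order terms by \lemref{lem:genGrads}, prune words whose leftmost letter is $\nabla$ (\lemref{lem:expecGrad}) or whose rightmost letter is $\nabla_C$ (since $\nabla_C C=0$), and use the Jacobi identity \eqref{eq:costJacobi} to merge $\nabla_C\nabla\nabla_C\nabla C$ with $\nabla_C^2\nabla^2 C$ before reading off the coefficient polynomials. Your repackaging of the $(1,3)$ and $(3,1)$ coefficients as $\tfrac1{3!}(\beta_i+\cdots+\beta_p)^3$ and $\tfrac1{3!}(\gamma_1+\cdots+\gamma_j)^3$ is a tidy addition not made explicit in the paper; note however that in the $(2,2)$ case the five summands are not literally a partition by index coincidence but rather the four coincidence sub-cases of $XXYY$ together with the separate $XYXY$ contribution (which the paper isolates as its $a_4$).
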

The gradient operator initial state expectations of %
\eqref{eq:expecCp0} are expressed in terms of cost difference functions in \eqsssref{eq:expecDCis0}{eq:expecDcD3C}{eq:expecDc3DC}{eq:expecDc2D2C}, and \lemref{lem:higherOrderExpectations} of \appref{app:expecVals}. 
These factors depend only on the cost function. Higher order contributions may be derived similarly. 

\sh{
\begin{rem}
Observe that the $k$-th order %
coefficient of %
\eqref{eq:expecCp0} consists of sums over products of angles, with each product of degree $k$. At least one of these sums %
is indexed by $k$ distinct angles, and the others involve $k$ distinct angles or fewer. %
Sums with $k$ distinct angles contain a number of terms that scales as $O(p^k)$ and so these factors dominate when $p$ becomes large. This observation extends to more general initial states or mixing operators (recalling \remref{rem:GenLems}).
\end{rem}
}

\begin{proof}
We compute $\langle C\rangle_p= \langle Q^\dagger C Q\rangle_0$ by 
taking the expectation value of \eqref{eq:costHamConjGen} with respect to $\ket{s}$ and %
collecting the terms proportional to each cost gradient expectation value.  
From %
\lemref{lem:genGrads}, only terms for which %
$\ell_1+k_1+\dots \ell_p + k_p$ is even can give nonzero contributions. Moreover, from \lemref{lem:expecGrad} any expectation value with a leftmost $\nabla$ or a rightmost $\nabla_C$ is zero. Applying these rules %
the only terms which can contribute up to fifth order are %
\begin{eqnarray*}
\langle C \rangle_p \,=\,
\langle C \rangle_{0} &+& a_0 \langle \nabla_C \nabla C \rangle_0 
+ a_1\langle \nabla_C \nabla^3 C \rangle_0 
+ a_2\langle \nabla^2_C \nabla^2 C \rangle_0\\&+& \, a_3\langle \nabla^3_C \nabla C \rangle_0
+ a_4\langle \nabla_C \nabla \nabla_C\nabla C \rangle_0\,+\,\dots 
\end{eqnarray*}
where the $a_i$ are real %
homogeneous polynomials in the angles which correspond to  the possible ways of generating its associated 
gradient operator in the sum \eqref{eq:costHamConjGen}. %
From \lemref{lem:expecCDC} the second order term is $\langle \nabla_C \nabla C \rangle_0 = \frac2{2^n}\sum_x c(x) dc(x)$.
The coefficient $a_0$ follows from each possible selection of the ordered pair $(i\gamma\nabla_C)(i\beta\nabla)C$ in \eqref{eq:costHamConjGen} and is easily seen to be $a_0=\sum_{1\leq i\leq j}^p \gamma_i \beta_j$. 
At fourth order, similarly,  
the coefficients $a_1,\dots,a_4$ are degree $4$ polynomials and easily calculated as sums corresponding to all possible ways of generating the associated gradient, when the order of each (super)operator product in \eqref{eq:costHamConjGen} is accounted for. In particular, $\langle\nabla_C \nabla \nabla_C \nabla C\rangle_0 = \langle \nabla^2_C \nabla^2 C \rangle_0$ from \eqref{eq:costJacobi}, which gives $a_4=\sum_{i\leq j < k \leq l} \gamma_i \beta_j \gamma_k \beta_l$ and corresponds to the final term of the last parenthesis. %
\end{proof}

Repeating the above proof for the QAOA$_p$ probability $P_p(x)=\langle H_x\rangle_p =\langle \ket{x}\bra{x}\rangle_p$ as in the proof of \thmref{thm:QAOA1} produces %
the leading-order expression for~$P_p(x)$ given in \tabref{tab:tab1smallangles}, and similarly for higher order terms. 

\eqref{eq:expecCp0} shows that for any parameters values the cost function expectation value achieved by QAOA$_p$ can be a expressed  as a linear combination of the initial expectation values of $C$ and its gradients. %
These expectation values %
depend only on %
the structure of the cost function. %
The dependence on QAOA parameters %
occurs through the coefficients of each expectation value, which %
are polynomials of the algorithm parameters. %
These polynomials 
are homogeneous but not symmetric. For example, the leading-order coefficient polynomial 
$\sum_{1\leq i\leq j}^p \gamma_i \beta_j$ depends more strongly on early $\gamma$ values and late $\beta$ values, as opposed to late $\gamma$ values or early $\beta$ values.  
Hence \eqref{eq:expecCp0} and its higher order generalizations may help select parameters. Indeed, for a few-parameter schedule, such as e.g., 
$\gamma_j =  \gamma_0 + aj$ and $\beta_j =  \beta_0 + bj$, the coefficients of \eqref{eq:expecCp0} can be computed in terms of the reduced parameter set. %
\sh{In particular, for such linear schedules satisfying $|\gamma_j|,|\beta_j|\leq\Delta $ the leading-order coefficient grows with $p$ as  $|\sum_{1\leq i\leq j}^p \gamma_i \beta_j|=O(\Delta p^2)$.}

\sh{
As an aside we show how the leading-order term of \eqref{eq:expecCp0} %
yields a simple argument showing that that QAOA can %
also obtain the quadratic quantum speedup for Grover's problem, %
reproducing results rigorously obtained in~\cite{jiang2017near}. For convenience consider the case of strings of $n$ variables with a single marked solution $x^*$, and corresponding cost Hamiltonian $C=\ket{x^*}\bra{x^*}$. Following~\cite{jiang2017near}, we apply QAOA with the standard initial state and mixer, and fixed parameters $\gamma_j=\gamma=\pi$ and $\beta_j=\beta=\pi/n$, which gives $\sum_{1\leq i\leq j}^p \gamma_i \beta_j=O(p^2/n)$. In this case the mixing angles become very small as $n$ becomes large, and hence all products of phase and mixing angles also become small. Hence, we can reasonably assume that the leading-order term of \eqref{eq:expecCp0} dominates for large $n$, and neglect the higher-order terms. Since $\langle C \rangle_0 = 1/2^n$ and $\langle \nabla_C \nabla C \rangle_0 = -2n/2^n$ for this problem, to leading order we have 
$\langle C\rangle_p \simeq 1/2^n + O(p^2/2^n)$, where the factors of $n$ have canceled. Rearranging terms then implies that the success probability $P_p(x^*)=\langle C\rangle_p$ is $\Theta(1)$ when $p=\Theta(\sqrt{2^n})$, reproducing Grover's famous speedup for QAOA as obtained in~\cite{jiang2017near}. While this approach does not immediately reveal constant factors or other details, it is particularly useful for obtaining qualitative insights. Beyond Grover's problem We will %
apply %
related ideas using the low-order terms of \thmref{thm:smallprecursed} to obtain further insights %
in future work. 

We conclude this subsection with two useful remarks.}

\begin{rem} %
The %
included 
terms of the leading-order coefficient polynomial 
$\sum_{1\leq i\leq j}^p \gamma_i \beta_j$ of \eqref{eq:expecCp0} are also %
illuminated from the following general formulas. 

Observe that using $U^\dagger_MU_M=I$ we have  %
\begin{eqnarray}
U_M(\beta)U_P(\gamma)=e^{-i\gamma (U_M(\beta)CU^\dagger_M(\beta))}U_M(\beta).
\end{eqnarray}
Hence for QAOA$_p$ with $Q:=U_M(\beta_p)U_P(\gamma_p)\dots U_M(\beta_1)U_P(\gamma_1)$, %
we similarly have 
\begin{eqnarray}
 Q=%
e^{-i\gamma_p (U_M(\beta_p)CU^\dagger_M(\beta_p))}
\, e^{-i\gamma_{p-1} (U_M(\beta_p+\beta_{p-1})CU^\dagger_M(\beta_p+\beta_{p-1}))}
\dots %
e^{-i\gamma_1 (U_M(\overline{\beta})CU^\dagger_M(\overline{\beta}))}U_M(\overline{\beta})
\end{eqnarray}
for $\overline{\beta}:=\sum_{j=1}^p\beta_j$, %
where we have used the property $U_M(\alpha)U_M(\beta)=U_M(\alpha+\beta)$. 
\end{rem}

\begin{rem}
To consider QAOA with arbitrary initial states,  
\lemref{lem:QAOAconjGen} may be similarly applied %
taking expectation values of %
\eqref{eq:costHamConjGen} to obtain results analogous to those of \thmssref{thm1:smallAngles}{thm:allanglessmall}{thm:smallprecursed}. 
\sh{Similar results %
follow for more general QAOA mixing Hamiltonians under mild conditions. In particular, one only requires the initial state to be an eigenstate of the mixing Hamiltonian for the leading-order term of \thmref{thm:smallprecursed} to be provably of the same form (as indicated in \remref{rem:GenLems}).}
\end{rem}

\subsubsection{Leading-order QAOA$_p$ behaves like an effective QAOA$_1$}
\label{sec:smallAngleQAOApClass}

\sh{\lemref{lem:QAOAconjGen} and 
\thmref{thm:smallprecursed}}
give 
perturbative expansions %
in the parameters $\gamma_1,\beta_1,\dots,\gamma_p,\beta_p$, with respect to conjugation by the Identity. These results suggest that when all QAOA angles are small in magnitude, only the relatively few nonzero low-order terms will 
effectively contribute. %
Hence when all parameters are sufficiently small, QAOA$_p$ can again be efficiently classically emulated by a similar argument to the $p=1$ case.

\begin{proof}[Proof of \thmref{thm:allanglessmall}]
The result follows %
from \thmref{thm:smallprecursed} %
and \eqref{eq:expecDCis0}. 
\end{proof}

\begin{rem}
When all angles are bounded such that  $|\gamma_1|,|\beta_1|,\dots,|\beta_p|\leq L$, comparing 
\thmref{thm:allanglessmall}
as $L\rightarrow 0^+$ 
to \thmref{thm1:smallAngles} shows that the cost expectation of QAOA$_p$ approaches that of an equivalent QAOA$_1$ with effective angles $\gamma',\beta'$ satisfying
$$ \gamma'\beta' = \sum_{1\leq i\leq j}^p \gamma_i \beta_j .$$
\sh{Hence as $L$ becomes small the leading-order term of \thmsref{thm:allanglessmall}{thm:smallprecursed} becomes insensitive to the particular bounded parameter schedule the angles may be derived from.}
\end{rem}

Hence, when all angles are sufficiently small we may again classically emulate QAOA$_p$. 
\begin{rem}
Repeating the analysis of \secref{sec:smallQAOA1class} %
we see that for $p=O(1)$ there exists a sufficiently large polynomial  $q(n)$ such that if %
$|\gamma_1|,|\beta_1|,\dots,|\beta_p|=1/q(n)$ then QAOA$_p$ can be efficiently classically sampled from with absolute error of each probability $O(1/2^n)$. 
\end{rem}

\subsubsection{QAOA beats random guessing} \label{sec:randomGuessing}
Here we %
show that QAOA always beats random guessing, in the sense that there always exist polynomially small angles (as opposed to exponentially small) such that $\langle C\rangle_p$ beats $\langle C\rangle_0$ (by a factor at worst polynomially small).
The QAOA$_1$ case implies the same for QAOA$_p$. Whereas it is known QAOA beats random guessing for specific problems (cf. \secref{sec:MaxCutQUBO}), the following result holds generally.

\begin{cor}[QAOA beats random guessing]  \label{cor:qaoa1randomguessing}
Let $c(x)$ be a cost function on $n$ bits that we seek to maximize or minimize, with corresponding Hamiltonian $C$. Assume $c$ is nonconstant and $|c(x)|$ %
is bounded by a polynomial in $n$. 
Then %
there %
exists a polynomial $q(n)>0$ and angles $\gamma_1,\beta_1,\dots$  that achieve $\langle C \rangle_p = \langle C \rangle_0 +\Omega(1/q(n))$ for maximization, or similarly, $\langle C \rangle_p = \langle C \rangle_0-\Omega(1/q(n))$ for minimization. %
\end{cor}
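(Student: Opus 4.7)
The plan is to reduce the statement to the case $p=1$, apply the leading-order expansion of \thmref{thm1:smallAngles}, and then control the higher-order remainder using the error bound of \corref{cor:thm1errorboundeps}. Since a QAOA$_p$ circuit reduces to QAOA$_1$ by setting $\gamma_j=\beta_j=0$ for all $j\geq 2$, it suffices to exhibit a single pair $(\gamma,\beta)$ that beats random guessing by the desired amount.

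From \thmref{thm1:smallAngles} we have
\[
\langle C\rangle_1 \,=\, \langle C\rangle_0 \,-\, \gamma\beta\,\langle\nabla_C\nabla C\rangle_0 \,+\, R(\gamma,\beta),
\]
where the remainder $R$ collects all contributions of total degree at least four in $\gamma,\beta$. \lemref{lem:expecCDC} shows that $\Delta := -\langle\nabla_C\nabla C\rangle_0 = \frac{1}{2^n}\sum_{x,j}(\partial_j c(x))^2 \geq 0$, with strict positivity guaranteed by non-constancy of $c$ (equivalently, $C$ having at least one nonzero non-identity Pauli coefficient). For the maximization case I set $\gamma=\beta=\e$, so that the leading-order change is $+\e^2\Delta>0$; for minimization I instead take $\gamma=\e$, $\beta=-\e$, giving leading-order change $-\e^2\Delta<0$.

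Because $|c(x)|\leq\mathrm{poly}(n)$, the spectral norm $\|C\|$ is polynomial in $n$, and \corref{cor:thm1errorboundeps} (via the more refined bound in \lemref{lem:thm1errorbound}) then yields $|R(\e,\e)|\leq K\e^4$ for $\e$ bounded above by a fixed inverse-polynomial $\e_0$ in $n$, with $K$ polynomial in $n$ and $\|C\|$ and hence polynomial in $n$. Choosing $\e^2 = \min(\e_0^2,\,\Delta/(2K))$ then gives $|\langle C\rangle_1 - \langle C\rangle_0|\geq \tfrac12\e^2\Delta = \Omega(\Delta^2/K)$, which is of the form $\Omega(1/q(n))$ for a polynomial $q(n)$ depending on the problem through $\Delta$ and $K$.

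The main obstacle is bookkeeping for the resulting $q(n)=O(K/\Delta^2)$: one must confirm that $K$ is polynomially bounded above and $\Delta$ is polynomially bounded below. The former follows from $\|C\|\leq\mathrm{poly}(n)$ together with the explicit form of the higher-order gradient expectation values assembled from \thmref{thm:QAOA1} and the bounds in \lemref{lem:thm1errorbound}. The latter follows from the Parseval-type expression $\Delta=4\sum_{\alpha\neq\emptyset}|\alpha|\,a_\alpha^2$ implicit in \lemref{lem:expecCDC}, together with the mild regularity that the nonzero Pauli coefficients $a_\alpha$ of $C$ are not exponentially small in $n$, which is automatic for cost Hamiltonians built from polynomially many bounded-weight clauses such as those arising in standard combinatorial optimization problems.
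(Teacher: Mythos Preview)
Your approach mirrors the paper's: reduce to $p=1$, use the leading-order term $-\gamma\beta\langle\nabla_C\nabla C\rangle_0$ from \thmref{thm1:smallAngles}, bound the remainder via \lemref{lem:thm1errorbound}, and choose angles so the leading term dominates. The paper picks the asymmetric angles $\gamma=\epsilon^{1/4}/(2\|C\|)$, $\beta=\sqrt{\epsilon}/(2n)$ tailored to the hypotheses of \corref{cor:thm1errorboundeps}, whereas you take $\gamma=\beta=\e$; either works.

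One slip: \lemref{lem:thm1errorbound} gives
\[
|R(\gamma,\beta)|\;\leq\; 4|\beta|\gamma^2 k\|C\|_*^3 \,+\, 4\beta^2|\gamma|k^2\|C\|_*^2,
\]
which at $\gamma=\beta=\e$ is \emph{cubic}, $|R|\leq K\e^3$, not quartic. (The exact remainder is indeed degree $\geq 4$, but the available norm bound does not exploit the vanishing of odd-order initial-state expectations from \lemref{lem:genGrads}.) Your argument survives this---$\e^2\Delta$ against $K\e^3$ still wins for small $\e$---but the parameter choice should be $\e=\min(\e_0,\,\Delta/(2K))$ and the resulting improvement $\Omega(\Delta^3/K^2)$ rather than $\Omega(\Delta^2/K)$.

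You are right to flag that the $\Omega(1/q(n))$ conclusion ultimately needs $\Delta$ bounded below by an inverse polynomial; the paper's proof makes the same implicit assumption without comment.
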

\begin{proof}
Observe as QAOA$_p$ subsumes QAOA$_1$ (i.e., by setting $\gamma_2=\beta_2=\dots=\beta_p=0$), it suffices to show the claim for $p=1$.  
Recall from \lemref{lem:expecCDC} %
that $\langle\nabla_C\nabla C\rangle_0 < 0$ for nonconstant $c(x)$, such that 
the sign of the leading-order contribution in the approximation  $\widetilde{\langle C\rangle}_1$ 
of \thmref{thm1:smallAngles} (cf. \thmref{thm:QAOA1}) is equal to the sign of $\gamma\beta$, and hence the sign of 
$\widetilde{\langle C\rangle}_1-\langle C\rangle_0$ can be controlled by selecting the quadrants of $\gamma,\beta$ appropriately (i.e., for maximization take $\gamma,\beta>0$ or $\gamma,\beta<0$). The result then follows observing that we can select sufficiently small $|\gamma|,|\beta|$ and $\e$ such that the error bound of \corref{cor:thm1errorboundeps} together with the triangular inequality give the result.

Specifically, for simplicity assume we seek to maximize a cost function $c(x)>0$; the general case is similar. 
Let $a:=-\langle \nabla_C\nabla C \rangle_0$ which satisfies $0<a<4n\|C\|^2$ from 
\lemref{lem:expecCDC}, and \lemsref{lem:normGrad}{lem:normCGrad} in \appref{app:normBounds}, and let  $\epsilon=(\frac1{8}\frac{a}{n\|C\|^2})^{4}$, $\gamma=\frac{\epsilon^{1/4}}{2\|C\|}$ and $\beta=\frac{\sqrt{\epsilon}}{2n}$. Here $\e$ is selected to satisfy %
$\|C\|\epsilon=\tfrac12\gamma\beta a$. %
These choices satisfy the conditions of \corref{cor:thm1errorboundeps} to give $|\langle C\rangle_1-\widetilde{\langle C\rangle}_1|<\|C\|\epsilon$. Hence %
$\langle C\rangle_1 > \widetilde{\langle C\rangle}_1 - \|C\|\epsilon$
and so the claim follows as
\begin{eqnarray*}
\langle C\rangle_1 -\langle C\rangle_0 &=&
\langle C\rangle_1 -\widetilde{\langle C\rangle}_1+\widetilde{\langle C\rangle}_1-\langle C\rangle_0 %
\\
&>&- \|C\|\epsilon + \gamma\beta a 
\,=\, \tfrac12\gamma\beta a \\
&=&\Omega(1/\textrm{poly}(n))
\end{eqnarray*}
where the assumption $\|C\|=O(\textrm{poly}(n))$ implies the %
inverse-polynomial lower bound. %
\end{proof}
\noindent Clearly, the quantities of %
\corref{cor:qaoa1randomguessing} may be tightened significantly when considering specific cost functions, or by considering higher-order terms and more involved error bounds.

\subsection{Causal cones and locality considerations for QAOA$_p$}
\label{sec:lightconesp}

Here we build off of the definitions and discussion 
of \secref{sec:lightcones}.  %
Suppose again %
we are given a cost Hamiltonian $C=a_0I+\sum_jC_j$ such that each Pauli term $C_j=a_jZ_{j_1}\dots Z_{j_{N_j}}$ acts on at most $|N_j|\leq k$ qubits. Assume each qubit appears in at most $D$ terms; the causal cone perspective is especially useful when $D$ is bounded. We let $L_{j,\ell}\subset [n]$ denote the lightcone of $C_j$ corresponding to the $\ell=1,\dots,p$ levels of a 
QAOA$_p$ circuit applied in turn, with $L_{j,\ell-1}\subset L_{j,\ell}$, as defined in~\tabref{tab:locality}.   
Note that as the QAOA levels act in reverse order with respect to conjugation, each $L_{j,\ell}$ corresponds to QAOA angles indexed from $p-\ell+1,\dots,p-1,p$; cf. \eqref{eq:costHamConjGen}. For example, %
$L_{j,1}$ corresponds to %
the single-level QAOA conjugation of $C_j$ with angles $\gamma_p,\beta_p$. %

For computing QAOA$_p$ expectation values such as %
$\langle C\rangle_p = \sum_j \langle C_j\rangle_p$, 
each $\langle C_j\rangle_p$ may again be computed independently and taking advantage of the QAOA-cones of the $C_j$. %
Each $\langle C_j\rangle_p$ can be computed by restricting the initial state and resulting operators to at most $k((D-1)(k-1))^p$ qubits, which for particular values may be significantly fewer qubits than $n$. Moreover, we may apply this idea for each QAOA layer in turn.

\begin{theorem}  \label{thm:lightconep}
Let $C=a_0I+\sum_j C_j$, $C_j=a_j Z_{j_1}\dots Z_{i_{N_j}}$, be a $k$-local 
cost Hamiltonian (i.e., $|N_j|\leq k$), such each qubit appears in at most $D$ of the $C_j$.  
Then for QAOA$_p$ with operator $Q=Q_p\dots Q_2Q_1$, $Q_i=U_M(\beta_i)U_P(\gamma_i)$, we have %
\begin{eqnarray} \label{eq:expecCplightcones}
\langle C\rangle_p &=& \sum_j\, \bra{+}^{\otimes L_{j,p}}
 \,Q^\dagger|_{L_{j,p}}\, C_j\,Q|_{L_{j,p}}\, \ket{+}^{\otimes L_{j,p}}\\
 &=& \sum_j \,\bra{+}^{\otimes L_{j,p}}
 \,Q_1^\dagger|_{L_{j,p}}\,Q_2^\dagger|_{L_{j,p-1}}\dots Q_p^\dagger|_{L_{j,1}}\, C_j\,Q_p|_{L_{j,1}}\dots Q_2|_{L_{j,p-1}}\,Q_1|_{L_{j,p}}\, \ket{+}^{\otimes L_{j,p}},\nonumber
\end{eqnarray}
with $|L_{j,\ell}|\leq \min( k((D-1)(k-1))^\ell,
\, n)$ for $\ell=1,\dots,p$. 
\end{theorem}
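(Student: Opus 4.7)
The plan is to prove both equalities by combining linearity of expectation with a layer-by-layer analysis of how conjugation expands the support of $C_j$. First I would write $\langle C\rangle_p = a_0 + \sum_j \langle C_j\rangle_p$ and for each $j$ work in the Heisenberg picture, $\langle C_j\rangle_p = \bra{s}Q^\dagger C_j Q\ket{s}$ with $Q=Q_p\cdots Q_1$. The core claim is that the conjugated operator $A_\ell := Q_{p-\ell+1}^\dagger\cdots Q_p^\dagger\, C_j\, Q_p\cdots Q_{p-\ell+1}$ is supported on the qubits in $L_{j,\ell}$, which I would prove by induction on $\ell$.

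For the inductive step, suppose $A_{\ell-1}$ is supported on $L_{j,\ell-1}$. Conjugation by $U_M(\beta_{p-\ell+1}) = \prod_i e^{-i\beta_{p-\ell+1}X_i}$ preserves the support because every single-qubit factor $e^{-i\beta X_i}$ with $i \notin L_{j,\ell-1}$ commutes with $A_{\ell-1}$ and cancels with its inverse. For conjugation by $U_P(\gamma_{p-\ell+1}) = \prod_i e^{-i\gamma_{p-\ell+1} C_i}$ (which factorizes because all $C_i$ are diagonal and mutually commute), any factor with $N_i \cap L_{j,\ell-1} = \emptyset$ commutes with $A_{\ell-1}$ (since disjoint Pauli $Z$ strings commute with Paulis supported elsewhere) and cancels, so only terms $C_i$ with $N_i \cap L_{j,\ell-1}\neq \emptyset$ remain. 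By the definition of $L_{j,\ell}$ in \tabref{tab:locality}, the combined support of those terms is exactly $L_{j,\ell}$, completing the induction. Taking $\ell=p$ gives that $Q^\dagger C_j Q$ is supported on $L_{j,p}$. The first equality then follows from the product structure $\ket{s}=\ket{+}^{\otimes n}$: on qubits outside $L_{j,p}$, $\bra{+}\cdot\ket{+}$ trivially gives unity, so the expectation reduces to one over the restricted register. The second equality follows by applying the same layer-wise cancellation argument inside the expectation value, observing that at the moment layer $Q_i$ is applied the operator lives in $L_{j,p-i+1}$ (after $i$ successive levels of inner conjugation it sits in $L_{j,p-i}$, wait, indexed as defined), so only $Q_i|_{L_{j,p-i+1}}$ contributes nontrivially.

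The size bound follows by induction on $\ell$: each qubit in $L_{j,\ell-1}$ appears in at most $D$ Pauli terms of $C$, contributing at most $D-1$ new terms beyond those already incorporated, and each new term adds at most $k-1$ new qubits, giving the claimed recurrence bounded by $k((D-1)(k-1))^\ell$ (capped trivially at $n$). The main obstacle I anticipate is not conceptual but notational: making the meaning of ``$Q|_S$'' sufficiently precise so that the cancellation arguments for both the $U_M$ and $U_P$ factors are unambiguous, especially when multiple layers are restricted to different (nested) cones $L_{j,1}\subset L_{j,2}\subset\cdots\subset L_{j,p}$ in the second equality. Once the restriction operation is defined as ``discard Pauli terms whose support is not contained in $S$,'' the above commutation-and-cancellation reasoning goes through directly, and the rest is bookkeeping.
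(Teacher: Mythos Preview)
Your proposal is correct and follows essentially the same approach as the paper: both arguments work in the Heisenberg picture, use layer-by-layer commutation-and-cancellation to show that operators outside the cone do not contribute, exploit the product structure of $\ket{s}=\ket{+}^{\otimes n}$ to reduce to the restricted register, and bound $|L_{j,\ell}|$ by recursively counting how many new qubits each $U_P$-conjugation can introduce. The only cosmetic difference is that the paper phrases the cancellation through the commutator series of \lemref{lem:QAOAconjGen} (writing $Q^\dagger C_j Q$ as iterated $\nabla_C,\nabla$ and then restricting each gradient to the appropriate cone), whereas you argue directly with the unitary factorizations $U_M=\prod_i e^{-i\beta X_i}$ and $U_P=\prod_i e^{-i\gamma C_i}$; these are equivalent.
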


The first %
equality of \eqref{eq:expecCplightcones} shows that each $\langle C_j\rangle_p = \langle Q^\dagger C_j Q\rangle_0$ can be equivalently computed by discarding from the QAOA$_p$ circuit %
any qubits or operators outside of $L_{j,p}$. %
The second equality shows that this idea may be applied layerwise, with successive (with respect to conjugation) QAOA layers depending on increasing numbers of qubits.
Both properties are relatively straightforward to take advantage of in numerical computations.

Similar results apply for computing QAOA expectation values of a general %
cost gradient operator as in \eqref{eq:costGradOpGen} (with respect to the resulting QAOA-cones of its Pauli terms). 

\begin{proof}
Consider a fixed observable  $C_j=aZ_{j_1}Z_{j_2}\dots Z_{j_k}$ for a QAOA$_p$ circuit. %
At each $\ell$th layer, operators outside of $L_{j,\ell}$ commute through and cancel with respect to conjugation. Hence 
applying causality considerations to \lemref{lem:QAOAconjGen}, in particular the property that mixing operator conjugations cannot increase Pauli term locality, 
we have  
\begin{eqnarray*}\label{eq:costHamConjGen2}
Q^\dagger C_j Q 
&=& %
\sum_{\ell_{1}=0}^\infty \sum_{k_{1}=0}^\infty \cdots \sum_{\ell_{p-1}=0}^\infty \sum_{k_{p-1}=0}^\infty
\sum_{\ell_p=0}^\infty \sum_{k_p=0}^\infty \frac{(i\gamma_1 \nabla_C)^{\ell_1} 
(i\beta_1\nabla)^{k_1} \dots
(i\gamma_p \nabla_C)^{\ell_p} 
(i\beta_p\nabla)^{k_p} }{\ell_1!k_1!\ell_2!k_2!\dots\ell_p!k_p!} C_j \\
&=& \sum_{\ell_{1}=0}^\infty \dots \sum_{k_p=0}^\infty \frac{(i\gamma_1 \nabla_C|_{L_{j,p}})^{\ell_1} 
(i\beta_1\nabla |_{L_{j,p-1}})^{k_1} \dots
(%
(i\gamma_p \nabla_C |_{L_{j,1}})^{\ell_p} 
(i\beta_p\nabla |_{N_{j}})^{k_p} }{\ell_1!k_1!\ell_2!k_2!\dots\ell_p!k_p!} C_j,
\end{eqnarray*}
where we recall $N_j=L_{j,0}$. %
Taking initial state expectations and summing over $j$ then gives a slighter tighter result than  \eqref{eq:expecCplightcones}. %
Observing that we may increase each $L_{j,\ell}$ without affecting expectation values implies the two equalities of \eqref{eq:expecCplightcones}. Finally, the bound on $|L_{j,\ell}|$ follows similarly observing that each conjugation by $U_P$ can increase the degree of a given Pauli term by a factor of at most $(D-1)(k-1)$, and that %
by definition $|L_{j,\ell}|\leq n$. 
\end{proof}

\subsection{%
Algorithms %
computing or approximating $\langle C\rangle_p$}
\label{sec:classicalAlgGeneral2}

Here we %
show two general algorithms for computing QAOA cost expectation values $\langle C\rangle_p$. The first approach is exact, though its may scale exponentially in $p$; hence the second approach considers a family of approximations for $\langle C \rangle_p$ obtained by truncating the series expression of \thmref{thm:costExpecHeis} at a given order. The first approach encompasses %
much of the existing techniques in the literature.  We emphasize that while such classical procedures may be useful, e.g., to help find good algorithm parameters, for optimization applications in general a quantum computer is required to efficiently obtain the corresponding solution samples.

\vskip 1pc
\textbf{%
Algorithm 2: compute  $\langle C\rangle_p$ for QAOA$_p$}
\begin{enumerate}
    \item \underline{Input}: Parameters $n,p\in\naturals$, angles  $(\gamma_1,\beta_1,\dots,\gamma_p,\beta_p)\in \reals^{2p}$, cost Hamiltonian $C=a_o I+\sum_{j=1}^m C_j$ with $m=$poly$(n)$ and $C_j=a_jZ_{j_1}\dots Z_{j_\ell}$.
    \item For $j=1,\dots,m$ and $\ell=1,\dots,p$ compute lightcones $L_{j,\ell}\subset [n]$ (or upper bounds). 
    \item For $j=1$ %
    compute $Q^\dagger C_j Q$ as a Pauli sum %
    by restricting the operators in each QAOA level to those in $L_{j,\ell}$ as in \thmref{thm:lightconep} and its proof. %
    \item Discard all terms in the sum containing a Y or a Z factor, %
and set $\langle C_j\rangle_p$ as the sum of the remaining coefficients.
    \item Apply Steps 3 and 4 for each %
    $j= 1, . . . , m$.  
    \item Return the overall sum $\langle C\rangle_p = a_0+\sum_j \langle C_j\rangle_p$. 
\end{enumerate}
Algorithm 2 generalizes the proof given in \cite{wang2018quantum,hadfield2018thesis} of $\langle C \rangle_1$ for MaxCut \eqrefp{eq:expecC1maxcut}, and is used, for instance, in \cite{hadfield2018thesis} to show a similar result for for the Directed-MaxCut problem variant. As observed in \cite{Farhi2014}, for $C=\sum_j C_j$ the quantities $\langle C_i \rangle_p$ and $\langle C_j \rangle_p$ are equal whenever the terms $C_i$ and $C_j$ have the same local neighborhood structure with respect to $p$ and the underlying cost function
(more precisely, whenever there exists a permutation of qubits taking $C|_{L_{i,p}}$ to $C|_{L_{j,p}}$). This property significantly reduces the number of unique computations of $\langle C_j\rangle_p$ required in Steps 3 and 4 %
in order to compute $\langle C\rangle_p$, as a recent paper further elaborates~\cite{shaydulin2021exploiting}.
For example, for MaxCut on the complete graph it suffices to compute the expectation value for a single edge. %
Further results concerning symmetry in QAOA are given in \cite{shaydulin2020classical}.

Even for bounded-degree problems, the number of terms involved in computing each $\langle C_j\rangle_p$ in Algorithm 2 typically grows exponentially with $p$ which makes exact QAOA performance results difficult to obtain beyond quite small $p$ in general. 
\thmssref{thm:costExpecHeis}{thm:smallprecursed}{thm:lightconep} may be  similarly used to obtain faster classical algorithms for approximating $\langle C\rangle_p$. %
Given a parameter~$\ell\in\naturals$, Algorithm~3 returns an approximation of $\langle C \rangle_p$ accurate up to order~$\ell$ in the QAOA angles, such that
the accuracy of the returned approximation improves with increased $\ell$ and converges to the exact value returned by Algorithm~2. %
We describe how worst-case error bounds may be obtained in for fixed truncation $\ell$ in \appref{app:errorBounds}, though our approximations often perform much better in practice than such bounds may indicate (cf. \figref{fig:ramp}).
We emphasize that the condition $|\gamma_i|,|\beta_j|<1$ in optimal parameter sets or restricted parameter schedules often appears in the literature. 

\vskip 1pc
\textbf{%
Algorithm 3: approximate  $\langle C\rangle_p$ to order $\ell$ in the QAOA angles}
\begin{enumerate}
    \item \underline{Input}: Parameters $p,\ell\in\naturals$, angles  $(\gamma_1,\beta_1,\dots,\gamma_p,\beta_p)\in \reals^{2p}$, cost Hamiltonian $C$ %
    \item If $\ell$ is odd then $\ell \leftarrow \ell -1$.
    \item If $\ell=0$ or $p=0$ then return $\langle C \rangle_0$.
    \item Let $\mathcal{A}:=\{(a_1,b_1\dots,a_p,b_p)\in \{0,1,\dots,\ell -1\}^{2p}: \sum_j(a_j+b_j)\in \{0,2,\dots,\ell \}\}.$
    \item Let $G_\alpha =\nabla_C^{a_1}\nabla_{b_1}\dots \nabla_C^{a_p}\nabla^{b_p}C$ for each $\alpha \in \mathcal{A}$
    \item For each $\alpha\in \mathcal{A}$, compute $\langle G_\alpha \rangle_0$ using the lemmas of \secref{sec:expecVals}
    and %
    the coefficient polynomials $f_\alpha(\gamma_1,\dots,\beta_p)$ as in the proof of  \thmref{thm:smallprecursed}. 
    \item Return the estimate $\widetilde{\langle C\rangle_p} = \sum_{\alpha \in \mathcal{A}} f_\alpha(\gamma_1,\dots,\beta_p) \langle G_\alpha \rangle_0 $
\end{enumerate}

Clearly, Algorithm $3$ may also take advantage of lightcone considerations as in Algorithm $2$ in computing the quantities $\langle G_\alpha \rangle_0=\sum_j\langle G_{\alpha,j} \rangle_0$ for $C=\sum_j C_j$. 
For implementation, the set $\mathcal{A}$ contains at most $\tfrac12 \ell^{2p}$ elements and so the operators $G_\alpha$ can be enumerated and efficiently computed as Pauli sums when $p=O(1)$ and $\ell=\log n$. If we further restrict to $\ell=O(1)$, then %
we can efficiently compute each $\langle G_\alpha \rangle_0$, and the polynomials $f_\alpha(\gamma_1,\dots,\beta_p)$ %
each have poly$(n)$ terms. %
Hence, %
when $p,\ell=O(1)$ and the number of terms in $C$ is poly(n), the algorithm can always be implemented efficiently. %

Algorithms 2 and 3 may be further combined to yield hybrid approaches to estimating $\langle C\rangle_p$, where some parameters are treated perturbatively and other exactly, though we do not %
enumerate these approaches here. We conclude \secref{sec:QAOAp} by %
considering an alternative %
approach to deriving series expressions for QAOA quantities.

\section{Generalized mixers, initial states, and applications} 
\label{sec:generalizedCalculus} 

We proposed designs for generalized %
QAOA mixing Hamiltonians and 
unitaries in~\cite{hadfield2019quantum,Hadfield17_QApprox} %
beyond the transverse-field Hamiltonian mixer originally proposed in~\cite{Farhi2014}. %
Such constructions are particularly appealing for optimization problems or encodings with hard constraints. %
Here we illustrate generalizations of our framework to quantum alternating operator ans\"atze constructions for Max Independent Set and a Graph Coloring optimization problems. Similar ideas apply more generally such as to the other optimization problems and corresponding constructions of~\cite{hadfield2019quantum}. 
\sh{We conclude the section with an outline of how the framework may be applied to electronic structure problems of quantum chemistry based on the QAOA construction of~\cite{kremenetski2021quantum}.} 
\sh{We provide some additional discussion of}  applications beyond classical optimization in \secref{sec:discussion}.

For the purpose of this section, consider %
a general mixing Hamiltonian $\widetilde{B}=\sum_j \widetilde{B}_j$ acting on $n$ qubits with each $[\widetilde{B}_j,C]\neq 0$ (and possibly  $[\widetilde{B}_j,\widetilde{B}_k]\neq 0$ for some $k\neq j$). %
We consider here problems such that each $x\in\{0,1\}^n$ is either feasible, meaning it encodes a valid candidate problem solution, or else infeasible. We assume that the $\widetilde{B}_j$ each preserve feasibility, i.e., map the subspace of feasible computational basis states to itself. As the cost Hamiltonian~$C$ is diagonal, this immediately implies that the gradients of all orders $\nabla^{a_\ell}_{\widetilde{B}} \dots \nabla_{\widetilde{B}}^{a_3} \nabla^{a_2}_C \nabla^{a_1}_{\widetilde{B}} C$ also preserve the feasible subspace. 
We show through two %
examples the relationship between the resulting cost  gradients and (generalized) classical cost difference functions. In particular, 
for Max-Independent-Set we consider the transverse-field mixer augmented with control, and for Graph Coloring problems we consider a type of XY mixer~\cite{hadfield2019quantum,wang2019xy}. 
We refer the reader to~\cite{hadfield2019quantum} for additional details and variations.

\subsection{Maximum Independent Set} \label{sec:MaxIndepSet}
Consider the NP-hard optimization problem Max-Independent-Set: given a graph~$G=(V,E)$ with $|V|=n$ vertices, we seek to find (the size of) the largest subset of independent vertices. %
Different QAOA variants for this problem were considered in~\cite{farhi2014quantum,Hadfield17_QApprox,pichler2018quantum,hadfield2019quantum,farhi2020quantum}; %
see e.g.~\cite{ausiello2012complexity} for problem details, classical algorithms, and complexity.  
Proceeding as in \cite{hadfield2019quantum}, we encode subsets of $V$ with $n$ indicator variables $x_j$, which map to $n$-qubit computational basis states $\ket{x}$. 
The feasible subspace is spanned by the subset of basis states $\ket{x}$, $x\in\{0,1\}^n$, that encode independent sets (which depend on the particular problem instance). Let $\widetilde{B}=\sum_{j=1}^n \widetilde{B}_j$, where each $\widetilde{B}_j$ is such that $e^{-i\beta \widetilde{B}_j}=\Lambda_{f_j}(e^{-i\beta X_j})$ is an  $X_j$-rotation controlled on all of the neighbors of vertex $j$ not belonging to an independent set $x$, with control function $f_j=\wedge_{\ell \in nbd(j)} \overline{x}_\ell$, which when $f_j(x)=1$ implies that flipping the $j$th bit of $x$ cannot cause the independent set property to be violated. In \cite{hadfield2019quantum,hadfield2018representation} we show that 
$$\widetilde{B}_j = X_j  \prod_{\ell \in nbd(j)} \overline{x}_\ell $$
suffices, where %
$\overline{x}_\ell=\frac12 I + \frac12Z_\ell$ 
represents the (negation of) the Boolean function returning the $j$th bit of~$x$, such when its control function is false $\widetilde{B}_j$ acts as the Identity.

Expanding $\widetilde{B}_j$ %
as a Pauli sum, the number of terms in this representation of $\widetilde{B}_j$ may be exponential in $n$, for example if $deg(j)=\Theta(n)$. Nevertheless, each multi-controlled $X_j$-rotation $e^{-i\widetilde{B}_j}$ can be efficiently implemented with basic quantum gates \cite{barenco1995elementary,nielsen2010quantum, hadfield2018thesis}. 
As products of $B_j$ operators connect every feasible state $\ket{x}$ to the empty set state $\ket{00\dots0}$, it follows~\cite{hadfield2019quantum} that sufficiently many applications of $e^{-i\beta \widetilde{B}_j}$ for different~$j$ can connect any two feasible solution states $\ket{x}$ and $\ket{y}$.  (Note that~\cite{hadfield2019quantum} considers a variety of possible mixing unitaries, constructed, for example, as $e^{-\beta\widetilde{B}}$ or $\prod_j e^{-\beta\widetilde{B}_j}$, which are inequivalent; we do not deal with this distinction here and focus instead on the classical calculus derived from the~$\widetilde{B}_j$ generally.)   

The mixing Hamiltonians $\widetilde{B}_j$ induce a neighborhood structure on the space of feasible solutions, in this case the unit Hamming distance strings that are also feasible.  %
For each $j=1,\dots n$ and independent set~$x$ we define the local differences 
$$  \widetilde{\partial}_j c (x) :=\left\{ \begin{array}{ll}
                \partial_j c(x)  \qquad \qquad  \text{ if } f_j(x) = 1\\
                 \qquad 0\qquad \qquad \text{ else,}
                \end{array}
                \right.
               $$
i.e., our usual notion of partial difference but restricted to independent sets (feasible strings) connected by single bit flips. 
Clearly, this structure may be applied to other cost functions over independent sets. For our case of maximizing cardinality, assuming the restriction to feasible states the cost function $c(x)=|x|$ is the Hamming weight of each string, and maps to the Hamiltonian $C=\frac{n}2 I - \frac{1}{2}\sum_j Z_j$. Hence, observe that each $\widetilde{\partial}_j c (x)\in\{-1,0,1\}$. 
The cost divergence then becomes $$\widetilde{dc}(x)=\sum_j \widetilde{\partial}_j c(x),$$ %
which is equal to the number of independent sets obtainable from $x$ by flipping a bit from $0$ to $1$, minus the number obtained flipping a $1$ to $0$. 

The cost gradient becomes  $$\widetilde{\nabla}C=[\widetilde{B},C]=\sum_j\nabla_{\widetilde{B}_j}C,$$
where
the operators $\nabla_{\widetilde{B}_j}$ act for each feasible $\ket{x}$ as 
$$\nabla_{\widetilde{B_j}}C\ket{x}= -\widetilde{\partial}_j c (x)\ket{x^{(j)}}.
$$

For example, consider the possible choice of initial state $\ket{00\dots 0}$ which encodes the empty set. As all single-vertex sets are also independent, $\widetilde{\nabla}C$ acts to give an equal superposition of them as  $$\widetilde{\nabla}C\ket{00\dots 0}=-\ket{100\dots}-\ket{010\dots}-\dots-\ket{0\dots 01}.$$ %
Note that different choices of initial state will lead to different such relations.

Mixed gradients follow similarly. For example, 
as the cost Hamiltonian is $1$-local, it follows
$\widetilde{\nabla}C= i Y_j \otimes (\prod_{\ell \in nbd(j)} \overline{x}_\ell ),$ and so %
we have
$$ \nabla_C \widetilde{\nabla}C = i \widetilde{B}.$$
Further results concerning higher order gradients and
initial state expectation values %
may be similarly derived as %
for the %
transverse-field mixer case. %

\subsection{Graph Coloring}
\label{sec:graphColoring}
Here we consider optimization problems over colorings of a graph. 
Given a graph $G=(V,E)$, $|V|=n$,  and $k$ colors, 
we say a \textit{valid coloring} is assignment of a unique color to each vertex, and a \textit{proper coloring} is a valid assignment such that each edge connect vertices of different colors. 
Here we assume an arbitrary cost %
function $c(y)$ we seek to optimize over %
valid colorings $y$. For example, the Max-$k$-Cut generalization of MaxCut is the maximization problem where $c(y)$ gives the number of edges with different colors. %
Several other NP-hard variants of optimization problems over valid or proper graph colorings %
are considered in~\cite{hadfield2019quantum}.

Generally, depending on $k$ and the qubit encoding used, some bitstrings may represent %
invalid colorings. 
Following~\cite{hadfield2019quantum,wang2019xy} we consider encoding the coloring of each vertex using the Hamming weight-$1$ subspace of $k$ qubits, for $kn$ qubits total. (This encoding may be equivalently viewed as $n$ $k$-dits~\cite{hadfield2019quantum}, which we utilize below.) 
Valid vertex colorings then span a $k^n$-dimensional subspace of the $2^{kn}$-dimensional Hilbert space, %
with proper colorings corresponding to a subspace %
that depends on $G$. %
Hence we may write computational basis states encoding valid (or proper) colorings as $\ket{y}=\ket{y_1y_2\cdots y_n}$, $y_j \in [k]$.

A natural mixer for this encoding is the \textit{ring $XY$-mixer} \cite{hadfield2019quantum,wang2019xy} derived from the Hamiltonian 
$ \widetilde{B}:=\sum_{j=1}^n \widetilde{B}_j$
with 
\begin{equation} \label{eq:BGC}
    \widetilde{B}_j:=\sum_{\ell=1}^k \frac{X_{j,\ell}X_{j,{\ell+1}}+Y_{j,\ell}Y_{j,{\ell+1}}}2, 
\end{equation}
where $(j,\ell)$ labels the qubit encoding the $\ell$th color variable for the $j$th vertex and the label additions are modulo $k$. 
(Here the factor of $1/2$ is included for convenience.) 
We may %
identify left- and right-shift operators $L_j$,$R_j$ which act as $$L_j\ket{y}=\ket{y_1\dots y_{j-1}(y_j-1)y_{j+1}\dots y_n}\,=:\ket{y^{(j,-1)}},$$ 
with $L_j^\ell\ket{y}=L_jL_j\dots L_j\ket{y}=\ket{y^{(j,-\ell)}}$, and
$$R_j\ket{y}=\ket{y_1\dots y_{j-1}(y_j+1)y_{j+1}\dots y_n}\,=:\ket{y^{(j,+1)}},$$ 
with $R_j^\ell\ket{y}=\ket{y^{(j,\ell)}}$. 
As in %
\cite[App. C]{hadfield2019quantum} we may write
$$\widetilde{B}_j = L_j + R_j. $$
Hence in this case, from the structure of the mixing Hamiltonians, for a generic cost function $c(y)$ we define \textit{left and right jth partial differences} %
\begin{eqnarray*}
    \partial_{j,L}c(y)&:=&c(y^{(j,-1)})-c(y),\\
     \partial_{j,R}c(y)&:=&c(y^{(j,1)})-c(y).
\end{eqnarray*}
These functions relate to the action of the %
generalized partial gradients 
$$ \nabla_{\widetilde{B}_j}C\ket{y}=-\left(\partial_{j,L}c(y)L_j+\partial_{j,R}c(y)R_j\right)\ket{y}$$
and hence we have
$$ \widetilde{\nabla}C=-\sum_j (
L_j\partial_{j,L}C+R_j\partial_{j,R}C)
.$$
Thus in this case we see that the gradient corresponds to cost differences with respect to left and right shift of each $k$-dit variable.

The Hamiltonians $\widetilde{B}_j,\widetilde{B}$ together with the diagonal cost Hamiltonian~$C$ and their derived gradients have the important property of mapping valid colorings to valid colorings. 
This property is preserved if we replace the ring structure of the sum in \eqref{eq:BGC} with a fully connected one, i.e., 
\begin{equation} \label{eq:BGCb}
    \widetilde{B}_j=\sum_{\ell<m} \frac{X_{j,\ell}X_{j,m}+Y_{j,\ell}Y_{j,m}}2.
\end{equation}
in which case the above expressions generalize to shifts in $k$-many directions. This property is maintained if we instead consider \textit{sequential mixers} $U(\beta)=\prod U_i(\beta)$ built from \textit{partial mixers} $U_i(\beta)=e^{-i\beta\frac{X_{j,\ell}X_{j,m}+Y_{j,\ell}Y_{j,m}}2}$, in which case the order of the $U_i$ becomes important due to noncommutativity~\cite{hadfield2019quantum} and this will be reflected when applying our framework.

Finally, these QAOA constructions, and our framework, extend to problems where we seek to restrict to proper graph colorings by combining the XY mixers here with control,  analogous to the controlled transverse-field mixers for Max-Independent-Set~\cite{hadfield2019quantum}.  

\sh{
\subsection{Electronic Structure} %
\label{sec:elecStruct}
Here we outline how our framework may be applied to problems beyond combinatorial optimization, in particular the prototypical electronic structure problem of quantum chemistry. A recent work~\cite{kremenetski2021quantum} proposed and studied a generalization of QAOA suitable for this problem, and we follow this approach here. 
This application demonstrates the breadth of our framework and how it may be applied to diverse problems and ans\"atze.
Indeed a wide variety of different ans\"atze have been proposed for %
this problem, see for example \cite{kremenetski2021quantum,romero2018strategies}.

Consider an instance of the electronic structure problem corresponding to an electronic Hamiltonian $C$ given in second quantized form, specified as a quartic polynomial of fermionic creation and anihilation operators with respect to a truncated single-particle basis, for which we seek to find its ground state and energy $E_0$; see \cite{kremenetski2021quantum,romero2018strategies} and references therein for the technical details. For such problems it is standard to assume that one has classically obtained the Hartree-Fock approximate ground state and energy~$E_{0,HF}$, and has suitably rotated the operators (orbitals) such that the Hartree-Fock ground state is encoded on the quantum computer by a single computational basis state $\ket{HF}$, which we take as the QAOA initial state. 

Identifying $C$ as the cost Hamiltonian, a natural choice of mixer is then derived from the Hartree-Fock approximation as $$\widetilde{B}=E_{0,HF} \ket{HF}\bra{HF} + \dots$$ where the terms not shown to the right similarly correspond to the excited states and energies of the Hartree-Fock approximation (i.e., are generated from the possible excitations of the Hartree orbitals). With this choice, the resulting QAOA circuits retain the desirable property of reproducing adiabatic quantum annealing behavior in particular limits as the QAOA depth $p$ becomes large. However, we note that, in contrast to the usual QAOA operators, here the mixing Hamiltonian is diagonal with respect to the computational basis while the cost Hamiltonian is not. Also observe that this QAOA construction satisfies the properties of \remref{rem:GenLems} and so many of our results apply immediately in this setting. 

In particular, the main focus of~\cite{kremenetski2021quantum} is regimes where $p$ is relatively large, and phase and mixing angles are drawn from a fixed schedule such that their magnitudes are relatively small. Hence we jump straight to the generalization of \thmref{thm:smallprecursed}. 
For QAOA$_p$ it then follows that to leading-order we have
\begin{eqnarray*}
     \langle C \rangle_p 
&=& 
 \langle C \rangle_{0}
\,-\, \langle \nabla_C \nabla C \rangle_0
\sum_{1\leq i\leq j}^p \gamma_i \beta_j \,+\, \dots\\
&=& E_{0,HF} + 2(E_{0,HF}\braket{\phi}{\phi} - \bra{\phi}\widetilde{B}\ket{\phi}) \sum_{1\leq i\leq j}^p \gamma_i \beta_j + \dots,
\end{eqnarray*}
where the terms not shown the the right are again proportional to higher-order polynomials in the angles. The second line follows using $\ket{\phi}:=C\ket{HF}$, and using standard properties of Hartree-Fock approximations it %
follows that $E_{0,HF}\braket{\phi}{\phi} - \bra{\phi}\widetilde{B}\ket{\phi} \leq 0$. 
This last property shows that, for positive angles,  to leading order the QAOA ground state energy estimate is guaranteed to improve upon the Hartree-Fock estimate.   

Further analysis may be applied in the spirit of our results obtained for combinatorial optimization. In particular, in future work we will %
leverage this generalization of the framework to help explain universal behavior empirically observed in QAOA performance (phase) diagrams across different problems and settings in~\cite{kremenetski2021quantum}. Finally, we remark that the underlying approach and analysis of our framework is relatively straightforward to extend to particular ansatz beyond QAOA, such as the UCC-SD (unitary coupled cluster singles and doubles) ansatz considered in~\cite{romero2018strategies}. 

}

\section{Discussion and Future Directions}
\label{sec:discussion}

Our framework provides tools for, and its application provides new insights into, 
layered quantum algorithms. 
We focused on our framework's application to quantum alternating operator ans\"atze circuits.
Specifically, we used it to obtain both
new results and more succinct ways of expressing and seeing existing results
for transverse field QAOA$_p$ for general $p$, not just $p=1$. We also discussed two examples of our framework applied to QAOA with more complicated mixing operators.

Our framework applies more broadly than just the cases discussed in this paper, providing a promising approach to obtain both additional results for QAOA and insights into other types of problems and quantum algorithms. 

For example, the framework can be applied to eigenvalue problems related
to physical systems. For instance, the general conjugation formula \eqref{eq:infinitesimalConj}
has been previously invoked \cite[Sec. I]{romero2018strategies} in the context of particular 
parameterized quantum circuits for the electronic structure problem of quantum chemistry. %
 \sh{Similar to but more general than the quantum alternating operator ansatz approach of~\cite{kremenetski2021quantum} considered in \secref{sec:elecStruct},} circuit ans\"atze for this problem %
are often built from 
fermionic creation and annihilation operators and simple reference (e.g., Hartree-Fock) initial 
states. These operators then have %
natural 
interpretations in terms of transitions between occupied orbitals, analogous to the operator-function correspondence of 
\secref{sec:frameworkOverview}, %
suggesting a route for future work to apply and extend our framework and results in detail to %
different ans\"atze for this setting. %

\sh{For QAOA and beyond, questions related to parameter setting may be amenable to the series approaches of our framework, in particular toward a better understanding of parameter concentration and landscape independence phenomena, as well as implications for algorithm performance limitations~\cite{brandao2018fixed,bravyi2019obstacles,marwaha2021bounds,chou2021limitations}. Moreover our framework may be applied in}
both in ideal
and realistic (noisy) settings. %
For example, in cases when noise largely flattens the cost expectation
landscape \cite{marshall2020characterizing,wang2020noise},
our series expressions with only a few terms may be sufficient to
capture key aspects of the behavior.

In addition to further applications to QAOA, our framework can give insights into recently proposed variants such as adaptive QAOA \cite{zhu2020adaptive}
and recursive QAOA (RQAOA) \cite{bravyi2019obstacles}.
The framework and obtained results could also provide insight into quantum annealing (QA) and adiabatic quantum computing (AQC) given the close ties between parameters for QAOA and QA or AQC schedules \cite{brady2021optimal,brady2021behavior,zhou2018quantum,kremenetski2021quantum,wurtz2022counterdiabaticity}, including to cases with advanced drivers \cite{hen2016quantum,hen2016driver,leipold2020constructing}.
One such application of our formalism is designing
more effective mixing operators and ans\"atze, and facilitating quantitative
means for comparing them. Our framework
suggests direct approaches to incorporating cost function information into
mixing operators, such as, for example, using $i\nabla C$ as a mixing
Hamiltonian. %
Indeed, the ansatz
$U(\alpha)\ket{s}=e^{-i\alpha\, i\nabla C}\ket{s}$ has the same
leading-order contribution to $\langle C\rangle$ as QAOA$_1$ for
$\alpha=\gamma\beta$ when using the transverse-field mixer.  Another
possibility is to incorporate measurements or expectation
values of cost gradient observables, such as $i\nabla C$ or $\nabla_C \nabla
C$, directly into the algorithm or parameter search procedure.
\thmref{thm:costExpecHeis}
shows that the cost expectation of a QAOA$_p$ circuit can be computed or
approximated in terms of expectation values taken for a corresponding circuit
with $p-1$ levels of the cost Hamiltonian and its
cost gradient operators, which could be estimated classically or via a
quantum computer.
In this vein, a recent paper \cite{magann2021feedback} proposes an adaptive
parameter setting strategy that involves repeatedly obtaining estimates of
$\langle i\nabla C \rangle$.

As a practical matter, evaluating series approximations derived from our
framework is straightforward but can become tedious when one
desires to include many terms.
In these situations, computer algebra systems or numerical software can readily
compute the required expressions. Moreover, manipulating the resulting series
can improve the accuracy for a fixed number of terms. For example, instead of
truncating the series expansion at a finite number of terms, giving a
polynomial approximation, often better numerical approximations arise from Pade
approximants, which are rational functions whose coefficients match the
truncated Taylor series~\cite{orszag1978advanced} (cf.~\figref{fig:ramp}).

These potential applications of our formalism highlight its general
applicability to understanding algorithms based on layered quantum circuits,
including identifying new operators and ans\"atze matched to the
structure of specific problems. 

\section*{Acknowledgments}
We thank the members of the Quantum Artificial Intelligence Lab (QuAIL) for valuable feedback and discussions. 
We are grateful for support from NASA Ames Research Center and the Defense Advanced Research Projects Agency (DARPA) via IAA8839 annex 114. S.H. and T.H. are supported by the NASA Academic Mission Services, Contract
No. NNA16BD14C. 

\addcontentsline{toc}{section}{References}
\bibliographystyle{ieeetr}
\bibliography{bib}%

\begin{thebibliography}{10}

\bibitem{hogg2000quantum}
T.~Hogg, ``Quantum search heuristics,'' {\em Physical Review A}, vol.~61,
  no.~5, p.~052311, 2000.

\bibitem{hogg2000quantumb}
T.~Hogg and D.~Portnov, ``Quantum optimization,'' {\em Information Sciences},
  vol.~128, no.~3-4, pp.~181--197, 2000.

\bibitem{Farhi2014}
E.~Farhi, J.~Goldstone, and S.~Gutmann, ``A quantum approximate optimization
  algorithm,'' {\em arXiv:1411.4028}, 2014.

\bibitem{Hadfield17_QApprox}
S.~Hadfield, Z.~Wang, B.~O'Gorman, E.~Rieffel, D.~Venturelli, and R.~Biswas,
  ``Quantum approximate optimization with hard and soft constraints,'' in {\em
  Proceedings of the Second International Workshop on Post Moore's Era
  Supercomputing (PMES)}, pp.~15--21, 2017.

\bibitem{hadfield2019quantum}
S.~Hadfield, Z.~Wang, B.~O'Gorman, E.~G. Rieffel, D.~Venturelli, and R.~Biswas,
  ``From the quantum approximate optimization algorithm to a quantum
  alternating operator ansatz,'' {\em Algorithms}, vol.~12, no.~2, p.~34, 2019.

\bibitem{farhi2014quantum}
E.~Farhi, J.~Goldstone, and S.~Gutmann, ``A quantum approximate optimization
  algorithm applied to a bounded occurrence constraint problem,'' {\em
  arXiv:1412.6062}, 2014.

\bibitem{farhi2016quantum}
E.~Farhi and A.~W. Harrow, ``Quantum supremacy through the quantum approximate
  optimization algorithm,'' {\em arXiv preprint arXiv:1602.07674}, 2016.

\bibitem{Wecker2016training}
D.~Wecker, M.~B. Hastings, and M.~Troyer, ``{Training a quantum optimizer},''
  {\em Physical Review A}, vol.~94, no.~2, p.~22309, 2016.

\bibitem{Shabani16}
Z.~C. Yang, A.~Rahmani, A.~Shabani, H.~Neven, and C.~Chamon, ``{Optimizing
  variational quantum algorithms using Pontryagin's minimum principle},'' {\em
  Physical Review X}, vol.~7, no.~2, 2017.

\bibitem{lin2016performance}
C.~Y.-Y. Lin and Y.~Zhu, ``Performance of {QAOA} on typical instances of
  constraint satisfaction problems with bounded degree,'' {\em
  arXiv:1601.01744}, 2016.

\bibitem{jiang2017near}
Z.~Jiang, E.~G. Rieffel, and Z.~Wang, ``Near-optimal quantum circuit for
  {G}rover's unstructured search using a transverse field,'' {\em Physical
  Review A}, vol.~95, no.~6, p.~062317, 2017.

\bibitem{farhi2017quantum}
E.~Farhi, J.~Goldstone, S.~Gutmann, and H.~Neven, ``{Quantum Algorithms for
  Fixed Qubit Architectures},'' {\em arXiv:1703.06199}, 2017.

\bibitem{verdon2017quantum}
G.~Verdon, M.~Broughton, and J.~Biamonte, ``A quantum algorithm to train neural
  networks using low-depth circuits,'' {\em arXiv preprint arXiv:1712.05304},
  2017.

\bibitem{wang2018quantum}
Z.~Wang, S.~Hadfield, Z.~Jiang, and E.~G. Rieffel, ``Quantum approximate
  optimization algorithm for {MaxCut}: {A} fermionic view,'' {\em Physical
  Review A}, vol.~97, no.~2, p.~022304, 2018.

\bibitem{crooks2018performance}
G.~E. Crooks, ``Performance of the quantum approximate optimization algorithm
  on the maximum cut problem,'' {\em arXiv:1811.08419}, 2018.

\bibitem{mcclean2018barren}
J.~R. McClean, S.~Boixo, V.~N. Smelyanskiy, R.~Babbush, and H.~Neven, ``{Barren
  plateaus in quantum neural network training landscapes},'' {\em Nature
  communications}, vol.~9, no.~1, p.~4812, 2018.

\bibitem{hadfield2018thesis}
S.~Hadfield, ``Quantum algorithms for scientific computing and approximate
  optimization,'' {\em PhD thesis. arXiv:1805.03265}, 2018.

\bibitem{pichler2018quantum}
H.~Pichler, S.-T. Wang, L.~Zhou, S.~Choi, and M.~D. Lukin, ``Quantum
  optimization for maximum independent set using {R}ydberg atom arrays,'' {\em
  arXiv:1808.10816}, 2018.

\bibitem{zhou2018quantum}
L.~Zhou, S.-T. Wang, S.~Choi, H.~Pichler, and M.~D. Lukin, ``Quantum
  approximate optimization algorithm: Performance, mechanism, and
  implementation on near-term devices,'' {\em Physical Review X}, vol.~10,
  no.~2, p.~021067, 2020.

\bibitem{lloyd2018quantum}
S.~Lloyd, ``Quantum approximate optimization is computationally universal,''
  {\em arXiv:1812.11075}, 2018.

\bibitem{brandao2018fixed}
F.~G. Brandao, M.~Broughton, E.~Farhi, S.~Gutmann, and H.~Neven, ``For fixed
  control parameters the quantum approximate optimization algorithm's objective
  function value concentrates for typical instances,'' {\em arXiv preprint
  arXiv:1812.04170}, 2018.

\bibitem{niu2019optimizing}
M.~Y. Niu, S.~Lu, and I.~L. Chuang, ``Optimizing {QAOA}: {S}uccess probability
  and runtime dependence on circuit depth,'' {\em arXiv:1905.12134}, 2019.

\bibitem{bapat2018bang}
A.~Bapat and S.~Jordan, ``Bang-bang control as a design principle for classical
  and quantum optimization algorithms,'' {\em arXiv:1812.02746}, 2018.

\bibitem{guerreschi2019qaoa}
G.~G. Guerreschi and A.~Matsuura, ``{QAOA for Max-Cut requires hundreds of
  qubits for quantum speed-up},'' {\em Scientific reports}, vol.~9, no.~1,
  p.~6903, 2019.

\bibitem{marsh2019quantum}
S.~Marsh and J.~Wang, ``A quantum walk-assisted approximate algorithm for
  bounded {NP} optimisation problems,'' {\em Quantum Information Processing},
  vol.~18, no.~3, p.~61, 2019.

\bibitem{verdon2019cvqaoa}
G.~Verdon, J.~M. Arrazola, K.~Br{\'a}dler, and N.~Killoran, ``A quantum
  approximate optimization algorithm for continuous problems,'' {\em
  arXiv:1902.00409}, 2019.

\bibitem{akshay2019reachability}
V.~Akshay, H.~Philathong, M.~E. Morales, and J.~D. Biamonte, ``Reachability
  deficits in quantum approximate optimization,'' {\em Physical review
  letters}, vol.~124, no.~9, p.~090504, 2020.

\bibitem{morales2019universality}
M.~E. Morales, J.~Biamonte, and Z.~Zimbor{\'a}s, ``On the universality of the
  quantum approximate optimization algorithm,'' {\em Quantum Information
  Processing}, vol.~19, no.~9, pp.~1--26, 2020.

\bibitem{farhi2019quantum}
E.~Farhi, J.~Goldstone, S.~Gutmann, and L.~Zhou, ``The quantum approximate
  optimization algorithm and the {S}herrington-{K}irkpatrick model at infinite
  size,'' {\em Quantum}, vol.~6, p.~759, 2022.

\bibitem{wang2019xy}
Z.~Wang, N.~C. Rubin, J.~M. Dominy, and E.~G. Rieffel, ``{XY} mixers:
  {A}nalytical and numerical results for the quantum alternating operator
  ansatz,'' {\em Physical Review A}, vol.~101, no.~1, p.~012320, 2020.

\bibitem{hastings2019classical}
M.~B. Hastings, ``Classical and quantum bounded depth approximation
  algorithms,'' {\em arXiv preprint arXiv:1905.07047}, 2019.

\bibitem{szegedy2019qaoa}
M.~Szegedy, ``What do {QAOA} energies reveal about graphs?,'' {\em arXiv
  preprint arXiv:1912.12277}, 2019.

\bibitem{bravyi2019obstacles}
S.~Bravyi, A.~Kliesch, R.~Koenig, and E.~Tang, ``Obstacles to variational
  quantum optimization from symmetry protection,'' {\em Physical Review
  Letters}, vol.~125, no.~26, p.~260505, 2020.

\bibitem{marshall2020characterizing}
J.~Marshall, F.~Wudarski, S.~Hadfield, and T.~Hogg, ``Characterizing local
  noise in {QAOA} circuits,'' {\em IOP SciNotes}, vol.~1, no.~2, p.~025208,
  2020.

\bibitem{farhi2020quantum}
E.~Farhi, D.~Gamarnik, and S.~Gutmann, ``The quantum approximate optimization
  algorithm needs to see the whole graph: {A} typical case,'' {\em arXiv
  preprint arXiv:2004.09002}, 2020.

\bibitem{farhi2020quantumb}
E.~Farhi, D.~Gamarnik, and S.~Gutmann, ``The quantum approximate optimization
  algorithm needs to see the whole graph: {W}orst case examples,'' {\em arXiv
  preprint arXiv:2005.08747}, 2020.

\bibitem{shaydulin2020classical}
R.~Shaydulin, S.~Hadfield, T.~Hogg, and I.~Safro, ``Classical symmetries and
  the quantum approximate optimization algorithm,'' {\em Quantum Information
  Processing}, vol.~20, no.~11, pp.~1--28, 2021.

\bibitem{ozaeta2020expectation}
A.~Ozaeta, W.~van Dam, and P.~L. McMahon, ``Expectation values from the
  single-layer quantum approximate optimization algorithm on {I}sing
  problems,'' {\em Quantum Science and Technology}, 2022.

\bibitem{wurtz2020bounds}
J.~Wurtz and P.~Love, ``Maxcut quantum approximate optimization algorithm
  performance guarantees for $p>1$,'' {\em Phys. Rev. A}, vol.~103, p.~042612,
  Apr 2021.

\bibitem{stollenwerk2020toward}
T.~Stollenwerk, S.~Hadfield, and Z.~Wang, ``Toward quantum gate-model
  heuristics for real-world planning problems,'' {\em IEEE Transactions on
  Quantum Engineering}, vol.~1, pp.~1--16, 2020.

\bibitem{streif2020quantum}
M.~Streif, M.~Leib, F.~Wudarski, E.~Rieffel, and Z.~Wang, ``Quantum algorithms
  with local particle-number conservation: Noise effects and error
  correction,'' {\em Physical Review A}, vol.~103, no.~4, p.~042412, 2021.

\bibitem{streif2020training}
M.~Streif and M.~Leib, ``Training the quantum approximate optimization
  algorithm without access to a quantum processing unit,'' {\em Quantum Science
  and Technology}, vol.~5, no.~3, p.~034008, 2020.

\bibitem{marwaha2021local}
K.~Marwaha, ``Local classical {MAX}-{CUT} algorithm outperforms {$p=2$} {QAOA}
  on high-girth regular graphs,'' {\em {Quantum}}, vol.~5, p.~437, 2021.

\bibitem{barak2021classical}
B.~Barak and K.~Marwaha, ``Classical algorithms and quantum limitations for
  maximum cut on high-girth graphs,'' in {\em 13th Innovations in Theoretical
  Computer Science Conference (ITCS 2022)} (M.~Braverman, ed.), vol.~215 of
  {\em Leibniz International Proceedings in Informatics (LIPIcs)}, (Dagstuhl,
  Germany), pp.~14:1--14:21, Schloss Dagstuhl -- Leibniz-Zentrum f{\"u}r
  Informatik, 2022.

\bibitem{marwaha2021bounds}
K.~Marwaha and S.~Hadfield, ``Bounds on approximating {Max $k$XOR} with quantum
  and classical local algorithms,'' {\em Quantum}, vol.~6, p.~757, 2022.

\bibitem{chou2021limitations}
C.-N. Chou, P.~J. Love, J.~S. Sandhu, and J.~Shi, ``Limitations of local
  quantum algorithms on maximum cuts of sparse hypergraphs and beyond,'' {\em
  arXiv preprint arXiv:2108.06049}, 2021.

\bibitem{harrigan2021quantum}
M.~P. Harrigan {\em et~al.}, ``Quantum approximate optimization of non-planar
  graph problems on a planar superconducting processor,'' {\em Nature Physics},
  pp.~1--5, 2021.

\bibitem{kremenetski2021quantum}
V.~Kremenetski, T.~Hogg, S.~Hadfield, S.~J. Cotton, and N.~M. Tubman, ``Quantum
  alternating operator ansatz {(QAOA)} phase diagrams and applications for
  quantum chemistry,'' {\em arXiv preprint arXiv:2108.13056}, 2021.

\bibitem{brady2021optimal}
L.~T. Brady, C.~L. Baldwin, A.~Bapat, Y.~Kharkov, and A.~V. Gorshkov, ``Optimal
  protocols in quantum annealing and quantum approximate optimization algorithm
  problems,'' {\em Physical Review Letters}, vol.~126, no.~7, p.~070505, 2021.

\bibitem{brady2021behavior}
L.~T. Brady, L.~Kocia, P.~Bienias, A.~Bapat, Y.~Kharkov, and A.~V. Gorshkov,
  ``Behavior of analog quantum algorithms,'' {\em arXiv preprint
  arXiv:2107.01218}, 2021.

\bibitem{wurtz2022counterdiabaticity}
J.~Wurtz and P.~J. Love, ``Counterdiabaticity and the quantum approximate
  optimization algorithm,'' {\em Quantum}, vol.~6, p.~635, 2022.

\bibitem{mcclean2020low}
J.~R. McClean, M.~P. Harrigan, M.~Mohseni, N.~C. Rubin, Z.~Jiang, S.~Boixo,
  V.~N. Smelyanskiy, R.~Babbush, and H.~Neven, ``Low-depth mechanisms for
  quantum optimization,'' {\em PRX Quantum}, vol.~2, no.~3, p.~030312, 2021.

\bibitem{bravyi2019classical}
S.~Bravyi, D.~Gosset, and R.~Movassagh, ``Classical algorithms for quantum mean
  values,'' {\em Nature Physics}, pp.~1--5, 2021.

\bibitem{zhu2020adaptive}
L.~Zhu, H.~L. Tang, G.~S. Barron, F.~Calderon-Vargas, N.~J. Mayhall, E.~Barnes,
  and S.~E. Economou, ``Adaptive quantum approximate optimization algorithm for
  solving combinatorial problems on a quantum computer,'' {\em Physical Review
  Research}, vol.~4, no.~3, p.~033029, 2022.

\bibitem{boros2002pseudo}
E.~Boros and P.~L. Hammer, ``Pseudo-{B}oolean optimization,'' {\em Discrete
  applied mathematics}, vol.~123, no.~1-3, pp.~155--225, 2002.

\bibitem{hastings2019duality}
M.~B. Hastings, ``Duality in quantum quenches and classical approximation
  algorithms: pretty good or very bad,'' {\em Quantum}, vol.~3, p.~201, 2019.

\bibitem{callison2020energetic}
A.~Callison, M.~Festenstein, J.~Chen, L.~Nita, V.~Kendon, and N.~Chancellor,
  ``An energetic perspective on rapid quenches in quantum annealing,'' {\em PRX
  Quantum}, vol.~2, 2021.

\bibitem{hadfield2018representation}
S.~Hadfield, ``On the representation of {B}oolean and real functions as
  {H}amiltonians for quantum computing,'' {\em ACM Transactions on Quantum
  Computing}, vol.~2, no.~4, pp.~1--21, 2021.

\bibitem{wilson2019optimizing}
M.~Wilson, R.~Stromswold, F.~Wudarski, S.~Hadfield, N.~M. Tubman, and E.~G.
  Rieffel, ``Optimizing quantum heuristics with meta-learning,'' {\em Quantum
  Machine Intelligence}, vol.~3, no.~1, pp.~1--14, 2021.

\bibitem{woit2017quantum}
P.~Woit, {\em Quantum theory, groups and representations}.
\newblock Springer, 2017.

\bibitem{hatano2005finding}
N.~Hatano and M.~Suzuki, ``Finding exponential product formulas of higher
  orders,'' in {\em Quantum annealing and other optimization methods},
  pp.~37--68, Springer, 2005.

\bibitem{crooks2019gradients}
G.~E. Crooks, ``Gradients of parameterized quantum gates using the
  parameter-shift rule and gate decomposition,'' {\em arXiv preprint
  arXiv:1905.13311}, 2019.

\bibitem{alekseev2016higher}
I.~Alekseev and S.~O. Ivanov, ``Higher {J}acobi identities,'' {\em
  arXiv:1604.05281}, 2016.

\bibitem{childs2019theory}
A.~M. Childs, Y.~Su, M.~C. Tran, N.~Wiebe, and S.~Zhu, ``Theory of {T}rotter
  error with commutator scaling,'' {\em Physical Review X}, vol.~11, no.~1,
  p.~011020, 2021.

\bibitem{gottesman1998heisenberg}
D.~Gottesman, ``The {H}eisenberg representation of quantum computers,'' {\em
  arXiv preprint quant-ph/9807006}, 1998.

\bibitem{hall2013quantum}
B.~C. Hall, {\em Quantum theory for mathematicians}, vol.~267.
\newblock Springer, 2013.

\bibitem{preskill1997lecture}
J.~Preskill, ``Lecture notes for ph219/cs219: Quantum information,'' {\em
  Accesible via http://www. theory. caltech. edu/people/preskill/ph229},
  vol.~2018, 1997.

\bibitem{evenbly2009algorithms}
G.~Evenbly and G.~Vidal, ``Algorithms for entanglement renormalization,'' {\em
  Physical Review B}, vol.~79, no.~14, p.~144108, 2009.

\bibitem{bravyi2018quantum}
S.~Bravyi, D.~Gosset, and R.~K{\"o}nig, ``Quantum advantage with shallow
  circuits,'' {\em Science}, vol.~362, no.~6412, pp.~308--311, 2018.

\bibitem{shehab2019noise}
O.~Shehab, I.~H. Kim, N.~H. Nguyen, K.~Landsman, C.~H. Alderete, D.~Zhu,
  C.~Monroe, and N.~M. Linke, ``Noise reduction using past causal cones in
  variational quantum algorithms,'' {\em arXiv preprint arXiv:1906.00476},
  2019.

\bibitem{hastings2008observations}
M.~Hastings, ``Observations outside the light cone: Algorithms for
  nonequilibrium and thermal states,'' {\em Physical Review B}, vol.~77,
  no.~14, p.~144302, 2008.

\bibitem{orszag1978advanced}
S.~Orszag and C.~M. Bender, {\em Advanced mathematical methods for scientists
  and engineers}.
\newblock McGraw-Hill New York, NY, USA, 1978.

\bibitem{kirkpatrick94}
S.~Kirkpatrick and B.~Selman, ``Critical behavior in the satisfiability of
  random {B}oolean expressions,'' {\em Science}, vol.~264, pp.~1297--1301,
  1994.

\bibitem{khot2007optimal}
S.~Khot, G.~Kindler, E.~Mossel, and R.~O’Donnell, ``Optimal inapproximability
  results for {MAX-CUT} and other 2-variable {CSP}s?,'' {\em SIAM Journal on
  Computing}, vol.~37, no.~1, pp.~319--357, 2007.

\bibitem{austrin2007balanced}
P.~Austrin, ``Balanced {Max 2-Sat} might not be the hardest,'' in {\em
  Proceedings of the thirty-ninth annual ACM symposium on Theory of computing},
  pp.~189--197, 2007.

\bibitem{santra2014max}
S.~Santra, G.~Quiroz, G.~Ver~Steeg, and D.~A. Lidar, ``{MAX 2-SAT} with up to
  108 qubits,'' {\em New Journal of Physics}, vol.~16, no.~4, p.~045006, 2014.

\bibitem{slater1966generalized}
L.~J. Slater, {\em Generalized hypergeometric functions}.
\newblock Cambridge Univ. Press, 1966.

\bibitem{shaydulin2021exploiting}
R.~Shaydulin and S.~M. Wild, ``Exploiting symmetry reduces the cost of training
  {QAOA},'' {\em IEEE Transactions on Quantum Engineering}, 2021.

\bibitem{ausiello2012complexity}
G.~Ausiello, P.~Crescenzi, G.~Gambosi, V.~Kann, A.~Marchetti-Spaccamela, and
  M.~Protasi, {\em Complexity and approximation: Combinatorial optimization
  problems and their approximability properties}.
\newblock Springer Science \& Business Media, 2012.

\bibitem{barenco1995elementary}
A.~Barenco, C.~H. Bennett, R.~Cleve, D.~P. DiVincenzo, N.~Margolus, P.~Shor,
  T.~Sleator, J.~A. Smolin, and H.~Weinfurter, ``Elementary gates for quantum
  computation,'' {\em Physical review A}, vol.~52, no.~5, p.~3457, 1995.

\bibitem{nielsen2010quantum}
M.~A. Nielsen and I.~L. Chuang, {\em Quantum Computation and Quantum
  Information}.
\newblock Cambridge University Press, 2010.

\bibitem{romero2018strategies}
J.~Romero, R.~Babbush, J.~R. McClean, C.~Hempel, P.~J. Love, and
  A.~Aspuru-Guzik, ``Strategies for quantum computing molecular energies using
  the unitary coupled cluster ansatz,'' {\em Quantum Science and Technology},
  vol.~4, no.~1, p.~014008, 2018.

\bibitem{wang2020noise}
S.~Wang, E.~Fontana, M.~Cerezo, K.~Sharma, A.~Sone, L.~Cincio, and P.~J. Coles,
  ``Noise-induced barren plateaus in variational quantum algorithms,'' {\em
  Nature communications}, vol.~12, no.~1, pp.~1--11, 2021.

\bibitem{hen2016quantum}
I.~Hen and F.~M. Spedalieri, ``Quantum annealing for constrained
  optimization,'' {\em Physical Review Applied}, vol.~5, no.~3, p.~034007,
  2016.

\bibitem{hen2016driver}
I.~Hen and M.~S. Sarandy, ``Driver hamiltonians for constrained optimization in
  quantum annealing,'' {\em Physical Review A}, vol.~93, no.~6, p.~062312,
  2016.

\bibitem{leipold2020constructing}
H.~Leipold and F.~M. Spedalieri, ``Constructing driver {H}amiltonians for
  optimization problems with linear constraints,'' {\em Quantum Science and
  Technology}, vol.~7, no.~1, p.~015013, 2021.

\bibitem{magann2021feedback}
A.~B. Magann, K.~M. Rudinger, M.~D. Grace, and M.~Sarovar, ``Feedback-based
  quantum optimization,'' {\em arXiv preprint arXiv:2103.08619}, 2021.

\end{thebibliography}

\newcommand{\changelocaltocdepth}[1]{%
  \addtocontents{toc}{\protect\setcounter{tocdepth}{#1}}%
  \setcounter{tocdepth}{#1}%
}
\changelocaltocdepth{1}

\begin{appendix}
\section{Sum-of-paths perspective for QAOA} \label{app:sumOfPaths} %
\sh{ 

Here we briefly show how QAOA 
probabilities 
(and similarly cost expectation values)
can be %
expressed as sums over sequences of bitstrings (\lq\lq paths\rq\rq), each weighted by functions of the Hamming distances and cost function values along the path. 
This ``sum-of-paths" perspective complements the main results %
of our framework and provides additional %
intuition, though the resulting expressions %
are no longer power series in the algorithm parameters. Indeed several results of the paper were originally derived using a sum-of-paths approach, but with significantly more complicated proofs than those obtained with our QAOA framework; on the other hand, some tasks are convenient in the sum-of-paths formulation such as working directly with QAOA state amplitudes. 

Consider the QAOA mixing operator matrix elements $u_{d(x,y)}(\beta)=\bra{x}U_M(\beta)\ket{y}=\cos^n(\beta)(-i\tan\beta)^{d(x,y)}$ given in \eqref{eq:mixingmatrixelements} for Hamming distance $d(x,y)$, which satisfy $u_d^\dagger = (-1)^d u_d$. 
For QAOA$_1$, using the computational basis decomposition %
$I=%
\sum_y\ket{y}\bra{y}$
the probability amplitude corresponding to each string 
$x\in\{0,1\}^n$ is 
\begin{equation} \label{eq:amp1sumpaths}
    \bra{x}\ket{\gamma \beta}
= \sum_y\bra{x}U_M(\beta)\ket{y}\bra{y}U_P(\gamma)\ket{s}=
\frac{1}{\sqrt{2^n}}\sum_y %
u_{d(x,y)}(\beta)
e^{-i\gamma c(y)},
\end{equation}
i.e., the amplitude 
$\bra{x}\ket{\gamma \beta}$ may be %
expressed as sum of \lq\lq single-step paths\rq\rq (from each possible $y\in\{0,1\}^n$, to $x$) with %
each path contributing $u_{d(x,y)}(\beta)
e^{-i\gamma c(y)}$ times its initial amplitude $\bra{y}\ket{s}=1/\sqrt{2^n}$.
The probability $P_1(x)=|\bra{x}\ket{\gamma \beta}|^2$ of measuring $x$ is then %
\begin{eqnarray} \label{eq:prob1sumpaths}
P_1(x)&=&%
\frac1{2^n} \sum_{y,z} u^\dagger_{d(x,z)}u_{d(x,y)}
e^{-i\gamma (c(y)-c(z))}\nonumber\\
&=&\frac1{2^n} \sum_{y,z} \cos^{2n}\beta (\tan\beta)^{d(x,y)+d(x,z)}i^{d(x,z)-d(x,y)}
e^{-i\gamma (c(y)-c(z))}\nonumber\\
&=&\frac{\cos^{2n}\beta}{2^n} \sum_{y,z} (\tan\beta)^{d(x,y)+d(x,z)} i^{d(x,z)-d(x,y)} \left(
\cos(\gamma(c(y)-c(z))) - i \sin(\gamma(c(y)-c(z)))\right)\nonumber\\
&=&\frac{\cos^{2n}\beta}{2^n}\, \bigg( \sum_{\substack{y,z \\ d(x,z)-d(x,y)\textrm{ even}}} (\tan\beta)^{d(x,y)+d(x,z)} (-1)^{\frac{d(x,z)-d(x,y)}2}
\cos(\gamma(c(y)-c(z))) \nonumber\\ 
& &\,\,\,+  \sum_{\substack{y,z \\ d(x,z)-d(x,y)\textrm{ odd}}} (\tan\beta)^{d(x,y)+d(x,z)} (-1)^{\frac{d(x,z)-d(x,y)-1}2} \sin(\gamma(c(y)-c(z)))\bigg)
\end{eqnarray}
where the last equation follows because  $\cos$ and $\sin$ are even and odd functions, respectively, and shows explicitly that all contributing terms are real valued, but with differing signs that may lead to term cancellation (i.e., paths may \lq\lq interfere\rq\rq). The signs depend on the %
algorithm parameters, the difference in cost function between $y$ and $z$, and their 
Hamming distances of from $x$. 

Thus we see that the probability $P_1(x)$ corresponds to a sum of two-step paths ($z$ to $x$ to $y$) %
with weights as shown in \eqref{eq:prob1sumpaths}. The cost expectation $\langle C\rangle_1=\sum_x c(x)P_1(x)$ then follows from additionally weighting all paths %
in $P_1(x)$ by $c(x)$, and then summing over all $x\in\{0,1\}$. Furthermore, Taylor expansions of \eqref{eq:prob1sumpaths} %
reproduce the leading-order contributions of \thmref{thm1:smallAngles}, though with considerably more effort than required for our proof.

It is straightforward %
to extend the sum-of-paths perspective to QAOA$_p$ with $p>1$, at the expense of additional notation. In this case probabilities %
correspond to paths of length~$2p$.  
Let $u^{(j)}_{d(x,z)}:=u_{d(x,z)}(\gamma_j,\beta_j)$, and let %
$Q_j(x)$ denote the amplitude of computational basis state $\ket{x}$ after $j$ layers of QAOA have been applied. Clearly, %
$Q_0(x)= 1/\sqrt{2^n}$, and %
$Q_1(x)=\bra{x}\ket{\gamma\beta}$ is given in \eqref{eq:amp1sumpaths}. Generalizing \eqref{eq:amp1sumpaths} 
for a QAOA$_p$ state gives %
\begin{equation}
  Q_p(x) = \sum_z u^{(p)}_{d(x,z)}  e^{-i\gamma_p c(z)}Q_{p-1}(z),
\end{equation}
which may be expanded recursively to give 
\begin{equation} \label{eq:ampgenp}
  Q_p(x) = \frac1{\sqrt{2^n}}\sum_{z_1,z_2,\dots,z_p} 
u^{(1)}_{d(z_1,z_2)} u^{(2)}_{d(z_2,z_3)} \dots u^{(p)}_{d(z_{p},x)}
e^{-i (\gamma_1c(z_1) + \gamma_2c(z_2) + \dots \gamma_p c(z_p))}.
\end{equation}
Hence computational basis amplitudes after $p$ levels of QAOA are likewise given by paths consisting of $p$-tuples of intermediate bitstrings, weighted by 
phase factors corresponding to cost function evaluations %
and mixing matrix elements corresponding to Hamming weight differences %
along the path.

Expressions for the probability and expected cost may then be obtained from $P_p(x)=|Q_p(x)|^2=Q_p^\dagger(x) Q_p(x)$ as above. While in principle this approach can also lead to results such as \thmref{thm:allanglessmall}, our %
proof provides a more succinct route. 
Nevertheless, the sum-of-paths perspective gives a different but complementary perspective into QAOA, and the general approach outlined here may be extended to  
a wider variety of ans\"atze, %
such as QAOA with %
generalized mixers,
encodings, and initial states. %
}

\section{%
Initial state expectation values}
\label{app:expecVals}
Here we extend the results of \secref{sec:expecVals}. 
For a linear operator $A$ on $n$ qubits we write its computational basis representation $A=\sum_{x,y}a_{xy}\ket{x}\bra{y}$ with matrix elements
$ a_{xy}:= \bra{x}A\ket{y}\,\in \complex$
for $x,y\in\{0,1\}^n$. The operator $A$ is traceless if and only if $\sum_x a_{xx} =0$. %

\begin{lem}  \label{lem:expec1}
Let $C$ be a cost Hamiltonian and $A$ a cost gradient as in (\ref{eq:costGradOpGen})
such that $[A,C]\neq 0$ and 
$A\ket{s} = \frac{1}{\sqrt{2^n}}\sum_x a(x) \ket{x}$ for a real function $a(x)$. %

Then there exists a real 
function $g(x)$ satisfying $\nabla_C A \ket{s} = \frac{1}{\sqrt{2^n}}\sum_x g(x) \ket{x}$, and:
\begin{itemize}
\item If %
$A$ is skew-adjoint $A^\dagger = -A$ then $\;\sum_x a(x) = 0$, 
\begin{equation*}
   \langle \nabla_C A \rangle_0  = \frac{2}{2^n}\sum_x c(x)a(x)= \frac{1}{2^n}\sum_x g(x),
    \;\;\;\;\; \text{ and } \;\;\;\;\; \langle \nabla^2_C A \rangle_0  =0. 
 \end{equation*}

\item Else if instead $A$ is self-adjoint $A^\dagger =A$ then  $\; \langle \nabla_C A \rangle_0 = \sum_x g(x) =0,\;$ and 
\begin{eqnarray*}
    \langle \nabla^2_C A \rangle_0 &=&  \frac1{2^n}\sum_{x\neq y} (c(x)-c(y))^2 a_{xy} \,=\, \frac{2}{2^n}\sum_x c(x)g(x)\\
    &=&\frac{2}{2^n}\sum_x c^2(x) a(x) -\frac2{2^n}\sum_{x,y}c(x)c(y)a_{xy}.
\end{eqnarray*}
\end{itemize}
\end{lem}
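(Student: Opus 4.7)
The plan is to reduce everything to elementary matrix-element computations, using that $C$ is diagonal and that the (anti)symmetry of $a_{xy}$ follows from the adjointness established in \lemref{lem:genGrad}. First I would write $A = \sum_{x,y} a_{xy}\,\ket{x}\bra{y}$ with $a_{xy}\in\reals$ by \lemref{lem:genGrad}. Since $C\ket{x}=c(x)\ket{x}$, the matrix elements of $\nabla_C A = CA - AC$ are immediate: $(\nabla_C A)_{xy} = (c(x)-c(y))\,a_{xy}$. Acting on $\ket{s}$ and reading off the coefficient of $\ket{x}$ gives
\begin{equation*}
g(x) \,=\, \sum_y (c(x)-c(y))\,a_{xy} \,=\, c(x)\,a(x)\,-\,\sum_y c(y)\,a_{xy},
\end{equation*}
which is real (all ingredients are real), and uses the observation $\sum_y a_{xy} = \sqrt{2^n}\,\bra{x}A\ket{s} = a(x)$. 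This establishes the existence and form of $g$.

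Next I would evaluate $\langle \nabla_C A\rangle_0 = \tfrac{1}{2^n}\sum_{x,y}(c(x)-c(y))\,a_{xy}$ in the two cases. The summand $(c(x)-c(y))$ is antisymmetric in $(x,y)$, so symmetry of $a_{xy}=a_{yx}$ (self-adjoint case) forces the sum to vanish, while antisymmetry $a_{xy}=-a_{yx}$ (skew-adjoint case) makes the summand symmetric and a change of dummy index gives $2\sum_{x,y}c(x)a_{xy} = 2\sum_x c(x)a(x)$. The identity $\sum_x a(x) = \sum_{x,y}a_{xy}=0$ in the skew-adjoint case also falls out immediately from $a_{xy}+a_{yx}=0$. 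The equality $\langle\nabla_C A\rangle_0 = \tfrac{1}{2^n}\sum_x g(x)$ is just the definition of initial state expectation.

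For $\nabla_C^2 A$, the same diagonal action yields $(\nabla_C^2 A)_{xy} = (c(x)-c(y))^2\,a_{xy}$, and hence $\langle \nabla_C^2 A\rangle_0 = \tfrac{1}{2^n}\sum_{x,y}(c(x)-c(y))^2\,a_{xy}$. Now the kernel $(c(x)-c(y))^2$ is symmetric in $(x,y)$, reversing the roles of the two cases: the skew-adjoint case vanishes by antisymmetry of $a_{xy}$, while the self-adjoint case survives and, restricting to $x\neq y$ (since the kernel kills the diagonal), gives the first stated expression. Expanding $(c(x)-c(y))^2 = c(x)^2 + c(y)^2 - 2c(x)c(y)$ and using $\sum_y a_{xy}=a(x)$ (valid for self-adjoint $A$ as well) yields the last expression $\tfrac{2}{2^n}\sum_x c(x)^2 a(x) - \tfrac{2}{2^n}\sum_{x,y}c(x)c(y)a_{xy}$, and multiplying $c(x)g(x)$ and summing gives the middle expression $\tfrac{2}{2^n}\sum_x c(x)g(x)$.

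No single step looks difficult: the whole argument is bookkeeping once the (anti)symmetry of $a_{xy}$ and the realness granted by \lemref{lem:genGrad} are invoked. The only mild subtlety will be making sure that the identity $\sum_y a_{xy}=a(x)$, which seems to require something about the action of $A$ on $\ket{s}$, is applied consistently in both the skew- and self-adjoint cases (and that one does not accidentally use $\sum_x a_{xy}=a(y)$ where only the self-adjoint symmetry permits it). With care in separating these row/column sums, the proof is straightforward.
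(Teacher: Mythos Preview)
Your proposal is correct and follows essentially the same approach as the paper's proof: both compute $g(x)=\sum_y(c(x)-c(y))a_{xy}$ from the diagonal action of $C$ and then exploit the (anti)symmetry of $a_{xy}$ inherited from \lemref{lem:genGrad} to evaluate the double sums. Your organization is arguably cleaner—you compute the matrix elements $(\nabla_C^\ell A)_{xy}=(c(x)-c(y))^\ell a_{xy}$ once and read off everything from the parity of the kernel, whereas the paper instead passes through the left action $\bra{s}A=\pm\tfrac1{\sqrt{2^n}}\sum_x a(x)\bra{x}$ and the expansion $\nabla_C^2 A=C^2A-2CAC+AC^2$—but the substance is the same.
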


The condition $[A,C]\neq 0$ is included because %
otherwise trivially $\nabla_C A = \nabla^2_C A=0$ and so the quantities of the lemma become identically zero. Recall a cost gradient $A$ may be uniquely decomposed into its diagonal and off-diagonal parts $A=A_{diag}+A'$ with respect to the computational basis, with $\nabla_C A = \nabla_C A'$. 
As the results of the self-adjoint case are invariant under replacing $A$ by $A+E$ for a traceless real diagonal matrix $E$ (such that $\nabla_C E=0$), we can replace $a(x)$ above by any function $a(x)+e(x)$ where $\sum_x e(x)=0$; hence we can often ignore any diagonal component of cost gradient operators in computing initial state expectation values.

\begin{proof}
From the definition of $A$ we have $a(x)=\sum_y a_{xy}$, and $a_{yx}=\pm a_{xy}$ when $A^\dagger = \pm A$, which implies $\sum_x a(x)= 0$ when $A^\dagger = - A$.
Using $A=\sum_{x,y}a_{xy}\ket{x}\bra{y}$, 
the function $g$ is defined as $\tfrac1{\sqrt{2^n}}g(x):=\bra{x} \nabla_C A \ket{s}$ and satisfies
$$\tfrac1{\sqrt{2^n}}g(x) %
=\bra{x} CA-AC\ket{s}=\tfrac1{\sqrt{2^n}}c(x)a(x)-\tfrac1{\sqrt{2^n}}\sum_y c(y)a_{xy}=\tfrac1{\sqrt{2^n}}\sum_y a_{xy}(c(x)-c(y)),$$
which implies $\sum_x g(x) = 0$ when $A^\dagger = A$. %
Indeed, from $\bra{s}A=\pm \tfrac{1}{2^n}\sum_x a(x)\bra{x}$ we have
$$\langle g\rangle_0=\bra{s}\nabla_C  A\ket{s} = \bra{s}CA\ket{s}- \bra{s}AC\ket{s}
=\langle ca\rangle_0\mp \langle ca\rangle_0$$
when $A^\dagger = \pm A$, which shows the results for $\langle\nabla_C A\rangle_0$. 
Similarly,
$\bra{s}\nabla_CA=\mp \tfrac{1}{2^n}\sum_x g(x)\bra{x}$ 
and so 
\begin{eqnarray*}
     \bra{s} \nabla^2_C A \ket{s}&=&\bra{s}C\,\nabla_CA\ket{s}- \bra{s}\nabla_CA\,C\ket{s}
=\langle cg\rangle_0\pm \langle cg\rangle_0\\
&=&\bra{s}(C^2A-2CAC+AC^2)\ket{s}=\langle c^2a\rangle_0\pm \langle c^2a\rangle_0 - 2\langle CAC\rangle_0.
\end{eqnarray*}
Finally, we use   
$\langle c^2a\rangle_0=\tfrac1{2^n}\sum_x c^2(x)a(x)=\tfrac1{2^n}\sum_{x,y} c^2(x)a_{xy}$
and 
$\langle CAC\rangle_0=\tfrac1{2^n}\sum_{x,y}c(x)c(y)a_{xy}$ %
for the case $A=A^\dagger$ to give 
$$ \langle \nabla^2_C A\rangle_0 =\frac2{2^n}\sum_{x,y} (c^2(x)-c(x)c(y))a_{xy}=\frac1{2^n}\sum_{x,y} (c^2(x)+c^2(y)-2c(x)c(y))a_{xy}$$
which shows the terms in the sum with $x=y$ are zero and 
implies that last equation of the lemma. 
\end{proof}

We use the lemma to prove Equations (\ref{eq:expecDcD3C}), (\ref{eq:expecDc3DC}), and (\ref{eq:expecDc2D2C}) concerning the %
nonzero initial state expectation values of cost gradient at fourth order. 
The technique of the proof may be used to similarly compute higher order expectation values. 
\begin{lem} \label{lem:higherOrderExpectations}%
For a cost Hamiltonian $C$ representing real function $c(x)$ 
we have:
\begin{itemize}
    \item $\langle \nabla_C \nabla^3 C \rangle_0 = \frac2{2^n}\sum_x c(x)d^3c(x)$
               \item $\langle \nabla^2_C \nabla^2 C \rangle_0 =\langle \nabla_C \nabla \nabla_C \nabla C \rangle_0 =\frac2{2^n}\sum_x \sum_{j<k} (\partial_{j,k}c(x))^2 \partial_j\partial_k c(x)$
                     \item $\langle \nabla^3_C \nabla C \rangle_0 = \frac2{2^n}\sum_x c(x)\sum_{j} (\partial_j c(x))^3 =- \frac1{2^n}\sum_x \sum_{j} (\partial_j c(x))^4$
\end{itemize}
where $\partial_{j,k}c(x) := c(x^{(j,k)})-c(x)=\partial_j\partial_k c(x) + \partial_j c(x)+\partial_k c(x)$. %

In particular, for a QUBO cost Hamiltonian $C=a_0 +\sum_i a_i Z_i + \sum_{i<j}a_{ij}Z_iZ_j$ these quantities can be further expressed (defining $a_{ji}:=a_{ij}$) as:
\begin{itemize}
    \item $\langle \nabla_C \nabla^3 C \rangle_0 = -16 \sum_ia_i^2 - 128 \sum_{i<j}a_{ij}^2$
\item $\langle \nabla^2_C \nabla^2 C \rangle_0 = 64 \sum_{i<j}a_ia_ja_{ij}+  192 \sum_{i<j<k} a_{ij}a_{jk}a_{ik}$
\item $\langle \nabla^3_C \nabla C \rangle_0  = -16\sum_i a_i^4 - 32\sum_{i<j} a_{ij}^4 - 96 \sum_{i<j}a_{ij}^2 (a_i^2+a_j^2 + \frac12 %
\sum_{k\neq i,j}(a_{ik}^2+a_{jk}^2))$
\end{itemize}
\end{lem}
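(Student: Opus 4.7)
The three general identities will follow by applying Lemma~\ref{lem:expec1} to cost gradients $A$ of the appropriate adjointness, combined with the explicit action of $\nabla^\ell C$ and $\nabla_C^\ell \nabla C$ on $\ket{s}$ recorded in \eqsref{eq:gradCs}{eq:DlCDCs}. The QUBO specializations will then be obtained by substituting the Pauli expansion $c(x)=a_0+\sum_i a_i z_i+\sum_{i<j}a_{ij}z_iz_j$, with $z_i:=(-1)^{x_i}$, into the classical sums, and evaluating using the orthogonality relation $\tfrac{1}{2^n}\sum_x z_{S_1}\!\cdots z_{S_k}=1$ iff $S_1\oplus\cdots\oplus S_k=\emptyset$.

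\textbf{General identities.} For $\langle\nabla_C\nabla^3 C\rangle_0$, take $A=\nabla^3 C$, which is skew-adjoint; \eqref{eq:gradCs} identifies $a(x)=d^3 c(x)$, and the skew-adjoint branch of Lemma~\ref{lem:expec1} yields the claim. For $\langle\nabla^3_C\nabla C\rangle_0$, take $A=\nabla^2_C\nabla C$, again skew-adjoint; \eqref{eq:DlCDCs} with $\ell=2$ identifies $a(x)=\sum_j(\partial_j c(x))^3$, and Lemma~\ref{lem:expec1} produces the first form. The alternate form $-\tfrac1{2^n}\sum_x\sum_j(\partial_j c(x))^4$ then follows from the antisymmetry $\partial_j c(x^{(j)})=-\partial_j c(x)$ under the change of variable $x\mapsto x^{(j)}$ inside each $j$-summand. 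For $\langle\nabla^2_C\nabla^2 C\rangle_0$, take $A=\nabla^2 C$ (self-adjoint); \eqref{eq:grad2Cx} shows that its only nonzero off-diagonal matrix elements are $a_{x^{(j,k)},\,x}=2\partial_j\partial_k c(x)$ for $j<k$, so the self-adjoint branch of Lemma~\ref{lem:expec1} produces $\tfrac2{2^n}\sum_x\sum_{j<k}(\partial_{j,k}c(x))^2\partial_j\partial_k c(x)$. The alternate form $\langle\nabla_C\nabla\nabla_C\nabla C\rangle_0$ comes from left-multiplying the Jacobi identity \eqref{eq:costJacobi} by $\nabla_C$.

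\textbf{QUBO specializations.} For the first identity, \eqref{eq:DlC} gives $d^3 c=-8c_{(1)}-64c_{(2)}$, so orthogonality yields $\tfrac{1}{2^n}\sum_x c(x)c_{(1)}(x)=\sum_i a_i^2$ and $\tfrac{1}{2^n}\sum_x c(x)c_{(2)}(x)=\sum_{i<j}a_{ij}^2$. For the second, I note $\partial_j\partial_k c=4a_{jk}z_jz_k$ and $\partial_{j,k}c=-2\bigl(a_jz_j+a_kz_k+\sum_{l\neq j,k}(a_{jl}z_jz_l+a_{kl}z_kz_l)\bigr)$; expanding $(\partial_{j,k}c)^2$ and selecting cross-terms whose Pauli content equals $z_jz_k$ (namely $z_j\!\cdot\!z_k$ and $z_jz_l\!\cdot\!z_kz_l$ for each $l\neq j,k$) gives $\tfrac{1}{2^n}\sum_x z_jz_k(\partial_{j,k}c)^2=8a_ja_k+8\sum_{l\neq j,k}a_{jl}a_{kl}$. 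Summing over $j<k$ and observing that each unordered triple $\{i,j,k\}$ is counted three times in the resulting triple-sum, each time contributing $a_{ij}a_{ik}a_{jk}$, produces the stated coefficients $64$ and $192$. For the third, write $\partial_j c=-2z_j(a_j+Y_j)$ with $Y_j:=\sum_{k\neq j}a_{jk}z_k$; since $z_j^4=1$, $\tfrac{1}{2^n}\sum_x(\partial_j c)^4=16\,\mathbb{E}[(a_j+Y_j)^4]$, and the standard fourth-moment relations $\mathbb{E}[Y_j^2]=\sum_k a_{jk}^2$, $\mathbb{E}[Y_j^4]=3(\sum_k a_{jk}^2)^2-2\sum_k a_{jk}^4$ give $\mathbb{E}[(a_j+Y_j)^4]=a_j^4+6a_j^2\sum_k a_{jk}^2+3(\sum_k a_{jk}^2)^2-2\sum_k a_{jk}^4$; negating, summing over $j$, and regrouping by symmetric pair- and triple-sums yields the final expression.

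\textbf{Main obstacle.} The delicate step is the combinatorial re-indexing in the third QUBO formula, where the sums $\sum_j\sum_{k<l,\,k,l\neq j}a_{jk}^2a_{jl}^2$ and $\sum_j\sum_{k\neq j}a_{jk}^4$ must be carefully separated and combined with the $a_j^2\sum_k a_{jk}^2$ contribution and the pure $\sum_j a_j^4$ term to produce the coefficients $-16$, $-32$, and $-96$ on the three remaining symmetric sums. The bookkeeping for the second QUBO formula is analogous but milder, since the triple-sum appearing there is already symmetric under permutation of $\{i,j,k\}$. Beyond Lemma~\ref{lem:expec1}, the Jacobi identity \eqref{eq:costJacobi}, and elementary $\pm 1$-moment calculations, no new ideas are needed; the remaining work is careful algebra.
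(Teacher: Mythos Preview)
Your proof of the three general identities is correct and essentially mirrors the paper's: both apply \lemref{lem:expec1} to the relevant cost gradients, using \eqref{eq:gradCs}, \eqref{eq:grad2Cx}, and \eqref{eq:DlCDCs} to identify $a(x)$ or the off-diagonal matrix elements. The only minor variation is that for $\langle\nabla_C^3\nabla C\rangle_0$ the paper takes $A=\nabla_C\nabla C$ (self-adjoint) and uses the $\langle\nabla_C^2 A\rangle_0$ branch of \lemref{lem:expec1} with the matrix elements $a_{x,x^{(j)}}=-(\partial_j c(x))^2$ from \eqref{eq:DCDCx}, whereas you take $A=\nabla_C^2\nabla C$ (skew-adjoint) and use the $\langle\nabla_C A\rangle_0$ branch; both routes are equally short.

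For the QUBO specializations you take a genuinely different route. The paper works at the \emph{operator} level: it expands each gradient $\nabla_C\nabla^3 C$, $\nabla_C^2\nabla^2 C$, $\nabla_C^3\nabla C$ explicitly as Pauli sums (via \lemref{lem:quboGrads} and the Pauli commutation relations) and then reads off the expectation by retaining only the $I/X$-type terms. You instead work at the \emph{function} level, substituting $c(x)=a_0+\sum_i a_i z_i+\sum_{i<j}a_{ij}z_iz_j$ directly into the classical sum formulas just proved, and evaluating via orthogonality of the characters $z_S$ together with standard $\pm 1$-moment identities. Your approach is more elementary in that it avoids tracking Pauli commutators and keeps the bookkeeping purely combinatorial (as you note, the hardest step is the fourth-moment reindexing $\sum_j(\sum_k a_{jk}^2)^2$ versus the symmetric pair/triple sums). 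The paper's approach has the advantage that the intermediate Pauli expansions are themselves useful objects, reused elsewhere (e.g., \secref{sec:MaxCutQUBO}), and it scales more systematically to higher-locality Hamiltonians where the classical moment calculations would become unwieldy.
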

The second part of the lemma shows that we may always efficiently compute the quantities of the first part for cost functions such as QUBOs. Moreover, each expectation values reflects the topology of the underlying QUBO problem graph; for example, for MaxCut, $\langle \nabla^2_C \nabla^2 C \rangle_0$ is seen to be identically zero on triangle-free graphs.  
\begin{proof}
Recall %
that any diagonal part of $A$ in Lemma~\ref{lem:expec1} can be ignored (such as, for example, in $\nabla^2 C$ for MaxCut from Table~\ref{tab:summary}) as it will not affect the initial state expectation values. 

The first result follows from the first case of Lemma~\ref{lem:expec1} with $A=\nabla^3 C=-A^\dagger$ which from (\ref{eq:gradCs}) satisfies %
$a(x)=d^3c(x)$ to give $\langle \nabla_C A \rangle_0 = \frac2{2^n}\sum_x c(x)d^3c(x)$. 
The second result follows applying the second case of Lemma~\ref{lem:expec1} to $A$ defined as the non-diagonal part of $\nabla^2 C$, which from (\ref{eq:grad2Cx}) has matrix elements $a_{xy}=2\partial_j\partial_k c(x)=2\partial_j\partial_k c(x^{(j,k)})$ when $x$ and $y=x^{(j,k)}$ differ by two bit flips $j,k$, and $a_{xy}=0$ otherwise, and so 
$$\langle \nabla^2_C \nabla^2 C \rangle_0 = \frac2{2^n}\sum_x \sum_{j<k} (c(x)-c(x^{(j,k)}))^2\partial_j\partial_k c(x),$$
where $c(x^{(j,k)})-c(x)=\partial_j\partial_k c(x) + \partial_j c(x)+\partial_k c(x)$ follow from the definitions of Sec.~\ref{sec:costDiffs}.  
Additionally, from the Jacobi identity and (\ref{eq:costJacobi}) we have $\langle \nabla_C \nabla \nabla_C \nabla_C \rangle_0 =\langle  \nabla_C^2\nabla^2 C\rangle_0$. 
For the third result, %
let $A=\nabla_C \nabla C=A^\dagger$, for which from (\ref{eq:DCDCx}) %
we have  
$a_{xy}=-(\partial_jc(x))^2$ when $y=x^{(j)}$ and $0$ otherwise, %
and hence  the first equality of the second case of Lemma~\ref{lem:expec1} gives 
$\langle \nabla^2_C A \rangle_0=\tfrac1{2^n}\sum_x\sum_j (c(x)-c(x^{(j)})^2(-(\partial_jc(x))^2)=-\tfrac1{2^n}\sum_x\sum_j (\partial_jc(x))^4$. This sum can be easily rearranged %
to become $\tfrac2{2^n}\sum_x c(x) \sum_j (\partial_jc(x))^3$, or alternaively this may be obtained from the second equality of the second case of Lemma~\ref{lem:expec1} 
by defining the function $g(x)=(\partial_j c(x))^3$ from (\ref{eq:DlCDCs}) that corresponds to $\nabla_C A$.

Now further suppose we have a QUBO cost Hamiltonian~$C:=a_0I + C_1 + C_2$. 
We proceed by expanding each cost gradient as a linear combination of Pauli operators, and then computing the initial state expectation value using the observation that only terms containing stricilty $I$ or $X$ factors can contribute (see \cite[Sec. YY]{hadfield2018thesis} for details). Note that this approach allows us to avoid dealing with exactly computing the coefficients of various terms which can't contribute. 
 
 Let $C_{2,Y}:=\sum_{i<j}a_{ij}Y_iY_j$ and here we write $a_{ji}=a_{ij}$, $a_{ii}:=0$, and $j\in nbd(i)$ if $a_{ij}\neq 0$. Applying Lemma~\ref{lem:quboGrads} and the Pauli matrix commutation relations ($[X,Y]=2iZ$ and cyclic permutations) we have $\nabla C = \nabla C_1 + \nabla C_2$, $\nabla^2 C = 4C_1 + 8C_2 - 8C_{2,Y}$, and $\nabla^3 C=4\nabla C_1 + 16\nabla C_2$. %
 Hence $\nabla_C \nabla^3 C = -16\sum_i a_i^2 X_i -64\sum_{i<j}a_{ij}^2(X_i+X_j) +\dots$ where the terms not shown to the right each contain $Z$ factors (for example, terms proportional to $a_ia_{ij}X_iZ_j$), which implies
 $\langle \nabla_C \nabla^3 C \rangle_0=-16\sum_i a_i^2  -128\sum_{i<j}a_{ij}^2$ as stated. 
 
 Similarly,  $\nabla_C \nabla^2 C = -8 \nabla_C C_{2,Y}=16i\sum_{i<j}a_{ij}(a_iX_iY_j + a_jY_i X_j+\sum_{k\neq i,j} (a_{ik}Z_k X_iY_j+a_{jk}Y_iX_jZ_k))$, and so 
 $\nabla_C^2 \nabla^2 C = 64 \sum_{i<j}a_{ij}a_ia_jX_iX_j+64
 \sum_{i<j}\sum_{k} a_{ij}a_{jk}a_{ik}X_iX_j +\dots,$
with the terms to the right each containing $Y$ or $Z$ factors. As $a_{ii}=0$ the $a_{ij}a_{jk}a_{ik}$ terms are nonzero only for triangles in the graph formed from nonzero $a_{ij}$ which implies  $\langle \nabla_C^2 \nabla^2 C\rangle_0 = 64 \sum_{i<j}a_{ij}a_ia_j+3*64
 \sum_{i<j<k}%
 a_{ij}a_{jk}a_{ik}$. %
 
 Finally, a similar calculation for gives the result for $\langle\nabla_C^3\nabla C\rangle_0$, as can readily be seen expanding the identity $\nabla_C \nabla C = \nabla_C^4 B$. %
\end{proof}

\subsection{QAOA with small mixing angle}\label{app:smallMix}
Here we show two proofs related to %
results in \secssref{sec:lightcones}{sec:smallMixingAngle}{sec:smallMixingAnglep}.

\begin{proof}[Proof of \eqref{eq:expecgammanabla2C}]
Recal the notation $\nabla = \sum_i \nabla_i$ with $\nabla_i = [X_i,\cdot]$. 
As for any $C$ 
$$\nabla^2 C=\sum_i\nabla_i^2 C + 2\sum_{i<j}\nabla_i\nabla_jC= -2DC +2\sum_{i<j}\nabla_i\nabla_jC$$ and $\langle DC\rangle_0 =0$ we have
\begin{eqnarray}
\bra{\gamma}\nabla^2 C \ket{\gamma} = 2\sum_{i<j} \langle \nabla_i\nabla_j C\rangle_0.
\end{eqnarray}
Next observe $\partial_iC + \partial_kC + \partial_i\partial_kC= -2(C^{\{i\setminus k\}}+C^{\{k\setminus i\}})$, where $C^{\{j\setminus k\}}$ denotes the terms in $C$ that contain a $Z_j$ but not a $Z_k$,  
 which gives 
 $$U_P^\dagger X_iX_kU_p=e^{i\gamma(C^{\{i\setminus k\}}+C^{\{k\setminus i\}})}X_iX_k.$$
Hence \eqref{eq:expecgammanabla2C} follows by observing  
\begin{eqnarray*}
\langle \nabla_i\nabla_j C\rangle_0 &=& \langle X_iX_j C-X_iCX_j-X_jCX_i+CX_iX_j\rangle_0\\
&=& \langle Ce^{-2i\gamma(C^{\{i\setminus j\}}+C^{\{j\setminus i\}})}  + e^{2i\gamma(C^{\{i\setminus j\}}+C^{\{j\setminus i\}})} C\rangle_0\\
&-& \langle Ce^{-2i\gamma(C^{\{i\setminus j\}}-C^{\{j\setminus i\}})}  + e^{2i\gamma(C^{\{i\setminus j\}}-C^{\{j\setminus i\}})} C\rangle_0\\
&=& 2\langle (\cos(C^{\{i\setminus j\}}+C^{\{j\setminus i\}})-2\cos((\gamma C^{\{i\setminus j\}}-C^{\{j\setminus i\}})))C\rangle\\
&=& -4\langle \sin(\gamma C^{\{i\setminus j\}})\sin(\gamma C^{\{i\setminus j\}})C\rangle_0\\
&=& -4\langle \sin(\gamma(\tfrac12\partial_iC +\tfrac14 \partial_i\partial_jC )) \sin(\gamma(\tfrac12\partial_jC +\tfrac14 \partial_i\partial_jC )) C\rangle_0
\end{eqnarray*}
where we used $C^{\{i\setminus j\}}=-\tfrac12\partial_iC -\tfrac14\partial_i\partial_jC$ and $\cos(a)-\cos(b)=-2\sin\tfrac{a+b}2\sin\tfrac{a-b}{2}$. 
\end{proof}

\begin{proof}[Proof of \corref{cor:smallBeta}]
Consider a general QAOA$_p$ state which we write shorthand as 
 $\ket{p}:=\ket{\boldsymbol{\gamma \beta}}_p = \sum_x q_{x,p} \ket{x},$
where the coefficients $q_{x,p}$ may be complex. 
Proceeding as in the proof of Theorem~\ref{thm:smallBetap1} %
gives to first order in $\beta$ 
\begin{eqnarray}
\langle C \rangle_{p+1} &\simeq& \bra{p}C\ket{p} + \beta \bra{p\gamma_{p+1}} i\nabla C \ket{p\gamma_{p+1}}=\langle C \rangle_p +\beta \bra{p} \nabla_{\widetilde{B}}C \ket{p}
\end{eqnarray}
where $\ket{p\gamma_{p+1}} := e^{-i\gamma_{p+1}C}\ket{p}$ and $\widetilde{B}=e^{i\gamma C}B\e^{-i\gamma C}$. 
Observing that 
\begin{eqnarray}
\nabla_{\widetilde{B}}C \ket{p} = -\sum_x \left( \sum_{j=1}^n q_{x^{(j)},p}\partial_j c(x) e^{i\gamma_{p+1} \partial_j c(x)}\right) \ket{x}, 
\end{eqnarray}
we have %
    $\langle C \rangle_{p+1} \simeq 
   \langle C \rangle_{p} - i\beta \sum_x q^\dagger_{x,p}
    \left(\sum_{j=1}^n q_{x^{(j)},p}\partial_j c(x) e^{i\gamma_{p+1} \partial_j c(x)}\right)$.
Next observe that
\begin{eqnarray*}
 \sum_x \sum_{j=1}^n  q^\dagger_{x,p} q_{x^{(j)},p}\partial_j c(x) e^{i\gamma_{p+1} \partial_j c(x)} &=& 
\sum_{j=1}^n  \sum_x q^\dagger_{x,p} q_{x^{(j)},p} e^{i\gamma_{p+1} \partial_j c(x)}
(c(x^{(j)})-c(x))\\
&=&
\sum_{j=1}^n  \sum_x c(x) (q^\dagger_{x^{(j)},p}q_{x,p} e^{-i\gamma_{p+1} \partial_j c(x)}-q^\dagger_{x,p} q_{x^{(j)},p} e^{i\gamma_{p+1} \partial_j c(x)})\\
&=&
\sum_{j=1}^n  \sum_x c(x) |q^\dagger_{x,p} q_{x^{(j)},p}| (e^{-i\alpha_{xjp}-i\gamma_{p+1} \partial_j c(x)} -  e^{i\alpha_{xjp} + i\gamma_{p+1} \partial_j c(x)})\\
&=&
\sum_{j=1}^n  \sum_x c(x) r_{xjp} (-2i\sin(\alpha_{xjp}+\gamma_{p+1} \partial_j c(x)))\\
&=&
 -\sum_x c(x) \sum_{j=1}^n  2i r_{xjp} \sin(\alpha_{xjp}+\gamma_{p+1} \partial_j c(x))
\end{eqnarray*}
where we have defined the real polar variables $r,\alpha$ as 
$r_{xjp}=|q^\dagger_{x^{(j)},p}q_{x,p}|$ and 
$ q^\dagger_{x^{(j)},p}q_{x,p} = r_{xjp}e^{-i\alpha_{xjp}}$ 
which reflect the degree of which the coefficients $q_{x,p}$ are non-real. Plugging in above implies (\ref{eq:expecsmallbetap}). 
Further, rearranging as
\begin{equation}
      \langle C \rangle_{p+1} \simeq 
    \sum_x c(x)  \left(|q_{x,p}|^2 -2\beta \sum_{j=1}^n r_{xjp}  \sin(\alpha_{xjp}+\gamma_{p+1} \partial_j c(x))  \right)
\end{equation}
shows (\ref{eq:probsmallbetap}) gives the change in probability for each state $x$ to first order. %
\end{proof}

\section{Norm and error bounds} \label{app:normAndErrorBounds}
Here we show bounds to the norms of cost gradient operators, which we subsequently apply to bound the error of our leading-order formulas. 

The spectral norm $\|A\|:=\|A\|_2$ is the maximal eigenvalue in absolute value of $A$. When $C$ represents a function $c(x)\geq 0$ to be maximized, for example a constraint satisfaction problem such as MaxCut or MaxSAT, then $\|C\|=\max_x c(x)$. %
Hence computing $\|C\|$ exactly %
can be NP-hard, %
in which case upper bounds to $\|C\|$ (such as the number of clauses) typically suffice. 

For commutator norms, commuting components should not affect the norm estimates, in particular the Identity operator component of each operator. To reflect this in our norm and error estimates, 
here we define %
\begin{equation} \label{eq:normstar}
    \|A\|_* := \min_{b\in\complex}\|A + bI\|,
\end{equation}
which in particular satisfies $\|A\|_*\leq \|A\|$, and similarly for the traceless part of $A$
\begin{equation}\label{eq:normstar2}
    \|A\|_*\leq \|A-\tfrac1{2^n}tr(A)I\|.
\end{equation}
As $\|aI\|_*=0$ for $a\in\complex$, \eqref{eq:normstar} gives a seminorm. 

For example, for MaxCut we have $C=\tfrac{|E|}2I-\tfrac12\sum_{(ij)}Z_iZ_j=:\tfrac{|E|}2I+C_Z$ which satisfies $\|C_Z\|=\tfrac{|E|}2$ %
and $\|C\|_*=\tfrac{c^*}2=\frac{\|C\|}2$. 
When $c^*\neq |E|$ we have %
$\|C\|_* < \|C-\tfrac{|E|}2I\|$ %
which illustrates that the inequality in 
\eqref{eq:normstar2} can be strict.

\subsection{Norm bounds} \label{app:normBounds}

We first consider general gradients (commutators) $\nabla_G:=[G,\cdot]$. 

\begin{lem}  \label{lem:normCGrad}
For $n$-qubit operators $A,G$ and $\ell \in \naturals$ we have
\begin{equation}  \label{eq:lemnormCGrad}
    \|\nabla_G^\ell A\|  \, \leq \, (2\|G\|_*)^\ell \|A\|_* \,\leq\, (2\|G\|)^\ell \|A\|.
\end{equation}

\noindent %
If we know a decomposition $A=A_{com}+A_{nc}$ such that $[G,A_{com}]=0$ then this improves to 
\begin{equation}  \label{eq:lemnormCGrad2}
    \|\nabla_G^\ell A\|   \,=\,     \|\nabla_G^\ell A_{nc}\| \,\leq\, (2\|G\|_*)^\ell \|A_{nc}\|_*. 
\end{equation}
\end{lem}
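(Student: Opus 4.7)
The plan is to prove the first bound by induction on $\ell$, with the base case $\ell=1$ doing the real work, and then to deduce the ``commuting part'' refinement immediately from linearity of $\nabla_G$.

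For the base case, I would start from the standard submultiplicative inequality for the spectral norm and the triangle inequality, which give $\|\nabla_G A\| = \|GA - AG\| \leq 2\|G\|\|A\|$. The key observation to get the sharper form with $\|\cdot\|_*$ is that the identity is central: for any $b,c \in \complex$ we have $[G+bI,\,A+cI] = [G,A] = \nabla_G A$, since $bI$ and $cI$ commute with every operator. Therefore
\begin{equation*}
\|\nabla_G A\| \;=\; \|[G+bI, A+cI]\| \;\leq\; 2\,\|G+bI\|\,\|A+cI\|.
\end{equation*}
Taking the infimum over $b$ and $c$ independently (the right-hand side factors) yields $\|\nabla_G A\| \leq 2\|G\|_*\,\|A\|_*$, establishing the $\ell=1$ case.

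For the inductive step, writing $\nabla_G^\ell A = \nabla_G(\nabla_G^{\ell-1} A)$ and applying the $\ell=1$ bound gives
\begin{equation*}
\|\nabla_G^\ell A\| \;\leq\; 2\|G\|_*\,\|\nabla_G^{\ell-1} A\|_* \;\leq\; 2\|G\|_*\,\|\nabla_G^{\ell-1} A\|,
\end{equation*}
using $\|\cdot\|_* \leq \|\cdot\|$ in the second step. The induction hypothesis then controls the right-hand side by $(2\|G\|_*)^{\ell-1}\|A\|_*$, completing the proof of \eqref{eq:lemnormCGrad}. The outer inequality $\|A\|_*\leq\|A\|$ and $\|G\|_*\leq\|G\|$ give the weaker form.

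For the refinement \eqref{eq:lemnormCGrad2}, linearity of the superoperator $\nabla_G$ together with $\nabla_G A_{com} = [G,A_{com}] = 0$ gives $\nabla_G A = \nabla_G A_{nc}$, and iterating yields $\nabla_G^\ell A = \nabla_G^\ell A_{nc}$ for all $\ell \geq 1$. Applying \eqref{eq:lemnormCGrad} to $A_{nc}$ in place of $A$ then gives the stated bound. The only place requiring care is the base case manipulation that ``pushes'' the identity shift into both slots of the commutator; I do not anticipate any genuine obstacle, as the rest is routine submultiplicativity and induction.
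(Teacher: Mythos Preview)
Your proof is correct and follows essentially the same approach as the paper: shift $G$ and $A$ by scalar multiples of the identity (which leaves the commutator unchanged), apply the submultiplicative/triangle inequality to get the $\ell=1$ bound with $\|\cdot\|_*$, and then induct. The paper phrases the induction slightly differently by noting $\nabla_{G+gI}^\ell(A+aI)=\nabla_G^\ell A$ for all $\ell$ at once, but the content is identical.
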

\begin{proof}
Let $G'=G+gI$ and $A'=A+aI$. 
As identity components always commute,
trivially $\nabla^\ell_{G'} A'=\nabla^\ell_G A$ for all $g,a$.
Selecting $g,a$ such that $\|G\|_*=\|G'\|$ and $\|A\|_*=\|A'\|$ and 
applying the triangle and  Cauchy-Schwarz inequalities to %
$\nabla_{G'} A' = G'A'-A'G'$ gives $\|\nabla_G A\|\leq 2\|G'\|\|A'\|=2\|G\|_*\|A\|_*\leq 2\|G\|\|A\|$. %
\eqref{eq:lemnormCGrad} then follows by induction on $\nabla_G^{\ell-1}A=\nabla_{G'}^{\ell-1}A'$. \eqref{eq:lemnormCGrad2} follows by the same arguments observing $\nabla_G A = \nabla_G A_{nc}$. 
\end{proof}

In particular, for the cost Hamiltonian $C=a_0I + C_Z$, \lemref{lem:normCGrad} implies \begin{equation}
    \|\nabla^\ell_C A\|\leq (2\|C_Z\|)^\ell \|A_{non-diag}\|
\end{equation}
for $A_{non-diag}$ the Pauli basis components of $A$ that contain an $X$ or $Y$ factor.

For gradients $\nabla^\ell$ with respect to the transverse-field Hamiltonian $B$ we show a %
tighter bound %
applicable to $k$-local cost Hamiltonians. (Recall an operator is $k$-local if its Pauli expansion contains terms acting nontrivially on $k$ or fewer qubits.)

\begin{lem} \label{lem:normGrad}
For a %
$k$-local operator $A$ and $\ell \in \naturals$ we have
\begin{equation} \label{eq:lem:normGrad}
 \|\nabla^\ell A\|\,\leq\, (2k)^\ell \|A\|_*\,\leq\, (2k)^\ell \|A\|,
 \end{equation}
 and this %
 bound is tight (i.e., there exists an $A$ such that $\|\nabla^\ell A\|= (2k)^\ell \|A\|_*=(2k)^\ell \|A\|$).
\end{lem}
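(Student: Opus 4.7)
The plan is to prove the lemma by induction on $\ell$, with the base case $\ell = 1$ being the technical heart. For the inductive step I use two elementary properties of $\nabla = [B,\cdot]$ acting on $k$-local operators: (i) $\nabla$ preserves $k$-locality, since for any Pauli term $P_\alpha$ in $A$, the commutator $[X_j, P_\alpha]$ either vanishes or equals $\pm 2i$ times a Pauli operator with the same support as $P_\alpha$; and (ii) $\nabla^\ell A$ is traceless for $\ell \geq 1$, so $\|\nabla^\ell A\|_* = \|\nabla^\ell A\|$. Given the base case $\|\nabla A\| \leq 2k\|A\|_*$, iteration yields $\|\nabla^\ell A\| = \|\nabla(\nabla^{\ell-1}A)\| \leq 2k\|\nabla^{\ell-1}A\|_* = 2k\|\nabla^{\ell-1}A\| \leq (2k)^\ell \|A\|_*$.

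For the base case, first note that shifting $A$ by a multiple of the identity leaves $[B,A]$ unchanged, so WLOG $\|A\|=\|A\|_*$. The decomposition
\begin{equation*}
[B,A] = 2\sum_{j=1}^n X_j A_-^{(j)}, \qquad A_-^{(j)} := \tfrac{1}{2}(A - X_j A X_j),
\end{equation*}
exposes $A_-^{(j)}$ as the part of $A$ anticommuting with $X_j$; this operator is invariant under identity shifts of $A$, so $\|A_-^{(j)}\| \leq \|A\|_*$. Expanding $A = \sum_\alpha a_\alpha P_\alpha$ in the Pauli basis with $|S_\alpha|\leq k$, the operator $A_-^{(j)}$ collects precisely those Pauli terms of $A$ having a $Y$ or $Z$ on qubit $j$, and the crucial structural fact is that each term $P_\alpha$ participates in at most $k$ of the operators $\{A_-^{(j)}\}_{j=1}^n$, since its support of $Y$'s and $Z$'s has size at most $k$. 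The bound $\|[B,A]\| \leq 2k\|A\|_*$ then follows by combining this $k$-bounded participation with the unitarity of the $X_j$, naturally via a duality argument: for arbitrary unit vectors $\ket{\psi},\ket{\phi}$, expand $\langle\phi|[B,A]|\psi\rangle = 2\sum_j \langle X_j\phi|A_-^{(j)}|\psi\rangle$, regroup according to the Pauli decomposition of $A$, and use the participation bound to control each term against $\|A\|_*$.

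Tightness is witnessed by $A = Z_1\cdots Z_k$: one has $\|A\|_* = \|A\| = 1$, and a direct calculation gives $\nabla A = -2i\sum_{j=1}^k Z_1\cdots Y_j\cdots Z_k$; these summands mutually commute and their joint spectrum takes all values in $\{\pm 1\}^k$ consistent with a single global product constraint, giving $\|\nabla A\| = 2k$, with iteration yielding $\|\nabla^\ell A\| = (2k)^\ell$. The main obstacle is the base case. A naive triangle inequality across Pauli terms gives only $\|[B,A]\| \leq 2k\sum_\alpha |a_\alpha|$, which can far exceed $2k\|A\|$ since the $\ell^1$-norm of Pauli coefficients is not controlled by the spectral norm. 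Obtaining the clean factor of $2k$ (rather than $2|\mathrm{supp}(A)|$ or $2n$) critically requires exploiting the cancellations implicit in the operator-norm definition of $\|A\|$, which is precisely what the duality argument combined with the $k$-local participation bound accomplishes.
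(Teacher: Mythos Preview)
Your overall strategy (reduce to $\ell=1$; use that $\nabla$ preserves $k$-locality and yields traceless operators so $\|\cdot\|_*=\|\cdot\|$ on the iterates) is sound, and you correctly isolate the real obstacle: a naive triangle inequality over Pauli terms gives only $2k\sum_\alpha|a_\alpha|$, not $2k\|A\|$. However, the ``duality argument'' you sketch does not close this gap. After regrouping one has
\[
\langle\phi|[B,A]|\psi\rangle \;=\; 2\sum_\alpha a_\alpha\,\Big\langle\phi\,\Big|\,\Big(\sum_{j\in T_\alpha}X_j\Big)P_\alpha\,\Big|\,\psi\Big\rangle,
\]
and although $\bigl\|\sum_{j\in T_\alpha}X_j\bigr\|\le k$, the operator $\sum_{j\in T_\alpha}X_j$ depends on $\alpha$, so there is no single vector $\ket{\tilde\phi}$ against which the $a_\alpha P_\alpha$ recombine into $A$. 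Splitting into $k$ ``layers'' by enumerating $T_\alpha=\{j_1(\alpha),\dots\}$ fails for the same reason: $X_{j_r(\alpha)}$ still varies with $\alpha$. Taken literally, the sketch leads back either to $2n\|A\|_*$ or to $2k\sum_\alpha|a_\alpha|$. (In fairness, the paper's own proof has precisely the same gap, invoking an unexplained ``Cauchy--Schwarz'' at this step.)

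A route that does work: observe that $F(\theta):=e^{i\theta B}Ae^{-i\theta B}$ is an operator-valued trigonometric polynomial of degree at most $k$ in $e^{2i\theta}$, since each $k$-local Pauli term conjugates through at most $k$ single-qubit rotations $\cos(2\theta)\sigma+\sin(2\theta)\sigma'$; moreover $\|F(\theta)\|\equiv\|A\|$ and $F'(0)=i[B,A]$. Applying Bernstein's inequality (e.g.\ via the M.~Riesz interpolation formula) to the scalar trig polynomials $\theta\mapsto\langle\phi|F(\theta)|\psi\rangle$ and taking the supremum over unit $\phi,\psi$ yields $\|[B,A]\|\le 2k\|A\|$; the $\|A\|_*$ refinement follows since identity shifts leave $[B,A]$ unchanged. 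On tightness: your $\ell=1$ argument is essentially correct (indeed $\bigl(\sum_j M_j\bigr)^2=\bigl(\sum_{j\le k}X_j\bigr)^2$, so $\|\nabla A\|=2k$), but ``iteration'' does not establish equality for higher $\ell$ --- the upper bound propagates only as an inequality. The clean fix is the identity $\nabla^2(Z_1\cdots Z_k)=4\bigl(\sum_{j\le k}X_j\bigr)^{2}\,Z_1\cdots Z_k$, from which $\|\nabla^\ell(Z_1\cdots Z_k)\|=(2k)^\ell$ follows for all $\ell$.
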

\begin{proof}
Let $A'=aI+A$ such that $\|A\|_*=\|A\|.$
In general $A'=\sum_j w_j \sigma^{(j)}$, where $w_j\in \complex$ and each $\sigma^{(j)}$ gives a string of at most $k$ Pauli matrices.
Consider the action of $\nabla=[B,\cdot]=\sum_i[X_i,\cdot]$ on a fixed 
$\sigma^{(j)}$; clearly, of the $n$ commutators $[X_i,\sigma^{(j)}]$ for $i=1,\dots,n$, 
at most $k$ can be nonzero. 
Moreover, each $[X_i,\cdot]$ either acts unitarily on a $\sigma^{(j)}$ (up to a factor of $2$) if $\sigma^{(j)}$ contains a $Y_i$ or $Z_i$ factor, or else annihilates it. Hence $\|[X_i,\sigma^{(j)}]\|\leq 2\|\sigma^{(j)}\|=2$. 
and so Cauchy-Schwarz gives %
$$\|\nabla A\|=\|\nabla A'\|=\|  \sum_j w_j \sum_{j_{1}}^{j_{k}}  [X_{j_i},\sigma^{(j)}]\| 
\leq 2k \|A\|_*\leq 2k\|A\|.
$$
Repeating this argument for $\|\nabla(\nabla^{\ell-1}A')\|$
gives the claim %
by induction.

Next, let $A=Z_1Z_2\dots Z_k$ for some $k\leq n$; %
a straightforward calculation shows $\|\nabla^\ell A\|= (2k)^\ell \|A\|_*=(2k)^\ell \|A\|=(2k)^\ell$ for each $\ell=1,2,\dots$ as desired. 
\end{proof}

\lemsref{lem:normCGrad}{lem:normGrad} may be applied recursively to bound $\|A\|$ for $A$ an arbitrary cost gradient operator as in \lemref{lem:genGrad}. 
Norm bounds are useful for upper bounding expectation values, as for any %
Hamiltonian~$H$
and normalized quantum state $\ket{\psi}$ we have $\bra{\psi}H\ket{\psi}\leq \|H\|$.  
For example, %
for MaxCut, \lemref{lem:normGrad} gives
$\langle (i\nabla)^\ell C \rangle_p  \leq 4^\ell \frac{c^*}2$. %

\subsection{Error bounds}   \label{app:errorBounds}
We now show how to obtain error bounds for commutator expansions, which we use to below to prove  \corref{cor:thm1errorboundeps} concerning the error of the estimates of \thmref{thm1:smallAngles}. %
The same approach may be applied to bounding the error when truncating our series expressions at higher order,  %
or to deriving  
corresponding estimates for QAOA$_p$.

Here we utilize 
an integral representation of operator conjugation that follows from the fundamental theorem of calculus 
(see e.g. \cite[Eq. 23]{childs2019theory})
\begin{equation}\label{eq:error1}
    e^{i\beta \nabla_H}A=A+i\int_0^\beta d\alpha\, e^{i\alpha \nabla_H}\nabla_H A,
\end{equation}
which may be recursively applied to give 
\begin{equation} \label{eq:error2}
    e^{i\beta \nabla_H}A=A
    +i\beta\nabla_HA -\int_0^\beta d\alpha \int_0^\alpha  d\tau\, e^{i\tau \nabla_H}\nabla^2_H A,
\end{equation}
and %
similarly to pull out higher-order leading terms as desired. 
\begin{lem} \label{lem:smallmixC}
For a $k$-local matrix $A$ acting on $n$ qubits 
we have
\begin{equation*}
    \|U_M^\dagger(\beta) \,A\, U_M(\beta)-(A+i\beta \nabla A)\| \leq 2k^2 \beta^2 \|A\|_*.
\end{equation*}
\end{lem}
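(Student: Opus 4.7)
The plan is to apply the integral representation \eqref{eq:error2} with $H = B$ to isolate the second-order remainder as an explicit double integral, and then bound that remainder using (i) unitary invariance of the spectral norm under conjugation and (ii) the $k$-locality bound on $\|\nabla^2 A\|$ provided by \lemref{lem:normGrad}.

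More concretely, recalling that $e^{i\beta \nabla_B} = e^{i\beta \nabla} = U_M^\dagger(\beta)\,\cdot\, U_M(\beta)$, substituting $H=B$ into \eqref{eq:error2} yields the exact identity
\begin{equation*}
U_M^\dagger(\beta)\,A\,U_M(\beta) \,-\, \bigl(A + i\beta\,\nabla A\bigr) \,=\, -\int_0^\beta d\alpha \int_0^\alpha d\tau\, e^{i\tau\nabla}\nabla^2 A .
\end{equation*}
So the content of the lemma reduces to bounding the norm of the right-hand side.

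The main step is then to push the norm inside both integrals via the triangle inequality for integrals, and observe that for each fixed $\tau$ the superoperator $e^{i\tau\nabla}$ acts as conjugation by the unitary $U_M(\tau)$ and hence preserves the spectral norm: $\|e^{i\tau\nabla}\nabla^2 A\| = \|\nabla^2 A\|$. This removes the $\tau$-dependence of the integrand. Applying \lemref{lem:normGrad} at order $\ell = 2$ to the $k$-local operator $A$ gives $\|\nabla^2 A\| \leq (2k)^2 \|A\|_* = 4k^2 \|A\|_*$, which is a constant we can pull out of the integrals. Evaluating the remaining double integral $\int_0^{|\beta|}\int_0^\alpha d\tau\,d\alpha = \beta^2/2$ then yields the stated bound $2k^2\beta^2\|A\|_*$.

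No step here is really an obstacle: the integral representation is already given, unitary invariance of the spectral norm is standard, and the norm bound on $\nabla^2 A$ is exactly what \lemref{lem:normGrad} provides. The only care needed is to note that \lemref{lem:normGrad} requires $A$ to be $k$-local (which is an explicit hypothesis of the lemma) and that the bound uses $\|A\|_*$ rather than $\|A\|$, which is natural because adding a multiple of the identity to $A$ affects neither side of the inequality (both $\nabla A$ and the conjugation error are unchanged when $A \to A + aI$). The same template extends immediately to higher-order remainders by iterating \eqref{eq:error1} further, which is what would be needed later for obtaining the error estimates behind \corref{cor:thm1errorboundeps}.
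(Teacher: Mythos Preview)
Your proposal is correct and matches the paper's proof essentially line for line: apply the integral representation \eqref{eq:error2}, use unitary invariance of the spectral norm to reduce the integrand to the constant $\|\nabla^2 A\|$, invoke \lemref{lem:normGrad} with $\ell=2$, and evaluate the double integral to $\beta^2/2$. The paper phrases the norm-pushing step as ``Cauchy--Schwarz'' while you call it the triangle inequality for integrals, but the argument is identical.
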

\begin{proof}
Applying \eqref{eq:error2} with  $U_M(\beta)$ and using Cauchy-Schwarz gives  
\begin{eqnarray*}
 \|U_M^\dagger A U_M - A -i\beta\nabla A\|&=&\|\int_0^\beta d\alpha \int_0^\alpha  d\tau e^{i\tau \nabla}\nabla^2 A\|
\,\leq\, \left|\int_0^\beta d\alpha \int_0^\alpha d\tau\right|\cdot \|e^{i\tau \nabla}\nabla^2 A\|\\
&\leq& \frac{\beta^2}2\|\nabla^2 A\|\leq 2k^2\beta^2\|A\|_*\leq 2k^2\beta^2\|A\|.
\end{eqnarray*}
where we have used $\|e^{i\tau \nabla}\nabla^2 A\|=\|\nabla^2 A\|$ from unitarity, and \lemref{lem:normGrad}. 
\end{proof}

A similar argument yields an $O((2k|\beta|)^{\ell+1}\|A\|_*)$ error bound for truncating the series for $U_M^\dagger A U_M$ at order $\ell$ (cf. \cite[Thm. 5]{childs2019theory}). 

We next bound the error of the quantities of Theorem~\ref{thm1:smallAngles} as stated in Cor.~\ref{cor:thm1errorboundeps}. 

\begin{lem} \label{lem:thm1errorbound}
For QAOA$_1$ with $k$-local cost Hamiltonian $C$ 
the error in the estimate (\ref{eq:expecC1}) of \thmref{thm1:smallAngles} is bounded as
\begin{equation} \label{eq:errorBoundC}
\left|
\langle C\rangle_1 - \,\left( \langle C \rangle_0 - \gamma \beta \langle \nabla_C \nabla C \rangle_0 \right) \right|
\,\leq\, 
4|\beta|\gamma^2 k\|C\|_*^3 + 4\beta^2|\gamma|  k^2\|C\|_*^2 .
\end{equation}
\end{lem}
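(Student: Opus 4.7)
The plan is to reduce the statement to bounding two explicit remainder terms produced by the integral form of operator conjugation in \eqref{eq:error1}--\eqref{eq:error2}, and then apply the norm bounds of \lemsref{lem:normCGrad}{lem:normGrad} to control each remainder.

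First I would rewrite $\langle C\rangle_1 = \bra{s}\,e^{i\gamma \nabla_C}e^{i\beta \nabla}C\,\ket{s}$ using \lemref{lem:QAOA1conj}, and apply \eqref{eq:error2} once in $\beta$ to write
\[
e^{i\beta \nabla}C \,=\, C + i\beta\,\nabla C + R_1,\qquad R_1 \,=\, -\!\int_0^\beta\! d\alpha\int_0^\alpha\! d\tau\, e^{i\tau\nabla}\nabla^2 C,
\]
so that $\|R_1\|\le \tfrac{\beta^2}{2}\|\nabla^2 C\|\le 2k^2\beta^2\|C\|_*$ by \lemref{lem:normGrad}. Using $\nabla_C C=0$, applying $e^{i\gamma\nabla_C}$, and using \eqref{eq:error2} once in $\gamma$ on the $\nabla C$ term gives
\[
e^{i\gamma\nabla_C}e^{i\beta\nabla}C \,=\, C + i\beta\,\nabla C - \gamma\beta\,\nabla_C\nabla C + i\beta\,R_2 + e^{i\gamma \nabla_C}R_1,
\]
with $R_2 = -\int_0^\gamma\! d\alpha \int_0^\alpha\! d\tau\, e^{i\tau\nabla_C}\nabla_C^2\nabla C$.

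Taking the initial-state expectation, \lemref{lem:expecGrad} kills $\langle\nabla C\rangle_0$, leaving
\[
\langle C\rangle_1 - \bigl(\langle C\rangle_0 - \gamma\beta\,\langle\nabla_C\nabla C\rangle_0\bigr) \,=\, i\beta\,\langle R_2\rangle_0 + \langle e^{i\gamma\nabla_C}R_1\rangle_0.
\]
The first piece is handled directly: by \lemsref{lem:normCGrad}{lem:normGrad} we have $\|R_2\|\le \tfrac{\gamma^2}{2}\,\|\nabla_C^2\nabla C\| \le \tfrac{\gamma^2}{2}\cdot(2\|C\|_*)^2\cdot 2k\|C\|_* = 4\gamma^2 k\|C\|_*^3$, which contributes the $4|\beta|\gamma^2 k\|C\|_*^3$ term in the bound.

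The main obstacle, and the point where some care is required, is the second term $\langle e^{i\gamma\nabla_C}R_1\rangle_0$. A naive bound gives $\|R_1\|=O(\beta^2 k^2\|C\|_*)$, which is off by a factor of $|\gamma|\,\|C\|_*$ compared to what the lemma asserts. To recover the missing factor I would exploit that $\langle R_1\rangle_0 = 0$: by \lemref{lem:expecGrad} every term $\langle\nabla^\ell C\rangle_0$ in the series defining $R_1$ vanishes. Applying \eqref{eq:error1} to $e^{i\gamma\nabla_C}R_1$ then yields
\[
\langle e^{i\gamma\nabla_C}R_1\rangle_0 \,=\, \langle R_1\rangle_0 + i\int_0^\gamma\! d\tau\,\langle e^{i\tau\nabla_C}\nabla_C R_1\rangle_0 \,=\, i\int_0^\gamma\! d\tau\,\langle e^{i\tau\nabla_C}\nabla_C R_1\rangle_0,
\]
so that $|\langle e^{i\gamma\nabla_C}R_1\rangle_0|\le |\gamma|\,\|\nabla_C R_1\|\le 2|\gamma|\,\|C\|_*\,\|R_1\|\le 4|\gamma|\beta^2 k^2\|C\|_*^2$, again by \lemsref{lem:normCGrad}{lem:normGrad}. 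Summing the two contributions yields exactly \eqref{eq:errorBoundC}. The same template---pull out leading terms with \eqref{eq:error2}, use \lemref{lem:expecGrad} to cancel odd-order initial-state expectations, and bound the residual integral with the norm lemmas---extends mechanically to higher-order truncations of \thmref{thm:QAOA1} and to analogous QAOA$_p$ estimates.
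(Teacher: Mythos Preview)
Your proof is correct and follows essentially the same approach as the paper: expand via the integral remainder formulas \eqref{eq:error1}--\eqref{eq:error2}, exploit $\langle e^{i\tau\nabla}\nabla^2 C\rangle_0=0$ (equivalently your $\langle R_1\rangle_0=0$) to extract an extra factor of $|\gamma|$ from the second remainder, and close with the norm lemmas. The paper carries the $\beta$-integrals through and applies \eqref{eq:error1} inside them, whereas you first package the $\beta$-remainder as a single operator $R_1$ and then apply \eqref{eq:error1} to $e^{i\gamma\nabla_C}R_1$; the two organizations are equivalent and yield identical bounds.
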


\begin{proof}%
Applying \eqsref{eq:error1}{eq:error2} to QAOA$_1$ we have  
\begin{eqnarray*} 
    e^{i\gamma \nabla_C} e^{i\beta \nabla}C
    &=& C +i\beta e^{i\gamma \nabla_C} \nabla C -\int_0^\beta d\alpha \int_0^\alpha  d\tau e^{i\gamma \nabla_C} e^{i\tau \nabla}\nabla^2 C\\
    &=& C + i\beta\nabla C - \gamma\beta \nabla_C\nabla C -\beta \int_0^\gamma d\alpha \int_0^\alpha  d\tau e^{i\tau \nabla_C}\nabla_C^2 \nabla C\\
    & &-\, \int_0^\beta d\alpha \int_0^\alpha  d\tau  \left(e^{i\tau \nabla}\nabla^2 C + i\int_0^\gamma da e^{ia\nabla_C} \nabla_C (e^{i\tau \nabla}\nabla^2 C)\right), 
\end{eqnarray*}
so taking the initial state expectation value of both sides and using $\bra{s}\nabla^\ell C\ket{s}=0$ gives
\begin{eqnarray*}
|\langle C\rangle_1 - \langle C\rangle_0 +\gamma\beta \langle \nabla_C \nabla C\rangle_0| &\leq& |\beta| \| \int_0^\gamma d\alpha \int_0^\alpha  d\tau e^{i\tau \nabla_C}\nabla_C^2 \nabla C\|+\frac{\beta^2}2\|\int_0^\gamma da e^{ia\nabla_C} \nabla_C (e^{i\tau \nabla}\nabla^2 C)\|\\
&\leq& |\beta| (\gamma^2/2) \|\nabla_C^2\nabla C\| + 
\frac{\beta^2\gamma}2 \|\nabla_C \nabla^2 C\|\\
&\leq& 
4|\beta|\gamma^2 k\|C\|_*^3 + 4\beta^2|\gamma|  k^2\|C\|_*^2, 
\end{eqnarray*}
where we have used \lemsref{lem:normCGrad}{lem:normGrad}. Observing the left-hand side is invariant under shifts $C\leftarrow C +aI$ implies the tighter bound \eqref{eq:errorBoundC}. 
\end{proof}
Naively apply the same approach to the error in the probability estimate of \thmref{thm1:smallAngles} leads to a bound proportional to $\|H_x\|=\|\ket{x}\bra{x}\|=1$. Here we apply a refined analysis to obtain a %
more useful bound %
that reflects the exponentially small initial probabilities in the computational basis, at the expense of an exponential factor in the bound.  

\begin{lem} \label{lem:thm1errorboundp}
For QAOA$_1$ %
the error in the estimate (\ref{eq:px1}) of \thmref{thm1:smallAngles} satisfies
\begin{equation} \label{eq:probBound}
  \left|  P_1(x) - (P_0(x)- \frac{2\gamma \beta}{2^n}dc(x))\right|  %
 \leq \frac{2}{2^n}\bigg( n^2\beta^2 e^{2n|\beta|} + \frac43n|\beta||\gamma|^3\|C\|_*^3  %
 \cosh(2|\gamma|\|C\|_*)\bigg).
  \end{equation}
\end{lem}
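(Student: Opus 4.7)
The plan is to mirror the proof of \lemref{lem:thm1errorbound} but exploit the rank-one structure $H_x=\ket{x}\bra{x}$ to extract the prefactor $1/2^n$. The hard part is that the naive norm bounds of \lemsref{lem:normCGrad}{lem:normGrad} yield only $\|H_x\|_*=1/2$, so directly estimating the remainders would produce a bound of order $n^2\beta^2+n|\beta||\gamma|^3\|C\|_*^3$ with no $1/2^n$ factor, which is far too weak. Two sharper estimates specific to rank-one observables are needed,
$$\bigl|\bra{\gamma}\nabla^k H_x\ket{\gamma}\bigr|\le \frac{(2n)^k}{2^n}\,, \qquad \bigl|\bra{s}\nabla_C^{m}\nabla H_x\ket{s}\bigr|\le \frac{2n(2\|C\|_*)^m}{2^n}\,,$$
the latter being identically zero for even $m$, and these will drive the two pieces of the final bound.

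First I will apply the integral Taylor remainder identity~\eqref{eq:error2} twice, once to $e^{i\beta\nabla}H_x$ and then to $e^{i\gamma\nabla_C}\nabla H_x$, to write the conjugated observable as
$$Q^\dagger H_x Q \,=\, H_x + i\beta\,\nabla H_x - \gamma\beta\,\nabla_C\nabla H_x + E_1 + E_2,$$
where $E_1=-i\beta\int_0^\gamma\!\!\int_0^\alpha e^{i\tau\nabla_C}\nabla_C^{2}\nabla H_x\,d\tau\,d\alpha$ and $E_2=-e^{i\gamma\nabla_C}\int_0^\beta\!\!\int_0^\alpha e^{i\tau\nabla}\nabla^{2}H_x\,d\tau\,d\alpha$. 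Taking $\bra{s}\cdot\ket{s}$ kills $\bra{s}i\beta\nabla H_x\ket{s}$ by \lemref{lem:expecGrad}, and \lemref{lem:expec1} applied to $A=\nabla H_x$, together with $\nabla H_x\ket{s}=\tfrac{1}{\sqrt{2^n}}\sum_y d\delta_x(y)\ket{y}$ and $\sum_y c(y)d\delta_x(y)=dc(x)$, produces the leading contribution $-\tfrac{2\gamma\beta}{2^n}dc(x)$. It remains to bound $|\bra{s}E_1\ket{s}|$ and $|\bra{s}E_2\ket{s}|$ separately.

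For $E_2$ I will expand $e^{i\tau\nabla}\nabla^{2}H_x=\sum_{k\ge 0}\tfrac{(i\tau)^k}{k!}\nabla^{k+2}H_x$ and exploit the standard binomial identity $\nabla^{\ell}H_x=\sum_{j=0}^\ell\binom{\ell}{j}(-1)^j B^{\ell-j}\ket{x}\bra{x}B^j$. The key auxiliary bound $|\bra{x}B^j U_P(\gamma)\ket{s}|\le n^j/\sqrt{2^n}$ follows from expanding in the computational basis and using $B^j\ket{s}=n^j\ket{s}$ together with the non-negativity of the matrix entries of $B^j$. Factoring the rank-one projector then yields $|\bra{\gamma}\nabla^{k+2}H_x\ket{\gamma}|\le(2n)^{k+2}/2^n$, and summing the resulting exponential series gives $|\bra{s}e^{i\gamma\nabla_C}e^{i\tau\nabla}\nabla^{2}H_x\ket{s}|\le \tfrac{4n^2}{2^n}e^{2n|\tau|}$. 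Integrating twice in the regime $|\tau|\le|\alpha|\le|\beta|$ and applying the elementary inequality $e^y-1-y\le \tfrac{y^2}{2}e^y$ for $y\ge 0$ then delivers $|\bra{s}E_2\ket{s}|\le \tfrac{2}{2^n}n^2\beta^2 e^{2n|\beta|}$, matching the first term of~\eqref{eq:probBound}.

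For $E_1$ I will compute $\bra{s}\nabla_C^{m}\nabla H_x\ket{s}$ in closed form. Inserting $\bra{s}C^{a}\ket{y}=c(y)^a/\sqrt{2^n}$ into the binomial expansion of $\nabla_C^m(B\ket{x}\bra{x}-\ket{x}\bra{x}B)$ and reorganizing the inner $j$-sum via $\sum_j\binom{m}{j}(-1)^j a^j b^{m-j}=(b-a)^m$ produces $\bra{s}\nabla_C^m\nabla H_x\ket{s}=\tfrac{1-(-1)^m}{2^n}\sum_i(\partial_i c(x))^m$, which vanishes for even $m$ (consistent with \lemref{lem:genGrads}). Using $|\partial_i c(x)|\le 2\|C\|_*$ (invariance of differences under constant shifts of $C$) and keeping only the odd-order Taylor terms gives $|\bra{s}e^{i\tau\nabla_C}\nabla_C^{2}\nabla H_x\ket{s}|\le \tfrac{8n\|C\|_*^2}{2^n}\sinh(2|\tau|\|C\|_*)$. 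Integrating in $\gamma$ produces $\sinh(2|\gamma|\|C\|_*)-2|\gamma|\|C\|_*$, and the elementary inequality $\sinh y-y\le \tfrac{y^3}{6}\cosh y$ yields the advertised second term $\tfrac{2}{2^n}\cdot\tfrac{4}{3}n|\beta||\gamma|^3\|C\|_*^3\cosh(2|\gamma|\|C\|_*)$. Summing the two contributions gives~\eqref{eq:probBound}.
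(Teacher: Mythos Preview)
Your argument is correct and lands on exactly the bound \eqref{eq:probBound}. It follows the same two-piece decomposition as the paper but with slightly different bookkeeping and sharper intermediate steps.

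The paper writes the two error pieces as infinite series tails $R_1(\beta)=\sum_{\ell\ge 2}\tfrac{(i\beta)^\ell}{\ell!}\nabla^\ell H_x$ and $R_2(\gamma)=\sum_{\ell\ge 2}\tfrac{(i\gamma)^\ell}{\ell!}\nabla_C^\ell\nabla H_x$ rather than your integral remainders $E_1,E_2$; this is cosmetic, since both represent the same remainder. For the $\beta$-piece the paper extracts the $1/2^n$ factor differently: it expands $\ket{\gamma}$ in the computational basis and uses the entrywise $\ell^1$ bound $\sum_{y,z}\lvert\bra{y}\nabla^\ell H_x\ket{z}\rvert\le(2n)^\ell$, obtained by counting how each application of $\nabla$ can at most multiply the number of nonzero $\pm 1$ matrix entries by $2n$. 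Your route via the rank-one binomial expansion $\nabla^\ell H_x=\sum_j\binom{\ell}{j}(-1)^jB^{\ell-j}\ket{x}\bra{x}B^j$ together with $\lvert\bra{x}B^jU_P(\gamma)\ket{s}\rvert\le n^j/\sqrt{2^n}$ reaches the identical estimate $(2n)^\ell/2^n$ and is arguably more transparent. For the $\gamma$-piece the paper bounds $\lvert\langle\nabla_C^\ell\nabla H_x\rangle_0\rvert$ by a generic commutator expansion, whereas you compute the exact closed form $\tfrac{1-(-1)^m}{2^n}\sum_i(\partial_i c(x))^m$; this makes the odd-order vanishing explicit and feeds directly into the $\sinh$/$\cosh$ structure. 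Either way, the two routes use the same key ideas (rank-one structure of $H_x$ for the $1/2^n$, parity cancellation for the $\cosh$) and produce the same constants.
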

\begin{proof}

Turning to the probability, for $H_x=\ket{x}\bra{x}$ we have
\begin{eqnarray*} 
    e^{i\gamma \nabla_C} e^{i\beta \nabla}H_x
    &=& H_x + i\beta e^{i\gamma \nabla_C}\nabla H_x + e^{i\gamma \nabla_C}\underbrace{(-\frac{\beta^2}2\nabla^2 H_x -i\frac{\beta^2}{3!}\nabla^3 H_x + \dots)}_{R_1(\beta)}\\
    &=&  H_x + i\beta\nabla H_x -\gamma\beta\nabla_C\nabla C +i\beta \underbrace{(-\frac{\gamma^2}2\nabla^2_C\nabla H_x+\dots)}_{R_2(\gamma)} + e^{i\gamma \nabla_C}R_1(\beta)
\end{eqnarray*}
and so taking initial state expectations and using $\langle \nabla^\ell H_x \rangle_0=0$ from \lemref{lem:expecGrad} gives 
$$  %
P_1(x)-P_0(x)
+\langle \nabla_c \nabla H_x \rangle_0=%
\bra{\gamma} R_1(\beta) \ket{\gamma}
+i\beta \langle R_2(\gamma)\rangle_0$$
for $\ket{\gamma}:=U_P(\gamma)\ket{s}$.
For the first term on the right we have 
\begin{eqnarray*}
|\bra{\gamma}R_1(\beta)\ket{\gamma}|&=& %
\left|\frac1{2^n}\sum_{yz}e^{-i\gamma(c(z)-c(y))} \bra{y}R_1(\beta)\ket{z}\right| %
\,\leq\, \frac1{2^n}\sum_{yz}| \bra{y}R_1(\beta)\ket{z}|\\ 
&\leq& \frac1{2^n}\sum_{\ell=2}^\infty  \frac{|\beta|^\ell}{\ell!} \sum_{yz} | \bra{y}\nabla^\ell H_x \ket{z}| \,\leq\, \frac1{2^n}\sum_{\ell=2}^\infty \frac{|\beta|^\ell}{\ell!}(2n)^\ell \\
&\leq&  \frac2{2^n}n^2\beta^2 e^{2n|\beta| }
\end{eqnarray*}
where we have used $\sum_{y,z}|\bra{y}\nabla^\ell H_x \ket{z}|\leq (2n)^\ell$, which follows 
observing that $H_x$ has a single nonzero matrix element $\bra{x}H_x\ket{x}=1$ in the computational basis, and each application of $\nabla$ increases the number of nonzero matrix elements (equal to $\pm 1$) by a factor at most $2n$ (e.g., we have $\nabla H_x = \sum_{j=1}^n (\ket{x^{(j)}}\bra{x}-\ket{x}\bra{x^{(j)}})$.

Next, for the term corresponding to $R_2(\gamma)=-\frac{\gamma^2}{2}\nabla^2_C \nabla H_x -i\frac{\gamma^3}{3!}\nabla^3_C \nabla H_x+\dots$, suppose we have shifted $C$ such that $\|C\|=\|C\|_*$, which leaves $R_2(\gamma)$ and other commutators invariant. 
Consider $G:=\nabla H_x$ which %
similarly satisfies $\sum_{y,z}|\bra{y}G\ket{z}|\leq 2n$. 
From Lem.~\ref{lem:genGrads}, when $\ell$ is even $\langle \nabla_C^\ell G \rangle_0=0$, and for odd $\ell$ from expanding the commutators we have 
\begin{eqnarray*}
|\langle \nabla_C^\ell G \rangle_0| &=& \frac1{2^n} |\sum_{y,z}\bra{y}\nabla_C^\ell G\ket{z}|
=  \frac1{2^n} |\sum_{y,z}\bra{y}C^\ell G - \ell C^{\ell-1}GC+\dots \pm G C^\ell\ket{z}|\\
&\leq& \frac{2^\ell}{2^n} \|C\|_*^\ell \sum_{y,z} |\bra{y} G \ket{z}| \leq  \frac{2^{\ell+1}}{2^n} \|C\|_*^\ell n,
\end{eqnarray*}
which implies
\begin{eqnarray*}
|\langle R_2(\gamma) \rangle_0 | &=& | -i\frac{\gamma^3}{3!}\langle \nabla_C^3 \nabla H_x\rangle_0+
i\frac{\gamma^5}{5!}\langle \nabla_C^5 \nabla H_x\rangle_0+\dots|\\ 
&\leq& \frac{|\gamma|^3}{3!}  \frac{2^3}{2^n}2n\|C\|_*^3
 +\frac{|\gamma|^5}{5!}\frac{2^5}{2^n}2n\|C\|_*^5+\dots\\
 &\leq& \frac83\frac{n}{2^n}|\gamma|^3\|C\|_*^3 \cosh(2\gamma \|C\|_*) \,\,\leq\,\,  \frac{3n}{2^n}|\gamma|^3\|C\|_*^3 e^{2 |\gamma| \|C\|_*}.
\end{eqnarray*}
Hence %
using $|P_1(x)-P_0(x)
+\langle \nabla_c \nabla H_x \rangle_0|\leq
|\bra{\gamma} R_1 \ket{\gamma}|
+|\beta \langle R_2\rangle_0|$ gives (\ref{eq:probBound}).
\end{proof}

Finally, we use the lemmas to bound the error of the estimates of \thmref{thm1:smallAngles}. 

\begin{proof}[Proof of \corref{cor:thm1errorboundeps}]
From \lemref{lem:thm1errorbound}
for QAOA$_1$ with $k$-local %
$C$ 
we have
\begin{equation} \label{eq:errorBoundC2}
\left|
\langle C\rangle_1 - \,\left( \langle C \rangle_0 - \gamma \beta \langle \nabla_C \nabla C \rangle_0 \right) \right|
\,\leq\, \|C\|_* (
4|\beta|\gamma^2 k\|C\|_*^2 + 4\beta^2|\gamma|  k^2\|C\|_*).
\end{equation}
with $\langle \nabla_C \nabla C\rangle_0 = 2\langle C\, DC \rangle_0$. 
Selecting $\gamma,\beta$ such that $|\beta| \leq \frac{\sqrt{\e}}{2k}$ and $|\gamma| \leq \frac{\e^{1/4}}{2\|C\|_*} $ gives
$$4|\beta|\gamma^2 k\|C\|_*^2 + 4\beta^2|\gamma|  k^2\|C\|_*\leq \frac{\e}2 + \frac{\e^{5/4}}2 \leq \e$$
which then implies \eqref{eq:smallAnglesErrorBound} observing that $\|C_Z\| \geq \|C\|_*$. %

Similarly, the probability bound follows from \lemref{lem:thm1errorboundp} applying the smaller mixing angle $|\beta|\leq \frac25 \frac{\sqrt{\epsilon}}{n}$ to \eqref{eq:probBound} which %
gives an upper bound of $0.92\e/2^n$ and implies \eqref{eq:smallAnglesErrorBoundProb}. 
\end{proof}

\section{%
Quadratic Hamiltonians}
\label{app:klocal}
For $k$-local %
Hamiltonians, higher-order gradients have particularly nice properties.\footnote{Recall an operator on qubits is (strictly) $k$-local if all terms in its Pauli operator expansion act nontrivially on at most $k$ qubits, and strictly $k$-local if all non-Identity terms act on exactly $k$ qubits.} %
Here we prove \lemref{lem:quboGrads} for QUBOs by considering the linear and quadratic terms in turn.

\begin{lem}
For a linear ($1$-local) cost Hamiltonian $C= \sum_{j}c_{j}Z_j$, we have
\begin{eqnarray}
    \nabla C = -2i C_Y \;\;\;
    \text{ and } \;\;\; 
    \nabla^2 C = 4 C,
\end{eqnarray}
with $C_Y:=\sum_{j}c_{j}Z_j$, 
which for $\ell\in\naturals$ implies 
\begin{equation}
  \nabla^{2\ell} C = 4^\ell C\;\;\; \text{ and } \;\;\;\nabla^{2\ell+1} C = 4^\ell \nabla C.
\end{equation}
\end{lem}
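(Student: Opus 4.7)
The plan is to compute $\nabla C$ and $\nabla^2 C$ directly from the Pauli commutation relations, and then obtain the general formulas by a trivial induction. (Note that in the statement $C_Y$ should read $\sum_j c_j Y_j$, as the commutator of self-adjoint operators $B$ and $C$ must be skew-adjoint.)

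First I would expand
\begin{equation*}
  \nabla C = [B, C] = \Bigl[\sum_i X_i,\, \sum_j c_j Z_j\Bigr] = \sum_{i,j} c_j [X_i, Z_j].
\end{equation*}
Using the single-qubit identity $[X, Z] = -2iY$ and $[X_i, Z_j] = 0$ for $i \neq j$ (operators on distinct qubits commute), only the diagonal terms $i = j$ survive, giving $\nabla C = -2i \sum_j c_j Y_j = -2i C_Y$, as claimed. Applying the same commutator again with $[X, Y] = 2iZ$,
\begin{equation*}
  \nabla^2 C = [B, \nabla C] = -2i \sum_{i,j} c_j [X_i, Y_j] = -2i \sum_j c_j (2iZ_j) = 4\sum_j c_j Z_j = 4C.
\end{equation*}

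The inductive step is then immediate: assuming $\nabla^{2\ell} C = 4^\ell C$, applying $\nabla$ once gives $\nabla^{2\ell+1} C = 4^\ell \nabla C$, and applying $\nabla$ again yields $\nabla^{2\ell+2} C = 4^\ell \nabla^2 C = 4^{\ell+1} C$, closing the induction. There is no substantive obstacle; the only thing to be careful about is the sign conventions in the Pauli commutators and the typo $Z_j \to Y_j$ in the definition of $C_Y$. This result will then serve as the base case (the $C_{(1)}$ contribution) in the proof of \lemref{lem:quboGrads}, where it is combined with the analogous identity for strictly quadratic $C_{(2)}$ via linearity of $\nabla$.
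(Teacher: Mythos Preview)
Your proof is correct and follows essentially the same approach as the paper: both compute $\nabla C$ and $\nabla^2 C$ directly from the single-qubit Pauli commutation relations and then close by induction. You also correctly flag the typo in the definition of $C_Y$ (the paper's proof likewise uses $Y_j$, not $Z_j$).
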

\begin{proof}
The lemma follows trivially from the Pauli matrix commutation relations as $\nabla_j C = [X_j,C]=2ic_jY_j$ and $\nabla^2_j C = 4c_jZ_j$. The second statement follows by induction.
\end{proof}
\begin{lem}  \label{lem:quadraticGrads}
For a strictly quadratic ($2$-local) cost Hamiltonian $C=\sum_{uv}c_{uv}Z_uZ_v$, %
\begin{eqnarray}
    \nabla^2 C = 8C - 8C_Y \;\;\;
    \text{ and } \;\;\; 
    \nabla^3 C = 16 \nabla C,
\end{eqnarray}
with $C_Y:=\sum_{uv}c_{uv}Y_uY_v$, 
which for $\ell\in\naturals$ implies 
\begin{equation}
  \nabla^{2\ell} C = 16^{\ell-1} \nabla^2 C\;\;\; \text{ and } \;\;\;\nabla^{2\ell+1} C = 16^\ell \nabla C.
\end{equation}
\end{lem}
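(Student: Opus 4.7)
The plan is to reduce everything to a direct computation on a single term $Z_u Z_v$ using the Pauli commutation relations, and then extend by linearity. Since $B = \sum_j X_j$ is a sum of single-qubit operators and $Z_u Z_v$ has nontrivial support only on qubits $u,v$, the commutator $[B, Z_u Z_v]$ reduces to just two contributions, from $[X_u, \cdot]$ and $[X_v, \cdot]$. So first I would compute
\[
\nabla(Z_u Z_v) = -2i\bigl(Y_u Z_v + Z_u Y_v\bigr)
\]
using $[X,Z] = -2iY$ and $[X,Y] = 2iZ$.

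Next, I would iterate once more. Applying $\nabla$ to each of $Y_u Z_v$ and $Z_u Y_v$ produces, after bookkeeping, the identity
\[
\nabla^2(Z_u Z_v) = 8 Z_u Z_v - 8 Y_u Y_v,
\]
and summing over pairs $\{u,v\}$ with coefficients $c_{uv}$ gives $\nabla^2 C = 8C - 8 C_Y$, which is the first claim. For the second claim, the key observation is that $\nabla$ maps the two-dimensional subspace $\mathrm{span}\{C, C_Y\}$ into $\mathrm{span}\{\nabla C\}$ in a very structured way: a direct computation shows $\nabla(Y_u Y_v) = 2i(Z_u Y_v + Y_u Z_v)$, i.e.\ $\nabla C_Y = -\nabla C$. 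Combined with the previous step, this immediately yields
\[
\nabla^3 C \;=\; \nabla\bigl(8C - 8C_Y\bigr) \;=\; 8\nabla C - 8\nabla C_Y \;=\; 16\,\nabla C.
\]

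For the iterated formulas, the odd case follows by a trivial induction: applying $\nabla^{2\ell-2}$ to $\nabla^3 C = 16 \nabla C$ gives $\nabla^{2\ell+1}C = 16 \nabla^{2\ell-1}C$, and induction on $\ell$ closes it. For the even case, I would similarly establish $\nabla^4 C = 16\,\nabla^2 C$ and then induct. This requires $\nabla^2 C_Y$; but by the same type of computation (or by noting $\nabla C_Y = -\nabla C$ and applying $\nabla$ again) one obtains $\nabla^2 C_Y = -\nabla^2 C = -8C + 8C_Y$, so
\[
\nabla^4 C = 8\nabla^2 C - 8 \nabla^2 C_Y = 16(8C - 8C_Y) = 16\,\nabla^2 C,
\]
and then $\nabla^{2\ell}C = 16^{\ell-1}\nabla^2 C$ by induction.

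There is no real obstacle here — the proof is essentially a closed-system observation: under $\nabla$, the subspace $\mathrm{span}\{C, C_Y, \nabla C\}$ is preserved (with $\nabla C_Y = -\nabla C$ being the crucial identity), and all higher gradients collapse onto it. The only thing to be careful about is ensuring that the strict $2$-locality assumption — meaning no Identity or single-$Z$ terms appear in $C$ — is what guarantees that no cross terms of the form $[X_u, Z_u Z_v] \cdot [\text{something on a third qubit}]$ arise at higher orders, which is why the whole calculation stays within the two-qubit support of each term of $C$ and closes after one iteration.
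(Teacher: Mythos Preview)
Your proof is correct and takes essentially the same approach as the paper: a direct computation on a single $Z_uZ_v$ term via the Pauli commutation relations (the paper just says ``follows trivially as before using the cyclic Pauli relations $[X_j,Y_j]=2iZ_j$ and linearity of $\nabla$''), followed by induction for the higher powers. Your explicit identification of $\nabla C_Y=-\nabla C$ as the mechanism that closes the calculation is a nice way to phrase what the paper leaves implicit.
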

\begin{proof}
The first statement follws trivially as before using the cyclic Pauli relations $[X_j,Y_j]=2iZ_j$ and the linearity of $\nabla=[B,\cdot]$, and the second by induction. 
\end{proof}

\lemref{lem:quboGrads} %
then follows immediately from combining the linear and quadratic cases. %
Analaoous results follow for $k$-local Hamiltonians with $k>2$.

\subsection{QAOA$_1$ for quadratic cost Hamiltonians}
\label{sec:QUBOs}

\lemref{lem:quboGrads} %
may be applied directly to QAOA for QUBOs such as MaxCut where $ C=\frac{m}{2}-\frac12\sum_{(ij)\in E}Z_iZ_j$. Recall the quantities of \tabref{tab:summary}. 

\begin{proof}[Proof of \eqref{eq:expecC1maxcut}]
For MaxCut, applying Lem.~\ref{lem:quadraticGrads} to (\ref{eq:infinitesimalConj}) 
we have %
\begin{eqnarray} \label{eq:UMCQUBO}
    U_M^\dagger C U_M
    &=& %
    \sum_{k=0}^\infty \tfrac{(i\beta)^k}{k!}\nabla^k C \nonumber\,
    =\, C + \sum_{k=0}^\infty \tfrac{(i\beta)^{2k+1}}{(2k+1)!}16^k\nabla C + \sum_{k=1}^\infty \tfrac{(i\beta)^{2k}}{(2k)!}16^{k-1}\nabla^2 C\nonumber\\
    &=& C + \tfrac{i}{4}\nabla C \sum_{k=0}^\infty \tfrac{(-1)^k(4\beta)^{2k+1}}{(2k+1)!}
     +\tfrac{1}{16} \nabla^2 C \sum_{k=1}^\infty \tfrac{(-1)^k(4\beta)^{2k}}{(2k)!}\nonumber\\
     &=& C + \tfrac{\sin(4\beta)}{4}\,i\nabla C 
     +\tfrac{(\cos(4\beta)-1)}{16} \, \nabla^2 C. 
\end{eqnarray}
Hence, using $\cos (2x)=1-2\sin^2(x)$, taking the expectation value with respect to
$\ket{\gamma}:=U_P(\gamma)\ket{s}$ gives \eqref{eq:expecC1maxcut} via 
$\langle C\rangle_1=\bra{\gamma}U_M^\dagger C U_M\ket{\gamma}$ and $\bra{\gamma} C \ket{\gamma}=\bra{s}C\ket{s}=\langle C\rangle_0$. 
\end{proof}

The general QUBO formula~\eqrefp{eq:QAOA1QUBO} for $\langle C\rangle_1$ is
obtained similarly. %
Here we consider the explicit example of Balanced-Max-$2$-SAT as defined in \secref{sec:balancedMax2Sat}, which generalizes (and rederives along the way) previous results for QAOA$_1$ for MaxCut. We emphasize the calculations such as that of the following proof for more general problems may %
be most easily implemented %
numerically, in particular for $p>1$ (cf. Algorithm 2 in \secref{sec:classicalAlgGeneral2}).

\begin{proof}
[Proof of \eqref{eq:expecC1balmax2sat} (and \eqsref{eq:maxcutnablaCexpec}{eq:maxcutnabla2Cexpec}] 
As $C$ is strictly $2$-local (modulo the identity term), %
\eqref{eq:QAOA1QUBO} (cf. \eqref{eq:UMCQUBO}) gives $\langle C\rangle_1 =\bra{\gamma}U_M^\dagger C U_M\ket{\gamma}$ with $\ket{\gamma}=U_P\ket{s}$ and 
$$ U_M^\dagger C U_M = C +\frac{\sin4\beta}4 i\nabla C -\frac{\sin^22\beta}8 \nabla^2 C.$$
We follow but generalize the proof for MaxCut in \cite{hadfield2018thesis,wang2018quantum} (which is a simpler case with all signs the same of the $ZZ$ terms in the cost Hamiltonian).  
Recall that in the Pauli basis only strictly $I,X$ terms can contribute to QAOA initial state expectation values 
(i.e., if a Pauli string $A_j$ contains a $Y$ or $Z$ factor then $\bra{s}A_j\ket{s}=Tr(\ket{s}\bra{s}A_j)=0$).
Observing  that $i\nabla C=  -\frac12 \sum_{(ij)\in E}(-1)^{i\oplus j} (Y_iZ_j+Z_iY_j)$ (cf. \tabref{tab:summary}) we have
\begin{eqnarray}
U_P^\dagger i\nabla C U_P&=&-\frac12 \sum_{(ij)\in E}(-1)^{i\oplus j} U_P^\dagger(Y_iZ_j+Z_iY_j)U_P\nonumber\\
&=&-\frac12 \sum_{(ij)\in E}(-1)^{i\oplus j} (e^{2i\gamma C^{\{i\}}}Y_iZ_j+e^{2i\gamma C^{\{j\}}}Z_iY_j),
\end{eqnarray}
where as the terms in $C$ mutually commute we may write each exponential as a Pauli sum using $e^{\pm i\gamma Z_iZ_j/2}=I\cos(\gamma/2)\pm i\sin(\gamma/2)Z_iZ_j$ and expanding the product. Clearly, the only resulting term that can contribute is proportional to  
$(-1)^{i\oplus j}Z_iZ_j*(-1)^{i\oplus j}Y_iZ_j=-iX_i$ (i.e., all other terms will contain at least one $Z$ factor).
Hence, as each $C^{\{i\}}$ contains $d_i+1$ terms we have
\begin{eqnarray}
\bra{\gamma} i\nabla C \ket{\gamma} %
&=&\frac12 \sum_{(ij)\in E}\sin(\gamma/2 )(\cos^{d_i}(\gamma/2)+\cos^{d_j}(\gamma/2)).  
\end{eqnarray}
(A nearly identical argument gives \eqref{eq:maxcutnablaCexpec} for MaxCut after summing over the edges.)

Similarly, using $\nabla^2C=8C-8C_Y=-\frac14 \sum_{(ij)\in E}(-1)^{i\oplus j} (Z_iZ_j-Y_iY_j)$
gives 
\begin{eqnarray}
U_P^\dagger \nabla^2 C U_P
&=& 8C_Z+2 \sum_{(ij)\in E}(-1)^{i\oplus j} e^{2i\gamma (C^{\{i\}\setminus j}+C^{\{j\}\setminus i})}Y_iY_j
\end{eqnarray}
and so as $\langle C_Z \rangle_0 =0$ we have 
\begin{eqnarray} \label{eq:expoeq110}
\bra{\gamma} \nabla^2 C \ket{\gamma} = 2 \sum_{(ij)\in E}(-1)^{i\oplus j}\langle  e^{2i\gamma (C^{\{i\}\setminus j}+C^{\{j\}\setminus i})}Y_iY_j \rangle_0.
\end{eqnarray}
Consider a fixed edge $(ij)$ with $d:=deg(i)-1$ and $e:=deg(j)-1$ recall we define $f^+=f^+_{ij},f^-=f^-_{ij}$ as the positive and negative triangles containing $(ij)$ the the instance graph $E$. 
For convenience here we let $c=\cos(\gamma/2)$ and $s=\sin(\gamma/2)$. 
Writing the 
exponential in \eqref{eq:expoeq110} 
as a product of factors $e^{\pm i\gamma Z_uZ_v/2}=Ic\pm isZZ$ %
and expanding gives a linear combination of $2^{d+e}$ terms, which we view as a series in $s$.  
The lowest order (w.r.t $s$) term that can contribute \cite{hadfield2018thesis} corresponds to selecting a single triangle from among the factors and is
$$\binom{f^+}{1} c^{d+e-2}s^2-\binom{f^-}{1}c^{d+e-2}s^2=(f^+-f^-)c^{d+e-2}s^2,$$
i.e. negative triangles subtract, and there are $f^+,f^-$ ways to select a triangle of each type. 
(cf. the corresponding proof for MaxCut where all triangles are positive~\cite{hadfield2018thesis,wang2018quantum}.)
As $f^+,f^-$ become larger, 
pairs of additional triangles can combine to $\pm I$ such that the next contributing terms are proportional to $c^{d+e-6}s^6$ and involve $3$ triangles, %
with each sign determined by the number of positive- versus negative-parity triangles involved to give proportionality factor
\begin{eqnarray*}
\binom{f^+}{3}\binom{f^-}{0}-\binom{f^+}{2}\binom{f^-}{1}+\dots-\binom{f^+}{0}\binom{f^-}{3} &=&  \sum_{k=0}^3  \binom{f^+}{3-k}\binom{f^-}{k}(-1)^k
\end{eqnarray*}
(i.e. three negative-parity triangles give $-1$, and so on.)

Generalizing this argument to higher order, factors with $\ell=1,3,5$ triangles combine to contribute $c^{d+e-2\ell}s^{2\ell}$ times the factor\footnote{Without the factor $(-1)^k$ we would have  Vandermonde's identity $\sum_{k=0}^\ell\binom{f^+}{\ell-k}\binom{f^-}{k}=\binom{f}{\ell}$.}  
\begin{eqnarray} \label{eq:hypgepfactor}
 \sum_{k=0}^\ell  \binom{f^+}{\ell-k}\binom{f^-}{k}(-1)^k =: \binom{f^+}{\ell} \prescript{}{2}{\mathbf{F}}_1(-f^-,-\ell;f^+-\ell+1;-1)
\end{eqnarray}
where $\prescript{}{2}{\mathbf{F}}_1$ is the  Gaussian (ordinary) hypergeometric function~\cite{slater1966generalized}. Note %
\eqref{eq:hypgepfactor} 
evaluates to $f^+-f^-$ for $\ell=1$, and to $\binom{f^+}{3}-\binom{f^-}3-\tfrac12 f^+f^-(f^+-f^-)$ for $\ell=3$. 
Hence we have 
\begin{eqnarray*}
\langle  e^{2i\gamma (C^{\{i\}\setminus j}+C^{\{j\}\setminus i})}c_{ij}Y_iY_j \rangle_0 %
 &=&
c^{d+e-2f}\sum_{\ell=1,3,5,\dots}^{f^++f^-} c^{2(f-\ell)}s^{2\ell} \binom{f^+}{\ell} \prescript{}{2}{\mathbf{F}}_1(-f^-,-\ell;f^+-\ell+1;-1)\\
&=:& c^{2(f-\ell)}s^{2\ell}\, g(f^+,f^-;\gamma,\beta)
\end{eqnarray*}
where we have defined
$g(f^+,f^-;\gamma,\beta)=\sum_{\ell=1,3,5,\dots}^f c^{2(f-\ell)}s^{2\ell} \binom{f^+}{\ell} \prescript{}{2}{\mathbf{F}}_1(-f^-,-\ell;f^+-\ell+1;-1)$. 
It easy to see this term is $0$ for all edges with $f^+=f^-$ or $f=0$, i.e., $g(a,a)=0$ for $a=0,1,2,\dots$. Moreover, the case of strictly positive or strictly negative triangles $g(f,0)=-g(0,f)$ reproduces the analysis of MaxCut (up to constants) and is summable to give $$g(f,0)=\tfrac12(1-(c^2-s^2)^f)=\tfrac12(1-\cos^f\gamma)=-g(0,f),$$
which implies \eqref{eq:maxcutnabla2Cexpec} for MaxCut.

Hence, we have $\langle  e^{2i\gamma (C^{\{i\}\setminus j}+C^{\{j\}\setminus i})}c_{ij}Y_iY_j \rangle_0 = c^{d+e-2f^+-2f^-}g(f^+,f^-)$ and so 
$\bra{\gamma} \nabla^2C\ket{\gamma}=2 \sum_{ij}c^{d+e-2f^+-2f^-}g(f^+,f^-)$, which together with the result for $\bra{\gamma} \nabla^2C\ket{\gamma}$ gives \eqref{eq:expecC1balmax2sat}. 
\end{proof}
\end{appendix}

\end{document}